\documentclass[11pt]{article}

\usepackage{amsmath,amssymb,amsfonts,amsthm,mathtools}
\usepackage{thmtools}
\makeatletter
\IfFormatAtLeastTF{2024-11-01}{
  \renewcommand*\@addtoreset[2]{%
    \bgroup
      \edef\aliasctr@@truelist{\aliasctr@follow{#2}}%
      \let\@elt\relax
      \expandafter\@cons\aliasctr@@truelist{{#1}}%
    \egroup
    \expandafter\xdef\csname theH#1\endcsname{%
      \expandafter\noexpand\csname theH#2\endcsname.%
      \noexpand\the\noexpand\value{#1}}%
  }
}{}
\makeatother
\usepackage{thm-restate}

\usepackage[utf8]{inputenc}
\usepackage[T1]{fontenc}
\usepackage[tt=false, type1=true]{libertine}
\usepackage[varqu]{zi4}
\usepackage[libertine]{newtxmath}
\usepackage[margin=1in]{geometry}
\usepackage{microtype,parskip}
\usepackage{placeins}
\usepackage{float}

\usepackage{graphicx,xcolor}

\usepackage{booktabs,array,multirow,tabularx}

\usepackage{xcolor}
\usepackage[ruled,vlined]{algorithm2e}
\SetKwInput{KwRequire}{Requires}

\SetAlgoNoLine
\DontPrintSemicolon
\LinesNumbered

\SetKwComment{Comment}{\textcolor{blue}{// }}{}
\newcommand{\cmt}[1]{\Comment*[r]{\textcolor{blue}{#1}}}

\usepackage{essgraph}
\usepackage{subcaption}
\newcommand{\EssDagEdgeLineWidth}{1.2pt}

\tikzset{
  dag clean edge/.style={
    draw=black,
    line width=\EssDagEdgeLineWidth,
    ->,
    shorten >=1.5pt,
    shorten <=1.5pt
  },
  dag clean curved edge/.style={
    dag clean edge,
    preaction={
      draw=white,
      line width=3.4pt,
      shorten >=1.5pt,
      shorten <=1.5pt
    }
  }
}

\definecolor{ForestGreen}{rgb}{0.1333,0.5451,0.1333}
\definecolor{DarkRed}{rgb}{0.65,0,0}
\usepackage{hyperref}
\hypersetup{
    colorlinks=true,
    linkcolor=DarkRed,
    filecolor=magenta,      
    urlcolor=cyan,
    citecolor=ForestGreen,
    linktocpage=true,
    pagebackref=true,
    bookmarks=true,
    bookmarksopen=true,
    bookmarksnumbered=true
}
\usepackage[nameinlink]{cleveref}
\usepackage{csquotes}
\usepackage[style=alphabetic,natbib=true,maxalphanames=3,minalphanames=3,,maxbibnames=99]{biblatex}
\usepackage{tikz}
\usetikzlibrary{arrows.meta,positioning,calc,decorations.pathreplacing,matrix,patterns,decorations.pathmorphing, decorations.markings,external, fit}
\IfFileExists{cache.tex}{
    \input{cache.tex}
}{}

\usepackage{pdfcomment}

\usepackage[showdeletions]{color-edits}
\addauthor[Mahdi]{m}{cyan}
\addauthor[MTH]{mth}{blue}
\addauthor[Kavi]{kavi}{magenta}
\addauthor[Soheil]{soh}{orange}

\usepackage{xspace,nicefrac} 
\theoremstyle{definition}
\theoremstyle{plain}

\newtheorem{theorem}{Theorem}
\newtheorem{corollary}{Corollary}[section]
\newtheorem{lemma}[corollary]{Lemma}

\newtheorem{proposition}[corollary]{Proposition}
\newtheorem{definition}[corollary]{Definition}

\newtheorem{conjecture}[corollary]{Conjecture}

\newtheorem{claim}{Claim}[lemma]

\AddToHook{cmd/appendix/before}{\crefalias{section}{appendix} \crefalias{subsection}{appendix}}

\newcommand{\lovasz}{Lov\'asz\xspace}
\newcommand{\gyori}{Gy\H{o}ri\xspace}
\newcommand{\gyorilovasz}{\gyori--\lovasz}
\newcommand{\essentialassignmentcond}{Flow-Essential Assignment Condition\xspace}
\newcommand{\essentialassignmentcondition}{Flow-Essential Assignment Condition\xspace}

\newcommand{\fractionalesscond}{Flow-Essential Split-Assignment Condition\xspace}
\newcommand{\FractionalEssAssign}{flow-essential split-assignment\xspace}
\xspaceaddexceptions{’}

\newcommand{\caC}{\mathcal{C}}
\newcommand{\caR}{\mathcal{R}}
\newcommand{\caO}{\mathcal{O}}

\newcommand{\term}{t}
\newcommand{\cp}{c}
\newcommand{\conn}{\kappa}
\newcommand{\connvg}[2]{\conn_{#2}(#1)}
\newcommand{\Reach}{\mathfrak{R}}

\newcommand{\assgn}{\phi}
\newcommand{\assgninv}{\phi^{-1}}
\newcommand{\assgnalt}{\varphi}

\newcommand{\assgnw}{\psi}
\newcommand{\assgnwinv}{\psi^{-1}}
\newcommand{\assgncompact}{\sigma}
\newcommand{\crt}{\xi}
\DeclareMathOperator{\PT}{PT}

\newcommand{\Phifrac}{\bar\Phi}

\newcommand{\cut}{C}
\newcommand{\lcut}{L_C}
\newcommand{\scut}{S_C}
\newcommand{\rcut}{R_C}
\newcommand{\lcutc}[1]{L_{#1}}
\newcommand{\scutc}[1]{S_{#1}}
\newcommand{\rcutc}[1]{R_{#1}}

\newcommand{\demand}{%
  \ifmmode d\else demand\fi
}

\newcommand{\weight}{w}
\newcommand{\conf}{\tau}

\title{Breaking the Exponential Barrier: The First Polynomial-Time Algorithm for the \gyori--\lovasz Theorem~\footnote{The main result was obtained in late April 2026, after almost three years of sustained effort by the authors. Since then, the authors have focused on developing the full exposition, including detailed proofs, intuition, and the key barriers overcome. All ideas, technical results, proofs, and scientific contributions are entirely the authors' own; AI assistance was limited to proofreading, language refinement, and minor expository improvements.}}

\author{ 
Mohammad T. Hajiaghayi\thanks{ACM Fellow \& IOI'97 Medalist} 
\and Mahdi JafariRaviz\thanks{IOI'19 Medalist} 
\and Alireza Kaviani\thanks{IOI'21 \& IOI'22 Medalist} 
\and Soheil Mohammadkhani \thanks{IOI'23 Medalist\\ Email: \texttt{\{hajiaghayi,mahdijafariraviz,akaviani05,mohammadkhani.soheil\}@gmail.com}} }

\date{\today}

\begin{document}
\date{}
\maketitle
\begin{abstract}
We give the first polynomial-time algorithm, after half a century, for the celebrated \gyori--\lovasz theorem, which resolved a conjecture of Frank (1975). The theorem, one of the simplest existential theorems to explain, states that every $k$-connected graph can be partitioned into $k$ disjoint connected subgraphs of arbitrary prescribed positive sizes. This is a fundamental structural result with broad applications, such as flexible allocation of connected subnetworks of prescribed sizes in sufficiently connected cloud infrastructures.

While \lovasz (1977) gave a highly non-constructive proof for a stronger directed version using algebraic topology, \gyori's original constructive proof (1976) requires exponential time. 
Despite more than 50 years of effort, no polynomial-time algorithm was known even for $k>4$. Determining the computational complexity of the \gyori--\lovasz theorem---whether it admits even a sub-exponential-time algorithm or is computationally hard (in particular, PLS-complete or PPAD)---has remained one of the central open problems in algorithmic graph theory.

In this paper, we finally resolve this long-standing problem by a fundamentally new proof of the existential theorem via introducing the novel concept of \emph{flow-essential assignment}, which genuinely marries matching and cut structures and yields the first polynomial-time constructive algorithm for the \gyori--\lovasz theorem.
In fact, we obtain a polynomial-time algorithm for \lovasz's stronger directed version, whose proof was non-constructive even for DAGs; for DAGs, we further obtain a near-linear-time algorithm. We also develop polynomial-time algorithms for weighted generalizations where the seminal work of Chen, Kleinberg, \lovasz, Rajaraman, Sundaram, and Vetta (JACM'07) on confluent flows established only existential non-constructive results.

\end{abstract}
\newpage
\tableofcontents
\newpage
\section{Introduction}

Perhaps one of the best-known and simplest-to-explain existential theorems in algorithmic graph theory states that every $k$-connected graph---one that remains connected after the removal of any $k-1$ vertices---can be partitioned into $k$ disjoint connected subgraphs of arbitrary prescribed positive sizes. This is a fundamental problem with broad applications, including, to name just a few, flexible allocation of connected subnetworks of prescribed sizes in cloud infrastructures and clustering tasks such as partitioning images into connected regions of predicted sizes; see, e.g., \citep{casel2023efficient,niklanovits2025connected} for further applications to road networks, robotics, and image processing. Although existential proofs of this theorem have been known for decades---yielding exponential-time algorithms---obtaining a polynomial-time algorithm, even for $k>4$, remained open despite persistent efforts for more than 50 years.
The situation is particularly intriguing because other important problems such as Nash Equilibrium and Weighted Local Max-Cut also admit existential proofs implying exponential-time algorithms, yet were later shown to be PPAD-complete~\citep{papadimitriou1994complexity,daskalakis2009complexity,chen2009settling} and PLS-complete~\citep{johnson1988easy,schaffer1991simple}, respectively, naturally leading to the belief that this problem might likewise be computationally intractable. In this paper, we finally resolve this major open problem by giving the first polynomial-time algorithm via a fundamentally new proof of the existential theorem.

Indeed, the existential theorem above is more general and was first conjectured and partially resolved by \citet{frank1975combinatorial}: given a $k$-connected graph $G$, distinct vertices $v_1,\ldots,v_k$, and positive integers $n_1,\ldots,n_k$ satisfying $\sum_i n_i = |V(G)|$, there exist pairwise disjoint connected subgraphs $G_1,\ldots,G_k$ such that each $G_i$ contains $v_i$ and has exactly $n_i$ vertices.
Here, both the roots and the sizes of the parts are prescribed in advance. While \gyori~\citep{gyori1976division} gave the first constructive proof in 1976 with an exponential-time algorithm, \lovasz~\citep{lovasz1977homology} independently established a stronger directed version in 1977 via a highly non-constructive algebraic topological argument: Let $G$ be a digraph, let $S=\{v_1,\ldots,v_k\}\subseteq V(G)$, and suppose $G$ is $k$-connected to $S$, i.e., each vertex $v \not\in S$
can be joined to $S$ by $k$ paths which are vertex-disjoint except for the common vertex $v$. For any positive integers $n_1,\ldots,n_k$ satisfying $\sum_{i=1}^k n_i = |V(G)|$, the graph $G$ contains pairwise vertex-disjoint arborescences $A_1,\ldots,A_k$ such that each $A_i$ is rooted at $v_i$ and satisfies $|V(A_i)| = n_i$, where an arborescence is a directed tree in which every vertex reaches the root.

\gyori's proof is constructive but inherently exponential; a clear modern exposition was later given by \citet{hoyer2016gyori}.
The argument maintains a partition into connected parts and repeatedly attempts to enlarge a deficient part by moving vertices while preserving connectivity. When a direct move is impossible, the proof recursively follows the components that would be disconnected, creating a chain of dependent choices known as a \emph{cascade}. By carefully rerouting vertices along such cascades, the proof eventually increases the deficient part while maintaining connectivity throughout. However, cascades can be arbitrarily long and their number grows exponentially, so any direct implementation of \gyori's method yields an exponential-time algorithm. 

\lovasz's proof is a landmark application of algebraic-topological methods to discrete graph theory.  At its core, \lovasz's proof transforms a discrete graph partitioning problem into a continuous topological one. Rather than directly searching over combinatorial structures, the proof considers a topological space $K$ representing feasible spanning configurations, together with a continuous map assigning to each configuration a vector encoding the sizes of its components. The target is to show that this map attains a prescribed vector corresponding to the desired partition. Drawing on homological connectivity arguments---functioning as a high-dimensional analogue of the Intermediate Value Theorem---\lovasz proves that $K$
is sufficiently connected to force the map to pass through the target point. This establishes the existence of the required partition or arborescence structure in an elegant but fundamentally non-constructive way, yielding no efficient algorithm for finding it. \lovasz's proof parallels the existence proof for Nash equilibria: both rely on non-constructive topological arguments rather than efficient algorithms. Just as Nash used Brouwer's fixed-point theorem to establish existence, \lovasz used algebraic topology to guarantee the required partition. 
Since such principles often underlie computationally hard classes such as PPAD and PLS, it was natural to suspect that the celebrated \gyori--\lovasz theorem might likewise be computationally intractable---a possibility formalized by \citet{chandran2018spanning}, who placed the problem in PLS and asked whether it is PLS-complete\footnote{Strictly speaking, our polynomial-time algorithms do not rule this out!}, and by \citet{meunier2022existence}, who asked whether it lies in PPAD.

The problem has attracted significant effort for small values of $k$ and special graph classes.
Polynomial-time algorithms were previously known only in restricted settings: \citet{suzuki1990linear} solved the case $k=2$, \citet{suzuki1990algorithm} and \citet{wada1993efficient} handled $k=3$, and \citet{hoyer2019independent} resolved the general case $k=4$ following earlier planar results of \citet{nakano1997linear}. For $k=2$, \citet{an2022diameter} also recently gave a direct polynomial-time search based on \lovasz's proof.
Subsequent work obtained polynomial-time algorithms only for restricted graph classes or relaxed variants: chordal graphs and near-prescribed sizes \citep{casel2023efficient}, bounded-treewidth graphs for fixed $k$ \citep{casel2026connected}, constant-factor size violations for uniform capacities \citep{borndorfer2021connected}, and stronger $\Omega(k\log^2 n)$ connectivity \citep{niklanovits2025connected}. For the exact theorem on general graphs, the best known construction remained exponential: \citet{chandran2018spanning} gave a $4^n$-time local search, improving to $2^{\caO((n/k)\log k)}$ under $(2k-2)$-connectivity. Moreover, $k$-connectivity is essential: without it, the problem is NP-complete even when all parts have equal size and no roots are prescribed \citep{dyer1985complexity}. Thus, no polynomial-time algorithm was known for the general \gyori--\lovasz theorem, or even for any fixed $k\geq 5$.

In this paper, we give the first polynomial-time algorithm for the \gyori--\lovasz theorem for arbitrary $k$. Our approach develops a fundamentally new proof of the existential theorem and introduces the novel concept of \emph{flow-essential assignment}, which genuinely marries matching and cut structures. The algorithm constructs the partition incrementally through edge contractions: at each step, we identify an edge $(v_i,u)$ whose contraction permanently assigns $u$ to the part rooted at $v_i$ while preserving a carefully designed global invariant.
The central challenge is designing an invariant that is simultaneously strong enough to guarantee the existence of a valid contractible edge and structured enough to be checked efficiently. Natural candidates, such as preserving $k$-connectivity itself, fail dramatically: we show examples where no valid contraction exists despite the graph remaining highly connected. Our flow-essential assignment framework overcomes this barrier by maintaining the key invariant throughout, yielding a fully polynomial-time construction.
The same framework extends to the directed theorem of \lovasz, providing the first direct polynomial-time construction for a result previously known only through non-constructive topological arguments. It also extends to directed acyclic graphs, where we get a near-linear-time algorithm, and to the weighted setting.
More broadly, our approach reveals that the \gyori--\lovasz theorem is not purely about connectivity: prescribed part sizes impose a matching structure, and it is precisely the genuine marriage of matching and cuts that breaks the exponential barrier.

Confluent flows---where all flow leaving a node follows a single outgoing edge---arise naturally in networking and Internet routing. A seminal result of Chen, Kleinberg, \lovasz, Rajaraman, Sundaram, and Vetta~\citep{chen2007almost} gave a polynomial-time algorithm achieving congestion $1+\ln k$ whenever a splittable flow of congestion $1$ exists, together with a nearly matching lower bound of $H_k$, the $k$-th harmonic number. They further proved that this gap collapses in $k$-connected graphs with $k$ sinks, showing the existence of confluent flows with congestion at most $C+d_{\max}$, where $C$ is the optimal splittable congestion and $d_{\max}$ is the maximum demand. This weighted generalization of the \gyori--\lovasz theorem relied on the topological methods of \lovasz~\citep{lovasz1977homology} and was non-constructive. When we generalize our algorithms to the weighted setting, it provides the missing constructive step by giving the first polynomial-time algorithm for the corresponding existential theorem. The algorithms remain polynomial even with exponentially large weights (demands) and nonuniform capacities.

\subsection{Our Results}
\label{subsec:our-results}
Matchings and cuts are among the most studied objects in combinatorial optimization. The two classical proofs of the \gyorilovasz theorem---the constructive proof of \gyori~\citep{gyori1976division} and the topological proof of \lovasz~\citep{lovasz1977homology}---treat the theorem as a pure connectivity problem: they reason about cuts, vertex-disjoint paths, and connected parts. The algorithms for small $k$~\citep{suzuki1990linear,suzuki1990algorithm,wada1993efficient,hoyer2019independent} do the same. As we show in this paper, matching plays an equally central role. The reason is the capacity constraint: each root must receive a prescribed number of vertices. This counting requirement is a bipartite matching constraint between vertices and roots. We marry the two concepts and introduce the \essentialassignmentcondition. It asks for an assignment with two properties: every vertex goes to a root whose removal decreases the connectivity of that vertex to the root set, and every root receives exactly its prescribed number of vertices. We show how to maintain this condition until the algorithm reaches the partition. To the best of our knowledge, no earlier work considered such a combination.

We work in the directed setting of \lovasz. Let $G=(V,E)$ be a directed graph with a set $T=\{t_1,\dots,t_k\}$ of $k$ terminals. We use \emph{terminal} throughout the paper for what prior work and the discussion above call a root. The graph is $k$-$T$-connected if every non-terminal vertex has $k$ vertex-disjoint directed paths to $k$ distinct terminals. This hypothesis is exactly the one of \lovasz, who calls it $k$-connectivity to $T$. \gyori instead assumes standard $k$-connectivity; this case is included, since a $k$-connected graph---with each undirected edge replaced by two opposite directed edges---is $k$-$T$-connected for every set of $k$ terminals. Given nonnegative integers $c_1,\dots,c_k$ with $\sum_{i=1}^k c_i=|V\setminus T|$, the goal is to partition $V$ into parts $V_1,\dots,V_k$ such that $t_i\in V_i$, $|V_i|=c_i+1$ (that is, $c_i$ non-terminal vertices and $t_i$), and every vertex of $V_i$ has a directed path to $t_i$ inside $V_i$. We refer to \Cref{sec:preliminaries} for the formal definitions.

We call this the unweighted setting. In \Cref{sec:weighted}, we study the generalization introduced by \citet{chen2007almost} for confluent flows, in which every non-terminal carries an integer weight and $c_i$ bounds the total weight of the part of $t_i$. We call it the weighted setting, and unit weights bring us back to the unweighted one. We still prove the unweighted setting on its own, because it is the setting of the \gyorilovasz theorem itself and it presents the main ideas of this paper in their clearest form. Our main result is the following theorem, which we prove in \Cref{sec:algorithm}.

\begin{restatable}{theorem}{thmktconn}
\label{thm:k-t-conn}
Let $G = (V, E)$ be a directed graph and $T = \{t_1,\dots,t_k\}$ be a set of distinct terminals. Assume that $G$ is $k$-$T$-connected and let $c_1, \dots, c_k$ be nonnegative integers such that $\sum_{i=1}^k c_i = |V \setminus T|$. Then there exists a polynomial-time algorithm that partitions the vertices of the graph into $V_1, \dots, V_k$, such that $t_i \in V_i$, $|V_i| = c_i + 1$, and the induced subgraph $G[V_i]$ is connected to $t_i$.
\end{restatable}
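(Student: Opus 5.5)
The plan is to build the partition by repeated contractions, keeping throughout a single invariant, the \essentialassignmentcondition. Call a terminal $t_i$ \emph{essential} for a non-terminal $v$ if deleting $t_i$ lowers the number of vertex-disjoint $v$--$T$ paths. The invariant asks for an assignment $\assgn$ of the non-terminals to terminals such that each $v$ is sent to a terminal essential for $v$, and each $t_i$ receives exactly $c_i$ vertices. Given a graph satisfying the invariant, the algorithm looks for an edge $(u,t_i)$ with $u$ a non-terminal and $c_i>0$. It contracts $u$ into $t_i$, so that $t_i$ now represents $t_i$ together with everything already merged into it, and decrements $c_i$. It then checks that the invariant still holds for the contracted graph. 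Connectivity of the final parts is automatic: every merged vertex had an edge into the current super-terminal, so it reaches $t_i$ inside its part. After $|V\setminus T|$ contractions every $c_i$ is $0$ and the parts have the required sizes.

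The first step is initialization. In a $k$-$T$-connected graph every vertex has $k$ disjoint paths ending at $k$ distinct terminals. Deleting any single terminal therefore drops the connectivity from $k$ to $k-1$, so every terminal is essential for every vertex. Any assignment respecting the capacities satisfies the invariant, and one exists because $\sum_i c_i=|V\setminus T|$. I would also record how to test the invariant in polynomial time. Essentiality is decided by $k$ max-flow computations per vertex. The existence of an assignment is then a bipartite $b$-matching, or equivalently a flow, between non-terminals and the essential terminals, with capacities $c_i$.

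The central step, and the main obstacle, is showing that whenever some non-terminal remains, a contraction exists that preserves the invariant. I would argue by Hall's theorem on the bipartite graph of essential pairs. After a contraction, the invariant can fail only through a Hall violation: a set $X$ of non-terminals whose essential terminals, restricted to the remaining capacities, are too few for $X$. Essentiality is witnessed by minimum vertex cuts. If $t_i$ is essential for $v$, there is a minimum $v$--$T$ separator $S$ with $t_i\in S$. I would take tight cuts $(\lcut,\scut,\rcut)$, with $v\in\lcut$ and $\scut$ a minimum separator, and use submodularity (uncrossing) of vertex cuts to choose extremal ones.

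The plan for this step is as follows. Pick a terminal $t_i$ with $c_i>0$ and consider the vertices assigned to it. Among them, pick a vertex $u$ that is extremal with respect to the cut structure, say one whose cut set $\lcut$ is inclusion-minimal, and that has an edge to $t_i$; such a vertex should exist because a minimum cut separating assigned vertices from $t_i$ must have an in-neighbour of $t_i$ on the boundary. Suppose contracting $(u,t_i)$ created a Hall violator $X$. Uncrossing the tight cuts witnessing the lost essential pairs with the cut of $u$ should then produce a set that already violated Hall before the contraction, or a cut smaller than $\scut$, contradicting $k$-$T$-connectivity. If fixing a single $t_i$ fails, I would range over all pairs $(u,t_i)$ and argue by a counting or exchange argument (an augmenting path in the matching) that at least one pair works. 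Algorithmically, it suffices to try every edge into a terminal and run the polynomial-time test above. There are at most $|V|$ rounds and polynomially many candidates per round, so the total running time is polynomial.
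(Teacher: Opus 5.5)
\textbf{Gap: the existence of a safe contraction is the whole theorem, and your proposal does not establish it.} Initialization, the polynomial-time test, and reassembling the connected parts are all fine. The brute-force loop (``try every edge into a terminal and run the test'') is correct only if the following holds: whenever non-terminals remain, some pre-terminal can be contracted into an adjacent terminal while preserving the \essentialassignmentcondition. Your argument for this is ``should exist'', ``should then produce'', and an unspecified counting or exchange fallback.

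The one concrete contradiction you aim for does not work either. You want a cut smaller than $\scut$, contradicting $k$-$T$-connectivity, but that hypothesis is gone after the first round: contractions destroy $k$-$T$-connectivity (\Cref{fig:k-conn-counter-example}). In later rounds you have only the assignment invariant, and under it vertices may have terminal connectivity far below $k$.

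The underlying difficulty is this. Contracting a pre-terminal $u$ of out-degree at least two deletes its other out-edges. That can lower the connectivity of every vertex routing through $u$ and change its essential set in either direction. Choosing $u$ extremal in its own cut gives you no control over those other vertices' cuts. The paper never proves that a safe contraction always exists, and is built to avoid needing it. The analogous claim is false for compact connectivity and only conjectured for local connectivity.

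\textbf{The missing idea: never contract a pre-terminal of out-degree at least two.} Contract only pre-terminals of out-degree one, for which essentiality provably survives (\Cref{lem:essential-survives-degree-one-leaf-contraction}). Otherwise, delete an edge that is non-critical for the current witness, re-choosing the witness as needed. Proving that such an edge exists is the heart of the paper's proof:
\begin{itemize}
\item Positive capacities together with the witness rule out a Hall-deficient set of terminals. This gives a matching $p_i\leftrightarrow t_i$ of pre-terminals to terminals.
\item For each $i$ pick a secondary edge $e_i=(p_i,q_i)$.
\item If every $e_i$ is critical for some pair $(v_i,\assgn(v_i))$, then $t_i$ is essential for $v_i$, by rerouting along $(p_i,t_i)$.
\item The edge $e_i$ is never critical for any pair $(v,t_i)$.
\item By uncrossing tightest minimum cuts, every $e_j$ critical for $(v_i,t_i)$ was already critical for $(v_i,\assgn(v_i))$.
\item Hence shifting the assignment along a cycle of the reassignment graph strictly decreases the potential $\Phi$. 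Since $\Phi\le k|V\setminus T|$, a removable edge is found after polynomially many shifts.
\end{itemize}
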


Every part of the final partition must stay connected to its terminal. A natural approach is to build the parts by contraction. Call a non-terminal vertex with an edge to a terminal a pre-terminal. The contraction of an edge $(p,t)$ from a pre-terminal $p$ into a terminal $t$ permanently adds $p$ to the part of $t$. To repeat this step until no vertex remains, we must guarantee that after each contraction, the resulting instance can still be completed to a valid partition. We therefore seek a condition that can be maintained throughout the algorithm and guarantees that, at every step, some pre-terminal can be contracted while preserving the condition.

\begin{figure}[htbp]
\centering

\begin{subfigure}{0.48\textwidth}
\centering
\begin{essgraph}[scale=0.82]

  \essterminal{t1}{(-2.40,-1.75)}{$t_1$}
  \essterminal{t2}{( 0.00,-1.75)}{$t_2$}
  \essterminal{t3}{( 2.40,-1.75)}{$t_3$}

  \essvertex{v4}{(-3.00, 0.00)}{$v_4$}
  \essvertex{v5}{(-2.00, 0.00)}{$v_5$}
  \essvertex{v6}{(-0.70, 0.00)}{$v_6$}
  \essvertex{v7}{( 0.70, 0.00)}{$v_7$}
  \essvertex{v8}{( 2.00, 0.00)}{$v_8$}
  \essvertex{v9}{( 3.00, 0.00)}{$v_9$}

  \begin{scope}[on background layer]
    \essedge{t1}{v4}
    \essedge{t1}{v5}

    \essedge{t2}{v6}
    \essedge{t2}{v7}

    \essedge{t3}{v8}
    \essedge{t3}{v9}

    \essedge{v4}{v5}
    \essedge{v5}{v6}
    \essedge{v6}{v7}
    \essedge{v7}{v8}
    \essedge{v8}{v9}

    \draw[line width=\EssEdgeLineWidth]
      (v4) to[out=35,in=145,looseness=1.08] (v9);
  \end{scope}

\end{essgraph}
\caption{The original graph $G$.}
\label{fig:original-graph}
\end{subfigure}
\hfill
\begin{subfigure}{0.48\textwidth}
\centering
\begin{essgraph}[scale=0.82]

  \essterminal{t1}{(-2.40,-1.75)}{$t_1$}
  \essterminal{t2}{( 0.00,-1.75)}{$t_2$}
  \essterminal{t3}{( 2.40,-1.75)}{$t_3$}

  \essvertex{v5}{(-2.00, 0.00)}{$v_5$}
  \essvertex{v6}{(-0.70, 0.00)}{$v_6$}
  \essvertex{v7}{( 0.70, 0.00)}{$v_7$}
  \essvertex{v8}{( 2.00, 0.00)}{$v_8$}
  \essvertex{v9}{( 3.00, 0.00)}{$v_9$}

  \begin{scope}[on background layer]
    \essedge{t1}{v5}

    \essedge{t2}{v6}
    \essedge{t2}{v7}

    \essedge{t3}{v8}
    \essedge{t3}{v9}

    \essedge{v5}{v6}
    \essedge{v6}{v7}
    \essedge{v7}{v8}
    \essedge{v8}{v9}

    \draw[line width=\EssEdgeLineWidth]
      (t1) .. controls (-3.50,1.85) and (1.55,2.00) .. (v9);
  \end{scope}

\end{essgraph}
\caption{After contracting $v_4$ into $t_1$.}
\label{fig:contract-v4}
\end{subfigure}

\caption{A graph that is $3$-connected to the set of terminals
$T=\{t_1,t_2,t_3\}$, but contracting any pre-terminal into an adjacent
terminal destroys this property. For example, after contracting $v_4$ into
$t_1$, the vertex $v_5$ has only two neighbors, namely $t_1$ and $v_6$,
and therefore cannot have three vertex-disjoint paths to $T$. By symmetry,
the same obstruction occurs for every such contraction. Replacing every
undirected edge by two oppositely directed edges yields the same
counterexample in the directed setting.}
\label{fig:k-conn-counter-example}
\end{figure}

The most natural candidate is $k$-connectivity to the terminal set $T$, but we cannot always maintain this property under contraction. \Cref{fig:k-conn-counter-example} gives a graph that is $3$-connected to the set of terminals $T=\{t_1,t_2,t_3\}$, yet contracting any pre-terminal destroys this property. \citet{gyori1981partition} proved the converse of the \gyori--\lovasz theorem: if a graph admits the desired partition for every choice of $k$ terminals and positive part sizes that sum to $|V|$, then it is $k$-connected. Together, this converse and the contraction counterexample suggest that a useful invariant must reason explicitly about the terminals and most importantly, their capacities. In \Cref{app:relaxed-conditions}, we introduce compact connectivity and local connectivity, two relaxations that incorporate the capacities but still fail to provide a maintainable, efficiently checkable invariant: $k$-$T$-connectivity implies compact connectivity, and compact connectivity implies local connectivity. We show that the \gyorilovasz conclusion still holds under both, which generalizes the theorem itself. Similar to $k$-$T$-connectivity, compact connectivity is also impossible to maintain: there are instances in which no contraction preserves it. Despite this, we show that the framework developed in this paper yields a polynomial-time algorithm under the compact connectivity condition, providing one example of the broader applicability of our algorithm beyond $k$-$T$-connected graphs (\Cref{cor:compact-connectivity-partition}). For local connectivity, the weakest of the three, we conjecture that some contraction always preserves it (\Cref{conj:local-connectivity-contraction}). However, we know of no efficient way to check this condition when $k$ is part of the input, so even this route seems unlikely to give a polynomial-time algorithm. The \essentialassignmentcondition, which we describe next, overcomes this barrier.

\paragraph{The \essentialassignmentcondition.}
We introduce the \essentialassignmentcondition, a polynomial-time checkable generalization of $k$-$T$-connectivity that makes contraction possible. Its scope is much wider: every $k$-$T$-connected graph satisfies it, but so does every instance consisting of $k$ disjoint arborescences of the prescribed sizes, even though such an instance may be far from $k$-$T$-connected. Thus, the condition captures a large family of graphs that already admit the desired partition. For a non-terminal vertex $v$, let $\connvg{v}{G}$ denote the maximum number of vertex-disjoint paths from $v$ to distinct terminals. We call this quantity the terminal connectivity of $v$. We build the condition from the terminals whose removal decreases this quantity.

\begin{restatable}[Flow-Essential Terminals]{definition}{defessentialterminal}
	\label{def:essential-terminal}
	We call a terminal $t \in T$ \emph{flow-essential} for a non-terminal vertex $v \in V \setminus T$, or simply \emph{essential} for $v$, if removing $t$ from $G$ reduces the terminal connectivity of $v$ by exactly $1$, i.e., $\connvg{v}{G \setminus \{t\}} = \connvg{v}{G} - 1$. Equivalently, $t$ must be the destination of a path in every maximum-cardinality family of vertex-disjoint paths from $v$ to $T$.
\end{restatable}

Essential terminals admit two complementary views, and both are central to our proofs. The first view, using paths and illustrated in \Cref{fig:v-essential-terminals}, is immediate from the definition: a terminal $t$ is not essential for $v$ precisely if there is a family of $\connvg{v}{G}$ vertex-disjoint paths from $v$ to distinct terminals that avoids $t$. Equivalently, $t$ is essential if every such maximum family contains a path ending at $t$. The second view, using cuts and illustrated in \Cref{fig:tightest-v-T-cut}, follows from Menger's theorem. The terminal connectivity of $v$ equals the size of a minimum cut that separates $v$ from $T$, and the separator of such a cut can contain terminals. The intersection of all these minimum cuts is again a minimum cut, the tightest one. A terminal is essential for $v$ exactly when it lies in the separator of the tightest minimum cut (\Cref{lem:essential-terminal-tightest-cut}). In words, the essential terminals of $v$ are the terminals inside the bottleneck between $v$ and the terminal set.

\begin{figure}[htbp]
    \centering

    \begin{subfigure}[b]{0.48\textwidth}
        \centering
        \begin{essgraph}[scale=0.95, >=stealth]
            \essterminal{t1}{(-3,0)}{$t_1$}
            \essterminal{t2}{(-1,0)}{$t_2$}
            \essterminal{t3}{( 1,0)}{$t_3$}
            \essterminal{t4}{( 3,0)}{$t_4$}

            \essvertex{p1}{(-2,1.7)}{$p_1$}
            \essvertex{p2}{( 0,1.7)}{$p_2$}
            \essvertex{p3}{( 2,1.7)}{$p_3$}

            \essvertex[colors={t1/0.5,t2/0.5}]{v}{(0,3.4)}{$v$}

            \essedge[extra={->,draw=orange!85!black,line width=1.4pt}]{v}{p1}
            \essedge[extra={->,draw=orange!85!black,line width=1.4pt}]{p1}{t1}

            \essedge[extra={->,draw=orange!85!black,line width=1.4pt}]{v}{p2}
            \essedge[extra={->,draw=orange!85!black,line width=1.4pt}]{p2}{t2}

            \essedge[extra={->,draw=orange!85!black,line width=1.4pt}]{v}{p3}
            \essedge[extra={->,draw=orange!85!black,line width=1.4pt}]{p3}{t3}

            \essedge[extra={->}]{p1}{p2}
            \essedge[extra={->}]{p2}{p3}
            \essedge[extra={->}]{p3}{t4}
        \end{essgraph}
        \caption{$\connvg{v}{G}$ paths from $v$ to $T$ avoiding $t_4$.}
        \label{fig:v-avoid-t4}
    \end{subfigure}
    \hfill
    \begin{subfigure}[b]{0.48\textwidth}
        \centering
        \begin{essgraph}[scale=0.95, >=stealth]
            \essterminal{t1}{(-3,0)}{$t_1$}
            \essterminal{t2}{(-1,0)}{$t_2$}
            \essterminal{t3}{( 1,0)}{$t_3$}
            \essterminal{t4}{( 3,0)}{$t_4$}

            \essvertex{p1}{(-2,1.7)}{$p_1$}
            \essvertex{p2}{( 0,1.7)}{$p_2$}
            \essvertex{p3}{( 2,1.7)}{$p_3$}

            \essvertex[colors={t1/0.5,t2/0.5}]{v}{(0,3.4)}{$v$}

            \essedge[extra={->,draw=orange!85!black,line width=1.4pt}]{v}{p1}
            \essedge[extra={->,draw=orange!85!black,line width=1.4pt}]{p1}{t1}

            \essedge[extra={->,draw=orange!85!black,line width=1.4pt}]{v}{p2}
            \essedge[extra={->,draw=orange!85!black,line width=1.4pt}]{p2}{t2}

            \essedge[extra={->,draw=orange!85!black,line width=1.4pt}]{v}{p3}
            \essedge[extra={->,draw=orange!85!black,line width=1.4pt}]{p3}{t4}

            \essedge[extra={->}]{p1}{p2}
            \essedge[extra={->}]{p2}{p3}
            \essedge[extra={->}]{p3}{t3}
        \end{essgraph}
        \caption{$\connvg{v}{G}$ paths from $v$ to $T$ avoiding $t_3$.}
        \label{fig:v-avoid-t3}
    \end{subfigure}

    \caption{
    The terminals $t_3$ and $t_4$ are not essential for $v$: there are families of three vertex-disjoint paths from $v$ to $T$ that avoid $t_3$ and, respectively, avoid $t_4$. In contrast, $t_1$ and $t_2$ are essential for $v$, since every family of three vertex-disjoint paths from $v$ to $T$ must use both of them.
    }
    \label{fig:v-essential-terminals}
\end{figure}

\begin{figure}[htbp]
    \centering
    \begin{essgraph}[scale=0.95, >=stealth]
        \essterminal{t1}{(-3,0)}{$t_1$}
        \essterminal{t2}{(-1,0)}{$t_2$}
        \essterminal{t3}{( 1,0)}{$t_3$}
        \essterminal{t4}{( 3,0)}{$t_4$}

        \essvertex{p1}{(-2,1.7)}{$p_1$}
        \essvertex{p2}{( 0,1.7)}{$p_2$}
        \essvertex{p3}{( 2,1.7)}{$p_3$}
        \essvertex[colors={t1/0.5,t2/0.5}]{v}{(0,3.4)}{$v$}

        \essedge[extra={->}]{v}{p1}
        \essedge[extra={->}]{p1}{t1}
        \essedge[extra={->}]{v}{p2}
        \essedge[extra={->}]{p2}{t2}
        \essedge[extra={->}]{v}{p3}
        \essedge[extra={->}]{p3}{t3}
        \essedge[extra={->}]{p3}{t4}
        \essedge[extra={->}]{p1}{p2}
        \essedge[extra={->}]{p2}{p3}

        \begin{scope}[on background layer]
            \path[
                draw=\EssCutOneColor,
                fill=\EssCutOneColor,
                fill opacity=0.10,
                line width=1.10pt,
                rounded corners=8pt
            ]
                (-3.70,0.72) -- (-0.40,0.72)
                .. controls (0.10,0.72) and (0.90,2.40) .. (1.55,2.40)
                -- (2.70,2.40) -- (2.70,1.00) -- (1.55,1.00)
                .. controls (1.25,1.00) and (0.40,0.85) .. (-0.40,-0.72)
                -- (-3.70,-0.72) -- cycle;
        \end{scope}
        \node[
            anchor=south west,
            font=\scriptsize\itshape,
            text=\EssCutOneColor,
            fill=white,
            inner sep=0.6pt
        ] at (2.70,2.40) {separator};
    \end{essgraph}
    \caption{The figure shows the tightest minimum cut separating $v$ from $T$. Its separator contains $t_1$, $t_2$, and $p_3$. In particular, $t_1$ and $t_2$ belong to the
    separator of the tightest minimum cut and are therefore essential for $v$.}
    \label{fig:tightest-v-T-cut}
\end{figure}

The idea of \essentialassignmentcondition is to assign each non-terminal vertex to a terminal that is essential for it, while respecting the prescribed capacities: terminal $t_i$ must receive $c_i$ vertices. We define an assignment as a function $\assgn: V\setminus T \to T$ that maps every non-terminal vertex to one terminal. For a terminal $t$, we write $\assgninv(t)$ for the set of vertices mapped to $t$ and $\cp_t$ for its capacity. Throughout the paper, we sometimes write $\cp_i$ in place of $\cp_{t_i}$.

\begin{restatable}[\essentialassignmentcondition]{definition}{defessentialassignment}
	\label{def:essential-assignment-condition}
	We say that $G$ satisfies the \essentialassignmentcondition if there exists an assignment $\assgn$ such that:
	\begin{enumerate}
		\item for every vertex $v \in V \setminus T$, the terminal $\assgn(v)$ is essential for $v$; and
		\item for every terminal $t \in T$, we have $|\assgninv(t)| = \cp_t$.
	\end{enumerate}
\end{restatable}

We refer to \Cref{fig:essential-and-assignment} in \Cref{sec:essential-assignment-conditions} for a visualization of the \essentialassignmentcondition.

The two properties of the condition tie the two sides of the problem together. The first property is the cut side: the assignment sends each vertex into its bottleneck. The second property is the counting side: the assignment saturates every capacity exactly. Together, the two properties ask for a bipartite matching between the vertices and their essential terminals. Every vertex is matched once, and every terminal $t$ is matched exactly $\cp_t$ times. The two extremes show the range of the condition. In a $k$-$T$-connected graph, a family of $k$ vertex-disjoint paths from a vertex must end at all $k$ terminals. Hence every terminal is essential for every vertex, and any allocation that respects the capacities satisfies the condition. In a disjoint union of $k$ arborescences, each vertex reaches only the root of its own tree, so this root is its only essential terminal. The assignment of every vertex to its root fills each capacity exactly.

We can check the condition in polynomial time: for each vertex, one maximum-flow computation finds its essential terminals (\Cref{prop:essential-terminals-single-flow}), and then a single bipartite maximum-flow computation finds an assignment that satisfies the condition. We call such an assignment a witness. More importantly, we show how to maintain the condition in polynomial time until the algorithm builds the full partition. This gives the following strengthening of \Cref{thm:k-t-conn}. Since a $k$-$T$-connected graph always satisfies the condition, \Cref{thm:k-t-conn} follows.

\begin{restatable}{theorem}{thmessassign}
\label{thm:ess-assign-cond}
Let $G = (V, E)$ be a directed graph and $T = \{t_1,\dots,t_k\}$ be a set of distinct terminals. Assume that $G$ satisfies the \essentialassignmentcondition and let $c_1, \dots, c_k$ be nonnegative integers such that $\sum_{i=1}^k c_i = |V \setminus T|$. Then there exists a polynomial-time algorithm that partitions the vertices of the graph into $V_1, \dots, V_k$, such that $t_i \in V_i$, $|V_i| = c_i + 1$, and the induced subgraph $G[V_i]$ is connected to $t_i$.
\end{restatable}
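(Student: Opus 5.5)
The algorithm will be an induction on $|V\setminus T|$ driven by contractions. If $V=T$, every capacity is $0$ and the parts $V_i=\{t_i\}$ finish the job. Otherwise we look for a pre-terminal $p$ with an edge $(p,t_i)$, $c_i\ge 1$, such that the instance $G'=G/(p,t_i)$ with capacity $c_i-1$ at $t_i$ (the merged vertex keeps the name $t_i$) still satisfies the \essentialassignmentcondition. Once we contract, we recurse. Unfolding the contractions gives the partition: every vertex merged into $t_i$ reaches $t_i$ through the contracted edges, so each $G[V_i]$ is connected to $t_i$, and the capacities count exactly. Each step costs polynomial time. We try at most $|E|$ candidate edges, and for each one we check the condition with one max-flow per vertex (\Cref{prop:essential-terminals-single-flow}) plus one bipartite flow. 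There are at most $|V|$ steps. So the whole theorem reduces to one existence lemma: whenever $G$ satisfies the condition and $V\neq T$, some valid contraction exists.

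To prove the lemma, fix a witness $\assgn$. First we show that some terminal $t$ has an assigned pre-terminal $p\in\assgninv(t)$ with $(p,t)\in E$. The argument: take any $v$ and any maximum path family from $v$ to $T$. The path ending at the essential terminal $\assgn(v)$ has a last non-terminal vertex, and that vertex is a pre-terminal adjacent to $\assgn(v)$. We need that pre-terminal to also be assigned to $\assgn(v)$. If it is not, we modify $\assgn$ along alternating paths, as in a matching exchange.

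Next, we track how terminal connectivities and essential sets change under contracting $(p,t)$. Two facts would settle it.
\begin{itemize}
\item[(a)] For any $v$, $\connvg{v}{G'}\ge \connvg{v}{G}$ or it drops by at most one. Any path through $p$ can be shortcut to $t$, and disjoint families stay disjoint except at the merged vertex.
\item[(b)] Essential terminals are described by the tightest minimum cut (\Cref{lem:essential-terminal-tightest-cut}). Using this, we show that if $s$ was essential for $v$ in $G$, then $s$ remains essential in $G'$ unless the tightest cut of $v$ changes in a controlled way involving $p$ or $t$.
\end{itemize}

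The vertices that lose their assigned essential terminal are then re-matched. By Hall's theorem, the bipartite flow instance with capacities $\cp'$ fails only if there is a deficient set $X$ of vertices whose combined essential terminals have too little total capacity. We analyse such an $X$ through submodularity of the cuts: the union of the tightest cuts of the vertices in $X$ gives a cut in $G$. If that cut does not already violate the condition in $G$, it must contain $p$ in a specific position.

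The main obstacle is the choice of $p$. A generic pre-terminal can fail, as \Cref{fig:k-conn-counter-example} shows. I would choose $(p,t)$ extremally: among all pairs with $p\in\assgninv(t)$, adjacent to $t$, take one whose tightest $p$-cut has its source side $L$ inclusion-minimal, or equivalently is closest to $T$. Suppose the contraction of $(p,t)$ fails via a Hall-deficient set $X$. Uncrossing the cut of $X$ with the cut of $p$ then yields either a Hall violation already in $G$, contradicting the existence of $\assgn$, or another assigned pre-terminal with a strictly smaller tightest cut, contradicting minimality. Making this uncrossing argument precise, especially in counting terminal capacities on both sides of the cut, is the delicate heart of the proof. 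I expect it to require a potential argument over the set of witnesses, and I would try that first.
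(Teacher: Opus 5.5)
Your reduction to a single existence lemma (whenever the \essentialassignmentcondition holds and $V\neq T$, some contraction $(p,t)$ preserves it) is sound as a reduction. But that lemma carries the whole theorem and you do not prove it. The uncrossing step is only asserted, and the extremal rule behind it fails as stated.

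Take the directed version of \Cref{fig:k-conn-counter-example} with $c=(1,3,2)$ and the witness $v_4\mapsto t_1$, $v_5,v_6,v_7\mapsto t_2$, $v_8,v_9\mapsto t_3$. This witness is valid, since every terminal is essential for every vertex. Every non-terminal has terminal connectivity $3$, so every tightest cut is $(\emptyset,T,V\setminus T)$. All eligible pairs therefore tie under your rule, and in particular $(v_4,t_1)$ is an extremal choice. After contracting it, $v_5$ has out-neighbors $t_1$ and $v_6$ only, and deleting $t_2$ or $t_3$ still leaves two disjoint paths. So $t_1$ is the only essential terminal of $v_5$, while its capacity is now $0$, and the condition fails.

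Yet $G$ satisfies the condition and no pair has a strictly smaller tightest cut. So the dichotomy your uncrossing is supposed to deliver (a Hall violation in $G$, or a pair with smaller tightest cut) is false. A good contraction does exist here, namely $(v_6,t_2)$, but it must be singled out using the capacities in a way your sketch does not capture. The paper never asserts that such a contraction always exists. It points out that the analogous statement fails for $k$-$T$-connectivity and for compact connectivity.

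The idea you are missing is to avoid contracting pre-terminals of out-degree at least two altogether. The paper contracts $(p,t)$ only when $d^+(p)=1$. Then vertex-disjoint path families correspond, with the same endpoints, before and after the contraction, so no essentiality is lost (\Cref{lem:essential-survives-degree-one-leaf-contraction}). Zero-capacity terminals are simply deleted (\Cref{lem:essential-survives-terminal-removal}).

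Otherwise the paper deletes an edge. Positive capacities and the minimal Hall-deficient-set argument (\Cref{lem:terminal-leaf-hall-set-not-essential-outside,lem:terminal-leaf-matching}) give a matching $p_i\leftrightarrow t_i$, and each $p_i$ gets a secondary edge $e_i$. Suppose every $e_i$ is critical for some pair $(v,\assgn(v))$. Then \Cref{lem:critical-implies-assignable,lem:own-edge-not-critical,lem:critical-transfer} show that a cycle shift in the reassignment graph keeps a valid witness and strictly decreases $\Phi\le k|V\setminus T|$. Hence after polynomially many shifts some $e_i$ can be deleted without invalidating the witness. Repeated deletions eventually produce an out-degree-one pre-terminal, so the contraction becomes forced and automatically safe rather than something you must search for and justify.
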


\paragraph{Maintaining the condition.}
\providecommand{\OverviewScale}{1}

\providecommand{\OverviewXScale}{0.90}

\providecommand{\OverviewPanelWidth}{0.47\textwidth}

\captionsetup{font=small,labelfont=bf}
\captionsetup[subfigure]{font=footnotesize,labelfont=bf,skip=2pt}

\newcommand{\overviewterminals}{%
  \essterminal{t1}{(-2.60,0.00)}{$t_1$}%
  \essterminal{t2}{( 0.00,0.00)}{$t_2$}%
  \essterminal{t3}{( 2.60,0.00)}{$t_3$}%
}

\newcommand{\overviewedges}{%
  \essedge[extra={->}]{v8}{v4}%
  \essedge[extra={->}]{v8}{v5}%
  \essedge[extra={->}]{v5}{v4}%
  \essedge[extra={->}]{v4}{t1}%
  \essedge[extra={->}]{v4}{t2}%
  \essedge[extra={->}]{v5}{t2}%
  \essedge[extra={->}]{v9}{v6}%
  \essedge[extra={->}]{v9}{v7}%
  \essedge[extra={->}]{v6}{v7}%
  \essedge[extra={->}]{v6}{t2}%
  \essedge[extra={->}]{v7}{t2}%
  \essedge[extra={->}]{v7}{t3}%
}

\begin{figure*}[!tb]
  \centering

  \begin{subfigure}[t]{\OverviewPanelWidth}
    \centering
    \begin{essgraph}[
      scale=\OverviewScale,
      xscale=\OverviewXScale,
      >=stealth,
      line cap=round,
      line join=round
    ]
      \overviewterminals

      \essvertex[colors={t1/0.5,t2/0.5},assigned=t2]{v4}{(-1.95,1.25)}{$v_4$}
      \essvertex[colors={t1/0.5,t2/0.5},assigned=t1]{v5}{(-0.65,1.25)}{$v_5$}
      \essvertex[colors={t2/0.5,t3/0.5},assigned=t3]{v6}{( 0.65,1.25)}{$v_6$}
      \essvertex[colors={t2/0.5,t3/0.5},assigned=t2]{v7}{( 1.95,1.25)}{$v_7$}
      \essvertex[colors={t1/0.5,t2/0.5},assigned=t1]{v8}{(-1.30,2.50)}{$v_8$}
      \essvertex[colors={t2/0.5,t3/0.5},assigned=t3]{v9}{( 1.30,2.50)}{$v_9$}

      \overviewedges
    \end{essgraph}
    \caption{The initial graph, the essential terminals of the non-terminal vertices, and a valid flow-essential assignment for the capacity vector $\cp = (2, 2, 2)$.}
    \label{fig:overview-assignment}
  \end{subfigure}
  \hfill
  \begin{subfigure}[t]{\OverviewPanelWidth}
    \centering
    \begin{essgraph}[
      scale=\OverviewScale,
      xscale=\OverviewXScale,
      >=stealth,
      line cap=round,
      line join=round
    ]
      \overviewterminals

      \essvertex[colors={t1/0.5,t2/0.5},assigned=t2]{v4}{(-1.95,1.25)}{$v_4$}
      \essvertex[colors={t1/0.5,t2/0.5},assigned=t1]{v5}{(-0.65,1.25)}{$v_5$}
      \essvertex[colors={t2/0.5,t3/0.5},assigned=t3]{v6}{( 0.65,1.25)}{$v_6$}
      \essvertex[colors={t2/0.5,t3/0.5},assigned=t2]{v7}{( 1.95,1.25)}{$v_7$}
      \essvertex[colors={t1/0.5,t2/0.5},assigned=t1]{v8}{(-1.30,2.50)}{$v_8$}
      \essvertex[colors={t2/0.5,t3/0.5},assigned=t3]{v9}{( 1.30,2.50)}{$v_9$}

      \essedge[extra={->}]{v8}{v4}
      \essedge[extra={->}]{v8}{v5}
      \essedge[extra={->}]{v5}{v4}
      \essedge[extra={->,blue,ultra thick}]{v4}{t1}
      \essedge[extra={->,red,ultra thick}]{v4}{t2}
      \node[below left=4pt,inner sep=1pt] at ($(v4)!0.5!(t2)$) {$e_1$};
      \essedge[extra={->}]{v5}{t2}
      \essedge[extra={->}]{v9}{v6}
      \essedge[extra={->}]{v9}{v7}
      \essedge[extra={->,red,ultra thick}]{v6}{v7}
      \node[above=3pt,inner sep=1pt] at ($(v6)!0.5!(v7)$) {$e_2$};
      \essedge[extra={->,blue,ultra thick}]{v6}{t2}
      \essedge[extra={->,red,ultra thick}]{v7}{t2}
      \node[below right=4pt,inner sep=1pt] at ($(v7)!0.5!(t2)$) {$e_3$};
      \essedge[extra={->,blue,ultra thick}]{v7}{t3}
    \end{essgraph}
\caption{A bipartite matching between terminals and pre-terminals (blue edges), along with the secondary edge for each matched pre-terminal (red edges).}
    \label{fig:overview-matching}
  \end{subfigure}

  \vspace{0.35em}

  \begin{subfigure}[t]{\OverviewPanelWidth}
    \centering
    \begin{essgraph}[
      scale=\OverviewScale,
      xscale=\OverviewXScale,
      >=stealth,
      line cap=round,
      line join=round
    ]
      \essterminal{t1}{(0.00,0.00)}{$t_1$}
      \essterminal{t2}{(2.5,0.00)}{$t_2$}
      \essterminal{t3}{(1.25,2.16)}{$t_3$}

      \begin{scope}[on background layer]
        \draw[->,line width=\EssEdgeLineWidth]
          (t2) to[bend left=25]
          node[pos=0.5,fill=white,inner sep=1pt] {$v_4$} (t1);
        \draw[->,green!50!black,ultra thick]
          (t3) to[bend left=25]
          node[pos=0.5,fill=white,inner sep=1pt] {$v_6$} (t2);
        \draw[->,green!50!black,ultra thick]
          (t2) to[bend left=25]
          node[pos=0.5,fill=white,inner sep=1pt] {$v_7$} (t3);
      \end{scope}
    \end{essgraph}
\caption{The reassignment graph $\caR$. Each edge $\assgn(v_i)\to t_i$, labeled by $v_i$, represents the option of reassigning $v_i$ to $t_i$; the highlighted directed cycle determines a cycle shift.}
    \label{fig:overview-relation}
  \end{subfigure}
  \hfill
  \begin{subfigure}[t]{\OverviewPanelWidth}
    \centering
    \begin{essgraph}[
      scale=\OverviewScale,
      xscale=\OverviewXScale,
      >=stealth,
      line cap=round,
      line join=round
    ]
      \overviewterminals

      \essvertex[colors={t1/0.5,t2/0.5},assigned=t2]{v4}{(-1.95,1.25)}{$v_4$}
      \essvertex[colors={t1/0.5,t2/0.5},assigned=t1]{v5}{(-0.65,1.25)}{$v_5$}
      \essvertex[colors={t2/0.5,t3/0.5},assigned=t2]{v6}{( 0.65,1.25)}{$v_6$}
      \essvertex[colors={t2/0.5,t3/0.5},assigned=t3]{v7}{( 1.95,1.25)}{$v_7$}
      \essvertex[colors={t1/0.5,t2/0.5},assigned=t1]{v8}{(-1.30,2.50)}{$v_8$}
      \essvertex[colors={t2/0.5,t3/0.5},assigned=t3]{v9}{( 1.30,2.50)}{$v_9$}

      \essedge[extra={->}]{v8}{v4}
      \essedge[extra={->}]{v8}{v5}
      \essedge[extra={->}]{v5}{v4}
      \essedge[extra={->,blue,ultra thick}]{v4}{t1}
      \essedge[extra={->,red,ultra thick}]{v4}{t2}
      \node[below left=4pt,inner sep=1pt] at ($(v4)!0.5!(t2)$) {$e_1$};
      \essedge[extra={->}]{v5}{t2}
      \essedge[extra={->}]{v9}{v6}
      \essedge[extra={->}]{v9}{v7}
      \essedge[extra={->,green!50!black,ultra thick}]{v6}{v7}
      \node[above=3pt,inner sep=1pt] at ($(v6)!0.5!(v7)$) {$e_2$};
      \essedge[extra={->,blue,ultra thick}]{v6}{t2}
      \essedge[extra={->,green!50!black,ultra thick}]{v7}{t2}
      \node[below right=4pt,inner sep=1pt] at ($(v7)!0.5!(t2)$) {$e_3$};
      \essedge[extra={->,blue,ultra thick}]{v7}{t3}
    \end{essgraph}
\caption{The witness obtained by the cycle shift, which reassigns $v_6$ to $t_2$ and $v_7$ to $t_3$. Under this witness, neither $e_2$ nor $e_3$ is critical for any assignment pair.}
    \label{fig:overview-shifting}
  \end{subfigure}

  \vspace{0.35em}

  \begin{subfigure}[t]{\OverviewPanelWidth}
    \centering
    \begin{essgraph}[
      scale=\OverviewScale,
      xscale=\OverviewXScale,
      >=stealth,
      line cap=round,
      line join=round
    ]
      \overviewterminals

      \essvertex[colors={t1/0.5,t2/0.5},assigned=t2]{v4}{(-1.95,1.25)}{$v_4$}
      \essvertex[colors={t1/0.5,t2/0.5},assigned=t1]{v5}{(-0.65,1.25)}{$v_5$}
      \essvertex[colors={t2/0.5,t3/0.5},assigned=t2]{v6}{( 0.65,1.25)}{$v_6$}
      \essvertex[colors={t3/1},assigned=t3]{v7}{( 1.95,1.25)}{$v_7$}
      \essvertex[colors={t1/0.5,t2/0.5},assigned=t1]{v8}{(-1.30,2.50)}{$v_8$}
      \essvertex[colors={t2/0.5,t3/0.5},assigned=t3]{v9}{( 1.30,2.50)}{$v_9$}

      \essedge[extra={->}]{v8}{v4}
      \essedge[extra={->}]{v8}{v5}
      \essedge[extra={->}]{v5}{v4}
      \essedge[extra={->}]{v4}{t1}
      \essedge[extra={->}]{v4}{t2}
      \essedge[extra={->}]{v5}{t2}
      \essedge[extra={->}]{v9}{v6}
      \essedge[extra={->}]{v9}{v7}
      \essedge[extra={->}]{v6}{v7}
      \essedge[extra={->}]{v6}{t2}
      \essedge[extra={->,orange,ultra thick}]{v7}{t3}
    \end{essgraph}
\caption{The graph after removing $e_3$, which is not critical for any assignment pair under the updated witness. The pre-terminal $v_7$ now has the single outgoing edge $(v_7,t_3)$ and can be contracted into $t_3$.}
    \label{fig:overview-removal}
  \end{subfigure}
  \hfill
  \begin{subfigure}[t]{\OverviewPanelWidth}
    \centering
    \begin{essgraph}[
      scale=\OverviewScale,
      xscale=\OverviewXScale,
      >=stealth,
      line cap=round,
      line join=round
    ]
      \overviewterminals

      \essvertex[colors={t1/0.5,t2/0.5},assigned=t2]{v4}{(-1.95,1.25)}{$v_4$}
      \essvertex[colors={t1/0.5,t2/0.5},assigned=t1]{v5}{(-0.65,1.25)}{$v_5$}
      \essvertex[colors={t2/0.5,t3/0.5},assigned=t2]{v6}{( 0.65,1.25)}{$v_6$}
      \essvertex[colors={t1/0.5,t2/0.5},assigned=t1]{v8}{(-1.30,2.50)}{$v_8$}
      \essvertex[colors={t2/0.5,t3/0.5},assigned=t3]{v9}{( 1.30,2.50)}{$v_9$}

      \essedge[extra={->}]{v8}{v4}
      \essedge[extra={->}]{v8}{v5}
      \essedge[extra={->}]{v5}{v4}
      \essedge[extra={->}]{v4}{t1}
      \essedge[extra={->}]{v4}{t2}
      \essedge[extra={->}]{v5}{t2}
      \essedge[extra={->}]{v9}{v6}
      \essedge[extra={->}]{v9}{t3}
      \essedge[extra={->}]{v6}{t3}
      \essedge[extra={->}]{v6}{t2}
    \end{essgraph}
\caption{Contracting $v_7$ into $t_3$ and updating the capacity vector to $\cp = (2, 2, 1)$.}
    \label{fig:overview-contraction}
  \end{subfigure}

  \caption{The sequence of transformations in the algorithm until it performs a contraction step. A detailed explanation of this example appears on \cpageref{par:complete-example-explanation}.}
  \vspace{0.35em}
  \begin{minipage}{\textwidth}
    \textbf{Visual conventions.}
    Rectangular vertices represent terminals, distinguished by color. Colored sectors within a non-terminal vertex indicate its essential terminals, while a colored outer ring, when present, indicates the terminal to which the vertex is assigned. These conventions are used throughout the paper.
  \end{minipage}
  \label{fig:overview-full}
\end{figure*}

The algorithm (\Cref{alg:gl-partition}) keeps a witness $\assgn$ and repeatedly applies one of three operations: (i) it removes a terminal whose capacity has dropped to zero, (ii) it contracts a pre-terminal with a single outgoing edge into the terminal that this edge enters and decreases the capacity of that terminal, or (iii) it removes one edge, chosen so that the graph still has a witness after the removal. Only operation (ii) builds the partition: each contraction commits one pre-terminal to a part. Operation (i) removes a terminal once its part is complete. The real work is in operation (iii), and it is our answer to the contraction barrier: we do not search for a contraction that the graph tolerates. Instead, we remove edges until some pre-terminal is down to one outgoing edge and the contraction is forced. Each operation removes a vertex or an edge, so after at most $|V|+|E|$ steps the graph is empty. To assemble the partition, we undo the contractions. Every contracted vertex entered its part through one edge to a vertex already in that part. These edges form $k$ disjoint arborescences of the prescribed sizes, one rooted at each terminal, so every part is connected to its terminal.

Operations (i) and (ii) preserve the condition for simple reasons. If the capacity of a terminal is zero, the witness assigns no vertex to it, and the removal of one terminal never destroys the essentiality of another (\Cref{lem:essential-survives-terminal-removal,lem:remove-zero-capacity-terminal}). If a pre-terminal has a single outgoing edge, then the terminal of that edge is its only essential terminal, so the witness already assigns the pre-terminal there. The contraction of the edge translates every family of disjoint paths across it, and all other essentialities survive (\Cref{lem:essential-survives-degree-one-leaf-contraction,lem:contract-degree-one-potential-leaf}).

The main difficulty is to show that operation (iii) is available whenever operations (i) and (ii) are not. In that case, every capacity is positive and every pre-terminal has at least two outgoing edges.
Our first step is to prove that the bipartite graph between terminals and pre-terminals has a matching that saturates the terminal side, where terminal $t$ and pre-terminal $p$ are adjacent if they have an edge in $G$. Suppose otherwise. By Hall's theorem, there is an inclusion-minimal Hall-deficient set $S$ of terminals; writing $\PT(G,S)$ for its neighboring pre-terminals, we have $|\PT(G,S)|<|S|$. Using inclusion-minimality and a rerouting argument, we show that no terminal in $S$ is essential for any vertex outside $\PT(G,S)$ (\Cref{lem:terminal-leaf-hall-set-not-essential-outside}).

Positive capacities now rule out this deficiency. The condition's witness must assign at least one vertex to every terminal in $S$; these vertices are distinct, and each lies in $\PT(G,S)$ because its assigned terminal is essential for it. Hence $|\PT(G,S)|\ge |S|$, contradicting $|\PT(G,S)|<|S|$. The desired matching therefore exists (\Cref{lem:terminal-leaf-matching}). The initial graph and witness in our running example are shown in \Cref{fig:overview-assignment}, and its matching edges are shown in blue in \Cref{fig:overview-matching}.

Recall that operation (ii) does not apply, so every pre-terminal has at least two outgoing edges.
For each terminal $t_i$, let $p_i$ denote its matched pre-terminal. Choose an outgoing edge of $p_i$ distinct from the matching edge $(p_i,t_i)$, call it the \emph{secondary edge} of $p_i$, and write it as $e_i=(p_i,q_i)$. These secondary edges are shown in red in \Cref{fig:overview-matching}. Our next goal is to remove one of $e_1,e_2,\dots,e_k$ while preserving the \essentialassignmentcondition.

Let $\assgn$ be the current witness to the \essentialassignmentcondition. Call an edge $e$ critical for a pair $(v,t)$ if $t$ is essential for $v$ in the current graph but not after the removal of $e$. If some $e_i$ is not critical for any assignment pair $(v,\assgn(v))$, then the removal of $e_i$ keeps the current witness valid, and we are done. Otherwise, for every $e_i$ choose a $v_i$ such that $e_i$ is critical for the assignment pair $(v_i,\assgn(v_i))$. The key observation is that $t_i$ is also essential for $v_i$ (\Cref{lem:critical-implies-assignable}). We sketch the proof argument next. Since $e_i$ is critical for assigning $v_i$ to $\assgn(v_i)$, deleting it must lower the terminal connectivity of $v_i$. If $t_i$ were not essential for $v_i$, there would be a maximum family of disjoint paths from $v_i$ to terminals that avoids $t_i$. This family must use $e_i$, since otherwise it would survive the deletion of $e_i$. Rerouting the path that uses $e_i$ at $p_i$ along the matching edge $(p_i,t_i)$ would give a family of the same size in $G\setminus e_i$, a contradiction. This gives $v_i$ an alternative: it can switch its assigned terminal from $\assgn(v_i)$ to $t_i$.

A second key observation is that $t_i$ is a strictly \enquote{safer} assignment option for $v_i$. Precisely, for any terminal $t$, let $\crt_{v_i}(t)$ denote the number of edges among $e_1, \dots, e_k$ that are critical for assigning $v_i$ to $t$. We show that $\crt_{v_i}(t_i) < \crt_{v_i}(\assgn(v_i))$ in two steps. This means that reassigning $v_i$ to $t_i$ strictly decreases the number of critical edges for $v_i$. First, $e_i$ is critical for assigning $v_i$ to $\assgn(v_i)$ by construction, but not for assigning it to $t_i$ (\Cref{lem:own-edge-not-critical}). Second, for every $j\neq i$, if $e_j$ is critical for assigning $v_i$ to $t_i$, then it was already critical for assigning $v_i$ to $\assgn(v_i)$ (\Cref{lem:critical-transfer}). The two steps together imply that $\crt_{v_i}(t_i) < \crt_{v_i}(\assgn(v_i))$.

We sketch proof idea for the second step. Suppose for a contradiction that some $e_j$ is critical for assigning $v_i$ to $t_i$, but not for assigning $v_i$ to $\assgn(v_i)$. Let $G_i=G\setminus\{e_i\}$, let $G_j=G\setminus\{e_j\}$, and let $H=G\setminus\{e_i,e_j\}$. Let $\kappa=\connvg{v_i}{G}$. We show that the connectivity of $v_i$ is $\kappa-1$ in all three graphs $G_i$, $G_j$, and $H$ (\Cref{claim:all-conns-equal}). Using the notion of the tightest minimum cut separating $v_i$ from $T$ (see \Cref{def:tightest-min-cut}), we consider $C_i$ and $C_j$, the tightest minimum cuts separating $v_i$ from $T$ in $G_i$ and $G_j$, respectively. We show that $C_i$ and $C_j$ are both minimum cuts in $H$. Their intersection and union are therefore also minimum cuts in $H$ (\Cref{cor:mincut-closure}). We identify the location of the endpoints of $e_i$ and $e_j$ in the intersection and union cuts, and show that both edges cross the intersection but neither crosses the union. Hence the union remains a valid cut when we add both edges back to recover $G$, although its size is $\kappa-1$, a contradiction.

All that remains is to turn these safer alternatives into progress. We do this by repeatedly reassigning vertices to their safer terminals, until one of the edges $e_1, \dots, e_k$ is no longer critical for any assignment pair. The reassignment of a single vertex would violate the exact capacity constraints of the witness, so we must coordinate several reassignments at once. We place all these alternatives in a directed reassignment graph $\caR$ on the terminals. For each $i$, we place an edge from $\assgn(v_i)$ to $t_i$ which represents the option of reassigning $v_i$ to $t_i$. The reassignment graph for our running example is shown in \Cref{fig:overview-relation}. Every terminal has exactly one incoming edge in $\caR$; therefore, $\caR$ contains a directed cycle. In one iteration of the algorithm, we simultaneously perform the reassignments represented by this cycle. Each terminal on the cycle loses one assigned vertex and gains one, while every reassigned vertex remains assigned to an essential terminal. Thus, the resulting cycle shift preserves both the capacity and essentiality requirements. The updated witness is shown in \Cref{fig:overview-shifting}.

For any assignment $\assgnalt$, define the potential function $\Phi(\assgnalt) = \sum_{v \in V \setminus T} \crt_v(\assgnalt(v))$; thus, $\Phi(\assgnalt)$ is the total number of critical edges counted across all assignment pairs $(v,\assgnalt(v))$. Since $\crt_v(\assgnalt(v)) \le k$, we observe that $\Phi(\assgnalt) \le k|V \setminus T|$. If $\assgn$ is the witness before the cycle shift operation above and $\assgn'$ is the witness after, then $\Phi(\assgn') < \Phi(\assgn)$, which means that the cycle shift strictly decreases the total number of critical edges over all assignment pairs. Since $\Phi(\cdot)$ is polynomially bounded, after polynomially many iterations of the cycle shift operation, we eventually reach a witness for which one of the edges $e_1, \dots, e_k$ is not critical for any assignment pair and can be removed. The resulting edge removal is shown in \Cref{fig:overview-removal}, and the subsequent contraction is shown in \Cref{fig:overview-contraction}.

\paragraph{Explanation of the example.}\label{par:complete-example-explanation}
\Cref{fig:overview-full} illustrates the primary sequence of algorithmic steps on an example graph. We begin in \Cref{fig:overview-assignment} with a graph (which is not $3$-$T$-connected) and terminal capacities $\cp = (2, 2, 2)$, where every non-terminal vertex has terminal connectivity $2$ to $T = \{t_1, t_2, t_3\}$. First, we compute the essential terminals for every non-terminal vertex. For example, $t_3$ is essential for $v_6$ because removing $t_3$ reduces the connectivity of $v_6$ to $T$ from $2$ to $1$. In the figure, the inner shading of each non-terminal vertex denotes its essential terminals, while the colored outer ring indicates its assigned terminal (e.g., $v_6$ is assigned to $t_3$). The displayed assignment satisfies the \essentialassignmentcondition.

Because no terminal has capacity zero and no pre-terminal has out-degree one, the direct reduction steps cannot be applied immediately. Therefore, as shown in \Cref{fig:overview-matching}, we find the matching $M = \{(v_4, t_1), (v_6, t_2), (v_7, t_3)\}$ from pre-terminals to terminals, shown by the blue edges. For each matched pre-terminal, we choose a secondary edge ($e_1, e_2$, and $e_3$, shown in red).

Under this initial assignment, each secondary edge is critical for at least one assignment pair. For instance, $e_2$ is critical for assigning $v_6$ to $\assgn(v_6)=t_3$: removing $e_2$ drops the connectivity of $v_6$ to $1$ and eliminates all paths from $v_6$ to $t_3$, leaving $t_2$ as its sole essential terminal. Thus, none of $e_1,e_2,e_3$ can be removed while preserving the current witness.

To make a secondary edge removable, we update the witness by a cycle shift. In \Cref{fig:overview-relation}, we construct the reassignment graph. For each secondary edge $e_i$, we choose a vertex $v_i$ such that $e_i$ is critical for assigning $v_i$ to $\assgn(v_i)$. Here, the chosen vertices are $v_4$ for $e_1$, $v_6$ for $e_2$, and $v_7$ for $e_3$, assigned to $t_2$, $t_3$, and $t_2$, respectively. Adding the edge $\assgn(v_i)\to t_i$ for each $i$ produces a directed cycle between $t_2$ and $t_3$, highlighted in green.

We shift the assignment along this directed cycle by reassigning $v_6$ from $t_3$ to $t_2$ and $v_7$ from $t_2$ to $t_3$. The resulting assignment is shown in \Cref{fig:overview-shifting}. Both new assigned terminals are essential for their respective vertices, and the cycle shift preserves the number of vertices assigned to each terminal. Hence the resulting assignment remains a valid witness. Under this witness, neither $e_2$ nor $e_3$ is critical for any assignment pair. As shown in \Cref{fig:overview-removal}, we remove $e_3$, leaving the pre-terminal $v_7$ with the single outgoing edge $(v_7,t_3)$, highlighted in orange. In \Cref{fig:overview-contraction}, operation (ii) contracts $v_7$ into $t_3$ and updates the capacity vector to $\cp = (2, 2, 1)$, after which the algorithm continues on the reduced graph.

\paragraph{The weighted version.}
We next allow each non-terminal vertex $v$ to have a positive integer weight $\weight_v$. The capacity $\cp_t$ now bounds weight instead of size: the total weight of the part of $t$ must be at most $\cp_t+w_{\max}-1$, where $w_{\max}$ is the maximum vertex weight. This additive violation is unavoidable: take $k$ terminals of capacity one and one non-terminal $v$ of weight $k$, with an edge from $v$ to every terminal. The instance is $k$-$T$-connected, but every partition places $v$ in one part, whose weight exceeds its capacity by $k-1=w_{\max}-1$. This weighted generalization is the existence theorem behind the confluent flow result of \citet{chen2007almost}: their proof for $k$-connected graphs relies on the topological argument of \lovasz, and they left an algorithm for it as an open problem. For undirected graphs, \citet{chandran2018spanning} gave a constructive proof by local search, but their algorithm takes exponential time. Our main result in this setting is the weighted analogue of \Cref{thm:k-t-conn}.
\begin{restatable}{theorem}{thmweightedktconn}
\label{thm:weighted-k-t-conn}
Let $G$ be a directed graph with terminal set $T$. Each non-terminal vertex $v$ has a positive integer weight $\weight_v$, and each terminal $t$ has a nonnegative integer capacity $\cp_t$, with $\sum_{v\in V\setminus T}\weight_v\le\sum_{t\in T}\cp_t$. If $G$ is $k$-$T$-connected, then there is a polynomial-time algorithm that partitions $V(G)$ into parts $\langle V_t\rangle_{t\in T}$ such that $t\in V_t$, $G[V_t]$ is connected to $t$, and
\begin{equation*}
        \sum_{v\in V_t\setminus T}\weight_v \le \cp_t+w_{\max}-1
\end{equation*}
for every terminal $t\in T$.
\end{restatable}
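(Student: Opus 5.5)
The plan is to reduce the weighted statement to the unweighted machinery by working with a weighted version of the \essentialassignmentcondition. Since exact capacities cannot be met with weights, I would use a \emph{split} (fractional) assignment. A nonnegative matrix $x_{v,t}$ is supported only on pairs where $t$ is essential for $v$, satisfies $\sum_t x_{v,t}=\weight_v$ for every non-terminal $v$, and satisfies $\sum_v x_{v,t}\le \cp_t$ for every terminal $t$.

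In a $k$-$T$-connected graph every terminal is essential for every non-terminal, so such an $x$ exists because $\sum_v \weight_v\le\sum_t\cp_t$. We can also take $x$ to be integral. Checking the condition is a single transportation/max-flow computation, so it stays polynomial even with exponentially large weights, given that essential terminals are computed per vertex as in \Cref{prop:essential-terminals-single-flow}.

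The algorithm mirrors \Cref{alg:gl-partition} with the same three operations.
\begin{enumerate}
\item Remove a terminal whose remaining capacity is exhausted, meaning no weight is split onto it.
\item Contract a pre-terminal $p$ whose only outgoing edge enters $t$. Its only essential terminal is then $t$, so $x_{p,t}=\weight_p$. We decrease $\cp_t$ by $\weight_p$, and if this would go negative we instead set $\cp_t$ to zero and record the overflow.
\item Delete a secondary edge that is not critical for any pair in the support of $x$.
\end{enumerate}
The preservation lemmas for (i) and (ii), \Cref{lem:remove-zero-capacity-terminal} and \Cref{lem:contract-degree-one-potential-leaf}, only use essentiality of single pairs, so they transfer verbatim to supports of $x$.

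For (iii), I would rerun the Hall argument: a minimal deficient terminal set $S$ has no essential pairs outside $\PT(G,S)$. I would replace the counting step with weights, and I expect this to be the main obstacle. With fractional splitting, one vertex in $\PT(G,S)$ may carry weight to several terminals of $S$, so $|\PT(G,S)|<|S|$ is no longer contradicted directly.

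The fix I would try is to keep $x$ a \emph{forest} (acyclic bipartite support), which can always be arranged by cycle-cancelling. Each terminal that currently receives positive weight is then given a distinct "charged" vertex, which restores the injection used in \Cref{lem:terminal-leaf-matching}. Terminals receiving zero weight are removed by (i).

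The critical-edge lemmas \Cref{lem:critical-implies-assignable}, \Cref{lem:own-edge-not-critical} and \Cref{lem:critical-transfer} are pairwise statements and carry over unchanged. The reassignment cycle in $\caR$ becomes a cycle shift that moves an amount $\varepsilon$ of weight along the cycle. I would choose $\varepsilon$ as the minimum of the weight on the relevant support pairs, so at least one critical pair leaves the support.

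Progress would be measured by a potential that counts critical (pair, edge) incidences over the support of $x$, together with the support size. Because the support is kept a forest, it has at most $|V|+k$ pairs, so the potential is polynomially bounded regardless of the magnitude of the weights.

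Finally I would bound the violation. A terminal's true load exceeds $\cp_t$ only through contractions where $\weight_p$ exceeded the remaining capacity. Once the remaining capacity reaches zero the terminal is removed, so overflow happens at most once per terminal. At that moment the remaining capacity was at least $1$ and the overflow is at most $\weight_p-1\le w_{\max}-1$. Any integral remainder is rounded along the forest structure, charging each terminal at most one extra fractional vertex.

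Undoing the contractions yields the disjoint arborescences exactly as in the unweighted proof, which gives the stated bound $\cp_t+w_{\max}-1$.
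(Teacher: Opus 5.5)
\textbf{Gap: the Hall-deficient case.} Your reduction to an integral split-assignment witness is sound, and so are the transfer of the pairwise critical-edge lemmas and steps (i)--(ii). The gap is the case where no matching from $\PT(G,T)$ saturating $T$ exists, and the forest trick does not close it. An acyclic support can be a star, with one vertex splitting its weight over several terminals. So the claim that each terminal receiving positive weight gets a distinct charged vertex is false.

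The introduction's example is a concrete failure. Take $k\ge 2$ terminals of capacity one and a single non-terminal $v$ of weight $k$ with an edge to every terminal. This instance is $k$-$T$-connected, and it behaves as follows:
\begin{itemize}
\item every witness sends exactly one unit from $v$ to each terminal, which is a star and hence a forest;
\item all capacities are positive and every terminal receives weight;
\item $d^+(v)=k\ge 2$;
\item $\PT(G,T)=\{v\}$, so there is no saturating matching and no secondary edges;
\item deleting any edge $(v,t)$ makes $t$ unreachable from $v$, so no edge removal preserves the witness.
\end{itemize}
None of your operations applies, so the algorithm is stuck.

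The same example shows your overflow accounting is in the wrong place. For an out-degree-one pre-terminal the witness forces $x_{p,t}=\weight_p\le\cp_t$, so step (ii) never overflows. If your three operations always sufficed, you would get a partition with \emph{no} capacity violation, which this instance rules out.

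\textbf{What is missing.} You need a fourth, mid-algorithm rounding operation, which is the paper's \Cref{alg:gl-round-and-remove}:
\begin{itemize}
\item take an inclusion-minimal $S\subseteq T$ with $|\PT(G,S)|<|S|$;
\item apply \Cref{lem:minimal-hall-set}, which gives $|\PT(G,S)|=|S|-1$ and a perfect matching between $S\setminus\{t_S\}$ and $\PT(G,S)$;
\item close each part in $S$ with at most one pre-terminal, so the overflow is at most $w_{\max}-1$ because capacities are at least $1$;
\item delete $S\cup\PT(G,S)$ and recurse.
\end{itemize}

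The hard part is showing the restricted witness stays valid. \Cref{lem:terminal-leaf-hall-set-not-essential-outside} shows no remaining vertex sends weight into $S$. You also need that every $t\notin S$ essential for a remaining $v$ stays essential after the deletion. This is \Cref{lem:essentiality-survives-rounding}: make $\PT(G,S)$ temporary terminals, transfer the tightest minimum cut, then delete them one at a time via \Cref{lem:essential-survives-terminal-removal}. Your closing remark about rounding ``along the forest structure'' supplies neither the operation nor this invariant.

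\textbf{Secondary issue: termination.} Your termination potential is not shown to decrease. When $\varepsilon<x_{v_i,t_{i-1}}$, the old pair stays in the support while $(v_i,t_i)$ may enter with positive criticality, so both the incidence count and the support size can grow. The paper avoids local search altogether. It computes a minimum-$\Phifrac$ witness with one min-cost flow (\Cref{prop:best-assign-is-poly}), then gets a non-critical secondary edge by contradiction from a one-unit cycle shift (\Cref{lem:weighted-cycle-shift-decreases-cost}).
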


Besides adding the weights, \Cref{thm:weighted-k-t-conn} relaxes the capacity constraint of \Cref{thm:k-t-conn}, since the capacities can now sum to more than the total weight. The unweighted theorem remains a special case, since we can give every vertex weight one and keep the capacities summing to the number of non-terminals; the bound then reads $|V_t\setminus T|\le\cp_t$, and the equality of the two sums makes every part receive exactly $\cp_t$ non-terminals.

To get a polynomial-time algorithm, we follow the ideas of the unweighted case. The first difficulty is in the \essentialassignmentcondition itself: even in a $k$-$T$-connected graph, a flow-essential assignment of the non-terminals to the terminals that fills every capacity exactly can fail to exist. We therefore relax the condition in two ways and obtain the \fractionalesscond. First, each vertex now distributes its weight over its essential terminals in nonnegative integral units: a \FractionalEssAssign is a function $\assgnw$ that sends $\assgnw(v,t)$ units of the weight of $v$ to the terminal $t$, with $\sum_{t\in T}\assgnw(v,t)=\weight_v$. Second, the exact capacity equalities become capacity upper bounds. The essentiality rule stays the same.

\begin{restatable}[\fractionalesscond]{definition}{deffractionalesscond}
	\label{def:fractional-essential-assignment-condition}
	We say that $G$ satisfies the \fractionalesscond if there exists a \FractionalEssAssign $\assgnw$ such that:
	\begin{enumerate}
		\item for every vertex $v \in V \setminus T$ and terminal $t \in T$ with $\assgnw(v,t) > 0$, the terminal $t$ is essential for $v$; and
		\item for every terminal $t \in T$, we have $\sum_{v \in V \setminus T} \assgnw(v,t) \le \cp_t$.
	\end{enumerate}
\end{restatable}

We again call such a $\assgnw$ a witness. Consider the bipartite graph with the non-terminals on one side and the terminals on the other, in which a non-terminal $v$ and a terminal $t$ are adjacent if $t$ is essential for $v$. Add a source that supplies $\weight_v$ units to each $v$, and a sink that takes at most $\cp_t$ units from each $t$. The witnesses are exactly the integral flows that saturate the source. One maximum-flow computation therefore checks the condition, and, once we place costs on the pairs $(v,t)$, one minimum-cost-flow computation finds a cheapest witness (\Cref{prop:best-assign-is-poly}). As in the unweighted case, we maintain the condition until the algorithm builds the full partition. This gives the following strengthening of \Cref{thm:weighted-k-t-conn}. In a $k$-$T$-connected graph, every terminal is essential for every vertex, so any split of the weights that respects the capacities is a witness. Such a split exists because the capacities sum to at least the total weight, and \Cref{thm:weighted-k-t-conn} follows.

\begin{restatable}{theorem}{thmweighted}
\label{gyori-weighted-poly-theorem}
Let $G$ be a directed graph with terminal set $T$. Each non-terminal vertex $v$ has a positive integer weight $\weight_v$, and each terminal $t$ has a nonnegative integer capacity $\cp_t$. If $G$ satisfies the \fractionalesscond, then there is a polynomial-time algorithm that partitions $V(G)$ into parts $\langle V_t\rangle_{t\in T}$ such that $t\in V_t$, $G[V_t]$ is connected to $t$, and
\begin{equation*}
        \sum_{v\in V_t\setminus T}\weight_v \le \cp_t+w_{\max}-1
\end{equation*}
for every terminal $t\in T$.
\end{restatable}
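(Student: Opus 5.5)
We follow the unweighted algorithm and its analysis, adapting each operation to split assignments and to capacity upper bounds. We maintain a witness $\assgnw$ of the \fractionalesscond and repeatedly apply one of three operations:
\begin{enumerate}
\item remove a terminal $t$ with $\sum_v \assgnw(v,t)=0$, assigning it no further vertices;
\item contract a pre-terminal $p$ with a single outgoing edge $(p,t)$ into $t$, replacing $\cp_t$ by $\cp_t-\weight_p$;
\item remove an edge while keeping some witness.
\end{enumerate}

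The accounting in (ii) needs care. Since $t$ is then the only essential terminal of $p$, the witness sends all of $\weight_p$ to $t$. Hence $\cp_t\ge\weight_p$ and the new capacity is nonnegative. The remaining split of the other vertices still respects the reduced capacities. Essentiality of all other pairs survives by \Cref{lem:essential-survives-degree-one-leaf-contraction}, and terminal removal is handled by \Cref{lem:essential-survives-terminal-removal}. The additive slack $w_{\max}-1$ enters only when we must contract a pre-terminal whose weight is not fully placed on the terminal it is contracted into. To allow this, I would relax the capacity bookkeeping: permit the capacity of $t$ to become negative by at most $w_{\max}-1$ at the moment of its final contraction, after which $t$ is frozen and removed. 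Equivalently, when $\sum_v\assgnw(v,t)<\cp_t$ we may contract into $t$ any pre-terminal adjacent to it. The total contracted weight then stays below $\cp_t+w_{\max}$.

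For operation (iii), assume (i) and (ii) fail. Every terminal $t$ then receives positive load, so at least one vertex with $\assgnw(v,t)>0$ exists, and every pre-terminal has out-degree at least $2$. The Hall argument of \Cref{lem:terminal-leaf-hall-set-not-essential-outside} is purely graph-theoretic. With split assignments, every terminal in a minimal deficient set $S$ has some vertex of $\PT(G,S)$ sending it weight, but distinct terminals may share such a vertex. This is the main obstacle, since the unweighted counting $|\PT(G,S)|\ge|S|$ breaks. To fix it, I would first choose the witness as a vertex of the flow polytope, say a min-cost flow via \Cref{prop:best-assign-is-poly} with generic costs. In such a witness the support (the bipartite graph of pairs with $\assgnw(v,t)>0$, plus tight capacities) is a forest. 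A forest restricted to $S\cup\PT(G,S)$ in which every terminal of $S$ has an edge gives $|\PT(G,S)|\ge|S|$ whenever each component contains a vertex carrying tight weight. Otherwise an augmenting adjustment frees a terminal, that is, makes its load zero, so (i) applies. I expect this support-forest/Hall step to need the most care.

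Given the matching and secondary edges $e_i=(p_i,q_i)$, the lemmas \Cref{lem:critical-implies-assignable}, \Cref{lem:own-edge-not-critical} and \Cref{lem:critical-transfer} are statements about essentiality only, so they transfer verbatim. For each $e_i$ critical for some pair $(v_i,t)$ with $\assgnw(v_i,t)>0$, we add an edge $t\to t_i$ in $\caR$, find a cycle, and shift one unit of weight along it. This preserves every load exactly. Define the potential $\Phifrac(\assgnw)=\sum_{v,t}\assgnw(v,t)\,\crt_v(t)$, which strictly decreases with each unit shift. Because weights may be exponential, unit shifts alone are not polynomial. Instead, we shift the bottleneck amount $\min$ over the cycle of $\assgnw(v_i,\cdot)$, which zeroes at least one support pair per shift. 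Alternatively, we recompute a minimum-cost witness with cost $\crt_v(t)$ via \Cref{prop:best-assign-is-poly} and argue that a minimum-cost witness admits no improving cycle. Then some $e_i$ is noncritical for all support pairs and can be removed. Each operation deletes a vertex or an edge, so there are at most $|V|+|E|$ steps, each solved by polynomially many flow computations, and undoing the contractions yields the arborescences.
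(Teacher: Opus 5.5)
Your operations (i) and (ii) are sound, and so is operation (iii) once a matching from $\PT(G,T)$ to $T$ exists, in the variant that computes one minimum-cost witness with costs $\crt_v(t)$ via \Cref{prop:best-assign-is-poly}. That variant is exactly the paper's; the bottleneck-shifting variant has no polynomial bound as stated, since each shift can create new support pairs. The genuine gap is the Hall step. In the weighted setting a Hall-deficient set of terminals is unavoidable: no choice of witness and no augmentation removes it.

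The paper's own tightness example shows this. Take one non-terminal $v$ of weight $k\ge 2$ with an edge to each of $k$ terminals of capacity $1$.
\begin{itemize}
\item The unique witness sends one unit from $v$ to every terminal. Its support is a star, hence a forest.
\item Every terminal carries load exactly $\cp_t$, so no terminal can be freed.
\item $v$ has out-degree $k$, and $S=\{t_1,t_2\}$ is a minimal deficient set with $|\PT(G,S)|=1<|S|$.
\item Deleting any edge $(v,t_i)$ makes $t_i$ unreachable and destroys every witness.
\end{itemize}
So none of your operations (i)--(iii) applies and the algorithm is stuck, although a valid output exists: one part overloaded by exactly $w_{\max}-1$. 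Your ``relaxed bookkeeping'' rule does not rescue this. It is stated only when $\sum_v\assgnw(v,t)<\cp_t$, which fails here. In general, contracting a pre-terminal of out-degree at least two is not covered by \Cref{lem:essential-survives-degree-one-leaf-contraction} and can destroy essentiality elsewhere. For example, with edges $u\to p$, $u\to t_1$, $p\to t_1$, $p\to t_2$, contracting $p$ into $t_1$ makes $t_2$ unreachable from $u$.

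The missing idea is to accept the deficiency and round it away; this rounding is the only source of the $w_{\max}-1$ slack. Take $S$ inclusion-minimal with $|S|>|\PT(G,S)|$. By \Cref{lem:minimal-hall-set}, $|\PT(G,S)|=|S|-1$, and $\PT(G,S)$ can be matched onto $S\setminus\{t_S\}$ for any $t_S\in S$. Output the parts $\{t,M(t)\}$ and $\{t_S\}$; each exceeds its positive capacity by at most $w_{\max}-1$. Then delete $S\cup\PT(G,S)$ and recurse.

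For the recursion, the restricted witness must stay valid, which needs two facts.
\begin{itemize}
\item No remaining vertex sends weight into $S$. This follows from \Cref{lem:terminal-leaf-hall-set-not-essential-outside}, which you already have.
\item Every terminal $t\notin S$ that is essential for some $v\notin\PT(G,S)$ remains essential after the deletion. This is not automatic, because deleting the non-terminals $\PT(G,S)$ can lower the terminal connectivity of $v$.
\end{itemize}
The paper proves the second fact (\Cref{lem:essentiality-survives-rounding}) in two stages. First it replaces $S$ by $\PT(G,S)$ as temporary terminals. In the tightest minimum cut $C$ of $v$ one has $S\subseteq\lcutc{C}$ and $\PT(G,S)\subseteq\lcutc{C}\cup\scutc{C}$. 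Hence $(\lcutc{C}\setminus S,\scutc{C},\rcutc{C})$ is still a minimum cut, the connectivity of $v$ is unchanged, and $t$ stays in a minimum separator. Second, it removes the temporary terminals one at a time using \Cref{lem:essential-survives-terminal-removal}.
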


The algorithm (\Cref{alg:gl-weighted}) keeps the shape of the unweighted one, and porting it raises the second and third difficulties. Operations (i) and (ii) stay as before, except that a contraction decreases the capacity of the terminal by the weight of the contracted pre-terminal. For operation (iii), we again need the matching of every terminal $t_i$ to its own pre-terminal $p_i$. The failure of this matching is the second difficulty, and we treat it below.

When the matching exists, we choose a secondary outgoing edge $e_i$ of each $p_i$ and seek an edge that is critical for no pair $(v,t)$ with $\assgnw(v,t)>0$. Recall how the unweighted algorithm finds such an edge. If every secondary edge is critical for an assignment pair, the reassignment graph $\caR$ contains a directed cycle, and shifting the assignments along this cycle produces another witness with strictly smaller potential. Since the unweighted potential is polynomially bounded, repeating this cycle shift polynomially many times eventually yields a witness for which some secondary edge is non-critical.

The same reassignment argument applies to split-assignments. Define
\begin{equation*}
        \Phifrac(\assgnw)=\sum_{v\in V\setminus T}\sum_{t\in T}\assgnw(v,t)\crt_v(t),
\end{equation*}
where $\crt_v(t)$ is the number of secondary edges that are critical for $(v,t)$. Equivalently, each unit that $v$ assigns to an essential terminal $t$ incurs cost $\crt_v(t)$. If every secondary edge is critical for some pair receiving positive weight, then the reassignment graph again contains a directed cycle; shifting one unit of weight along this cycle preserves the split-assignment constraints and strictly decreases $\Phifrac$.

This leads to the third difficulty. Unlike its unweighted counterpart, $\Phifrac$ need not be polynomially bounded in the input size: the integral weights are given in binary and may be exponentially large. Thus, performing one-unit cycle shifts as a local search may require exponentially many iterations. Instead, we bypass the local search and compute a minimum-potential witness directly by a single minimum-cost-flow computation, using $\crt_v(t)$ as the cost of sending one unit from $v$ to $t$ (\Cref{prop:best-assign-is-poly}). This witness guarantees a non-critical secondary edge: otherwise, the cycle shift above would produce a witness of smaller potential, contradicting minimality (\Cref{lem:weighted-matching-removable-edge}). We remove this edge to make progress.

We now return to the second difficulty, when the bipartite graph between terminals and pre-terminals has no matching that saturates the terminal side. Let $S\subseteq T$ be an inclusion-minimal Hall-deficient set of terminals, and let $\PT(G,S)$ denote its neighboring pre-terminals. Minimality gives $|\PT(G,S)|=|S|-1$ (\Cref{lem:minimal-hall-set}). As in the unweighted argument, no terminal in $S$ is essential for any non-terminal outside $\PT(G,S)$ (\Cref{lem:terminal-leaf-hall-set-not-essential-outside}). Thus, only vertices in $\PT(G,S)$ can send positive weight to terminals in $S$. Unlike in the unweighted case, this does not lead to a contradiction. Instead, we match the pre-terminals in $\PT(G,S)$ to distinct terminals in $S$. Each matched pair becomes a final part, and the one unmatched terminal becomes a part by itself. We remove $S\cup \PT(G,S)$ and continue on the remaining graph. We call this operation rounding (\Cref{alg:gl-round-and-remove}).

After the removal, the total weight of the remaining non-terminals is still at most the total capacity of the remaining terminals, because the witness already sent all of this weight to those terminals. 

The real challenge is to prove that the split-assignment restricted to the remaining instance is still a valid witness. Fix a remaining vertex $v$ and a remaining terminal $t$ that receives positive weight from $v$. Since the original split-assignment is a valid witness, $t$ is essential for $v$ in the original instance. Yet deleting $S\cup \PT(G,S)$ may lower the terminal connectivity of $v$, so it is not obvious that $t$ remains essential afterwards. We establish this in two steps. First, temporarily replace $S$ in the terminal set by $\PT(G,S)$ and delete $S$ from the graph, obtaining terminal set $(T\setminus S)\cup \PT(G,S)$. This operation preserves the terminal connectivity of $v$, and the structure of the tightest minimum cut then implies that $t$ remains essential. Second, remove the temporary terminals in $\PT(G,S)$ one at a time. At each removal, the previously established fact that deleting a terminal other than $t$ cannot destroy the essentiality of $t$ applies (\Cref{lem:essential-survives-terminal-removal}). Thus $t$ remains essential after rounding (\Cref{lem:essentiality-survives-rounding}). Consequently, every essentiality required by the restricted split-assignment survives the removal, so the restriction is still a valid witness (\Cref{lem:weighted-round-and-remove}).

Rounding is the only step in which a part can exceed its capacity. Every other part grows by contractions, and a contraction never takes more weight than the current capacity of its terminal. A part that we complete by rounding receives at most one pre-terminal, of weight at most $w_{\max}$, while its capacity is at least one, so it exceeds its capacity by at most $w_{\max}-1$. Each step of the algorithm removes at least one vertex or one edge, so the algorithm stops after at most $|V|+|E|$ steps. Each step performs a polynomial number of maximum-flow, minimum-cost-flow, and bipartite matching computations, so the whole algorithm runs in polynomial time. We provide a more careful analysis of the running time in \Cref{sec:weighted}.

\paragraph{Near-linear time on DAGs.}
Even for directed acyclic graphs, no polynomial-time algorithm computing a \gyorilovasz partition was known. Acyclicity is not an obvious simplification: for confluent flows, the acyclic case is already the general case \citep{chen2007almost}. \Cref{thm:weighted-k-t-conn} already applies to a DAG, and we now improve the algorithm on this special case. A DAG is $k$-$T$-connected if and only if every non-terminal vertex has out-degree at least $k$ (\Cref{lem:k-conn-dag}). Take any terminal $t$ and let $p$ be the first pre-terminal of $t$ in a topological order. Contracting $p$ into $t$ preserves the out-degree condition (\Cref{lem:contract-exists-dag}), so contractions alone build the partition. A heap-based implementation of this rule runs in near-linear time (\Cref{sec:dag}), even in the weighted setting.

\begin{restatable}{theorem}{thmdag}
\label{dag-algo-theorem}
Let $D=(V,E)$ be a directed acyclic graph and let $T=\{t_1,\dots,t_k\}$ be a set of distinct terminals. Each non-terminal vertex $v$ has a positive integer weight $\weight_v$, and each terminal $t$ has a nonnegative integer capacity $\cp_t$ such that
    $$\sum_{v \in V \setminus T} w_v \le \sum_{t \in T} c_t.$$
    If $D$ is $k$-$T$-connected, then there is an $\caO(m\log n)$-time algorithm that partitions $V$ into $V_1,\dots,V_k$  such that $t_i\in V_i$, $D[V_i]$ is connected to $t_i$, and
\begin{equation*}
        \sum_{v\in V_i\setminus T}\weight_v \le \cp_{t_i}+w_{\max}-1
\end{equation*}
for every $i\in [k]$.
\end{restatable}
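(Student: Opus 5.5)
The plan is to derive \Cref{dag-algo-theorem} from two structural facts about DAGs and then implement the resulting greedy contraction rule with a heap.

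\textbf{Step 1: characterize connectivity.} Show that a DAG is $k$-$T$-connected if and only if every non-terminal has out-degree at least $k$ (the content of \Cref{lem:k-conn-dag}). Necessity is immediate, since $k$ vertex-disjoint paths from $v$ need $k$ distinct out-neighbours. For sufficiency, induct in reverse topological order. Suppose $v$ has out-neighbours $u_1,\dots,u_d$ with $d\ge k$, and suppose a cut $X$ with $|X|<k$ separates $v$ from $T$. Some $u_j$ lies outside $X$. If $u_j$ is a terminal, we are done. Otherwise, by induction $u_j$ is $k$-$T$-connected, and $X$ (which does not contain $v$, since $v$ cannot reach $u_j$'s descendants backwards by acyclicity) cannot separate $u_j$ from $T$. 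This contradicts Menger's theorem.

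\textbf{Step 2: a safe contraction.} Fix a terminal $t$ with positive remaining capacity that has at least one pre-terminal in-neighbour. Let $p$ be the pre-terminal of $t$ that comes first in a fixed topological order. Contracting $p$ into $t$ merges $p$ into $t$; each in-neighbour $u$ of $p$ then gains an edge to $t$ and loses its edge to $p$. Out-degree can drop only for such $u$ that already had an edge to $t$.

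The key claim, as in \Cref{lem:contract-exists-dag}, is that no such $u$ exists. Such a $u$ would itself be a pre-terminal of $t$ lying before $p$ in the topological order, which contradicts the choice of $p$. All other non-terminals keep their out-degree, and acyclicity is preserved because $p$ has no outgoing edge other than into $t$ after the merge. So by Step 1 the contracted graph is again $k$-$T$-connected.

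Here $D$ remains a DAG with terminals as merged super-vertices. I treat terminals as sinks, dropping their out-edges, which do not affect the condition.

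\textbf{Step 3: capacities and the $w_{\max}-1$ slack.} Run the following greedy procedure. While some non-terminal remains, pick a terminal $t$ with $\cp_t>0$ that has a pre-terminal. Contract its first pre-terminal $p$ and set $\cp_t\leftarrow\cp_t-w_p$. If $\cp_t\le 0$, remove $t$ from $T$.

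Removing a terminal lowers $k$ by one. Out-degrees of non-terminals drop by at most one (their edge to $t$), so the out-degree condition with $k-1$ still holds. Because total remaining weight is at most total remaining capacity, some remaining terminal has positive capacity whenever a non-terminal remains. When $k\ge1$, every non-terminal has out-degree $\ge1$, and the last vertex on a longest path from any non-terminal is a pre-terminal, so a pre-terminal of some positive-capacity terminal exists.

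The slack is exact. I expect the main obstacle to be this invariant: the needed pre-terminal must attach to a \emph{positive-capacity} terminal, not merely to some terminal. Removing zero-capacity terminals keeps the graph $k'$-$T'$-connected, which guarantees every surviving pre-terminal has an edge to a live terminal. A part only receives a vertex while its residual capacity is at least $1$, so it overshoots by at most $w_{\max}-1$. Connectivity of each part to its terminal follows by undoing contractions, as in the unweighted algorithm.

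\textbf{Step 4: running time.} For each terminal, maintain a min-heap of its current pre-terminal in-neighbours, keyed by topological index. A contraction of $p$ into $t$ scans the in-edges of $p$ and inserts each such $u$ into $t$'s heap. Each edge is processed $O(1)$ times, at $O(\log n)$ per heap operation, and merging edge lists by redirection, or lazily, avoids re-scanning. Terminal removal deletes its heap and decrements out-degree counters. Together with an $O(m)$ topological sort, this gives $\caO(m\log n)$ overall.
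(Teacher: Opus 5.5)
Your proposal is correct and follows essentially the paper's proof: the out-degree characterization of $k$-$T$-connected DAGs, the observation that the earliest pre-terminal of $t$ has no common predecessor with $t$ (so contraction preserves every out-degree), the residual-capacity argument giving the $w_{\max}-1$ slack, and one heap per terminal keyed by topological order. The differences are minor. You prove sufficiency of the out-degree condition by reverse-topological induction, whereas the paper takes the last vertex of $\rcutc{C}$, and you delete saturated terminals while maintaining out-degree at least $|T'|$, whereas the paper keeps them in the graph and simply stops growing their parts.
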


\section{Preliminaries}
\label{sec:preliminaries}

Let $G = (V, E)$ be a directed graph on $n = |V|$ vertices and $m = |E|$ edges with a designated set of $k$ terminals $T = \{t_1, t_2, \dots, t_k\} \subseteq V$. Each terminal $t \in T$ has a capacity $c_t \in \mathbb{N}$. For notational clarity, we occasionally use $c_i$ as shorthand for $c_{t_i}$. We assume without loss of generality that the graph has no self-loops or parallel edges, and that terminals have no outgoing edges. In particular, whenever a contraction creates duplicate directed edges, we keep only one copy, and if the contraction creates an outgoing edge from a terminal, we drop the edge. For any vertex $v \in V$, let $N^+(v)$ denote its set of out-neighbors and $d^+(v) = |N^+(v)|$ its out-degree. For $V' \subseteq V$, let $G[V']$ denote the subgraph induced by $V'$. For $v \in V'$, we say that $G[V']$ is connected to $v$ if every $u \in V'$ has a directed path to $v$ in $G[V']$.

In the weighted setting of \Cref{sec:weighted}, each non-terminal vertex has a positive integer weight $w_v \in \mathbb{N}^+$, and the total weight of the non-terminal vertices does not exceed the total capacity of the terminals, i.e.,
$\sum_{v \in V \setminus T} w_v \le \sum_{t \in T} c_t$. We denote the maximum weight of a non-terminal vertex by $w_{\max} = \max_{v \in V \setminus T} w_v$.

\begin{definition}[Pre-terminals and contraction]
        \label{def:potential-leaf-contraction}

        A vertex $p \in V \setminus T$ is a {\em pre-terminal} if it has an outgoing edge $(p,t)$ to some terminal $t \in T$. For a subset $S \subseteq T$, we let $\PT(G,S)$ denote the set of pre-terminals with an outgoing edge to a terminal in $S$. If $(p,t)$ exists, then {\em contracting} the pre-terminal $p$ into the terminal $t$ means removing $p$, redirecting every incoming edge of $p$ to $t$, deleting all outgoing edges of $p$, and removing any duplicate directed edges created by this redirection.
\end{definition}

We next introduce the notion of terminal connectivity used throughout this work. Unlike standard vertex connectivity, this notion measures the connectivity of a vertex with respect to the terminal set $T$.

\begin{definition}[Terminal connectivity]
        \label{def:connectivity}

        For any non-terminal vertex $v \in V \setminus T$, the terminal connectivity of $v$ to $T$, denoted by $\connvg{v}{G}$, is the maximum cardinality of a family of paths that all start at $v$, end at distinct terminals in $T$, and are pairwise vertex-disjoint except for their common starting vertex $v$. Throughout the paper we call such a family vertex-disjoint. We say that $G$ is $k$-$T$-connected if $\connvg{v}{G} = k = |T|$ for every $v \in V \setminus T$.
\end{definition}

Note that if a graph is $k$-vertex-connected in the standard sense, it is immediately $k$-$T$-connected for any terminal set $T$ of size $k$. We next define the notion of a vertex cut separating a vertex from the terminal set.

\begin{definition}[Cut and separator]
        \label{def:cut}

        For a non-terminal vertex $v \in V \setminus T$, a cut separating $v$ from $T$ is a partition
        $C = (\lcutc{C}, \scutc{C}, \rcutc{C})$ of $V$ such that:
        \begin{itemize}
                \item $v \in \rcutc{C}$,
                \item $T \subseteq \lcutc{C} \cup \scutc{C}$ (equivalently, $T \cap \rcutc{C} = \emptyset$), and
                \item there is no directed edge from $\rcutc{C}$ to $\lcutc{C}$.
        \end{itemize}

        The sets $\lcutc{C}$ and $\rcutc{C}$ are called the left and right sides of the cut, respectively, while $\scutc{C}$ is called the separator. The size of the cut, denoted by $|C|$, is the cardinality of its separator, i.e., $|C| = |\scutc{C}|$.
\end{definition}

The separator $\scutc{C}$ structurally blocks all directed paths from $v$ to $T$, as any path from $v \in \rcutc{C}$ to $T \subseteq \lcutc{C} \cup \scutc{C}$ must intersect $\scutc{C}$ before it can enter $\lcutc{C}$. We now state the directed vertex-version of Menger's Theorem.

\begin{lemma}[Menger's theorem~\citep{menger1927allgemeinen}]
	\label{lem:mengers-theorem}
	For every non-terminal vertex $v \in V \setminus T$, the maximum number of vertex-disjoint paths from $v$ to distinct terminals equals the minimum size of a cut separating $v$ from $T$. That is, $\connvg{v}{G} = \min \{ |\scutc{C}| : C = (\lcutc{C}, \scutc{C}, \rcutc{C}) \text{ separates } v \text{ from } T \}$.
\end{lemma}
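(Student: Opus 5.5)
This is the directed vertex version of Menger's theorem, phrased for a single source $v$ and a terminal set $T$. I would prove it by reducing to the max-flow min-cut theorem through the standard vertex-splitting construction. The easy inequality comes first and can be done directly.

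\emph{Weak duality.} Let $C=(\lcutc{C},\scutc{C},\rcutc{C})$ separate $v$ from $T$, and take any vertex-disjoint family of paths from $v$ to distinct terminals. Each path starts in $\rcutc{C}$ and ends in $T\subseteq \lcutc{C}\cup\scutc{C}$. Since there is no edge from $\rcutc{C}$ to $\lcutc{C}$, consider the first vertex on the path that leaves $\rcutc{C}$: it must lie in $\scutc{C}$. Note that $v\notin\scutc{C}$, and the paths share only $v$. Hence these first separator vertices are distinct, which gives $\connvg{v}{G}\le|\scutc{C}|$.

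\emph{Strong duality.} Build a network $N$ as follows.
\begin{itemize}
\item Replace every vertex $u\neq v$ by an arc $u_{\mathrm{in}}\to u_{\mathrm{out}}$ of capacity $1$.
\item Keep $v$ as the single source node $v_{\mathrm{out}}$.
\item Turn every edge $(a,b)$ of $G$ into an arc $a_{\mathrm{out}}\to b_{\mathrm{in}}$ of infinite capacity.
\item Add a sink $z$ with an infinite-capacity arc $t_{\mathrm{out}}\to z$ for each $t\in T$.
\end{itemize}
Take a maximum integral flow from $v_{\mathrm{out}}$ to $z$ and decompose it into paths. The unit capacities on the split arcs make these paths internally vertex-disjoint in $G$. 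Each path enters $z$ through some $t_{\mathrm{out}}$, and these terminals are distinct because every terminal's split arc also has capacity $1$. Any path that passes through a terminal before its end can be truncated at the first terminal it visits. So the flow value is at most $\connvg{v}{G}$.

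Next take a minimum $v_{\mathrm{out}}$–$z$ cut $(X,\bar X)$ in $N$, with $v_{\mathrm{out}}\in X$. Its capacity is finite, since for instance cutting every terminal arc gives capacity at most $k$. So the cut contains only split arcs. Define the vertex cut as follows:
\begin{itemize}
\item $\scutc{C}$ is the set of vertices $u$ whose split arc is cut, that is, $u_{\mathrm{in}}\in X$ and $u_{\mathrm{out}}\notin X$;
\item $\rcutc{C}=\{v\}\cup\{u\notin\scutc{C}: u_{\mathrm{in}}\in X\}$;
\item $\lcutc{C}$ is the rest.
\end{itemize}

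\emph{Checking the cut, which is the only place needing care.} For $u\in\rcutc{C}\setminus\{v\}$ we also have $u_{\mathrm{out}}\in X$, since otherwise $u$ would lie in $\scutc{C}$.
\begin{itemize}
\item No terminal lies in $\rcutc{C}$: if $t_{\mathrm{out}}\in X$, the infinite arc $t_{\mathrm{out}}\to z$ would be cut.
\item There is no edge from $\rcutc{C}$ to $\lcutc{C}$: for an edge $(a,b)$ with $a\in\rcutc{C}$ we have $a_{\mathrm{out}}\in X$, so finiteness forces $b_{\mathrm{in}}\in X$, and then $b\in\scutc{C}\cup\rcutc{C}$.
\item If $b=v$, then $b\in\rcutc{C}$ directly.
\end{itemize}
Thus $C$ separates $v$ from $T$, and $|\scutc{C}|$ equals the cut capacity, which by max-flow min-cut equals the maximum flow value. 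That value is at most $\connvg{v}{G}$. Combining this with weak duality gives equality.
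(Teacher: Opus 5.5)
Your proof is correct. Weak duality comes from the first separator vertex on each path. For the converse, you read a vertex cut off a finite minimum cut of the vertex-split network; the $\infty$-arcs force $T\cap\rcutc{C}=\emptyset$ and rule out edges from $\rcutc{C}$ to $\lcutc{C}$. Combined with $\text{flow}\le\conn$, this closes the argument.

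The paper gives no proof of this lemma; it simply cites Menger. The closest thing in the paper is the network $H$ in \Cref{prop:essential-terminals-single-flow}, which is the same vertex-splitting construction with three differences:
\begin{enumerate}
\item It uses the finite capacity $K=k+1$ in place of $\infty$.
\item It also splits $v$, placing it behind an extra source $s_{\mathrm{in}}$.
\item It does not take an arbitrary minimum network cut. Instead it extracts one specific cut from residual reachability to $s_{\mathrm{out}}$, which is what additionally yields the tightest minimum cut.
\end{enumerate}
Note that this proposition relies on the easy direction of the present lemma to call its cut minimum. Your weak-duality paragraph supplies exactly that, so your version is self-contained where the paper's is not.

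Two wording points, neither a gap:
\begin{itemize}
\item The finite cut that witnesses finiteness of the minimum cut is the one through the terminal \emph{split} arcs $t_{\mathrm{in}}\to t_{\mathrm{out}}$, not the $\infty$-arcs $t_{\mathrm{out}}\to z$.
\item Since $v$ has no in-copy, you should state that edges into $v$ are dropped, or sent into $v_{\mathrm{out}}$. This is harmless, because no flow path starting at the source ever returns to it.
\end{itemize}
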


\subsection{Union, Intersection, and the Tightest Min-Cut}
\label{subsec:submodularity}

We next introduce two operations on cuts, namely union and intersection, and use them to expose the structure of minimum cuts separating a vertex from $T$. These operations imply a closure property for minimum cuts and, in turn, lead to the definition of the tightest minimum cut.

\begin{definition}[Union and intersection of cuts]
        \label{def:cut-operations}

        Let $C_1 = (\lcutc{C_1}, \scutc{C_1}, \rcutc{C_1})$ and
        $C_2 = (\lcutc{C_2}, \scutc{C_2}, \rcutc{C_2})$ be two cuts separating
        a vertex $v$ from $T$. We define their union $C_1 \cup C_2$ and intersection
        $C_1 \cap C_2$ by setting $\lcutc{C_1 \cup C_2} = \lcutc{C_1} \cup \lcutc{C_2}$,
        $\rcutc{C_1 \cup C_2} = \rcutc{C_1} \cap \rcutc{C_2}$,
        $\lcutc{C_1 \cap C_2} = \lcutc{C_1} \cap \lcutc{C_2}$, and
        $\rcutc{C_1 \cap C_2} = \rcutc{C_1} \cup \rcutc{C_2}$; in each case, the separator contains all remaining vertices. The same definitions are summarized in \Cref{tab:cut-intersection-union}.
\end{definition}

\begin{table}[h]
	\centering
	\begin{subtable}[t]{0.48\textwidth}
		\centering
		\renewcommand{\arraystretch}{1.4}
		\begin{tabular}{c|c|c|c}
			& $\lcutc{C_2}$ & $\scutc{C_2}$ & $\rcutc{C_2}$ \\ \hline
			$\lcutc{C_1}$ & $L$ & $L$ & $L$ \\ \hline
			$\scutc{C_1}$ & $L$ & $S$ & $S$ \\ \hline
			$\rcutc{C_1}$ & $L$ & $S$ & $R$
		\end{tabular}
		\caption{Vertex membership in $C_1 \cup C_2$.}
		\label{tab:cut-union}
	\end{subtable}
	\hfill
	\begin{subtable}[t]{0.48\textwidth}
		\centering
		\renewcommand{\arraystretch}{1.4}
		\begin{tabular}{c|c|c|c}
			& $\lcutc{C_2}$ & $\scutc{C_2}$ & $\rcutc{C_2}$ \\ \hline
			$\lcutc{C_1}$ & $L$ & $S$ & $R$ \\ \hline
			$\scutc{C_1}$ & $S$ & $S$ & $R$ \\ \hline
			$\rcutc{C_1}$ & $R$ & $R$ & $R$
		\end{tabular}
		\caption{Vertex membership in $C_1 \cap C_2$.}
		\label{tab:cut-intersection}
	\end{subtable}
	 \caption{
        Definition of the union and intersection of two cuts.
        For a vertex $x$, its row indicates its membership in $C_1$ and its
        column indicates its membership in $C_2$; the corresponding entry
        specifies whether $x$ belongs to the left side ($L$), separator ($S$),
        or right side ($R$) of the resulting cut.
        }
	\label{tab:cut-intersection-union}
\end{table}

These two operations are illustrated on a directed graph in
\Cref{fig:cut-operation-example}.

\definecolor{CutLeftColor}{RGB}{52,126,167}
\definecolor{CutSeparatorColor}{RGB}{202,126,32}
\definecolor{CutRightColor}{RGB}{115,88,151}

\newcommand{\CutOperationEdges}{%
  \essedge[extra={->}]{v}{d}%
  \essedge[extra={->}]{v}{e}%
  \essedge[extra={->}]{v}{f}%
  \essedge[extra={->}]{d}{p}%
  \essedge[extra={->}]{d}{q}%
  \essedge[extra={->}]{e}{q}%
  \essedge[extra={->}]{f}{q}%
  \essedge[extra={->}]{f}{r}%
  \essedge[extra={->}]{p}{t1}%
  \essedge[extra={->}]{q}{t2}%
  \essedge[extra={->}]{r}{t3}%
}

\newcommand{\CutPartLabel}[3]{%
  \node[
    font=\scriptsize\bfseries,
    text=#3!72!black,
    inner sep=0pt
  ] at #1 {$#2$};%
}

\newcommand{\CutDescendingRegions}{%
  \begin{scope}[on background layer]
    \begin{scope}[transparency group, opacity=0.11]
      \begin{pgfinterruptboundingbox}
        \clip
          (-5,-5) -- (5,-5) -- (5,-2.54) -- (-5,3.72) -- cycle;
      \end{pgfinterruptboundingbox}
      \draw[draw=CutLeftColor, line width=48pt,
        line cap=round, line join=round]
        (-2.4,1.50) -- (-2.4,0) -- (0,0);
      \draw[draw=CutLeftColor, line width=48pt, line cap=round]
        (-2.4,1.50)
        .. controls (-2.4,0.65) and (-1.00,0) .. (0,0);
    \end{scope}
    \begin{scope}[transparency group, opacity=0.09]
      \begin{pgfinterruptboundingbox}
        \clip
          (-5,5.54) -- (5,-0.72) -- (5,8) -- (-5,8) -- cycle;
      \end{pgfinterruptboundingbox}
      \draw[draw=CutRightColor, line width=48pt,
        line cap=round, line join=round]
        (0,4.50) -- (2.4,3.00) -- (2.4,1.50);
      \draw[draw=CutRightColor, line width=48pt, line cap=round]
        (0,4.50) -- (0,3.00)
        .. controls (1.10,3.00) and (2.4,2.35) .. (2.4,1.50);
    \end{scope}
    \draw[
      draw=CutSeparatorColor,
      draw opacity=0.18,
      preaction={draw=white, draw opacity=1, line width=36pt},
      line width=26pt,
      line cap=round,
      line join=round
    ]
      (-2.4,3.00) -- (0,1.50) -- (2.4,0);
  \end{scope}%
}

\newcommand{\CutAscendingRegions}{%
  \begin{scope}[on background layer]
    \begin{scope}[transparency group, opacity=0.11]
      \begin{pgfinterruptboundingbox}
        \clip
          (-5,-5) -- (5,-5) -- (5,3.72) -- (-5,-2.54) -- cycle;
      \end{pgfinterruptboundingbox}
      \draw[draw=CutLeftColor, line width=48pt,
        line cap=round, line join=round]
        (2.4,1.50) -- (2.4,0) -- (0,0);
      \draw[draw=CutLeftColor, line width=48pt, line cap=round]
        (2.4,1.50)
        .. controls (2.4,0.65) and (1.00,0) .. (0,0);
    \end{scope}
    \begin{scope}[transparency group, opacity=0.09]
      \begin{pgfinterruptboundingbox}
        \clip
          (-5,-0.72) -- (5,5.54) -- (5,8) -- (-5,8) -- cycle;
      \end{pgfinterruptboundingbox}
      \draw[draw=CutRightColor, line width=48pt,
        line cap=round, line join=round]
        (0,4.50) -- (-2.4,3.00) -- (-2.4,1.50);
      \draw[draw=CutRightColor, line width=48pt, line cap=round]
        (0,4.50) -- (0,3.00)
        .. controls (-1.10,3.00) and (-2.4,2.35) .. (-2.4,1.50);
    \end{scope}
    \draw[
      draw=CutSeparatorColor,
      draw opacity=0.18,
      preaction={draw=white, draw opacity=1, line width=36pt},
      line width=26pt,
      line cap=round,
      line join=round
    ]
      (-2.4,0) -- (0,1.50) -- (2.4,3.00);
  \end{scope}%
}

\newcommand{\CutUpperRegions}{%
  \begin{scope}[on background layer]
    \begin{scope}[transparency group, opacity=0.11]
      \begin{pgfinterruptboundingbox}
        \clip
          (-5,-5) -- (5,-5) -- (5,3.72)
          -- (0,0.59) -- (-5,3.72) -- cycle;
      \end{pgfinterruptboundingbox}
      \draw[draw=CutLeftColor, line width=48pt,
        line cap=round, line join=round]
        (-2.4,1.50) -- (-2.4,0) -- (0,0) -- (2.4,0) -- (2.4,1.50);
      \draw[draw=CutLeftColor, line width=48pt, line cap=round]
        (-2.4,1.50)
        .. controls (-2.4,0.65) and (-1.00,0) .. (0,0)
        .. controls (1.00,0) and (2.4,0.65) .. (2.4,1.50);
    \end{scope}
    \begin{scope}[transparency group, opacity=0.09]
      \draw[draw=CutRightColor, line width=48pt, line cap=round]
        (0,4.50) -- (0,3.00);
    \end{scope}
    \draw[
      draw=CutSeparatorColor,
      draw opacity=0.18,
      preaction={draw=white, draw opacity=1, line width=36pt},
      line width=26pt,
      line cap=round,
      line join=round
    ]
      (-2.4,3.00) -- (0,1.50) -- (2.4,3.00);
  \end{scope}%
}

\newcommand{\CutLowerRegions}{%
  \begin{scope}[on background layer]
    \fill[fill=CutLeftColor, fill opacity=0.11]
      (0,0) ellipse (0.95 and 0.72);
    \begin{scope}[transparency group, opacity=0.09]
      \begin{pgfinterruptboundingbox}
        \clip
          (-5,-0.72) -- (0,2.41) -- (5,-0.72)
          -- (5,8) -- (-5,8) -- cycle;
      \end{pgfinterruptboundingbox}
      \draw[draw=CutRightColor, line width=48pt,
        line cap=round, line join=round]
        (-2.4,1.50) -- (-2.4,3.00) -- (0,4.50)
        -- (2.4,3.00) -- (2.4,1.50);
      \draw[draw=CutRightColor, line width=48pt, line cap=round]
        (-2.4,1.50)
        .. controls (-2.4,2.20) and (-0.80,3.00) .. (0,3.00)
        .. controls (0.80,3.00) and (2.4,2.20) .. (2.4,1.50);
    \end{scope}
    \draw[
      draw=CutSeparatorColor,
      draw opacity=0.18,
      preaction={draw=white, draw opacity=1, line width=36pt},
      line width=26pt,
      line cap=round,
      line join=round
    ]
      (-2.4,0) -- (0,1.50) -- (2.4,0);
  \end{scope}%
}

\begin{figure}[!htb]
  \centering

  \begin{subfigure}[t]{0.48\textwidth}
    \centering

    \tikzsetnextfilename{cut-operation-c1}

    \begin{essgraph}[
        scale=0.82,
        >=stealth,
        node scale=0.58,
        term scale=0.72,
        node font={\scriptsize},
        line cap=round,
        line join=round
      ]

      \CutDescendingRegions

      \essterminal{t1}{(-2.4,0)}{$t_1$}
      \essterminal{t2}{( 0.0,0)}{$t_2$}
      \essterminal{t3}{( 2.4,0)}{$t_3$}

      \essvertex[extra={fill=CutRightColor!22}]{v}{(0,4.50)}{$v$}

      \essvertex[extra={fill=CutSeparatorColor!40}]{d}{(-2.4,3.00)}{}

      \essvertex[extra={fill=CutRightColor!22}]{e}{(0.0,3.00)}{}

      \essvertex[extra={fill=CutRightColor!22}]{f}{(2.4,3.00)}{}

      \essvertex[extra={fill=CutLeftColor!24}]{p}{(-2.4,1.50)}{}

      \essvertex[extra={fill=CutSeparatorColor!40}]{q}{(0.0,1.50)}{}

      \essvertex[extra={fill=CutRightColor!22}]{r}{(2.4,1.50)}{}

      \CutOperationEdges

      \CutPartLabel{(-1.20,-0.10)}{L}{CutLeftColor}
      \CutPartLabel{(-1.18, 2.45)}{S}{CutSeparatorColor}
      \CutPartLabel{( 0.55, 3.80)}{R}{CutRightColor}

    \end{essgraph}

    \caption{$C_1$}
    \label{fig:cut-operation-c1}
  \end{subfigure}
  \hfill
  \begin{subfigure}[t]{0.48\textwidth}
    \centering

    \tikzsetnextfilename{cut-operation-c2}

    \begin{essgraph}[
        scale=0.82,
        >=stealth,
        node scale=0.58,
        term scale=0.72,
        node font={\scriptsize},
        line cap=round,
        line join=round
      ]

      \CutAscendingRegions

      \essterminal{t1}{(-2.4,0)}{$t_1$}
      \essterminal{t2}{( 0.0,0)}{$t_2$}
      \essterminal{t3}{( 2.4,0)}{$t_3$}

      \essvertex[extra={fill=CutRightColor!22}]{v}{(0,4.50)}{$v$}

      \essvertex[extra={fill=CutRightColor!22}]{d}{(-2.4,3.00)}{}

      \essvertex[extra={fill=CutRightColor!22}]{e}{(0.0,3.00)}{}

      \essvertex[extra={fill=CutSeparatorColor!40}]{f}{(2.4,3.00)}{}

      \essvertex[extra={fill=CutRightColor!22}]{p}{(-2.4,1.50)}{}

      \essvertex[extra={fill=CutSeparatorColor!40}]{q}{(0.0,1.50)}{}

      \essvertex[extra={fill=CutLeftColor!24}]{r}{(2.4,1.50)}{}

      \CutOperationEdges

      \CutPartLabel{( 1.20,-0.10)}{L}{CutLeftColor}
      \CutPartLabel{( 1.18, 2.45)}{S}{CutSeparatorColor}
      \CutPartLabel{(-0.55, 3.80)}{R}{CutRightColor}

    \end{essgraph}

    \caption{$C_2$}
    \label{fig:cut-operation-c2}
  \end{subfigure}

  \par\vspace{0.8em}

  \begin{subfigure}[t]{0.48\textwidth}
    \centering

    \tikzsetnextfilename{cut-operation-union}

    \begin{essgraph}[
        scale=0.82,
        >=stealth,
        node scale=0.58,
        term scale=0.72,
        node font={\scriptsize},
        line cap=round,
        line join=round
      ]

      \CutUpperRegions

      \essterminal{t1}{(-2.4,0)}{$t_1$}
      \essterminal{t2}{( 0.0,0)}{$t_2$}
      \essterminal{t3}{( 2.4,0)}{$t_3$}

      \essvertex[extra={fill=CutRightColor!22}]{v}{(0,4.50)}{$v$}

      \essvertex[extra={fill=CutSeparatorColor!40}]{d}{(-2.4,3.00)}{}

      \essvertex[extra={fill=CutRightColor!22}]{e}{(0.0,3.00)}{}

      \essvertex[extra={fill=CutSeparatorColor!40}]{f}{(2.4,3.00)}{}

      \essvertex[extra={fill=CutLeftColor!24}]{p}{(-2.4,1.50)}{}

      \essvertex[extra={fill=CutSeparatorColor!40}]{q}{(0.0,1.50)}{}

      \essvertex[extra={fill=CutLeftColor!24}]{r}{(2.4,1.50)}{}

      \CutOperationEdges

      \CutPartLabel{(1.20,-0.10)}{L}{CutLeftColor}
      \CutPartLabel{(1.18, 2.47)}{S}{CutSeparatorColor}
      \CutPartLabel{(0.55, 3.80)}{R}{CutRightColor}

    \end{essgraph}

    \caption{$C_1\cup C_2$}
    \label{fig:cut-operation-union}
  \end{subfigure}
  \hfill
  \begin{subfigure}[t]{0.48\textwidth}
    \centering

    \tikzsetnextfilename{cut-operation-intersection}

    \begin{essgraph}[
        scale=0.82,
        >=stealth,
        node scale=0.58,
        term scale=0.72,
        node font={\scriptsize},
        line cap=round,
        line join=round
      ]

      \CutLowerRegions

      \essterminal{t1}{(-2.4,0)}{$t_1$}
      \essterminal{t2}{( 0.0,0)}{$t_2$}
      \essterminal{t3}{( 2.4,0)}{$t_3$}

      \essvertex[extra={fill=CutRightColor!22}]{v}{(0,4.50)}{$v$}

      \essvertex[extra={fill=CutRightColor!22}]{d}{(-2.4,3.00)}{}

      \essvertex[extra={fill=CutRightColor!22}]{e}{(0.0,3.00)}{}

      \essvertex[extra={fill=CutRightColor!22}]{f}{(2.4,3.00)}{}

      \essvertex[extra={fill=CutRightColor!22}]{p}{(-2.4,1.50)}{}

      \essvertex[extra={fill=CutSeparatorColor!40}]{q}{(0.0,1.50)}{}

      \essvertex[extra={fill=CutRightColor!22}]{r}{(2.4,1.50)}{}

      \CutOperationEdges

      \CutPartLabel{(0.68,0.30)}{L}{CutLeftColor}
      \CutPartLabel{(1.18,1.00)}{S}{CutSeparatorColor}
      \CutPartLabel{(0.55,3.80)}{R}{CutRightColor}

    \end{essgraph}

    \caption{$C_1\cap C_2$}
    \label{fig:cut-operation-intersection}
  \end{subfigure}

  \caption{
  Union and intersection of two cuts (that are also minimum) separating
  $v$ from $T$ in the same directed graph. The shading and the labels
  $L$, $S$, and $R$ indicate the three parts of each cut.
  }
  \label{fig:cut-operation-example}

\end{figure}

The next lemma shows that these operations are well behaved: they preserve the property of being a cut separating $v$ from $T$, and their separator sizes satisfy an exact modular identity.

\begin{lemma}
	\label{lem:union-intersection-cut}
	If $C_1$ and $C_2$ are cuts separating $v$ from $T$, then both $C_1 \cup C_2$ and $C_1 \cap C_2$ are valid cuts separating $v$ from $T$. Furthermore, their sizes satisfy the modular equation:
	$$|C_1| + |C_2| = |C_1 \cup C_2| + |C_1 \cap C_2|.$$
\end{lemma}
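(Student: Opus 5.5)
We verify the three defining conditions of a cut (\Cref{def:cut}) for $C_1 \cup C_2$ and $C_1 \cap C_2$ directly from the definitions in \Cref{def:cut-operations}, and then prove the modular identity by counting vertices according to their membership pattern in \Cref{tab:cut-intersection-union}.

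\textbf{Validity.} First, both new triples are partitions of $V$. For the union, $\lcutc{C_1}\cup\lcutc{C_2}$ and $\rcutc{C_1}\cap\rcutc{C_2}$ are disjoint, since a vertex of $\rcutc{C_1}\cap\rcutc{C_2}$ lies in neither left side. The separator is the complement of these two sets. The intersection is symmetric.

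Second, we place $v$. Since $v \in \rcutc{C_1}\cap\rcutc{C_2}$, it lies in the right side of both new cuts.

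Third, we check that no terminal lies on a right side. For the union this holds because $T\cap\rcutc{C_1}=\emptyset$ already, so $T \cap \rcutc{C_1}\cap\rcutc{C_2}=\emptyset$. For the intersection, $T$ avoids $\rcutc{C_1}\cup\rcutc{C_2}$ because it avoids each set separately.

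Finally, we rule out edges from right to left. Suppose $(x,y)$ is an edge with $x\in\rcutc{C_1\cup C_2}$ and $y\in\lcutc{C_1\cup C_2}$. Then $x$ lies in both right sides, and $y$ lies in some $\lcutc{C_i}$, so the edge goes from $\rcutc{C_i}$ to $\lcutc{C_i}$, contradicting that $C_i$ is a cut. For the intersection, suppose $x\in\rcutc{C_i}$ for some $i$ and $y$ lies in both left sides. Then again the edge goes from $\rcutc{C_i}$ to $\lcutc{C_i}$, a contradiction.

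\textbf{Modular identity.} For each vertex $x$, we compare its contribution to the two sides of the identity, where each side counts how often $x$ lies in a separator.

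If $x\in \scutc{C_1}\cap\scutc{C_2}$, it contributes $2$ on the left. It lies in both new separators, so it also contributes $2$ on the right.

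Next, suppose $x$ lies in exactly one of $\scutc{C_1},\scutc{C_2}$, say $x\in\scutc{C_1}$. It then contributes $1$ on the left.
\begin{itemize}
\item If $x\in\lcutc{C_2}$, then $x$ is in $L$ for the union and in $S$ for the intersection, contributing $1$ on the right.
\item If $x\in\rcutc{C_2}$, then $x$ is in $S$ for the union and in $R$ for the intersection, again contributing $1$ on the right.
\end{itemize}

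If $x$ lies in neither $\scutc{C_1}$ nor $\scutc{C_2}$, it contributes $0$ on the left.
\begin{itemize}
\item If $x$ lies in $L$ for both cuts, or in $R$ for both cuts, the tables place it in $L$ or $R$ respectively in both new cuts, contributing $0$.
\item If $x$ lies in $L$ for one cut and $R$ for the other, it is $L$ in the union and $R$ in the intersection, contributing $0$.
\end{itemize}

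Summing over all $x$ gives the identity. The only step needing care is this last case check against the tables; there is no real obstacle, since the identity reduces to a nine-case verification using only the tables.
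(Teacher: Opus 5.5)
Your proposal is correct and follows essentially the same route as the paper: check that $v$ lies on the right side and rule out right-to-left edges by reducing to one of $C_1$ or $C_2$, then prove the modular identity by a per-vertex count over separator memberships. You also explicitly check that the new triples partition $V$ and keep $T$ off the right side, which the paper leaves implicit; this is a harmless and welcome addition.
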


	\begin{proof}
	We first show that $C_1 \cup C_2$ is a valid cut separating $v$ from $T$. By definition, since $v \in \rcutc{C_1}$ and $v \in \rcutc{C_2}$, we have $v \in \rcutc{C_1} \cap \rcutc{C_2} = \rcutc{C_1 \cup C_2}$. Now, suppose for contradiction that there exists a directed edge $(x, y)$ such that $x \in \rcutc{C_1 \cup C_2}$ and $y \in \lcutc{C_1 \cup C_2}$. From \Cref{tab:cut-union}, $x \in \rcutc{C_1 \cup C_2}$ implies $x \in \rcutc{C_1} \cap \rcutc{C_2}$. Meanwhile, $y \in \lcutc{C_1 \cup C_2}$ implies that $y \in \lcutc{C_1} \cup \lcutc{C_2}$. By symmetry, suppose $y \in \lcutc{C_1}$. This means $x \in \rcutc{C_1}$ and $y \in \lcutc{C_1}$, which contradicts the assumption that $C_1$ is a valid cut with no edges leaving $\rcutc{C_1}$ and entering $\lcutc{C_1}$.
	
	Next, we establish that $C_1 \cap C_2$ is a valid cut separating $v$. Since $v \in \rcutc{C_1}$ and $v \in \rcutc{C_2}$, \Cref{tab:cut-intersection} yields $v \in \rcutc{C_1 \cap C_2}$. Assume for contradiction that there exists a directed edge $(x, y)$ with $x \in \rcutc{C_1 \cap C_2}$ and $y \in \lcutc{C_1 \cap C_2}$. The definition of $\rcutc{C_1 \cap C_2}$ implies that $x \in \rcutc{C_1} \cup \rcutc{C_2}$. By symmetry, suppose $x \in \rcutc{C_1}$. Meanwhile, $y \in \lcutc{C_1 \cap C_2}$ implies $y \in \lcutc{C_1} \cap \lcutc{C_2}$, so $y \in \lcutc{C_1}$. Thus, $(x, y)$ is an edge from $\rcutc{C_1}$ to $\lcutc{C_1}$, contradicting that $C_1$ is a valid cut.
	
	Finally, to verify the size identity $|\scutc{C_1}| + |\scutc{C_2}| = |\scutc{C_1 \cup C_2}| + |\scutc{C_1 \cap C_2}|$, we count the contribution of each vertex $x \in V$ to both sides of the equation based on its membership in $\scutc{C_1}$ and $\scutc{C_2}$:
	\begin{itemize}
		\item If $x \in \scutc{C_1}$ and $x \in \scutc{C_2}$, then $x \in \scutc{C_1 \cap C_2}$ and $x \in \scutc{C_1 \cup C_2}$, contributing $2$ to both sides of the identity.
		\item If $x$ belongs to exactly one of $\scutc{C_1}$ or $\scutc{C_2}$, then a case-by-case check of \Cref{tab:cut-intersection-union} shows that $x$ belongs to exactly one of $\scutc{C_1 \cap C_2}$ or $\scutc{C_1 \cup C_2}$, contributing $1$ to both sides.
		\item If $x \notin \scutc{C_1}$ and $x \notin \scutc{C_2}$, then $x \notin \scutc{C_1 \cap C_2}$ and $x \notin \scutc{C_1 \cup C_2}$, contributing $0$ to both sides.
	\end{itemize}
	Summing these contributions over all vertices yields the desired equation.
\end{proof}

An immediate consequence of \Cref{lem:union-intersection-cut} is that minimum cuts are closed under intersection and union.

\begin{corollary}
	\label{cor:mincut-closure}
	Let $C_1$ and $C_2$ be minimum cuts separating a vertex $v$ from $T$, so $|C_1| = |C_2| = \connvg{v}{G}$. Then both $C_1 \cap C_2$ and $C_1 \cup C_2$ are minimum cuts separating $v$ from $T$, each of size $\connvg{v}{G}$.
\end{corollary}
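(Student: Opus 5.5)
The corollary follows directly from \Cref{lem:union-intersection-cut} combined with Menger's theorem (\Cref{lem:mengers-theorem}). The plan is a short sandwich argument: bound both cuts from below by minimality, then use the modular identity to force equality.

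First, invoke \Cref{lem:union-intersection-cut} to conclude that $C_1 \cup C_2$ and $C_1 \cap C_2$ are themselves valid cuts separating $v$ from $T$. By \Cref{lem:mengers-theorem}, every such cut has separator size at least $\connvg{v}{G}$, so
\[
|C_1 \cup C_2| \ge \connvg{v}{G} \quad\text{and}\quad |C_1 \cap C_2| \ge \connvg{v}{G}.
\]

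Second, apply the modular equation from \Cref{lem:union-intersection-cut} together with the hypothesis $|C_1| = |C_2| = \connvg{v}{G}$. This gives
\[
|C_1 \cup C_2| + |C_1 \cap C_2| = |C_1| + |C_2| = 2\connvg{v}{G}.
\]
Two quantities that are each at least $\connvg{v}{G}$ and sum to exactly $2\connvg{v}{G}$ must both equal $\connvg{v}{G}$. Hence each of $C_1 \cup C_2$ and $C_1 \cap C_2$ is a minimum cut.

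There is no real obstacle here. The only point to verify is that the lower bound from Menger's theorem applies to the new cuts, and that is exactly what the validity part of \Cref{lem:union-intersection-cut} provides.
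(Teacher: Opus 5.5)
Your proof is correct and is exactly the argument the paper intends when it calls this an immediate consequence of \Cref{lem:union-intersection-cut}. The lemma makes both new cuts valid, so each has size at least $\connvg{v}{G}$; the modular identity makes their sizes sum to $2\connvg{v}{G}$, which forces equality.
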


This closure under intersection yields a canonical minimum cut separating $v$ from $T$, namely the one with the smallest left side.

\begin{definition}[Tightest minimum cut]
	\label{def:tightest-min-cut}
	For any non-terminal vertex $v \in V \setminus T$, the tightest minimum cut separating $v$ from $T$ is the cut obtained by intersecting all minimum cuts separating $v$ from $T$. By \Cref{cor:mincut-closure}, this intersection is again a valid minimum cut of size $\connvg{v}{G}$.
\end{definition}

The tightest minimum cut induces useful containment relations with every other minimum cut.

\begin{lemma}
	\label{obs:tightest-min-cut}

	Let $C$ be the tightest minimum cut separating $v$ from $T$, and let $C'$ be any other minimum cut separating $v$ from $T$. Then $\lcutc{C} \subseteq \lcutc{C'}$ and $\rcutc{C'} \subseteq \rcutc{C}$. Consequently:

	\begin{enumerate}

		\item If a vertex $u$ satisfies $u \in \lcutc{C}$, then
		$u \in \lcutc{C'}$ for every minimum cut $C'$ separating $v$ from $T$.

		\item If a vertex $u$ satisfies $u \in \scutc{C'}$ for some minimum cut $C'$ separating $v$ from $T$, then
		$u \in \scutc{C} \cup \rcutc{C}$.

		\item If a terminal $t$ satisfies $t \in \scutc{C'}$ for some minimum cut $C'$ separating $v$ from $T$, then
		$t \in \scutc{C}$.

	\end{enumerate}
\end{lemma}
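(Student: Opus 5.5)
The plan is to read the containments directly off the definition of intersection of cuts, and then derive the three consequences as short case checks.

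For the main claim, write $C = \bigcap_{C''} C''$, the intersection over all minimum cuts $C''$ separating $v$ from $T$. Since the graph is finite, there are finitely many such cuts. The intersection operation is associative on the left and right sides, so we can iterate \Cref{def:cut-operations}. This gives $\lcutc{C} = \bigcap_{C''} \lcutc{C''}$ and $\rcutc{C} = \bigcup_{C''} \rcutc{C''}$; an induction on the number of cuts intersected, using \Cref{tab:cut-intersection} at each step, confirms this. In particular, because $C'$ is one of the cuts in the family, $\lcutc{C} \subseteq \lcutc{C'}$ and $\rcutc{C'} \subseteq \rcutc{C}$.

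The three items then follow in turn.
\begin{enumerate}
\item This is the first containment restated: if $u \in \lcutc{C}$, then $u \in \lcutc{C'}$ for every minimum cut $C'$.
\item Suppose $u \in \scutc{C'}$. Then $u \notin \lcutc{C'}$, so by item 1 (in contrapositive form) $u \notin \lcutc{C}$. Since $C$ partitions $V$, this gives $u \in \scutc{C} \cup \rcutc{C}$.
\item Apply item 2 to a terminal $t$, which gives $t \in \scutc{C} \cup \rcutc{C}$. By \Cref{def:cut}, $T \cap \rcutc{C} = \emptyset$, so $t \in \scutc{C}$.
\end{enumerate}

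There is no real obstacle here. The only point needing care is that the tightest cut is well-defined as an iterated intersection, with the side identities above. \Cref{cor:mincut-closure} ensures that each partial intersection is again a minimum cut, so the definition makes sense.
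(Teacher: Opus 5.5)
Your proof is correct and follows the paper's argument. The paper phrases the key step as $C \cap C' = C$ and reads off $\lcutc{C}=\lcutc{C}\cap\lcutc{C'}$ and $\rcutc{C}=\rcutc{C}\cup\rcutc{C'}$, which is exactly what your explicit identities $\lcutc{C}=\bigcap_{C''}\lcutc{C''}$ and $\rcutc{C}=\bigcup_{C''}\rcutc{C''}$ give, and it derives the three items by the same case checks (your unfolding of the iterated intersection is, if anything, slightly more explicit about why the definition is order-independent).
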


\begin{proof}
	Since $C$ is the intersection of all minimum cuts separating $v$ from $T$, we have $C \cap C' = C$. By \Cref{def:cut-operations}, this gives $\lcutc{C} = \lcutc{C} \cap \lcutc{C'}$ and $\rcutc{C} = \rcutc{C} \cup \rcutc{C'}$, and hence $\lcutc{C} \subseteq \lcutc{C'}$ and $\rcutc{C'} \subseteq \rcutc{C}$.

	The first conclusion follows from the former containment. For the second, a vertex in $\scutc{C'}$ cannot lie in $\lcutc{C}$, since every vertex of $\lcutc{C}$ lies in $\lcutc{C'}$; thus it lies in $\scutc{C} \cup \rcutc{C}$. Finally, if a terminal $t \in \scutc{C'}$, then $t \notin \lcutc{C}$ by the same argument and $t \notin \rcutc{C}$ because $C$ separates $v$ from $T$, so $t \in \scutc{C}$.
\end{proof}

\section{Flow-Essential Assignment}
\label{sec:essential-assignment-conditions}

In this section, we introduce the core connectivity condition that makes the contraction step possible. This condition is weaker than $k$-$T$-connectivity, but still strong enough to guarantee the existence of the desired partition. Its guiding principle is that a vertex should be assigned only to terminals that are forced by its connectivity to $T$. We restate the essentiality definition below.

\defessentialterminal*

We next relate the tightest minimum cut (\Cref{def:tightest-min-cut}) and essential terminals. We show that a terminal $t$ is essential for a vertex $v$ if and only if $t$ lies in the separator of the tightest minimum cut separating $v$ from $T$. This cut view is used later when we delete terminals or compare critical edges.

\begin{lemma}
	\label{lem:essential-terminal-tightest-cut}
	Let $G=(V,E)$ be a directed graph with terminal set $T$, let $v \in V \setminus T$, and let $C = (\lcutc{C}, \scutc{C}, \rcutc{C})$ be the tightest minimum cut separating $v$ from $T$ in $G$. A terminal $t \in T$ is essential for $v$ in $G$ if and only if $t \in \scutc{C}$.
\end{lemma}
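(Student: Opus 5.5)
We prove the two directions separately. Throughout, write $\kappa=\connvg{v}{G}$ and use Menger's theorem (\Cref{lem:mengers-theorem}) to pass between path families and cuts, both in $G$ and in $G\setminus\{t\}$.

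\emph{($\Leftarrow$) If $t\in\scutc{C}$, then $t$ is essential.} We exhibit a cut of size $\kappa-1$ in $G\setminus\{t\}$. Take $C'=(\lcutc{C},\scutc{C}\setminus\{t\},\rcutc{C})$ restricted to $V\setminus\{t\}$; this is a partition of $V\setminus\{t\}$. It has the required properties:
\begin{itemize}
\item $v\in\rcutc{C'}$;
\item every remaining terminal still lies in $\lcutc{C'}\cup\scutc{C'}$;
\item no edge goes from $\rcutc{C'}$ to $\lcutc{C'}$, since no such edge existed in $G$.
\end{itemize}
Hence $\connvg{v}{G\setminus\{t\}}\le\kappa-1$. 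Deleting a single vertex can destroy at most one path of a maximum family, so $\connvg{v}{G\setminus\{t\}}\ge\kappa-1$. Thus equality holds and $t$ is essential.

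\emph{($\Rightarrow$) If $t$ is essential, then $t\in\scutc{C}$.} Suppose $\connvg{v}{G\setminus\{t\}}=\kappa-1$, and let $D=(\lcutc{D},\scutc{D},\rcutc{D})$ be a minimum cut separating $v$ from $T\setminus\{t\}$ in $G\setminus\{t\}$, so $|\scutc{D}|=\kappa-1$. Form the partition $D^+=(\lcutc{D},\scutc{D}\cup\{t\},\rcutc{D})$ of $V$. We check that $D^+$ is a cut in $G$ separating $v$ from $T$:
\begin{itemize}
\item all terminals, including $t$, avoid $\rcutc{D}$;
\item every edge of $G$ from $\rcutc{D}$ to $\lcutc{D}$ avoids $t$, so it is an edge of $G\setminus\{t\}$, which is impossible since $D$ is a cut there.
\end{itemize}
So $D^+$ is a cut of size $\kappa$, hence a minimum cut in $G$, and it contains the terminal $t$ in its separator. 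By \Cref{obs:tightest-min-cut}, item 3, $t\in\scutc{C}$.

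\emph{Equivalent formulation.} The statement in \Cref{def:essential-terminal} that every maximum path family uses $t$ is immediate. A maximum family avoiding $t$ exists iff $\connvg{v}{G\setminus\{t\}}\ge\kappa$. Since removing $t$ never increases connectivity, this happens iff $t$ is not essential.

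The only delicate point is in ($\Rightarrow$): $t$ is a vertex of $G$ but not of $G\setminus\{t\}$. Placing it in the separator of $D^+$ must not create a forbidden edge from the right side to the left side. It does not, because any edge incident to $t$ has an endpoint in the separator rather than in $\rcutc{D}$ or $\lcutc{D}$, so no such edge can cross from one to the other.
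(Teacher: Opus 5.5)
Your proof is correct and follows essentially the same approach as the paper's. For ($\Leftarrow$) you use the same cut $(\lcutc{C},\scutc{C}\setminus\{t\},\rcutc{C})$ in $G\setminus\{t\}$ together with the at-most-one-path-lost bound; for ($\Rightarrow$) you lift a minimum cut of $G\setminus\{t\}$ by adding $t$ to its separator and then apply item 3 of \Cref{obs:tightest-min-cut}.
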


\begin{proof}
We first prove that membership in $\scutc{C}$ implies essentiality. Assume that $t \in \scutc{C}$, and consider the partition $C' = (\lcutc{C}, \scutc{C} \setminus \{t\}, \rcutc{C})$ in the graph $G \setminus \{t\}$. Then $C'$ is a valid cut in $G \setminus \{t\}$ separating $v$ from the remaining terminals and its size is $\connvg{v}{G} - 1$. By \Cref{lem:mengers-theorem}, this gives $\connvg{v}{G \setminus \{t\}} \le \connvg{v}{G} - 1$. A maximum path family in $G$ contains at most one path through $t$, so deleting that path, if it exists, gives $\connvg{v}{G \setminus \{t\}} \ge \connvg{v}{G} - 1$. Hence $\connvg{v}{G \setminus \{t\}} = \connvg{v}{G} - 1$, so $t$ is essential for $v$.

For the converse, assume that $t$ is essential for $v$. By the definition of essentiality $\connvg{v}{G \setminus \{t\}} = \connvg{v}{G} - 1$. By \Cref{lem:mengers-theorem}, there exists a cut $C' = (\lcutc{C'}, \scutc{C'}, \rcutc{C'})$ in $G \setminus \{t\}$ separating $v$ from $T \setminus \{t\}$ such that $|\scutc{C'}| = \connvg{v}{G} - 1$. Add $t$ to its separator, defining $C'' = (\lcutc{C'}, \scutc{C'} \cup \{t\}, \rcutc{C'})$. Then $C''$ is a valid minimum cut in $G$ separating $v$ from $T$ and contains $t$ in its separator. By item~(3) of \Cref{obs:tightest-min-cut}, every terminal in the separator of any minimum cut lies in the separator of the tightest minimum cut. Hence $t \in \scutc{C}$, concluding the proof.
\end{proof}

Combined with \Cref{obs:tightest-min-cut}, the above lemma shows that if a terminal lies in the separator of some minimum cut $\cut$ separating $v$ from $T$, then it also lies in the separator of the tightest minimum cut, and hence is essential.

Using \Cref{def:essential-terminal}, one can find the essential terminals of a vertex $v$ by computing $\connvg{v}{G}$ and $\connvg{v}{G \setminus \{t\}}$ for each terminal $t$. However, this requires $k+1$ maximum-flow computations. Using the above lemma, we can find all essential terminals of a vertex $v$ by finding the tightest minimum cut which requires only one maximum-flow computation, as stated in the following proposition. Computing the tightest minimum cut is a standard algorithm in the literature and we include a proof for completeness.

\begin{proposition}[Computing essential terminals for one vertex]
\label{prop:essential-terminals-single-flow}
For a fixed vertex $v \in V \setminus T$, the set of terminals essential for $v$ can be computed using one maximum-flow computation in a network with $\caO(n)$ vertices and $\caO(n+m)$ edges and linear-time postprocessing. In particular, this takes $\caO((n+m)^{1+o(1)})$ time using the maximum-flow algorithm of \citet{brand2023deterministic}.
\end{proposition}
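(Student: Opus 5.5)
The plan is to reduce the problem to a single unit-capacity vertex flow and then read off the tightest minimum cut from the residual graph. By \Cref{lem:essential-terminal-tightest-cut}, it suffices to compute the separator of the tightest minimum cut separating $v$ from $T$ and output the terminals in it.

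\textbf{Building the network.} Use the standard vertex-splitting construction. Each vertex $x \neq v$ becomes $x^{\mathrm{in}}$ and $x^{\mathrm{out}}$, joined by an arc of capacity $1$. Each edge $(x,y)$ of $G$ becomes an arc $x^{\mathrm{out}} \to y^{\mathrm{in}}$ of infinite capacity. The source is $v$ itself, used as $v^{\mathrm{out}}$. Add a super-sink $z$ with an arc $t^{\mathrm{out}} \to z$ for every $t \in T$. This network has $\caO(n)$ vertices and $\caO(n+m)$ arcs.

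Integral $v$--$z$ flows decompose into paths that are vertex-disjoint except at $v$. Each such path ends at a distinct terminal, since every $t^{\mathrm{in}}\to t^{\mathrm{out}}$ arc has capacity $1$. Hence the maximum flow value equals $\connvg{v}{G}$, and finite minimum arc cuts correspond to vertex cuts as in \Cref{def:cut}.

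\textbf{Reading off the cut.} Compute one maximum flow $f$. Let $X$ be the set of network nodes from which $z$ is reachable in the residual graph $N_f$. Map this to $G$ as follows:
\begin{itemize}
\item $x \in \lcutc{C}$ if $x^{\mathrm{in}} \in X$;
\item $x \in \scutc{C}$ if $x^{\mathrm{in}} \notin X$ but $x^{\mathrm{out}} \in X$;
\item $x \in \rcutc{C}$ otherwise.
\end{itemize}
Then output $\scutc{C} \cap T$. Computing $X$ takes one reverse BFS, so the postprocessing is linear.

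\textbf{Correctness.} Two things must be checked.

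First, the sink-side residual-reachable set defines a minimum cut. This is standard max-flow/min-cut: every arc from $V(N)\setminus X$ into $X$ is saturated. Infinite arcs therefore never cross, which gives no edge from $\rcutc{C}$ to $\lcutc{C}$. Each crossing split arc contributes its vertex to the separator. Also $v \notin X$, since otherwise there would be an augmenting path. The terminals lie in $\lcutc{C} \cup \scutc{C}$, because $t^{\mathrm{out}} \to z$ with $t^{\mathrm{out}}$ unsaturated-or-not still puts $t^{\mathrm{out}}$ in $X$.

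Second, this minimum cut has the smallest possible sink side, which is the main point. For any minimum arc cut with sink side $Y$, every flow saturates all arcs into $Y$ and carries no flow out of $Y$. So no residual arc leaves $V\setminus Y$ into $Y$, and residual reachability to $z$ stays within $Y$, giving $X \subseteq Y$.

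It remains to translate this to vertex cuts. Every minimum vertex cut $C'$ in the sense of \Cref{def:cut} induces a minimum arc cut: take $Y$ to consist of $x^{\mathrm{in}},x^{\mathrm{out}}$ for $x\in\lcutc{C'}$, plus $x^{\mathrm{out}}$ for $x\in\scutc{C'}$, plus $z$. Then $X\subseteq Y$ gives $\lcutc{C}\subseteq \lcutc{C'}$ and $\rcutc{C'}\subseteq\rcutc{C}$ for all minimum $C'$. Since $C$ is itself minimum, it equals the intersection of all minimum cuts, i.e. the tightest minimum cut of \Cref{def:tightest-min-cut}.

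\textbf{Expected obstacle.} The delicate step is this translation between arc cuts and vertex cuts. In particular, one must check that $X$ never contains $x^{\mathrm{in}}$ without $x^{\mathrm{out}}$. This holds because $x^{\mathrm{in}}$'s only outgoing arc is to $x^{\mathrm{out}}$, plus residual back-arcs into nodes that themselves reach $z$ only through other paths, so such an $x^{\mathrm{in}}$ is harmless. I expect to handle it by classifying the three possible membership patterns of a split pair.

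The running time follows from \citet{brand2023deterministic}.
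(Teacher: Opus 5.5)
Your proposal is correct and is essentially the paper's proof. It uses the same vertex-splitting network; the paper uses the finite capacity $K=k+1$ and an explicit arc $s_{\mathrm{in}}\to v_{\mathrm{in}}$ instead of infinite capacities and $v$ as the source, which is immaterial. It also uses the same sink-side residual reachable set and the same tightness argument, comparing $X$ with the arc cut induced by an arbitrary minimum vertex cut.

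Two small remarks. First, the arc $t^{\mathrm{out}}\to z$ must have capacity greater than $1$ so that it is never saturated; that, not ``unsaturated-or-not'', is why $t^{\mathrm{out}}\in X$. Second, the obstacle you flag is unnecessary under your classification: placing $x$ in $\lcutc{C}$ whenever $x^{\mathrm{in}}\in X$ already gives a partition with all the needed properties. The paper needs the fact only because it requires both copies to be reachable for the left side, and it proves it via the reverse residual arc $x_{\mathrm{out}}\to x_{\mathrm{in}}$ when the split arc carries flow, and via flow conservation otherwise.
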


\begin{proof}
Let $\kappa=\connvg{v}{G}$ and set $K=k+1$, so $K>\kappa$. We use the standard vertex-splitting network. For every vertex $x\in V$, create two copies $x_{\mathrm{in}}$ and $x_{\mathrm{out}}$. Add the split arc $(x_{\mathrm{in}},x_{\mathrm{out}})$ with capacity $1$ when $x\neq v$ and with capacity $K$ when $x=v$. Add a source $s_{\mathrm{in}}$, a sink $s_{\mathrm{out}}$, and the arc $(s_{\mathrm{in}},v_{\mathrm{in}})$ of capacity $K$. For every edge $(x,y)\in E$, add the arc $(x_{\mathrm{out}},y_{\mathrm{in}})$ of capacity $K$. Finally, for every terminal $t\in T$, add the arc $(t_{\mathrm{out}},s_{\mathrm{out}})$ of capacity $K$. Call the resulting network $H$.

The split arcs of capacity $1$ ensure that each vertex other than $v$ is used by at most one unit of flow. Thus, a family of $r$ vertex-disjoint paths from $v$ to distinct terminals gives an integral $s_{\mathrm{in}}$-$s_{\mathrm{out}}$ flow of value $r$. Conversely, decompose an integral flow of value $r$ into paths and cycles, and discard the cycles. The unit capacities of the split arcs ensure that the resulting $r$ paths are vertex-disjoint outside $v$ and end at distinct terminals. Hence the maximum flow value in $H$ is $\kappa$.

We use the residual graph of a maximum flow to find the tightest minimum cut. Let $f$ be an integral maximum flow of value $\kappa$. For each arc $a=(x,y)$ of $H$, let $c_a$ be its capacity. The residual graph of $f$ contains the forward arc $(x,y)$ if $f(a)<c_a$ and the reverse arc $(y,x)$ if $f(a)>0$. Let $\Reach$ be the set of network vertices from which $s_{\mathrm{out}}$ can be reached in this residual graph. Since $f$ is maximum, the residual graph has no augmenting path from $s_{\mathrm{in}}$ to $s_{\mathrm{out}}$. It follows that $s_{\mathrm{in}}\notin\Reach$.

To turn $\Reach$ into a partition of $V$, we observe that if $x_{\mathrm{in}}\in\Reach$ then $x_{\mathrm{out}}\in\Reach$. This is because if the split arc of $x$ carries positive flow, its reverse residual arc lets $x_{\mathrm{out}}$ reach $x_{\mathrm{in}}$, and from there it can reach $s_{\mathrm{out}}$. If the split arc carries no flow, flow conservation at $x_{\mathrm{in}}$ shows that every arc entering $x_{\mathrm{in}}$ carries no flow. The only residual arc leaving $x_{\mathrm{in}}$ is then the forward split arc, so $x_{\mathrm{out}}$ again reaches $s_{\mathrm{out}}$.

We now define a partition $C=(\lcutc{C},\scutc{C},\rcutc{C})$ of $V$:
\begin{align*}
\lcutc{C}&=\{x\in V\mid x_{\mathrm{in}},x_{\mathrm{out}}\in\Reach\},\\
\scutc{C}&=\{x\in V\mid x_{\mathrm{in}}\notin\Reach,\ x_{\mathrm{out}}\in\Reach\},\\
\rcutc{C}&=\{x\in V\mid x_{\mathrm{in}},x_{\mathrm{out}}\notin\Reach\}.
\end{align*}
We verify that this partition is a cut separating $v$ from $T$. For each terminal $t$, the arc $(t_{\mathrm{out}},s_{\mathrm{out}})$ has capacity $K$ and carries at most $\kappa$ units of flow. This arc is therefore present in the residual graph, so $t_{\mathrm{out}}\in\Reach$ and $t\in\lcutc{C}\cup\scutc{C}$. The arcs $(s_{\mathrm{in}},v_{\mathrm{in}})$ and $(v_{\mathrm{in}},v_{\mathrm{out}})$ also have capacity $K$ and carry at most $\kappa$ units of flow, so both are present in the residual graph. If either copy of $v$ belonged to $\Reach$, these residual arcs would give a path from $s_{\mathrm{in}}$ to $s_{\mathrm{out}}$. Since no such path exists, $v\in\rcutc{C}$. Now suppose that an edge $(x,y)\in E$ goes from $\rcutc{C}$ to $\lcutc{C}$. The corresponding arc $(x_{\mathrm{out}},y_{\mathrm{in}})$ has capacity $K$ and carries at most $\kappa$ units of flow, so it is present in the residual graph. Since $y_{\mathrm{in}}\in\Reach$, this would imply $x_{\mathrm{out}}\in\Reach$, contrary to $x\in\rcutc{C}$. Thus, no edge goes from $\rcutc{C}$ to $\lcutc{C}$, and $C$ separates $v$ from $T$.

We next prove that $C$ is a minimum cut. Consider the flow network cut $(V(H)\setminus\Reach,\Reach)$. If an arc entering $\Reach$ were not saturated, its forward residual arc would place its tail in $\Reach$. Hence every arc entering $\Reach$ is saturated. If an arc leaving $\Reach$ carried positive flow, its reverse residual arc would place its head in $\Reach$. Hence every arc leaving $\Reach$ carries no flow. The net flow entering $\Reach$ is $\kappa$ because the source lies outside $\Reach$ and the sink lies in $\Reach$. Since every entering arc is saturated and every leaving arc carries no flow, this net flow equals the capacity of the network cut. Its capacity is therefore $\kappa$. Because $\kappa<K$, every arc entering $\Reach$ has capacity $1$. These arcs are exactly the split arcs of the vertices in $\scutc{C}$, so $|\scutc{C}|=\kappa$. Therefore, $C$ is a minimum cut separating $v$ from $T$.

It remains to show that $C$ is the tightest minimum cut. Let $C'=(\lcutc{C'},\scutc{C'},\rcutc{C'})$ be any minimum cut separating $v$ from $T$. We build a flow network cut $(X,Y)$ from $C'$ as follows. Put both copies of each vertex in $\rcutc{C'}$ in $X$, put $x_{\mathrm{in}}$ in $X$ and $x_{\mathrm{out}}$ in $Y$ for each $x\in\scutc{C'}$, and put both copies of each vertex in $\lcutc{C'}$ in $Y$. Put $s_{\mathrm{in}}$ in $X$ and $s_{\mathrm{out}}$ in $Y$. The arc out of $s_{\mathrm{in}}$ stays in $X$ because $v\in\rcutc{C'}$, and every arc into $s_{\mathrm{out}}$ starts in $Y$ because every terminal lies in $\lcutc{C'}\cup\scutc{C'}$. An arc corresponding to an edge of $G$ could go from $X$ to $Y$ only if that edge went from $\rcutc{C'}$ to $\lcutc{C'}$, which the definition of a cut forbids. Thus, the only network arcs from $X$ to $Y$ are the split arcs of the vertices in $\scutc{C'}$. The capacity of $(X,Y)$ is therefore $|\scutc{C'}|=\kappa$, so $(X,Y)$ is a minimum network cut.

We now compare the fixed maximum flow $f$ with the minimum network cut $(X,Y)$. The net flow from $X$ to $Y$ equals $\kappa$, and the total capacity of the arcs from $X$ to $Y$ is also $\kappa$. Hence every arc from $X$ to $Y$ is saturated and every arc from $Y$ to $X$ carries no flow. There is therefore no residual arc from $X$ to $Y$. Since $s_{\mathrm{out}}\in Y$, no vertex in $X$ can reach $s_{\mathrm{out}}$, and thus $\Reach\subseteq Y$. If $x\in\lcutc{C}$, both copies of $x$ belong to $\Reach$ and hence to $Y$, so $x\in\lcutc{C'}$. Thus, $\lcutc{C}\subseteq\lcutc{C'}$. If $x\in\rcutc{C'}$, both copies of $x$ belong to $X$ and hence lie outside $\Reach$, so $x\in\rcutc{C}$. Thus, $\rcutc{C'}\subseteq\rcutc{C}$.

These containments hold for every minimum cut $C'$, and $C$ is itself a minimum cut. Thus, $\lcutc{C}$ is the intersection of the left sides of all minimum cuts, while $\rcutc{C}$ is the union of their right sides. By \Cref{def:tightest-min-cut}, $C$ is the tightest minimum cut.

By \Cref{lem:essential-terminal-tightest-cut}, the terminals essential for $v$ are exactly the terminals in $\scutc{C}$. We obtain this set from the maximum flow by one reachability search in the reverse residual graph. The network has $\caO(n)$ vertices and $\caO(n+m)$ arcs, and the search takes linear time. The maximum-flow computation takes $\caO((n+m)^{1+o(1)})$ time by using the algorithm of \citet{brand2023deterministic}.
\end{proof}

The next lemma shows that deleting an edge can affect the essentiality of a terminal for a vertex only if it reduces that vertex's terminal connectivity.
\begin{lemma} \label{lem:essential-connectivity}
Let $G=(V,E)$ have terminal set $T$, let $v\in V\setminus T$, and let $\conn=\connvg{v}{G}$. For every edge $e\in E$, the graph $G'=G\setminus\{e\}$ satisfies $\connvg{v}{G'}\ge\conn-1$. Moreover, if $\connvg{v}{G'}=\conn$, then every terminal essential for $v$ in $G$ remains essential for $v$ in $G'$.
\end{lemma}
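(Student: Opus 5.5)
The plan is to prove the two claims separately: the first by a direct path-family argument, the second through the cut characterization of essential terminals in \Cref{lem:essential-terminal-tightest-cut}.

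For the first claim, fix a maximum family of $\conn$ vertex-disjoint paths from $v$ to distinct terminals in $G$. An edge belongs to at most one path of such a family. The paths are vertex-disjoint outside $v$, and two paths sharing the edge $e=(x,y)$ would share its endpoint that is different from $v$. Removing the path containing $e$, if there is one, leaves $\conn-1$ vertex-disjoint paths in $G'$, so $\connvg{v}{G'}\ge\conn-1$. If $v$ is an endpoint of $e$, the same reasoning applies, since only the path starting with that edge uses it.

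For the second claim, assume $\connvg{v}{G'}=\conn$ and let $t$ be essential for $v$ in $G$. We want $\connvg{v}{G'\setminus\{t\}}=\conn-1$.
\begin{itemize}
\item Upper bound: $G'\setminus\{t\}$ is a subgraph of $G\setminus\{t\}$, and removing edges cannot increase the number of disjoint paths. Hence $\connvg{v}{G'\setminus\{t\}}\le\connvg{v}{G\setminus\{t\}}=\conn-1$.
\item Lower bound: take a maximum family in $G'$, which has $\conn$ paths, and discard the path ending at $t$, if any. This leaves at least $\conn-1$ disjoint paths in $G'\setminus\{t\}$.
\end{itemize}
Together these give equality, so $t$ is essential for $v$ in $G'$.

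As a cross-check in the cut language, one can argue directly instead of using the two inequalities. Since $t$ is essential in $G$, \Cref{lem:essential-terminal-tightest-cut} places $t$ in the separator of the tightest minimum cut $C$ of $G$. Deleting an edge keeps $C$ a valid cut in $G'$, because the cut condition only forbids edges and we are removing one. Its size is still $\conn=\connvg{v}{G'}$, so $C$ is a minimum cut in $G'$ with $t$ in its separator. Item~(3) of \Cref{obs:tightest-min-cut}, applied in $G'$, then puts $t$ in the separator of the tightest minimum cut of $G'$, and \Cref{lem:essential-terminal-tightest-cut} gives essentiality.

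There is no real obstacle here. The only care needed is the edge-disjointness observation when $e$ is incident to $v$, and noting that terminals have no outgoing edges, so deleting $e$ cannot affect terminals directly.
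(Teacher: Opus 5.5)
Your proof is correct and follows the paper's argument. The first claim is the same edge-disjointness observation, and your two-inequality argument for the second claim is the direct form of the paper's contradiction: a family of $\conn$ paths in $G'$ avoiding $t$ would also lie in $G\setminus\{t\}$. Your cut-based cross-check via \Cref{lem:essential-terminal-tightest-cut} and item~(3) of \Cref{obs:tightest-min-cut} is also valid, but it is not needed.
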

\begin{proof}
We first prove the connectivity bound. A family of $\conn$ vertex-disjoint paths from $v$ to distinct terminals is also edge-disjoint. Deleting $e$ destroys at most one of these paths, leaving a family of at least $\conn-1$ paths in $G'$. Hence $\connvg{v}{G'}\ge\conn-1$.

Now assume that $\connvg{v}{G'}=\conn$, and let $t$ be essential for $v$ in $G$. Suppose for contradiction that $t$ is not essential for $v$ in $G'$. Then $G'$ contains a family of $\conn$ vertex-disjoint paths from $v$ to distinct terminals that avoids $t$. This family is also present in $G$ and avoids $t$, contradicting that $t$ is essential for $v$ in $G$. Therefore $t$ is essential for $v$ in $G'$.
\end{proof}

\subsection{\essentialassignmentcondition}
\label{subsec:essential-assignment-condition}

In this section, we work in the unweighted setting and define assignments and the \essentialassignmentcondition using the essentiality defined in \Cref{def:essential-terminal}.

\begin{definition}[Assignment]
	\label{def:assignment}
	An assignment is a function $\assgn: V \setminus T \to T$. For each vertex $v \in V \setminus T$, the terminal $\assgn(v)$ is the terminal to which $v$ is assigned. For each terminal $t \in T$, let
	$$\assgninv(t) = \{v \in V \setminus T : \assgn(v) = t\}.$$
\end{definition}

We now restate the definition of \essentialassignmentcondition. The \essentialassignmentcondition is a capacitated assignment of non-terminals to terminals: each non-terminal is assigned to a terminal that is essential for it, and each terminal $t$ receives exactly $\cp_t$ vertices. This is the point where we marry the matching and cut concepts: the capacitated assignment captures the matching side, while essentiality captures the cut side. We call an assignment witnessing this condition a \emph{flow-essential assignment}; it is the object maintained by the unweighted contraction algorithm (\Cref{alg:gl-partition}).

\defessentialassignment*

\begin{figure}[htbp]
  \centering

  \begin{subfigure}[t]{0.48\textwidth}
    \centering
    \begin{essgraph}[scale=0.85, node scale=1.3, >=stealth]

      \essterminal{t1}{(-2.40, -0.45)}{$t_1$}
      \essterminal{t2}{(-0.80, -0.45)}{$t_2$}
      \essterminal{t3}{( 0.80, -0.45)}{$t_3$}
      \essterminal{t4}{( 2.40, -0.45)}{$t_4$}

      \essvertex[colors={t1/0.25, t2/0.25, t3/0.25, t4/0.25}]{v13}{(0.00, 4.20)}{$v_{13}$}

      \essvertex[colors={t1/0.5, t2/0.5}]{v11}{(-1.15, 2.65)}{$v_{11}$}
      \essvertex[colors={t2/0.5, t3/0.5}]{v12}{( 1.15, 2.65)}{$v_{12}$}

      \essvertex[colors={t1/1.0}]{v5}{(-4.00, 1.10)}{$v_5$}
      \essvertex[colors={t1/1.0}]{v6}{(-2.40, 1.10)}{$v_6$}
      \essvertex[colors={t2/1.0}]{v7}{(-0.80, 1.10)}{$v_7$}
      \essvertex[colors={t3/1.0}]{v8}{( 0.80, 1.10)}{$v_8$}
      \essvertex[colors={t4/1.0}]{v9}{( 2.40, 1.10)}{$v_9$}
      \essvertex[colors={t4/1.0}]{v10}{( 4.00, 1.10)}{$v_{10}$}

      \begin{scope}[on background layer]

        \draw[->, line width=\EssEdgeLineWidth]
          (v13) to[out=200, in=95, looseness=1.05] (v5);

        \essedge[extra={->}]{v13}{v11}
        \essedge[extra={->}]{v13}{v12}

        \draw[->, line width=\EssEdgeLineWidth]
          (v13) to[out=-20, in=85, looseness=1.05] (v10);

        \essedge[extra={->}]{v11}{v6}
        \essedge[extra={->}]{v11}{v7}
        \essedge[extra={->}]{v11}{v8}

        \essedge[extra={->}]{v12}{v7}
        \essedge[extra={->}]{v12}{v8}
        \essedge[extra={->}]{v12}{v9}

        \essedge[extra={->}]{v6}{v7}
        \essedge[extra={->}]{v7}{v8}
        \essedge[extra={->}]{v8}{v9}

        \draw[->, line width=\EssEdgeLineWidth]
          (v9) .. controls (1.45, 0.25) and (-1.45, 0.25) .. (v6);

        \essedge[extra={->}]{v5}{t1}
        \essedge[extra={->}]{v6}{t1}
        \essedge[extra={->}]{v7}{t2}
        \essedge[extra={->}]{v8}{t3}
        \essedge[extra={->}]{v9}{t4}
        \essedge[extra={->}]{v10}{t4}

      \end{scope}

    \end{essgraph}
    \caption{Essential terminals}
    \label{fig:essential-and-assignment-a}
  \end{subfigure}
  \hfill
  \begin{subfigure}[t]{0.48\textwidth}
    \centering
    \begin{essgraph}[scale=0.85, node scale=1.3, >=stealth]

      \essterminal{t1}{(-2.40, -0.45)}{$t_1$}
      \essterminal{t2}{(-0.80, -0.45)}{$t_2$}
      \essterminal{t3}{( 0.80, -0.45)}{$t_3$}
      \essterminal{t4}{( 2.40, -0.45)}{$t_4$}

      \essvertex[colors={t1/0.25, t2/0.25, t3/0.25, t4/0.25}, assigned=t3]{v13}{(0.00, 4.20)}{$v_{13}$}

      \essvertex[colors={t1/0.5, t2/0.5}, assigned=t2]{v11}{(-1.15, 2.65)}{$v_{11}$}
      \essvertex[colors={t2/0.5, t3/0.5}, assigned=t2]{v12}{( 1.15, 2.65)}{$v_{12}$}

      \essvertex[colors={t1/1.0}, assigned=t1]{v5}{(-4.00, 1.10)}{$v_5$}
      \essvertex[colors={t1/1.0}, assigned=t1]{v6}{(-2.40, 1.10)}{$v_6$}
      \essvertex[colors={t2/1.0}, assigned=t2]{v7}{(-0.80, 1.10)}{$v_7$}
      \essvertex[colors={t3/1.0}, assigned=t3]{v8}{( 0.80, 1.10)}{$v_8$}
      \essvertex[colors={t4/1.0}, assigned=t4]{v9}{( 2.40, 1.10)}{$v_9$}
      \essvertex[colors={t4/1.0}, assigned=t4]{v10}{( 4.00, 1.10)}{$v_{10}$}

      \begin{scope}[on background layer]

        \draw[->, line width=\EssEdgeLineWidth]
          (v13) to[out=200, in=95, looseness=1.05] (v5);

        \essedge[extra={->}]{v13}{v11}
        \essedge[extra={->}]{v13}{v12}

        \draw[->, line width=\EssEdgeLineWidth]
          (v13) to[out=-20, in=85, looseness=1.05] (v10);

        \essedge[extra={->}]{v11}{v6}
        \essedge[extra={->}]{v11}{v7}
        \essedge[extra={->}]{v11}{v8}

        \essedge[extra={->}]{v12}{v7}
        \essedge[extra={->}]{v12}{v8}
        \essedge[extra={->}]{v12}{v9}

        \essedge[extra={->}]{v6}{v7}
        \essedge[extra={->}]{v7}{v8}
        \essedge[extra={->}]{v8}{v9}

        \draw[->, line width=\EssEdgeLineWidth]
          (v9) .. controls (1.45, 0.25) and (-1.45, 0.25) .. (v6);

        \essedge[extra={->}]{v5}{t1}
        \essedge[extra={->}]{v6}{t1}
        \essedge[extra={->}]{v7}{t2}
        \essedge[extra={->}]{v8}{t3}
        \essedge[extra={->}]{v9}{t4}
        \essedge[extra={->}]{v10}{t4}

      \end{scope}

    \end{essgraph}
    \caption{A valid flow-essential assignment for $c = (2,3,2,2)$}
    \label{fig:essential-and-assignment-b}
  \end{subfigure}
  
  \caption{Essential terminals and a valid flow-essential assignment. In \subref{fig:essential-and-assignment-a}, the shaded sectors of each non-terminal indicate its essential terminals. In \subref{fig:essential-and-assignment-b}, the colored outer ring indicates the terminal to which each non-terminal is assigned. The displayed assignment is valid for $c=(2,3,2,2)$.}
  \label{fig:essential-and-assignment}
\end{figure}

We illustrate the condition with an example in \Cref{fig:essential-and-assignment}. The figure shows that the essential terminals of $v_{11}$ are $t_1$ and $t_2$. Its terminal connectivity is $3$: there are maximum path families ending at $\{t_1,t_2,t_3\}$ and at $\{t_1,t_2,t_4\}$. Thus $t_3$ and $t_4$ can each be avoided by a maximum family, whereas $t_1$ and $t_2$ cannot. \Cref{fig:essential-and-assignment-b} shows a valid flow-essential assignment for $c = (2,3,2,2)$.

\subsection{\fractionalesscond}
\label{subsec:fractional-essential-assignment-condition}

In this section, we work in the weighted setting and generalize the \essentialassignmentcondition to allow a vertex to split its weight among several terminals.
A \FractionalEssAssign $\assgnw$ records this split: $\assgnw(v,t)$ is the amount of weight that $v$ sends to $t$, and each vertex $v$ sends exactly its full weight $\weight_v$. The term \enquote{split-assignment} means that the weight of one vertex may be split among several terminals. Later, the rounding step and the contractions in \Cref{sec:weighted} turn these split-assignments into the final parts.

\begin{definition}[Flow-Essential Split-Assignment]
	\label{def:fractional-essential-assignment}
	A \FractionalEssAssign is a function $\assgnw: (V \setminus T) \times T \to \mathbb{N}_{\ge 0}$ such that, for every vertex $v \in V \setminus T$,
	$$\sum_{t \in T} \assgnw(v,t) = w_v.$$
	For a vertex $v \in V \setminus T$, let
	$$\assgnw(v) = \{t \in T : \assgnw(v,t) > 0\},$$
	and for a terminal $t \in T$, let
	$$\assgnwinv(t) = \{v \in V \setminus T : \assgnw(v,t) > 0\}.$$
\end{definition}

Not every split is allowed. As in the unweighted setting, a vertex may send positive weight only to terminals that are essential for it; the definition of essentiality does not change. Also, each terminal $t$ may receive at most its capacity $\cp_t$. The \fractionalesscond combines these two requirements.

\deffractionalesscond*

Consider a cost function $\crt$ on vertex-terminal pairs, where $\crt_v(t)$ is the cost of sending one unit of weight from a non-terminal $v$ to a terminal $t$. In \Cref{sec:weighted}, we need to choose, among all flow-essential split-assignments satisfying the \fractionalesscond, one of minimum cost. Sending one unit of weight from a vertex $v$ to a terminal $t$ has cost $\crt_v(t)$, so sending $\assgnw(v,t)$ units costs $\assgnw(v,t)\crt_v(t)$. The next proposition shows how to find such a minimum-cost split-assignment, or determine that none exists.

\begin{algorithm}[!ht]
        \caption{\textsc{MinCostSplitAssignment}$(G,T,\cp,\weight,\crt)$}
        \label{alg:min-cost-assignment}
        \KwIn{Weighted instance $(G=(V,E),T,\cp,\weight)$ and cost function $\crt$}
        \KwRequire{$\crt_v(t)\le |V|$ for every non-terminal $v$ and every terminal $t$ essential for $v$}
        \KwOut{A minimum-cost flow-essential split-assignment $\assgnw$ satisfying the \fractionalesscond, or \textsc{Infeasible} if none exists}

        Compute all essential pairs $(v,t)$ \;
        $W\gets \sum_{u\in V\setminus T}\weight_u$ \cmt{Total supply}
        Create a directed network $H$ with vertices $\{s_{\mathrm{in}},s_{\mathrm{out}}\}\cup (V\setminus T)\cup T$ \;
        \ForEach{$v\in V\setminus T$}{
                Add arc $(s_{\mathrm{in}},v)$ to $H$ with capacity $\weight_v$ and cost $0$ \cmt{Supply of $v$}
        }
        \ForEach{essential pair $(v,t)$ with $v\in V\setminus T$ and $t\in T$}{
                Add arc $(v,t)$ to $H$ with capacity $\weight_v$ and cost $\crt_v(t)$ \cmt{Only essential pairs}
        }
        \ForEach{$t\in T$}{
                Add arc $(t,s_{\mathrm{out}})$ to $H$ with capacity $\cp_t$ and cost $0$ \cmt{Terminal capacity}
        }
        Compute a minimum-cost flow $f$ of value $W$ in $H$, if one exists \;
        \If{no such flow exists}{
                \Return \textsc{Infeasible} \cmt{No feasible split-assignment}
        }
        \Return $\assgnw$, where $\assgnw(v,t)=f(v,t)$ if $(v,t)$ is an arc of $H$, and $\assgnw(v,t)=0$ otherwise \;
\end{algorithm}

\begin{proposition}[Computing a minimum-cost flow-essential split-assignment]
	\label{prop:best-assign-is-poly}
	Let $(G,T,\cp,\weight)$ be a weighted instance, and let $\crt$ be a cost function such that $\crt_v(t) \le |V|$ for every non-terminal vertex $v \in V \setminus T$ and every terminal $t$ essential for $v$. Then one can either compute a \FractionalEssAssign $\assgnw$ satisfying the \fractionalesscond and minimizing
	$$\sum_{v \in V \setminus T}\sum_{t \in T} \assgnw(v,t)\crt_v(t),$$
	or determine that no such \FractionalEssAssign exists. After computing, for each $v \in V \setminus T$, its set of essential terminals by applying \Cref{prop:essential-terminals-single-flow}, this reduces to one minimum-cost flow computation in a bipartite network with $\caO(nk)$ edges. In particular, using the algorithm of \citet{brand2023deterministic}, the total running time is
	$$\caO\bigl(n(n+m)^{1+o(1)} + (nk)^{1+o(1)}\log(nw_{\max})\log n\bigr).$$
	This is polynomial in the input size even if the weights are exponentially large.
\end{proposition}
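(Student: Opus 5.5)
The plan is to show that \Cref{alg:min-cost-assignment} is correct and then bound its running time. The proof has three parts: essential pairs, a bijection between witnesses and flows, and the cost of the flow computation.

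\emph{Essential pairs.} For every $v\in V\setminus T$ we apply \Cref{prop:essential-terminals-single-flow} once. This gives all essential pairs $(v,t)$ using $n$ maximum-flow computations, each taking $\caO((n+m)^{1+o(1)})$ time, so $\caO(n(n+m)^{1+o(1)})$ in total.

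\emph{Witnesses versus flows.} In the network $H$, the arcs out of $s_{\mathrm{in}}$ have total capacity exactly $W$. So a flow of value $W$ must saturate every arc $(s_{\mathrm{in}},v)$. We claim that integral $s_{\mathrm{in}}$--$s_{\mathrm{out}}$ flows of value $W$ in $H$ correspond bijectively to split-assignments satisfying the \fractionalesscond, with equal cost.

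\begin{itemize}
\item Given such a flow $f$, define $\assgnw(v,t)=f(v,t)$ on arcs of $H$ and $\assgnw(v,t)=0$ otherwise. Conservation at $v$ gives $\sum_t\assgnw(v,t)=\weight_v$. Positive values occur only on arcs of $H$, which are essential pairs. Conservation at $t$ together with the capacity of $(t,s_{\mathrm{out}})$ gives $\sum_v\assgnw(v,t)\le\cp_t$.
\item Conversely, a witness $\assgnw$ defines a flow. Set $f(s_{\mathrm{in}},v)=\weight_v$, $f(v,t)=\assgnw(v,t)$, and $f(t,s_{\mathrm{out}})=\sum_v\assgnw(v,t)$. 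This respects all capacities, since $\assgnw(v,t)\le\weight_v$.
\item In both directions the flow cost equals $\sum_{v,t}\assgnw(v,t)\crt_v(t)$.
\end{itemize}

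All capacities are integral, so the integrality theorem for minimum-cost flow gives an integral minimum-cost flow of value $W$ whenever any feasible flow exists. By the bijection, the algorithm returns a minimum-cost witness when one exists and correctly reports \textsc{Infeasible} otherwise.

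\emph{Running time.} $H$ has $\caO(n+k)$ vertices and at most $n+nk+k=\caO(nk)$ arcs. Capacities are at most $\max(w_{\max},\max_t\cp_t)$. Costs are nonnegative integers at most $|V|$ by the hypothesis on $\crt$. We can reduce to a bounded-capacity instance in one of two ways:
\begin{itemize}
\item cap each $\cp_t$ at $W\le n w_{\max}$, which does not change the feasible flows of value $W$; or
\item add a single arc from $s_{\mathrm{out}}$ back to $s_{\mathrm{in}}$ with cost $-(n+1)\cdot nk$ and capacity $W$, which turns the problem into a min-cost circulation.
\end{itemize}
After this reduction, all capacities are polynomial in $nw_{\max}$ and all cost magnitudes are polynomial in $n$. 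The almost-linear min-cost flow algorithm of \citet{brand2023deterministic} then runs in $(nk)^{1+o(1)}\log(\text{capacity})\log(\text{cost})$ time, which is $(nk)^{1+o(1)}\log(nw_{\max})\log n$. Adding the cost of the essential-pair step gives the stated bound. The bound depends on $w_{\max}$ only through $\log w_{\max}$, so it is polynomial even for exponentially large weights.

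The main obstacle is this last step. We must make sure that the capacity and cost bounds required by the min-cost flow algorithm (capacities bounded by $\mathrm{poly}(n)\cdot w_{\max}$, costs bounded polynomially) hold after the reduction. The reason is that $\cp_t$ may exceed $W$ arbitrarily, which is why we cap it at $W$ first. Once this is checked, the rest is routine.
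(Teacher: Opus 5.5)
Your proposal is correct and follows the paper's proof: it computes the essential pairs by one tightest-cut flow per vertex, builds the same bipartite network whose integral flows of value $W$ are exactly the witnesses with matching cost, and makes a single call to the min-cost flow algorithm of \citet{brand2023deterministic}. Capping each $\cp_t$ at $W$ is a useful refinement, because the paper simply asserts that every capacity is at most $W\le nw_{\max}$, which fails when $\cp_t>W$; your second option (the negative-cost back arc) does not by itself bound the $\cp_t$ arcs, so the capping is the step that actually gives the $\log(nw_{\max})$ factor.
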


\begin{proof}
	Let $W = \sum_{v \in V \setminus T} \weight_v$. First compute, for each $v \in V \setminus T$, its set of essential terminals by applying \Cref{prop:essential-terminals-single-flow}; this takes $\caO(n(n+m)^{1+o(1)})$ time in total. We then build the bipartite flow network shown in \Cref{fig:min-cost-fractional-assignment}: each non-terminal supplies its weight, an intermediate arc exists exactly for an essential-terminal pair, and a terminal-to-sink arc enforces its capacity. Formally, create a source $s_{\mathrm{in}}$ and a sink $s_{\mathrm{out}}$. For each non-terminal vertex $v \in V \setminus T$, add an arc $(s_{\mathrm{in}},v)$ of capacity $\weight_v$ and cost $0$. For each terminal $t \in T$, add an arc $(t,s_{\mathrm{out}})$ of capacity $\cp_t$ and cost $0$. Finally, for every non-terminal vertex $v$ and every terminal $t$ essential for $v$, add an arc $(v,t)$ of capacity $\weight_v$ and cost $\crt_v(t)$.

	A feasible integral $s_{\mathrm{in}}$-$s_{\mathrm{out}}$ flow of value $W$ in $H$ is exactly a \FractionalEssAssign satisfying the \fractionalesscond: the arc $(s_{\mathrm{in}},v)$ forces vertex $v$ to send exactly $\weight_v$ units, the arcs $(t,s_{\mathrm{out}})$ enforce the upper bounds $\cp_t$, and the available arcs ensure that $v$ sends positive flow only to its essential terminals. Conversely, every \FractionalEssAssign satisfying the \fractionalesscond defines such a flow by sending $\assgnw(v,t)$ units on $(v,t)$. Under this correspondence, the flow cost is exactly
	$$\sum_{v \in V \setminus T}\sum_{t \in T} \assgnw(v,t)\crt_v(t).$$
	Therefore a minimum-cost flow of value $W$ yields the desired minimum-cost \FractionalEssAssign, and if no such flow exists then no witness to the \fractionalesscond exists.

	The network $H$ has $|V \setminus T| + |T| + 2 = \caO(n)$ vertices and at most
	$$|V \setminus T| + |T| + |V \setminus T|\cdot |T| = \caO(nk)$$
	arcs. Every capacity is at most $W \le n w_{\max}$, and every cost is at most $|V|$. Hence the algorithm of \citet{brand2023deterministic} computes the required minimum-cost flow in $\caO((nk)^{1+o(1)}\log(nw_{\max})\log n)$ time, which together with the preprocessing above gives the stated total running time. Thus the proposition supplies a polynomial-time feasibility test and, when a witness exists, an optimal one for any bounded cost function $\crt$.
\end{proof}

\begin{figure}[htbp]
  \centering

  \begin{essgraph}[scale=0.935,node scale=1.2,term scale=0.96,
                   node font={\small}]
    \tikzset{
      flowlabel/.style={
        fill=white,
        inner sep=0.7pt,
        font=\scriptsize
      }
    }

    \essvertex[extra={thick, fill=gray!10, minimum size=9mm}]
      {Sin}{(-6.40, 0.00)}{$S_{\mathrm{in}}$}
    
    \essterminal{t1}{(2.30, 1.75)}{$t_1$}
    \essterminal{t2}{(2.30,-1.75)}{$t_2$}

    \essvertex[colors={t1/1}]{v3}{(-2.30, 1.75)}{$v_3$}
    \essvertex[colors={t1/0.5, t2/0.5}]{v4}{(-2.30, 0.00)}{$v_4$}
    \essvertex[colors={t2/1}]{v5}{(-2.30,-1.75)}{$v_5$}

    \essvertex[extra={thick, fill=gray!10, minimum size=9mm}]
      {Sout}{(6.40, 0.00)}{$S_{\mathrm{out}}$}

    \begin{scope}[on background layer]
      \draw[->, line width=\EssEdgeLineWidth]
        (Sin) -- (v3)
        node[midway, flowlabel, sloped, above=2pt]
        {$u{=}\weight_{v_3},\, c{=}0$};

      \draw[->, line width=\EssEdgeLineWidth]
        (Sin) -- (v4)
        node[pos=0.55, flowlabel, above=2pt]
        {$u{=}\weight_{v_4},\, c{=}0$};

      \draw[->, line width=\EssEdgeLineWidth]
        (Sin) -- (v5)
        node[midway, flowlabel, sloped, below=3pt]
        {$u{=}\weight_{v_5},\, c{=}0$};

      \draw[->, line width=\EssEdgeLineWidth]
        (v3) -- (t1)
        node[midway, flowlabel, above=3pt]
        {$u{=}W,\, c{=}\crt_{v_3}(t_1)$};

      \draw[->, line width=\EssEdgeLineWidth]
        (v4) -- (t1)
        node[pos=0.56, flowlabel, sloped, below=4pt]
        {$u{=}W,\, c{=}\crt_{v_4}(t_1)$};

      \draw[->, line width=\EssEdgeLineWidth]
        (v4) -- (t2)
        node[pos=0.56, flowlabel, sloped, above=4pt]
        {$u{=}W,\, c{=}\crt_{v_4}(t_2)$};

      \draw[->, line width=\EssEdgeLineWidth]
        (v5) -- (t2)
        node[midway, flowlabel, below=3pt]
        {$u{=}W,\, c{=}\crt_{v_5}(t_2)$};

      \draw[->, line width=\EssEdgeLineWidth]
        (t1) -- (Sout)
        node[pos=0.52, flowlabel, sloped, above=3pt]
        {$u{=}\cp_{t_1},\, c{=}0$};

      \draw[->, line width=\EssEdgeLineWidth]
        (t2) -- (Sout)
        node[pos=0.52, flowlabel, sloped, below=3pt]
        {$u{=}\cp_{t_2},\, c{=}0$};
    \end{scope}
  \end{essgraph}

 \caption{The bipartite minimum-cost-flow network for a flow-essential split-assignment. In this example, we assumed that in the original graph, $t_1$ is essential for $v_3$ and $v_4$, and $t_2$ is essential for $v_4$ and $v_5$.}
 \label{fig:min-cost-fractional-assignment}

\end{figure}

\section{The Polynomial-Time Algorithm}
\label{sec:algorithm}
In this section, we will provide a polynomial-time algorithm for the \gyorilovasz problem. The following theorem, restated from the introduction, formalizes our result.

\thmktconn*

To provide the algorithm, we will relax the $k$-$T$-connectedness to the \essentialassignmentcondition. The following theorem is the main result of our algorithm.

\thmessassign*

Having the above theorem, we can directly show \Cref{thm:k-t-conn}.
\begin{proof}[Proof of \Cref{thm:k-t-conn}]
We will show that if $G$ is $k$-$T$-connected, then it satisfies the \essentialassignmentcondition. To see this, note that for every vertex $v \in V \setminus T$,
$\connvg{v}{G} = k = |T|$. Thus every terminal is essential for every vertex since removing any terminal decreases $\connvg{v}{G}$ by 1. Any assignment of vertices to the terminals that respects the sizes shows the \essentialassignmentcondition.
\end{proof}

In the rest of this section, we will focus on proving \Cref{thm:ess-assign-cond}. We will first present the algorithm and then prove its correctness.

\subsection{\textsc{GLPartition} Algorithm}

We present \Cref{alg:gl-partition} which is a recursive subroutine named \textsc{GLPartition}. The algorithm takes as input the graph $G$, the set of terminals $T$ and their corresponding capacities $c=(c_1,\dots,c_k)$. It further takes an assignment $\assgn$ that satisfies the \essentialassignmentcondition. This assignment is a witness to the condition being satisfied, and it will be updated during the execution of the algorithm to make sure that the \essentialassignmentcondition remains satisfied.

\begin{algorithm}[!ht]
	\caption{\textsc{GLPartition}$(G,T,c,\assgn)$}
	\label{alg:gl-partition}
	
	\KwIn{Graph $G=(V,E)$, terminals $T=\{t_1,\dots,t_k\}$, capacities $c=(c_1,\dots,c_k)$, assignment $\assgn: V\setminus T \to T$ 
	}
	\KwOut{Partition $\langle V_1,\dots,V_k\rangle$}
	
	\If{$G=\emptyset$}{
		\Return $\emptyset$ \cmt{No vertices remain}
	}
	
	\If{$\exists i \in [k]: c_i=0$}{ \label{alg:gl-partition:line:rm-empty-t}
		$G' \gets G \setminus \{t_i\}$; $T' \gets T\setminus\{t_i\}$ \cmt{Remove saturated terminal}
		$c' \gets (c_1,\dots,c_{i-1},c_{i+1},\dots,c_k)$ \cmt{Update capacities}
		$\langle V_1,\dots,V_{i-1}, V_{i+1}, \dots V_k\rangle \gets \textsc{GLPartition}(G',T',c',\assgn)$ \cmt{Recursive call}
		$V_i \gets \{t_i\}$ \cmt{Add removed terminal}
		\Return $\langle V_1,\dots,V_k\rangle$
	}
	\label{alg:gl-partition:line:rm-empty-t-end}
	
	\If{$\exists v\in \PT(G,T)$ and $i\in[k]$ s.t. $d^+(v)=1$ and $(v,t_i)\in E$}{ \label{alg:gl-partition:line:contract-leaf}
		Choose such a vertex $v$ and index $i$ \cmt{Pre-terminal of out-degree one}
		Contract $(v,t_i)$ in $G$ to obtain $G'$ \cmt{Contract the pre-terminal}
		$c_i \gets c_i - 1$ \cmt{Update capacities}
		$\langle V_1,\dots,V_k\rangle \gets \textsc{GLPartition}(G',T,c,\assgn\setminus\{v\})$ \cmt{Recursive call}
		$V_i \gets V_i \cup \{v\}$ \cmt{Add contracted vertex}
		\Return $\langle V_1,\dots,V_k\rangle$
	}
	\label{alg:gl-partition:line:contract-leaf-end}
	
	\Else{ \label{alg:gl-partition:line:rm-edge}
		$(\assgn,e_{\mathrm{nc}})\gets \textsc{ShiftAssignment}(G,T,c,\assgn)$ \cmt{Update the assignment}
		$G' \gets G \setminus e_{\mathrm{nc}}$ \cmt{Remove one non-critical edge}
		\Return $\textsc{GLPartition}(G',T,c,\assgn)$ \cmt{Recursive call}
	}
	\label{alg:gl-partition:line:rm-edge-end}
	
\end{algorithm}

To make progress, the algorithm performs three main operations. Operation (i) removes a terminal with zero capacity (Lines~\ref{alg:gl-partition:line:rm-empty-t}--\ref{alg:gl-partition:line:rm-empty-t-end}). Operation (ii) contracts a pre-terminal of out-degree one (Lines~\ref{alg:gl-partition:line:contract-leaf}--\ref{alg:gl-partition:line:contract-leaf-end}). Operation (iii) removes an edge that maintains the \essentialassignmentcondition (Lines~\ref{alg:gl-partition:line:rm-edge}--\ref{alg:gl-partition:line:rm-edge-end}).

If any of the operations are applicable, the subroutine performs one of them and recurses. We will show that if operations (i) and (ii) cannot be performed, then operation (iii) can be performed. To show this, we introduce another subroutine \textsc{ShiftAssignment}.

An edge $e$ of $G$ is critical for assigning a non-terminal $v$ to a terminal $t$ if $t$ is essential for $v$ in $G$, but not in $G\setminus e$. Otherwise, $e$ is non-critical for assigning $v$ to $t$. For an assignment $\assgn$, we say $e$ is critical for $\assgn$ if it is critical for assigning some vertex $v$ to $\assgn(v)$, and non-critical for $\assgn$ otherwise.

The \textsc{ShiftAssignment} subroutine updates the assignment until it finds a non-critical edge $e_{\mathrm{nc}}$ whose removal preserves the \essentialassignmentcondition. It first finds a matching $M=\{(p_1,t_1),\dots,(p_k,t_k)\}$ from pre-terminals to terminals, which we show must exist in \Cref{lem:terminal-leaf-matching}. Since operations (i) and (ii) cannot be performed, every pre-terminal has at least two outgoing edges, so the subroutine chooses, for each pre-terminal $p_i$, one outgoing edge other than the matching edge $(p_i,t_i)$, called the \emph{secondary edge} of $p_i$ and denoted by $e_i=(p_i,q_i)$. If some $e_i$ is not critical for assigning any vertex $v$ to its current terminal $\assgn(v)$, then we can remove it and keep the same assignment (Lines~\ref{alg:shift-assignment:line:non-critical-edge}--\ref{alg:shift-assignment:line:non-critical-edge-end}). Otherwise, for each $i$, the subroutine picks a vertex $v_i$ such that $e_i$ is critical for assigning $v_i$ to $\assgn(v_i)$. In \Cref{lem:critical-implies-assignable}, we show that $t_i$ is also essential for $v_i$, so $v_i$ can switch its assigned terminal from $\assgn(v_i)$ to $t_i$. We therefore construct the reassignment graph $\caR$ on the terminals: it contains the directed edge $\assgn(v_i)\to t_i$ labeled by $v_i$. A directed cycle in this graph allows all labeled vertices on the cycle to switch terminals simultaneously, without changing the number of vertices assigned to any terminal.

After adding all the edges to the reassignment graph $\caR$, each vertex has in-degree exactly one, hence $\caR$ contains a directed cycle. Let $\caC=(t_{i_1},t_{i_2},\dots,t_{i_r},t_{i_1})$ be such a cycle, where indices are taken modulo $r$. For each $j\in[r]$, the edge $t_{i_{j-1}}\to t_{i_j}$ is labeled by $v_{i_j}$, and therefore $\assgn(v_{i_j})=t_{i_{j-1}}$. Shifting the assignment along $\caC$ means constructing an assignment $\assgn'$ defined by
\begin{equation*}
\assgn'(v)=
\begin{cases}
t_{i_j}, & \text{if }v=v_{i_j}\text{ for some }j\in[r],\\
\assgn(v), & \text{otherwise}.
\end{cases}
\end{equation*}
Thus, every vertex labeling an edge of $\caC$ is reassigned from the tail of that edge to its head.

We will show that after such shift, the new assignment $\assgn'$ is still a witness for the \essentialassignmentcondition, and further, for any vertex $v_{i_j}$ in the cycle, the number of critical edges among $e_1,\dots,e_k$ for the new assignment is strictly smaller than before. Thus, after polynomially many shifts, we will find a non-critical edge that can be removed while maintaining the \essentialassignmentcondition.

\begin{algorithm}[!ht]
	\caption{\textsc{ShiftAssignment}$(G,T,c,\assgn)$}
	\label{alg:shift-assignment}
	
	\KwIn{Graph $G=(V,E)$, terminals $T=\{t_1,\dots,t_k\}$, capacities $c=(c_1,\dots,c_k)$, current assignment $\assgn: V\setminus T \to T$}
	\KwRequire{$c_i > 0$ for all $i\in[k]$, and $d^+(v)\ge2$ for all $v\in \PT(G,T)$}
	\KwOut{Updated assignment $\assgn$ and a non-critical edge $e_{\mathrm{nc}}$}
	
	Select a matching $M=\{(p_1,t_1), (p_2,t_2), \dots, (p_k,t_k)\}$ from pre-terminals to terminals \cmt{$p_i \in \PT(G,T)$}
	Each $p_i$ has at least one other outgoing edge; denote it by $(p_i,q_i) \in E$ \cmt{Secondary edge} \label{alg:shift-assignment:line:matching}
	
	\While{\textbf{true}}{ \label{alg:shift-assignment:line:while}
		
		\If{$\exists i$ such that $(p_i,q_i)$ is not critical for the assignment of any vertex $v$ to $\assgn(v)$}{ \label{alg:shift-assignment:line:non-critical-edge}
			$e_{\mathrm{nc}} \gets (p_i,q_i)$ \cmt{Non-critical edge found}
			\Return $(\assgn, e_{\mathrm{nc}})$
		}
		\label{alg:shift-assignment:line:non-critical-edge-end}
		
		Construct the reassignment graph $\caR$ on the terminal set $T$ \label{alg:shift-assignment:line:r-construct}
		
		\ForEach{$i \in \{1,\dots,k\}$}{
			Let $v_i$ be the vertex for which $(p_i,q_i)$ is critical for assigning $v_i$ to $\assgn(v_i)$ \\
			Add a directed edge $\assgn(v_i) \to t_i$ in the reassignment graph $\caR$, labeled by $v_i$
		}
		
		\cmt{Each vertex in the reassignment graph $\caR$ has in-degree exactly one, hence $\caR$ contains a cycle}
		\label{alg:shift-assignment:line:r-construct-end}
		
		Let $\caC$ be any directed cycle in the reassignment graph $\caR$
		
		Construct $\assgn'$ by shifting assignments along cycle $\caC$ \cmt{Reassign terminals according to cycle}
		$\assgn \gets \assgn'$
	}
\end{algorithm}

\subsection{Correctness of the Algorithm}

In this section, we will prove that \Cref{alg:gl-partition} correctly finds the desired partition formalized in the following theorem.

\begin{theorem} \label{thm:gl-partition-correctness}
Let $(G,T,c,\assgn)$ satisfy the \essentialassignmentcondition. Then \textsc{GLPartition}$(G,T,c,\assgn)$ terminates in polynomial time and returns a partition $\langle V_1,\dots,V_k\rangle$ such that, for every $i\in[k]$, $t_i\in V_i$, $|V_i|=c_i+1$, and $G[V_i]$ is connected to $t_i$.
\end{theorem}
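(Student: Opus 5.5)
We argue by induction on $|V|+|E|$, showing that every recursive call again receives an instance satisfying the \essentialassignmentcondition with a valid witness, and that undoing the operations yields the claimed partition. The base case $G=\emptyset$ is trivial. Each call performs exactly one of the three operations, each strictly decreasing $|V|+|E|$. So there are at most $|V|+|E|$ recursive calls, and polynomial running time reduces to showing that each call, in particular each invocation of \textsc{ShiftAssignment}, runs in polynomial time.

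\emph{Operation (i).} If $c_i=0$, the witness assigns no vertex to $t_i$. Deleting $t_i$ preserves essentiality of every other terminal for every vertex (this is the content of the lemma on essential terminals surviving terminal removal, which I would prove through the tightest-cut characterization of \Cref{lem:essential-terminal-tightest-cut}: removing $t_i$ from the separator of the tightest cut gives a minimum cut of $G\setminus\{t_i\}$ still containing $\phi(v)$). So $\phi$ remains a witness. The recursive partition, extended by $V_i=\{t_i\}$, has $|V_i|=1=c_i+1$.

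\emph{Operation (ii).} If $v$ is a pre-terminal with $d^+(v)=1$ and edge $(v,t_i)$, then every path from $v$ passes through $t_i$. Hence $t_i$ is its only essential terminal and $\phi(v)=t_i$, so $\phi\setminus\{v\}$ respects the decreased capacity. For any other vertex $u$ with essential terminal $t$, path families translate across the contraction: a path through $v$ becomes one through $t_i$, and conversely, since a path entering $t_i$ through a redirected edge can be rerouted through $v$. Terminal connectivity is therefore preserved, and a family avoiding $t$ after contraction lifts to one avoiding $t$ before. So essentiality survives. For correctness, $V_i\cup\{v\}$ stays connected to $t_i$: in the contracted graph every vertex of $V_i$ reaches $t_i$, and any edge redirected to $t_i$ originally entered $v$, which has the edge $(v,t_i)$. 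Sizes match because $c_i$ was decremented.

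\emph{Operation (iii), the main obstacle.} Here all $c_i>0$ and every pre-terminal has out-degree at least $2$.
\begin{itemize}
\item \emph{Matching.} We need a matching saturating $T$ from terminals to distinct pre-terminals. Take an inclusion-minimal Hall-violator $S$. Minimality plus a rerouting argument shows that no $t\in S$ is essential for any vertex outside $\PT(G,S)$. The witness assigns at least one distinct vertex to each $t\in S$, and these vertices lie in $\PT(G,S)$, contradicting $|\PT(G,S)|<|S|$.
\item \emph{Shift step.} If every $e_i$ is critical for some pair $(v_i,\phi(v_i))$, then $t_i$ is essential for $v_i$. Otherwise a maximum family avoiding $t_i$ must use $e_i$ and can be rerouted at $p_i$ via $(p_i,t_i)$, contradicting the drop in connectivity given by \Cref{lem:essential-connectivity}. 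Thus each edge of $\caR$ is a legal reassignment; $\caR$ has in-degree one everywhere, so it contains a cycle, and the shift preserves both capacities and essentiality.
\item \emph{Potential.} The shift strictly decreases $\Phi$, because $\xi_{v_i}(t_i)<\xi_{v_i}(\phi(v_i))$. The strict drop at $e_i$ is immediate. The harder part is the transfer claim: if $e_j$ is critical for $(v_i,t_i)$, then it was critical for $(v_i,\phi(v_i))$. I would prove this via the uncrossing plan already outlined: show that the connectivity is $\kappa-1$ in $G_i$, $G_j$, and $H$, take tightest cuts $C_i,C_j$, verify both are minimum in $H$, and locate the endpoints of $e_i,e_j$ so that $C_i\cup C_j$ remains a cut in $G$ of size $\kappa-1$, a contradiction.
\end{itemize}
Since $\Phi\le k|V|$, at most $k|V|$ shifts occur before \textsc{ShiftAssignment} returns a non-critical edge. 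Its removal keeps $\phi$ a witness, and the partition of $G\setminus e$ remains valid in $G$. Each iteration needs $O(k\cdot n)$ essentiality computations by \Cref{prop:essential-terminals-single-flow}, which gives the polynomial bound.

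I expect the delicate endpoint bookkeeping in the transfer claim to be the hardest step.
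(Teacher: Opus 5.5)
Your proposal is correct and follows the paper's route essentially step for step. You use induction on $|V|+|E|$ through the three operation lemmas, the minimal Hall-violator argument for the matching, cycle shifts in the reassignment graph with the potential $\Phi\le k|V\setminus T|$, and the uncrossing argument for the transfer of criticality.

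One small caveat: the strict drop at $e_i$ is not quite immediate. It needs the fact that $e_i$ is never critical for assigning any vertex to $t_i$ (\Cref{lem:own-edge-not-critical}). The proof is a short cut argument: a minimum cut of $G\setminus e_i$ of size $\kappa-1$ must have $p_i$ on the right and $q_i$ on the left, so the edge $(p_i,t_i)$ forces $t_i$ into the separator. This same fact rules out self-loops in the reassignment graph and is also used inside your transfer claim.
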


Having the above theorem, \Cref{thm:ess-assign-cond} follows directly.

\begin{proof}[Proof of \Cref{thm:ess-assign-cond}]
We will first find a witnessing assignment $\assgn$. Construct a bipartite graph between $V(G)\setminus T$ and $T$, joining each vertex to its essential terminals; these edges can be determined in polynomial time by \Cref{prop:essential-terminals-single-flow}. A maximum-flow computation with capacity one at each non-terminal vertex and capacity $\cp_{t_i}$ at each terminal $t_i$ finds the desired assignment $\assgn$, whose existence is guaranteed by the \essentialassignmentcondition.

By \Cref{thm:gl-partition-correctness}, \textsc{GLPartition}$(G,T,c,\assgn)$ terminates in polynomial time and returns a partition $\langle V_1,\dots,V_k\rangle$ such that, for every $i\in[k]$, $t_i\in V_i$, $|V_i|=c_i+1$, and $G[V_i]$ is connected to $t_i$.
\end{proof}

We begin with a simple structural lemma about deleting a terminal. We then use it to show that if a terminal has capacity zero, then we can remove it from the graph and solve the problem recursively on the remaining graph. This is used in Lines~\ref{alg:gl-partition:line:rm-empty-t}--\ref{alg:gl-partition:line:rm-empty-t-end} of \Cref{alg:gl-partition}.

\begin{lemma} \label{lem:essential-survives-terminal-removal}
Let $G=(V,E)$ be a directed graph with terminal set $T$. Let $t,t'\in T$ be distinct terminals, and let $v\in V\setminus T$. If $t'$ is essential for $v$ in $G$, then $t'$ is essential for $v$ in $G\setminus\{t\}$. In other words, removing $t$ does not affect the essentiality of any other terminal $t'$ for $v$.
\end{lemma}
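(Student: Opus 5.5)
We work with the cut characterization of \Cref{lem:essential-terminal-tightest-cut}, together with the path characterization in \Cref{def:essential-terminal}. Write $\conn=\connvg{v}{G}$ and $G'=G\setminus\{t\}$, whose terminal set is $T\setminus\{t\}$. The argument splits into two cases according to whether $t$ itself is essential for $v$ in $G$.

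\emph{Case 1: $t$ is not essential for $v$.} Then $\connvg{v}{G'}=\conn$, since some maximum family avoids $t$. Suppose $t'$ were not essential for $v$ in $G'$. Then $G'$ contains a family of $\conn$ vertex-disjoint paths from $v$ to distinct terminals of $T\setminus\{t\}$ that avoids $t'$. This family also lives in $G$, is maximum there, and avoids $t'$. That contradicts the essentiality of $t'$ in $G$.

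\emph{Case 2: $t$ is essential for $v$.} Then $\connvg{v}{G'}=\conn-1$. Let $C$ be the tightest minimum cut separating $v$ from $T$ in $G$. By \Cref{lem:essential-terminal-tightest-cut}, both $t$ and $t'$ lie in $\scutc{C}$. Suppose for contradiction that $t'$ is not essential in $G'$. Then $\connvg{v}{G'\setminus\{t'\}}=\conn-1$, so by \Cref{lem:mengers-theorem} there is a cut $D=(\lcutc{D},\scutc{D},\rcutc{D})$ of $G'\setminus\{t'\}$ separating $v$ from $T\setminus\{t,t'\}$ with $|\scutc{D}|=\conn-1$. Lift $D$ to $G$ by placing $t$ and $t'$ in the separator. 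This gives a cut $D^+$ of $G$ of size $\conn+1$ with $t,t'\in\scutc{D^+}$, and it is valid because terminals have no outgoing edges and adding vertices to the separator never creates $R\to L$ edges.

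Now apply the modular identity of \Cref{lem:union-intersection-cut} to $C$ and $D^+$. Since $\conn$ is the minimum cut size, $|C\cap D^+|\ge\conn$, and therefore $|C\cup D^+|\le \conn+(\conn+1)-\conn=\conn+1$. From \Cref{tab:cut-union}, both $t$ and $t'$ lie in the separator of $C\cup D^+$, since each is in $S$ for both cuts. Removing $t$ and $t'$ from that separator yields a cut of $G\setminus\{t,t'\}$ of size at most $\conn-1$. This does not yet give a contradiction, because $\connvg{v}{G\setminus\{t,t'\}}=\conn-2$ is consistent with that bound. So this step needs sharpening.

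\textbf{Main obstacle.} The key step is Case 2, and the approach is to compare $C\cap D^+$ instead of the union. In $C\cap D^+$, both $t$ and $t'$ stay in the separator ($S\cap S\mapsto S$). The intersection therefore has size at least $\conn$, so the union has size at most $\conn+1$. I would use the union, which contains both $t$ and $t'$ in its separator, together with the fact that $D^+$ removed from $t$ gives a minimum cut of $G'$ of size $\conn$ containing $t'$ in its separator.

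The cleaner route is the following. Define $D'=(\lcutc{D},\scutc{D}\cup\{t'\},\rcutc{D})$ in $G'$. This is a cut of size $\conn$ in $G'$ separating $v$ from $T\setminus\{t\}$, but it is not minimum, so $t'$ need not be essential. The contradiction should instead come directly from paths. Take $\conn-1$ disjoint paths in $G'\setminus\{t'\}$. Because $t'$ is essential in $G$ with respect to a family that also uses $t$, an augmenting-path argument in $G$ (via the flow network of \Cref{prop:essential-terminals-single-flow}) should extend these to $\conn$ paths in $G$ avoiding $t'$, which would contradict essentiality of $t'$. I expect this augmentation to need the most care.

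Concretely, the plan is as follows. Form the $\conn-1$ paths in $G'$ avoiding $t'$, with endpoints in $T\setminus\{t,t'\}$. Add $t$ back. In $G\setminus\{t'\}$ the connectivity is $\conn-1$ because $t'$ is essential in $G$, while in $G\setminus\{t,t'\}$ it is also $\conn-1$. Hence $t$ is not essential for $v$ in $G\setminus\{t'\}$. By symmetry, applying Case 1 with the roles swapped, together with the submodularity identity $\connvg{v}{G\setminus\{t\}}+\connvg{v}{G\setminus\{t'\}}\ge\connvg{v}{G}+\connvg{v}{G\setminus\{t,t'\}}$ (obtained from \Cref{lem:union-intersection-cut} on the corresponding cuts), gives $(\conn-1)+(\conn-1)\ge \conn+(\conn-1)$. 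This is the desired contradiction, and verifying that submodularity inequality via union and intersection of minimum cuts is the step I expect to be the main work.
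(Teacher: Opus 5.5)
Your Case~1 is correct. It is the path-family version of the paper's cut argument: the paper notes $t\in\lcutc{C}$, so $(\lcutc{C}\setminus\{t\},\scutc{C},\rcutc{C})$ is a minimum cut of $G\setminus\{t\}$ whose separator contains $t'$.

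\textbf{Case~2 as written is not yet a proof.} Everything rests on the inequality
\[
\connvg{v}{G\setminus\{t\}}+\connvg{v}{G\setminus\{t'\}}\ \ge\ \connvg{v}{G}+\connvg{v}{G\setminus\{t,t'\}},
\]
and you explicitly leave it unverified. The augmentation sketch and the appeal to ``Case~1 with the roles swapped'' add nothing. The inequality is true (it is submodularity of a gammoid rank function) and follows from \Cref{lem:union-intersection-cut} in a few lines:
\begin{itemize}
\item Take minimum cuts of $G\setminus\{t\}$ and $G\setminus\{t'\}$. Lift them to $G$ by putting the deleted terminal into the separator. This gives cuts of sizes $\connvg{v}{G\setminus\{t\}}+1$ and $\connvg{v}{G\setminus\{t'\}}+1$.
\item Their union is a cut of $G$, so its size is at least $\connvg{v}{G}$.
\item In their intersection, both $t$ and $t'$ lie in the separator, since $S\cap L$ and $S\cap S$ both give $S$. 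Deleting them leaves a cut of $G\setminus\{t,t'\}$, so the intersection has size at least $\connvg{v}{G\setminus\{t,t'\}}+2$.
\end{itemize}
The modular identity then yields the inequality. Plugging in the values $\kappa-1,\kappa-1,\kappa,\kappa-1$ gives your contradiction. With this filled in, your route is valid.

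\textbf{The detour is unnecessary, and the value you used to abandon your first attempt is wrong.} Under the contradiction hypothesis ($t'$ not essential in $G\setminus\{t\}$), removing $t'$ does not lower the connectivity. So $\connvg{v}{G\setminus\{t,t'\}}=\connvg{v}{G\setminus\{t\}}=\kappa-1$, not $\kappa-2$. Once you use the correct value, the tightest cut $C$ you already had finishes the proof immediately. Since $t,t'\in\scutc{C}$, the triple $(\lcutc{C},\scutc{C}\setminus\{t,t'\},\rcutc{C})$ is a cut of $G\setminus\{t,t'\}$ of size $\kappa-2<\kappa-1$. For the same reason, your bound $|C\cap D^+|\ge\kappa$ should have read $|C\cap D^+|\ge\kappa+1$, which also yields the contradiction.

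This is exactly the paper's proof: $(\lcutc{C},\scutc{C}\setminus\{t\},\rcutc{C})$ is a cut of $G\setminus\{t\}$ of size $\kappa-1=\connvg{v}{G\setminus\{t\}}$. It is therefore a minimum cut, and its separator contains $t'$, so $t'$ is essential. Your submodularity route has the merit of never using the tightest-cut characterization. The paper's route, given \Cref{lem:essential-terminal-tightest-cut}, is a one-liner.
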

\begin{proof}
	Let $G'=G\setminus\{t\}$, write $\kappa=\connvg{v}{G}$, and let $C$ be the tightest minimum cut separating $v$ from $T$ in $G$. Since $t'$ is essential for $v$ in $G$, \Cref{lem:essential-terminal-tightest-cut} gives $t' \in \scutc{C}$.

	If $t$ is not essential for $v$, then $t\in\lcutc{C}$ by \Cref{lem:essential-terminal-tightest-cut}. Further, $\connvg{v}{G'}=\kappa$ by definition of essentiality. Thus $\scutc{C}$ is a minimum cardinality separator for $v$ from $T \setminus \{t\}$ in $G'$, and since $t' \in \scutc{C}$, it is essential for $v$ in $G'$.

	If $t$ is essential for $v$, then $t \in \scutc{C}$ by \Cref{lem:essential-terminal-tightest-cut}. Then, by definition $\connvg{v}{G'}=\kappa-1$. Let $C'=(\lcutc{C}, \scutc{C}\setminus\{t\}, \rcutc{C})$. Then $C'$ is a minimum cut separating $v$ from $T \setminus \{t\}$ in $G'$, and since $t' \in \scutc{C'}$, it is essential for $v$ in $G'$.
\end{proof}

\begin{lemma}[Removing a terminal with zero capacity] \label{lem:remove-zero-capacity-terminal}
Let $(G, T, c, \assgn)$ satisfy the \essentialassignmentcondition. Assume that there exists a terminal $t_i$ such that $c_i=0$. Let $G'$ be the graph obtained by removing $t_i$ from $G$. Let $c' = (c_1, \dots, c_{i-1}, c_{i+1}, \dots, c_k)$ be the vector of capacities for the remaining terminals. Then $(G', T\setminus\{t_i\},\; c',\; \assgn)$ also satisfies the \essentialassignmentcondition. Let $\langle V_1,\dots,V_{i-1}, V_{i+1}, \dots V_k\rangle$ be a partition for $(G', T\setminus\{t_i\},\; c',\; \assgn)$. Then $\langle V_1,\dots,V_{i-1}, V_i=\{t_i\}, V_{i+1}, \dots V_k\rangle$ is a valid partition for the original instance.
\end{lemma}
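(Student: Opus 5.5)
The plan has two parts: show that $\assgn$ is still a witness in the reduced instance, and then lift a partition of the reduced instance back to the original one.

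\emph{The witness survives.} Since $c_i=0$, the second property of the \essentialassignmentcondition gives $|\assgninv(t_i)|=0$. So no non-terminal is assigned to $t_i$, and $\assgn$ is a well-defined map $V\setminus T\to T\setminus\{t_i\}$. Removing $t_i$ does not change the set of non-terminals, so $\assgn$ is still defined on all of $V(G')\setminus(T\setminus\{t_i\})$. For each remaining terminal $t_j$ with $j\neq i$ we have $|\assgninv(t_j)|=c_j=c'_j$, so the capacity property carries over. For essentiality, take any non-terminal $v$. Then $\assgn(v)\neq t_i$ and $\assgn(v)$ is essential for $v$ in $G$. By \Cref{lem:essential-survives-terminal-removal}, applied with $t=t_i$ and $t'=\assgn(v)$, the terminal $\assgn(v)$ is still essential for $v$ in $G\setminus\{t_i\}=G'$. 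Hence $(G',T\setminus\{t_i\},c',\assgn)$ satisfies the condition. The capacity sum is also preserved, since $\sum_{j\neq i}c'_j=\sum_j c_j=|V\setminus T|=|V(G')\setminus(T\setminus\{t_i\})|$.

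\emph{The partition lifts.} Let $\langle V_j\rangle_{j\neq i}$ be a valid partition of $V(G')=V\setminus\{t_i\}$. Adding $V_i=\{t_i\}$ gives a partition of $V$ with $t_i\in V_i$ and $|V_i|=1=c_i+1$. The graph $G[V_i]$ is a single vertex, so it is trivially connected to $t_i$. For $j\neq i$, the part $V_j$ does not contain $t_i$, so $G[V_j]=G'[V_j]$. Its size and its connectivity to $t_j$ are therefore inherited unchanged.

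There is no real obstacle in this proof. The only substantive ingredient is that essentiality of other terminals survives deleting $t_i$, and that is exactly \Cref{lem:essential-survives-terminal-removal}. The point to check carefully is that $c_i=0$ forces $\assgninv(t_i)=\emptyset$, so no vertex ever needs $t_i$ itself.
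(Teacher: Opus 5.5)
Your proof is correct and follows the paper's own argument: $c_i=0$ forces $\assgninv(t_i)=\emptyset$, \Cref{lem:essential-survives-terminal-removal} keeps each $\assgn(v)\neq t_i$ essential in $G'$, and the singleton part $\{t_i\}$ extends the partition. Your extra checks, that the capacity sum is preserved and that $G[V_j]=G'[V_j]$ for $j\neq i$, are details the paper leaves implicit.
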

\begin{proof}
	Let $T' = T\setminus\{t_i\}$.
	Since $c_i=0$, no vertex is assigned to $t_i$. Thus, $\assgn$ is also an assignment from $V(G')\setminus T'$ to $T'$, and it satisfies the capacity constraints given by $c'$. We next show that $\assgn$ is valid, meaning that $\assgn(v)$ is essential for all $v$ in $G'$.
	For every $v\in V(G')\setminus T'$, the terminal $\assgn(v)$ is distinct from $t_i$, since no vertex is assigned to $t_i$. Because $\assgn(v)$ is essential for $v$ in $G$, \Cref{lem:essential-survives-terminal-removal} implies that $\assgn(v)$ remains essential for $v$ in $G'$.

	The parts $V_j$ for $j\neq i$ remain valid in $G$, and $V_i=\{t_i\}$ is a valid part since $c_i=0$.
\end{proof}

We continue with the reduction for pre-terminals of out-degree one. We first isolate the graph-theoretic effect of the contraction, independent of any assignment. We then use it to show that if there exists a pre-terminal of out-degree one, then we can contract it and solve the problem recursively on the contracted graph. This is used in Lines~\ref{alg:gl-partition:line:contract-leaf}--\ref{alg:gl-partition:line:contract-leaf-end} of \Cref{alg:gl-partition}.

\begin{lemma} \label{lem:essential-survives-degree-one-leaf-contraction}
Let $G=(V,E)$ be a directed graph with terminal set $T$. Let $p\in\PT(G,T)$ be a pre-terminal in $G$ and assume that $d^+(p)=1$. Let $t$ be its terminal, i.e., $(p,t)\in E$. Let $G'$ be obtained by contracting $(p,t)$ into $t$. Then for every vertex $v\in V(G')\setminus T$ and every terminal $t'\in T$, if $t'$ is essential for $v$ in $G$, then $t'$ is essential for $v$ in $G'$.
\end{lemma}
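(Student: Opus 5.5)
The plan is to compare vertex-disjoint path families in $G$ and $G'$ directly and to show two things. First, terminal connectivity is unchanged for every surviving non-terminal $v$ (note $v\neq p$, since $p$ is removed). Second, every family in $G'$ lifts to a family in $G$ of the same size that ends at the same set of terminals. Essentiality then transfers by the path characterization in \Cref{def:essential-terminal}. Suppose $t'$ were not essential for $v$ in $G'$. Then some family of $\connvg{v}{G'}$ vertex-disjoint paths avoids $t'$. Its lift would be a family of $\connvg{v}{G'}=\connvg{v}{G}$ paths in $G$ avoiding $t'$, contradicting essentiality in $G$.

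\textbf{Step 1 (projection, $\connvg{v}{G'}\ge\connvg{v}{G}$).} Take a maximum family in $G$. Since $d^+(p)=1$, any path through $p$ continues immediately to $t$, so $p$ is its second-to-last vertex. If $p$ is the starting vertex $v$, this is impossible because $v\neq p$. Otherwise the path uses some edge $(u,p)$ followed by $(p,t)$. In $G'$ the edge $(u,t)$ exists, either redirected from $(u,p)$ or originally present, with duplicates merged. Here $u$ is a non-terminal, so the edge is not dropped. Replace $u\to p\to t$ by $u\to t$. At most one path contains $p$, and the modification only deletes $p$, so disjointness and the set of endpoint terminals are preserved.

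\textbf{Step 2 (lifting, $\connvg{v}{G}\ge\connvg{v}{G'}$, keeping endpoints).} Take any family of vertex-disjoint paths in $G'$. Every edge of $G'$ is either an edge of $G$ or an edge $(u,t)$ arising from some $(u,p)\in E$. In the second case, replace $u\to t$ by $u\to p\to t$. Only a path ending at $t$ can use an edge into $t$, since terminals have no outgoing edges, so such a path is the final edge of its path. Since the paths end at distinct terminals, at most one path is modified, so $p$ is inserted at most once and disjointness is kept. The endpoints are unchanged, and no terminal other than $t$ is touched. Combining Steps 1 and 2 gives $\connvg{v}{G'}=\connvg{v}{G}$, and the lifting step yields the contradiction described above.

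The main obstacle is bookkeeping in the lifting step. One must check that the redirected edge is the only new kind of edge; that an edge $(u,t)$ already present in $G$ before merging can simply be used as is; and that $p$ cannot appear twice. All of these follow from the uniqueness of the path ending at $t$. Once this is settled, the rest is immediate.
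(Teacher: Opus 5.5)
Your proposal is correct and takes essentially the same route as the paper. Both arguments project a maximum family from $G$ to $G'$ by shortcutting $u\to p\to t$ to $u\to t$, and lift families from $G'$ back to $G$ by expanding the redirected edge through $p$; since at most one path ends at $t$, the sets of terminal endpoints are preserved, and with them the connectivity and essentiality, while you are simply more explicit about the bookkeeping (merged duplicate edges, $v\neq p$, and why the tail $u$ is a non-terminal).
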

\begin{proof}
\usetikzlibrary{fit}

\begin{figure}[htbp]
\centering
\begingroup

\tikzset{
  contraction path/.style={
    ->,
    draw=blue!65!black,
    line width=1.45pt,
    line cap=round,
    shorten <=1.5pt,
    shorten >=2pt
  },
  contracted segment/.style={
    ->,
    draw=red!72!black,
    line width=1.85pt,
    line cap=round,
    shorten <=1.5pt,
    shorten >=2pt
  },
  anonymous terminal/.style={
    circle,
    fill=black,
    draw=black,
    inner sep=1.5pt
  },
  internal path dot/.style={
    circle,
    fill=black,
    draw=black,
    inner sep=1.5pt
  }
}

\resizebox{\textwidth}{!}{%
\begin{minipage}{17.8cm}
\centering
\begingroup

\newcommand{\ContractionNodes}[1]{%
  \essterminal{t}{(1.40,-0.50)}{$t$}
  \essterminal{tprime}{(-1.60,-0.50)}{$t'$}

  \node[anonymous terminal] (ta) at (-3.00,-0.50) {};
  \node[anonymous terminal] (tb) at (-0.20,-0.50) {};
  \node[anonymous terminal] (tc) at ( 2.80,-0.50) {};

  \node[internal path dot] (join) at (0.95,2.42) {};

  \essvertex[colors={tprime/1}]{v}{(0,3.80)}{$v$}

  #1

  \begin{scope}[on background layer]
    \node[
      fit=(ta)(tprime)(tb)(t)(tc),
      fill=blue!7,
      draw=blue!55!black,
      dashed,
      rounded corners=7pt,
      line width=0.9pt,
      inner xsep=8pt,
      inner ysep=13pt,
      label={
        [font=\small\bfseries,text=blue!55!black]
        north west:$T$
      }
    ] (terminalset) {};
  \end{scope}

  \node[
    font=\scriptsize,
    text=blue!60!black,
    anchor=south west
  ] at (0.30,4.00) {$t'$ essential for $v$};
}

\newcommand{\UnchangedContractionPaths}{%
  \begin{scope}[on background layer]
    \draw[contraction path]
      (v)
      .. controls (-1.35,3.60) and (-3.10,2.35) ..
      (ta);

    \draw[contraction path]
      (v)
      .. controls (-0.65,3.00) and (-1.50,1.35) ..
      (tprime);

    \draw[contraction path]
      (v)
      .. controls (0.05,2.75) and (-0.25,1.30) ..
      (tb);

    \draw[contraction path]
      (v)
      .. controls (0.45,3.30) and (0.78,2.88) ..
      (join);
  \end{scope}
}

\begin{subfigure}[b]{0.44\textwidth}
\centering
\begin{essgraph}[scale=0.86,>=stealth,node font={\small}]
  \ContractionNodes{\essvertex{p}{(1.40,1.10)}{$p$}}
  \UnchangedContractionPaths

  \essedge[extra={->,draw=orange!88!black,line width=1.85pt}]{join}{p}
  \essedge[extra={->,draw=red!72!black,line width=1.85pt,shorten >=4pt}]{p}{t}

  \node[
    font=\scriptsize,
    text=red!72!black,
    fill=white,
    inner sep=1pt,
    anchor=west
  ] at (1.78,1.10) {$d^+(p)=1$};
  \node[font=\scriptsize,text=blue!65!black,anchor=south]
    at (-2.15,4.00) {path family $\mathcal P$ in $G$};
\end{essgraph}
\caption{In $G$, the path uses the red edge $(p,t)$.}
\label{fig:leaf-contraction-before}
\end{subfigure}
\hfill
\raisebox{2.25cm}{%
  $\underset{\text{\scriptsize expand}}%
    {\overset{\text{\scriptsize contract }(p,t)}{\Longleftrightarrow}}$
}
\hfill
\begin{subfigure}[b]{0.44\textwidth}
\centering
\begin{essgraph}[scale=0.86,>=stealth,node font={\small}]
  \ContractionNodes{}
  \UnchangedContractionPaths

  \essedge[extra={->,draw=orange!88!black,line width=1.85pt,shorten >=4pt}]{join}{t}

  \node[font=\scriptsize,text=blue!65!black,anchor=south]
    at (-2.15,4.00) {path family $\mathcal P'$ in $G'$};
\end{essgraph}
\caption{In $G'$, the orange edge ends at $t$.}
\label{fig:leaf-contraction-after}
\end{subfigure}

\endgroup
\end{minipage}
}%

\endgroup
\caption{Illustration of
  \Cref{lem:essential-survives-degree-one-leaf-contraction}. Since $d^+(p)=1$,
  contracting $p$ into $t$ gives a reversible correspondence between families
  of vertex-disjoint paths from $v$ to distinct terminals in $G$ and $G'$ that
  preserves their terminal endpoints. A path through $p$ must end with $(p,t)$;
  contraction redirects its preceding edge to $t$, and the reverse operation
  expands the redirected edge through $p$. Hence every terminal essential for
  $v$ in $G$, such as $t'$, remains essential for $v$ in $G'$.}
\label{fig:essentiality-leaf-contraction}
\end{figure}

Fix $v\in V(G')\setminus T$. Consider a family of $\connvg{v}{G}$ vertex-disjoint paths from $v$ to distinct terminals in $G$. If one of these paths contains $p$, then because $d^+(p)=1$, that path must end with the edge $(p,t)$, as in \Cref{fig:leaf-contraction-before}. Contracting $(p,t)$ turns this family into a family of vertex-disjoint paths in $G'$ with the same terminal endpoints, as in \Cref{fig:leaf-contraction-after}. Conversely, given such a family in $G'$, if a path uses an edge into $t$ created by the contraction, we expand that edge into the corresponding two-edge path through $p$. Since only one path can end at $t$, this again yields a family of vertex-disjoint paths with the same terminal endpoints in $G$.

Thus the same sets of terminal endpoints can be realized by vertex-disjoint path families from $v$ in $G$ and in $G'$. In particular, every terminal that is essential for $v$ in $G$ remains essential for $v$ in $G'$.
\end{proof}
\begin{lemma}[Contracting a pre-terminal of out-degree one] \label{lem:contract-degree-one-potential-leaf}
Let $(G, T, c, \assgn)$ satisfy the \essentialassignmentcondition. Assume that there exists a pre-terminal $p\in \PT(G,T)$ such that $d^+(p)=1$, and let $t_i$ be its corresponding terminal. Let $G'$ be the graph obtained by contracting the edge $(p,t_i)$ into $t_i$. Let $\assgn' = \assgn\setminus\{p\}$ be the assignment obtained by removing $p$ from the domain of $\assgn$. Then $(G', T, c', \assgn')$ also satisfies the \essentialassignmentcondition, where $c'$ is obtained from $c$ by decreasing $c_i$ by one. Let $\langle V'_1,\dots,V'_k\rangle$ be a partition for $(G', T, c', \assgn')$. Then $\langle V'_1,\dots,V'_{i-1}, V_i'\cup\{p\}, V'_{i+1}, \dots V'_k\rangle$ is a valid partition for the original instance.
\end{lemma}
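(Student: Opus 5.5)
The proof has two parts: first that $(G',T,c',\assgn')$ still satisfies the \essentialassignmentcondition, and then that the lifted partition is valid. For the first part, the key observation is that $\assgn(p)=t_i$. Since $d^+(p)=1$ with the only out-edge being $(p,t_i)$, every path from $p$ to $T$ starts with this edge. Hence $\connvg{p}{G}=1$, and removing $t_i$ leaves $p$ with no path to $T$, so $\connvg{p}{G\setminus\{t_i\}}=0$. Thus $t_i$ is essential for $p$. Conversely, any other terminal $t'$ is not essential, because the single-path family $\{(p,t_i)\}$ avoids it. So $t_i$ is the unique essential terminal of $p$, and the witness must satisfy $\assgn(p)=t_i$.

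It follows that removing $p$ from the domain gives $|\assgn'^{-1}(t_i)|=c_i-1=c'_i$, while all other counts are unchanged. In particular $c_i\ge 1$, so $c'$ is a nonnegative vector whose sum equals $|V(G')\setminus T|$. For essentiality, each $v\in V(G')\setminus T$ has $\assgn'(v)=\assgn(v)$ essential for $v$ in $G$. By \Cref{lem:essential-survives-degree-one-leaf-contraction}, it remains essential in $G'$. Hence $\assgn'$ is a witness for the new instance.

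For the lifting step, take the partition $\langle V'_1,\dots,V'_k\rangle$ of $V(G')=V\setminus\{p\}$. Adding $p$ to $V'_i$ gives a partition of $V$ with $t_i\in V_i$ and $|V_i|=c'_i+1+1=c_i+1$; the other parts are unchanged in size. The remaining task, and the main point requiring care, is connectivity. The edges of $G'$ are edges of $G$, except the edges $(u,t_i)$ created by redirecting an edge $(u,p)$. (Deleting outgoing edges of terminals and duplicates only removes edges, so it never creates spurious paths.) Take a directed path inside $G'[V'_j]$ from some $u$ to $t_j$. If $j\ne i$, this path cannot use a redirected edge, because such an edge enters $t_i\notin V'_j$; so the path already exists in $G[V'_j]$. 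If $j=i$, the path can use a redirected edge only as its final edge $(u,t_i)$. That edge is either an original edge of $G$, or it can be replaced by $(u,p),(p,t_i)$, which lies inside $V'_i\cup\{p\}$. Finally, $p$ itself reaches $t_i$ via the edge $(p,t_i)$. So every $G[V_j]$ is connected to $t_j$, and the lifted partition is valid.

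I expect no real obstacle. The only subtle points are confirming that $\assgn(p)=t_i$, so the capacity bookkeeping works, and checking that redirected edges only end at $t_i$, so that paths lift back to $G$.
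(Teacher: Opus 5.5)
Your proposal is correct and follows the paper's proof essentially step for step. You identify $t_i$ as the unique essential terminal of $p$, so that $\assgn(p)=t_i$ and the capacity bookkeeping works. You then invoke \Cref{lem:essential-survives-degree-one-leaf-contraction} for every other vertex, and you lift paths back to $G$ by expanding each redirected edge $(u,t_i)$ into $(u,p),(p,t_i)$. Your extra checks make explicit details the paper leaves implicit: that $c_i\ge 1$, and that every redirected edge enters $t_i$, so it can occur only as the final edge of a path in $G'[V'_i]$ and never inside $G'[V'_j]$ for $j\neq i$.
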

\begin{proof}
Since $p$ is a pre-terminal of out-degree one, its only outgoing edge is $(p,t_i)$. Thus, $t_i$ is the only essential terminal for $p$, so $\assgn(p)=t_i$. Removing $p$ from the assignment decreases the number of vertices assigned to $t_i$ by one and does not change the number assigned to any other terminal. Therefore, $\assgn'$ satisfies the capacity constraints given by $c'$.

For every $v\in V(G')\setminus T$, the assigned terminal $\assgn'(v)=\assgn(v)$ is essential for $v$ in $G$. Hence \Cref{lem:essential-survives-degree-one-leaf-contraction} implies that $\assgn'(v)$ remains essential for $v$ in $G'$. Therefore, $(G',T,c',\assgn')$ satisfies the \essentialassignmentcondition.

Finally, consider the partition $\langle V'_1,\dots,V'_k\rangle$. For $j\neq i$, the part $V'_j$ remains valid in $G$. Each path to $t_i$ in $G'[V'_i]$ gives a path in $G[V'_i\cup\{p\}]$ by replacing any edge created by the contraction with its two-edge path through $p$. Also, $p$ reaches $t_i$ through $(p,t_i)$. Hence, $V'_i\cup\{p\}$ is connected to $t_i$. Adding $p$ gives the $i$-th part its required size, so the resulting partition is valid for the original instance.
\end{proof}

We finally arrive at the most involved result. We will show that if there is no terminal with zero capacity and no pre-terminal of out-degree one, then we can find an edge that can be removed while maintaining the \essentialassignmentcondition. This is used in Lines~\ref{alg:gl-partition:line:rm-edge}--\ref{alg:gl-partition:line:rm-edge-end} of \Cref{alg:gl-partition}.

\begin{lemma}[Existence of a non-critical edge] \label{lem:non-critical-edge-exists}
Let $(G,T,c,\assgn)$ satisfy the \essentialassignmentcondition.
Assume that $c_i>0$ for every $i\in[k]$, and that
$d^+(v)\ge2$ for every pre-terminal $v\in\PT(G,T)$.
Then \textsc{ShiftAssignment} terminates in polynomial time and returns
an assignment $\assgn'$ and an edge $e_{\mathrm{nc}}$ such that $(G\setminus e_{\mathrm{nc}},T,c,\assgn')$ satisfies the \essentialassignmentcondition.
\end{lemma}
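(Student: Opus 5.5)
The plan follows the outline of the introduction and splits into four steps: existence of the matching, the two properties of a single reassignment option, correctness of one cycle shift, and a potential argument for termination.

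\textbf{Step 1: the matching.} We first show that the matching $M$ required in Line~\ref{alg:shift-assignment:line:matching} exists. Consider the bipartite graph between $T$ and $\PT(G,T)$, and suppose Hall's condition fails. Then there is an inclusion-minimal deficient set $S\subseteq T$ with $|\PT(G,S)|<|S|$.

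Using minimality, we show that no terminal of $S$ is essential for any vertex outside $\PT(G,S)$. Let $u\notin\PT(G,S)$, and take a maximum path family from $u$ that uses some $s\in S$. Its last vertex before $s$ is a pre-terminal in $\PT(G,S)$. Minimality gives a matching of $S\setminus\{s\}$ into $\PT(G,S)$. We use alternating paths along this matching to reroute the family so that it ends at a terminal outside $S$, or at least avoids $s$. This rerouting is delicate and is the first place where care is needed.

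Since every $c_t>0$, the witness $\assgn$ assigns at least one vertex to each $t\in S$. Each such vertex must lie in $\PT(G,S)$, and these vertices are distinct. Hence $|\PT(G,S)|\ge|S|$, a contradiction. Because $d^+(p_i)\ge 2$, each $p_i$ has a secondary edge $e_i=(p_i,q_i)$.

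\textbf{Step 2: properties of one option.} Suppose $e_i$ is critical for $(v_i,\assgn(v_i))$.

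(a) We show $t_i$ is essential for $v_i$. Criticality together with Lemma~\ref{lem:essential-connectivity} forces $\connvg{v_i}{G\setminus e_i}=\conn-1$. If some maximum family avoided $t_i$, it must use $e_i$. We then truncate the path through $e_i$ at $p_i$ and append $(p_i,t_i)$. This stays vertex-disjoint because $t_i$ was unused, so it gives $\conn$ paths in $G\setminus e_i$, a contradiction.

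(b) We show $\crt_{v_i}(t_i)<\crt_{v_i}(\assgn(v_i))$.

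First, $e_i$ is not critical for $(v_i,t_i)$. In $G\setminus e_i$ the connectivity is $\conn-1$. Take the tightest cut $C$ of $v_i$ in $G$, which contains $t_i$. We argue that $t_i$ remains in the separator of a minimum cut of $G\setminus e_i$, using that $p_i$ can always reach $t_i$ directly.

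Second, for $j\neq i$, if $e_j$ is critical for $(v_i,t_i)$ then it is critical for $(v_i,\assgn(v_i))$. This follows the uncrossing sketch in the introduction. Assume otherwise and set $H=G\setminus\{e_i,e_j\}$. We show that $v_i$ has connectivity $\conn-1$ in $G_i$, $G_j$ and $H$. Let $C_i$ and $C_j$ be the tightest minimum cuts in $G_i$ and $G_j$; both are minimum in $H$. Their union and intersection are also minimum cuts in $H$ by Corollary~\ref{cor:mincut-closure}. We then locate $p_i,q_i,p_j,q_j$, using the matching edges $(p_i,t_i)$ and $(p_j,t_j)$ and the essentiality information, to show that neither $e_i$ nor $e_j$ goes from the right side to the left side of $C_i\cup C_j$. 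Then $C_i\cup C_j$ is a cut of size $\conn-1$ in $G$, a contradiction. I expect this endpoint bookkeeping to be the main obstacle. I would first try to show $p_i,p_j\in \rcutc{C_i\cap C_j}$ and $q_i,q_j\notin \lcutc{C_i\cup C_j}$.

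\textbf{Step 3: one cycle shift.} In $\caR$ every terminal has in-degree one, so $\caR$ contains a cycle. Shifting the assignment along this cycle keeps every $|\assgninv(t)|=c_t$. Each reassigned vertex receives an essential terminal by (a). Vertices appear at most once on the cycle: two edges with the same label $v$ would have the same tail, which is impossible on a cycle, so the reassignment is well defined. Because $\connvg{\cdot}{G}$ does not change, the values $\crt$ are fixed throughout the loop. By (b), $\Phi(\assgn')<\Phi(\assgn)$.

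\textbf{Step 4: termination.} Since $0\le\Phi\le k|V\setminus T|$, the loop runs at most polynomially many times. Each iteration requires only polynomially many max-flow computations, via Proposition~\ref{prop:essential-terminals-single-flow} applied to $G$ and to each $G\setminus e_i$. The loop therefore ends with some $e_i$ that is non-critical for the current witness $\assgn'$. By definition of non-criticality, every $\assgn'(v)$ stays essential for $v$ in $G\setminus e_i$, and the capacities are unchanged. Hence $(G\setminus e_{\mathrm{nc}},T,c,\assgn')$ satisfies the \essentialassignmentcondition.
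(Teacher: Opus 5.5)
Your architecture is the paper's: a minimal Hall-deficient set gives the matching, criticality implies assignability, $e_i$ is not critical for its own terminal, criticality transfers by uncrossing, and the cycle shift decreases $\Phi$. Steps 2(a), 3 and 4 are fine. The gap is the uncrossing step, which cannot be completed as you plan it.

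\textbf{The uncrossing target is false.} You aim to show $q_i,q_j\notin\lcutc{C_i\cup C_j}$. But $C_i$ is a cut of size $\conn-1$ in $G_i$ while $\connvg{v_i}{G}=\conn$, so $e_i$ crosses $C_i$ from right to left. Hence $q_i\in\lcutc{C_i}\subseteq\lcutc{C_i}\cup\lcutc{C_j}=\lcutc{C_i\cup C_j}$, and likewise $q_j\in\lcutc{C_i\cup C_j}$. The heads always lie on the left of the union. What you must show instead is that the tails avoid its right side, i.e.\ $p_i\notin\rcutc{C_j}$ and $p_j\notin\rcutc{C_i}$.

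\textbf{The missing idea is to use the intersection cut.}
\begin{enumerate}
\item Since $t\in\lcutc{C_i}\cap\scutc{C_j}$ and $t_i\in\scutc{C_i}\cap\lcutc{C_j}$, both $t$ and $t_i$ lie in $\scutc{C_i\cap C_j}$.
\item $C_i\cap C_j$ is a minimum cut of $H$. If it were valid in $G_i$, it would be a minimum cut of $G_i$ with $t$ in its separator, so $t$ would be essential in $G_i$. Hence $e_j$ crosses $C_i\cap C_j$ from right to left. Symmetrically, using $t_i$ and $G_j$, so does $e_i$.
\item This gives $q_i,q_j\in\lcutc{C_i}\cap\lcutc{C_j}$.
\item Validity of $C_j$ in $G_j\ni e_i$ then yields $p_i\notin\rcutc{C_j}$. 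Validity of $C_i$ in $G_i\ni e_j$ yields $p_j\notin\rcutc{C_i}$.
\end{enumerate}
For $p_i$ alone, $t_i\in\lcutc{C_j}$ together with $(p_i,t_i)$ would also suffice. Nothing is known about $t_j$, though, so $p_j$ genuinely needs the intersection.

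\textbf{The claim $\connvg{v_i}{H}=\conn-1$ needs proof.} You only assert it, and this is where the matching edge is really used. Since $t_i$ is not essential in $G_j$, there is a $(\conn-1)$-family in $G_j$ avoiding $t_i$. Its path through $e_i$ can be cut at $p_i$ and sent along $(p_i,t_i)$. Without this, the union and intersection need not both have size $\conn-1$.

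\textbf{Two other steps are only gestured at.}
\begin{itemize}
\item In Step 1 no alternating paths are needed. Minimality gives $|S|-1\le|\PT(G,S\setminus\{s\})|\le|\PT(G,S)|<|S|$, so the matching of $S\setminus\{s\}$ into $\PT(G,S)$ is perfect. Now replace the last edge of \emph{every} path of the family that ends in $S$ by the matching edge at its last non-terminal vertex. These vertices are distinct members of $\PT(G,S)$, and none equals $u$. The rerouted paths end in $S\setminus\{s\}$, not outside $S$.
\item For ``$e_i$ is not critical for $(v_i,t_i)$'', the tightest cut of $G$ is the wrong object. Take any minimum cut $C'$ of $G\setminus e_i$. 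It has size $\conn-1$, so $e_i$ must cross it from right to left, which puts $p_i\in\rcutc{C'}$. The edge $(p_i,t_i)$ then forces $t_i\in\scutc{C'}$. By \Cref{obs:tightest-min-cut} and \Cref{lem:essential-terminal-tightest-cut}, $t_i$ remains essential in $G\setminus e_i$.
\end{itemize}
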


Having the above results, \Cref{thm:gl-partition-correctness} follows directly.
\begin{proof}[Proof of \Cref{thm:gl-partition-correctness}]
We prove the claim by induction on $|V(G)|+|E(G)|$. If $G=\emptyset$, the claim is immediate.

If some terminal has capacity zero, the recursive call is valid and its output extends to the original instance by \Cref{lem:remove-zero-capacity-terminal}. If there is a pre-terminal of out-degree one, the same follows from \Cref{lem:contract-degree-one-potential-leaf}. Otherwise, the assumptions of \Cref{lem:non-critical-edge-exists} hold. Thus \textsc{ShiftAssignment} returns an assignment $\assgn'$ and an edge $e_{\mathrm{nc}}$ such that the recursive call on $(G\setminus e_{\mathrm{nc}},T,c,\assgn')$ is valid and its returned partition is also valid in $G$.

Each recursive call decreases $|V(G)|+|E(G)|$, so the induction applies and there are at most $|V(G)|+|E(G)|$ calls. Since each step is polynomial time by \Cref{lem:non-critical-edge-exists}, the algorithm runs in polynomial time.
\end{proof}

To prove \Cref{lem:non-critical-edge-exists}, we first show in \Cref{lem:terminal-leaf-matching} that the matching $M$ required in Line~\ref{alg:shift-assignment:line:matching} exists. We show this by contradiction and considering an inclusion-minimal Hall-deficient set of terminals $S$, i.e. a set of terminals $S$ such that $|S|>|\PT(G,S)|$ and no proper subset of $S$ has this property. We show that no terminal $t_i$ in $S$ is essential for any vertex outside of $\PT(G,S)$, which together with $c_i > 0$ contradicts the \essentialassignmentcondition. This shows that the matching $M$ exists.

We will then show that the while loop in \Cref{alg:shift-assignment} (Line~\ref{alg:shift-assignment:line:while}) terminates in polynomial time with a non-critical edge. The loop only terminates when a non-critical edge is found in Line~\ref{alg:shift-assignment:line:non-critical-edge}. Otherwise, it will keep shifting the assignment along cycles in the reassignment graph $\caR$. We will show that the total number of tuples $(v, p_i, q_i)$ where the edge $(p_i, q_i)$ is critical for assigning $v$ to $\assgn(v)$ decreases after each shift in the assignments. Since this number is polynomially bounded, the algorithm terminates in polynomial time.

We begin the analysis of \Cref{alg:shift-assignment} with showing the existence of a matching from pre-terminals to terminals in the next few lemmas. We first show the following standard result: for an inclusion-minimal Hall-deficient set of terminals $S$, every proper subset of $S$ satisfies Hall's condition. This structure will be used to show that no terminal in $S$ is essential for any vertex outside of $\PT(G,S)$.

\begin{lemma}[Structure of a minimal Hall-deficient set] \label{lem:minimal-hall-set}\label{lem:weighted-minimal-hall-set}
Let $G=(V,E)$ be a directed graph with terminal set $T$. Let $S\subseteq T$ be inclusion-minimal subject to $|S|>|\PT(G,S)|$. Then, for every $t\in S$ and every $S'\subseteq S\setminus\{t\}$, we have $|\PT(G,S')|\ge |S'|$. Consequently, there is a matching from $S\setminus\{t\}$ to $\PT(G,S)$ that saturates both sides. Moreover, $|\PT(G,S)|=|S|-1$.
\end{lemma}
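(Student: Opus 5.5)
The plan is to work directly with the bipartite graph $B$ whose sides are the terminals and the pre-terminals, where $t$ and $p$ are adjacent when $(p,t)\in E$; then $\PT(G,S')$ is the neighborhood of $S'$ in $B$. The first claim follows from minimality. For any $t\in S$ and $S'\subseteq S\setminus\{t\}$, the set $S'$ is a proper subset of $S$. Inclusion-minimality of $S$ among Hall-deficient sets therefore gives $|\PT(G,S')|\ge|S'|$. This covers $S'=\emptyset$ trivially.

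For the matching statement, fix $t\in S$ and apply Hall's theorem to the bipartite graph between $S\setminus\{t\}$ and $\PT(G,S\setminus\{t\})$. By the first claim, Hall's condition holds for every subset, so there is a matching $M_t$ saturating $S\setminus\{t\}$. Its pre-terminal endpoints lie in $\PT(G,S\setminus\{t\})\subseteq\PT(G,S)$, since $\PT(G,\cdot)$ is monotone. This gives
\[
|S|-1\le|\PT(G,S\setminus\{t\})|\le|\PT(G,S)|\le|S|-1,
\]
where the last inequality is Hall-deficiency of $S$ together with integrality. Hence $|\PT(G,S)|=|S|-1$.

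The same chain shows $|M_t|=|S|-1=|\PT(G,S)|$, so $M_t$ also saturates $\PT(G,S)$. This yields the matching saturating both sides, and the ``moreover'' equality has been established along the way.

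The only point needing care is the case $S=\{t\}$. Then $S\setminus\{t\}=\emptyset$, so $\PT(G,S)$ must be empty, and the matching is empty. The chain of inequalities handles this uniformly, so I do not expect a real obstacle. The argument is a routine application of Hall's theorem and inclusion-minimality.
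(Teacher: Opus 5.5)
Your proof is correct and follows the paper's argument essentially step for step. Both use minimality to get Hall's condition on $S\setminus\{t\}$, then the chain $|S|-1\le|\PT(G,S\setminus\{t\})|\le|\PT(G,S)|<|S|$ to force $|\PT(G,S)|=|S|-1$, and then a size count to conclude that the Hall matching saturates both sides.
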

\begin{proof}
Consider the bipartite graph $H$ with parts $T$ and $\PT(G,T)$, where $t\in T$ is adjacent to $p\in\PT(G,T)$ if $(p,t)\in E$. Fix $t\in S$. Every subset $S'\subseteq S\setminus\{t\}$ is a proper subset of $S$, so minimality gives $|\PT(G,S')|\ge |S'|$. Thus, $S\setminus\{t\}$ satisfies Hall's condition. Also, $|S|-1\le |\PT(G,S\setminus\{t\})|\le |\PT(G,S)|<|S|$. Hence $|\PT(G,S)|=|S|-1$ and $\PT(G,S\setminus\{t\})=\PT(G,S)$. Hall's theorem gives a matching from $S\setminus\{t\}$ to $\PT(G,S)$ that saturates $S\setminus\{t\}$; since both sides have size $|S|-1$, it saturates both sides.
\end{proof}

We next show that the above lemma implies that no terminal in $S$ is essential for any vertex outside of $\PT(G,S)$. Since there are not enough vertices in $\PT(G,S)$ to satisfy the total capacity of terminals in $S$, which is at least $|S|$ because $c_i > 0$ for every terminal $t_i\in S$, this will contradict the \essentialassignmentcondition and show that the matching from pre-terminals to terminals exists.

\begin{lemma}[Hall-deficient terminals are not essential outside their neighborhood] \label{lem:terminal-leaf-hall-set-not-essential-outside}
Let $G=(V,E)$ be a directed graph with terminal set $T$. Let $S\subseteq T$ be inclusion-minimal subject to $|S|>|\PT(G,S)|$. Then no terminal $t\in S$ is essential for any non-terminal vertex $v\in (V\setminus T)\setminus \PT(G,S)$. Equivalently, terminals in $S$ can be essential only for vertices in $\PT(G,S)$.
\end{lemma}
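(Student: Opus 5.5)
The plan is to use the path view of essentiality from \Cref{def:essential-terminal}. Fix $t\in S$ and a non-terminal $v\notin \PT(G,S)$, and let $\conn=\connvg{v}{G}$. It suffices to exhibit a family of $\conn$ vertex-disjoint paths from $v$ to distinct terminals that avoids $t$. I would start from an arbitrary maximum family $\mathcal P$ and reroute only the final edges of the paths that end in $S$.

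First I would examine the paths of $\mathcal P$ that end at a terminal of $S$; call this subfamily $\mathcal P_S$.

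1. Since terminals have no outgoing edges, a terminal can occur on a path only as its last vertex.
2. Since $v\notin \PT(G,S)$, there is no edge from $v$ to $S$. So every path in $\mathcal P_S$ has at least two edges.
3. The penultimate vertex $q$ of such a path is therefore a non-terminal with an edge into $S$, that is, $q\in\PT(G,S)$.
4. Disjointness of $\mathcal P$ outside $v$ makes these penultimate vertices pairwise distinct. Call their set $Q\subseteq \PT(G,S)$.

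Next I would invoke \Cref{lem:minimal-hall-set}. For the fixed $t\in S$, it gives a matching $M$ from $S\setminus\{t\}$ to $\PT(G,S)$ that saturates both sides. Equivalently, $M$ is an injective map $\mu:\PT(G,S)\to S\setminus\{t\}$ with $(q,\mu(q))\in E$ for every $q$. For each path in $\mathcal P_S$ with penultimate vertex $q\in Q$, I replace its last edge by the edge $(q,\mu(q))$. Paths of $\mathcal P$ ending outside $S$ are left unchanged.

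Then I would check that the new family is still valid.

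1. Every internal vertex is unchanged. Only endpoints in $S$ were replaced, and they were replaced by other vertices of $S$.
2. The unchanged paths neither end in $S$ nor pass through terminals internally, so they cannot collide with the new endpoints.
3. Among the rerouted paths, the new endpoints $\mu(q)$ are distinct because $\mu$ is injective.

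The result is a family of $\conn$ vertex-disjoint paths from $v$ to distinct terminals. None of these paths ends at $t$, since $t\notin\operatorname{im}\mu$ and no untouched path ends in $S$. Hence $\connvg{v}{G\setminus\{t\}}=\conn$, and $t$ is not essential for $v$.

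The only delicate step is the second one: guaranteeing that each rerouted path really has a penultimate vertex inside $\PT(G,S)$, and that the rerouting cannot create a collision. This is exactly where the hypotheses $v\notin\PT(G,S)$ and "terminals have no out-edges" are used. Inclusion-minimality of $S$ is used only through \Cref{lem:minimal-hall-set}, namely the perfect matching between $S\setminus\{t\}$ and $\PT(G,S)$ that avoids the chosen $t$. I do not expect any further obstacle.
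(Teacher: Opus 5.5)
Your proposal is correct and follows the paper's proof essentially step for step. You fix a maximum path family, observe that each path ending in $S$ enters $S$ from a distinct vertex of $\PT(G,S)$ other than $v$, and reroute those final edges along the matching between $\PT(G,S)$ and $S\setminus\{t\}$ supplied by \Cref{lem:minimal-hall-set}; your explicit checks that terminals appear only as endpoints and that the new endpoints are distinct are the same details the paper's argument relies on.
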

\begin{proof}
\tikzexternaldisable
\begin{figure}[htbp]
\centering
\begingroup
\newcommand{\panelcaption}[1]{%
  \par\vspace{6pt}\begin{minipage}{6.85cm}\centering\small #1\end{minipage}%
}

\tikzset{
  >=stealth,
  hall path through/.style={
    ->,
    draw=orange!88!black,
    line width=1.8pt,
    line cap=round,
    shorten <=1.5pt,
    shorten >=2pt
  },
  hall path other/.style={
    ->,
    draw=teal!72!black,
    line width=1.55pt,
    line cap=round,
    shorten <=1.5pt,
    shorten >=2pt
  },
  hall matching edge/.style={
    ->,
    draw=blue!68!black,
    line width=1.8pt,
    line cap=round,
    shorten <=1.5pt,
    shorten >=2pt
  },
  hall faded old edge/.style={
    hall path through,
    opacity=0.20
  },
  hall faded matching edge/.style={
    hall matching edge,
    opacity=0.22
  },
  hall terminal dot/.style={
    circle,
    fill=black,
    draw=black,
    inner sep=1.45pt
  },
  hall S region/.style={
    fill=orange!12,
    draw=orange!80!black,
    dashed,
    rounded corners=7pt,
    line width=0.9pt,
    inner xsep=8pt,
    inner ysep=11pt
  },
  hall PT region/.style={
    fill=violet!9,
    draw=violet!70!black,
    dashed,
    rounded corners=7pt,
    line width=0.9pt,
    inner xsep=9pt,
    inner ysep=13pt
  },
  hall T region/.style={
    fill=blue!7,
    draw=blue!55!black,
    dashed,
    rounded corners=7pt,
    line width=0.9pt,
    inner xsep=8pt,
    inner ysep=22pt
  }
}

\newcommand{\hallpanel}[2]{%
\begin{essgraph}[scale=0.74,>=stealth,node font={\small}]
  \essterminal{tsone}{(-2.40,-0.65)}{$t_{S_1}$}
  \essterminal{tstwo}{(-0.80,-0.65)}{$t_{S_2}$}
  \essterminal{tsthree}{(0.80,-0.65)}{$t_{S_3}$}
  \essterminal{tsfour}{(2.40,-0.65)}{$t_{S_4}$}

  \node[hall terminal dot] (tleft) at (-4.40,-0.65) {};
  \node[hall terminal dot] (tright) at (4.40,-0.65) {};

  \essvertex{uone}{(-1.80,2.45)}{$u_1$}
  \essvertex{utwo}{(0,2.45)}{$u_2$}
  \essvertex{uthree}{(1.80,2.45)}{$u_3$}

  \essvertex{v}{(0,5.65)}{$v$}

  \begin{scope}[on background layer]
    \node[hall T region,
      fit=(tleft)(tsone)(tstwo)(tsthree)(tsfour)(tright)] (tregion) {};
    \node[hall S region,
      fit=(tsone)(tstwo)(tsthree)(tsfour)] (sregion) {};

    \node[hall PT region,fit=(uone)(utwo)(uthree)] (ptregion) {};

    \draw[hall path through]
      (v) .. controls (-1.45,5.16) and (-1.82,3.64) .. (uone);
    \draw[hall path through] (v) -- (utwo);
    \draw[hall path through]
      (v) .. controls (1.45,5.16) and (1.82,3.64) .. (uthree);

    \draw[hall path other]
      (v) .. controls (-4.40,5.12) and (-4.85,2.45) .. (tleft);
    \draw[hall path other]
      (v) .. controls (4.40,5.12) and (4.85,2.45) .. (tright);

    #1
  \end{scope}

  \node[font=\small\bfseries,text=blue!55!black,anchor=south west]
    at ($(tregion.south west)+(0.12,0.10)$) {$T$};
  \node[font=\small\bfseries,text=orange!80!black,anchor=south]
    at ($(sregion.south)+(0,0.10)$) {$S$};
  \node[font=\small\bfseries,text=violet!70!black,anchor=east]
    at ($(ptregion.west)+(0.10,-0.58)$) {$\PT(G,S)$};
  \node[font=\small\bfseries,anchor=south] at (0,6.02) {#2};
\end{essgraph}%
}

\resizebox{\textwidth}{!}{%
\begin{minipage}{16.4cm}
\centering

\begin{tikzpicture}[baseline=-0.5ex,>=stealth]
  \draw[hall path through] (0,0) -- (0.72,0)
    node[right,font=\scriptsize,text=black] {meets $\PT(G,S)$};
  \draw[hall path other] (3.45,0) -- (4.17,0)
    node[right,font=\scriptsize,text=black] {avoids $\PT(G,S)$};
  \draw[hall matching edge] (6.90,0) -- (7.62,0)
    node[right,font=\scriptsize,text=black]
      {matching avoiding $t_{S_1}$};
\end{tikzpicture}

\vspace{0.65cm}

\begin{minipage}[t]{7.10cm}
\centering
\hallpanel{%
  \draw[hall path through] (uone) -- (tsone);
  \draw[hall path through] (utwo) -- (tstwo);
  \draw[hall path through] (uthree) -- (tsthree);

  \draw[hall faded matching edge] (uone) -- (tstwo);
  \draw[hall faded matching edge] (utwo) -- (tsthree);
  \draw[hall faded matching edge] (uthree) -- (tsfour);
}{$\mathcal P$}
\phantomsubcaption\label{fig:bipartite-matching-exists-original}
\panelcaption{\textbf{(\thesubfigure)} The original maximum path family
  $\mathcal P$.}
\end{minipage}
\hspace{-0.28cm}
\raisebox{2.55cm}{\Large$\Longrightarrow$}
\hspace{-0.28cm}
\begin{minipage}[t]{7.10cm}
\centering
\hallpanel{%
  \draw[hall faded old edge] (uone) -- (tsone);
  \draw[hall faded old edge] (utwo) -- (tstwo);
  \draw[hall faded old edge] (uthree) -- (tsthree);

  \draw[hall matching edge] (uone) -- (tstwo);
  \draw[hall matching edge] (utwo) -- (tsthree);
  \draw[hall matching edge] (uthree) -- (tsfour);
}{$\mathcal P'$ avoids $t_{S_1}$}
\phantomsubcaption\label{fig:bipartite-matching-exists-rerouted}
\panelcaption{\textbf{(\thesubfigure)} The rerouted path family
  $\mathcal P'$ avoids $t_{S_1}$.}
\end{minipage}

\end{minipage}
}%

\caption{The rerouting used in
  \Cref{lem:terminal-leaf-hall-set-not-essential-outside}. In
  \subref{fig:bipartite-matching-exists-original}, the orange paths of the
  maximum family $\mathcal P$ end in $S$ through the distinct vertices
  $u_1,u_2,u_3\in\PT(G,S)$; the blue edges of a matching avoiding
  $t_{S_1}$ are shown faded. In
  \subref{fig:bipartite-matching-exists-rerouted}, we replace each original final edge with the matching edge incident to the same
  $u_j$. The resulting family $\mathcal P'$ has the same size, has distinct
  terminal endpoints, and avoids $t_{S_1}$.}
\label{fig:bipartite-matching-exists}

\endgroup
\end{figure}
\tikzexternalenable

\Cref{fig:bipartite-matching-exists} illustrates the argument. Fix a non-terminal vertex $v\notin\PT(G,S)$ and a terminal $t\in S$. By \Cref{lem:minimal-hall-set}, there is a matching $M$ between $\PT(G,S)$ and $S\setminus\{t\}$ that saturates both sides. Let $\mathcal P$ be a maximum family of vertex-disjoint paths from $v$ to distinct terminals, shown in \Cref{fig:bipartite-matching-exists-original}. Let $\mathcal P_1,\dots,\mathcal P_m$ be the paths in $\mathcal P$ that end in $S$, and for each $j\in[m]$ let $u_j$ be the last non-terminal vertex of $\mathcal P_j$. Since $\mathcal P_j$ ends with an edge from $u_j$ to a terminal in $S$, we have $u_j\in\PT(G,S)$. The vertices $u_1,\dots,u_m$ are distinct because the paths are vertex-disjoint, and none of them equals $v$ since $v\notin\PT(G,S)$. For each $j\in[m]$, replace the last edge of $\mathcal P_j$ by the matching edge of $M$ incident to $u_j$, as in \Cref{fig:bipartite-matching-exists-rerouted}. Such an edge exists because $M$ saturates $\PT(G,S)$, and distinct vertices $u_j$ are matched to distinct terminals in $S\setminus\{t\}$. The resulting paths remain vertex-disjoint, have distinct terminal endpoints, and avoid $t$. Thus, $t$ is not essential for $v$.
\end{proof}

\begin{lemma} \label{lem:terminal-leaf-matching}
Let $(G, T, c, \assgn)$ satisfy the \essentialassignmentcondition. Assume that $c_i>0$ for all $i\in[k]$. Then there exists a matching from pre-terminals to terminals, that is, there exist distinct vertices $p_1,\dots,p_k \in \PT(G,T)$ such that $(p_i,t_i)\in E$ for all $i\in[k]$.
\end{lemma}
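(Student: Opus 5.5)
The plan is to argue by contradiction via Hall's theorem on the bipartite graph $H$ whose parts are $T$ and $\PT(G,T)$, with $t\in T$ adjacent to $p\in\PT(G,T)$ whenever $(p,t)\in E$. A matching from pre-terminals to terminals, in the sense of the statement, is exactly a matching in $H$ that saturates $T$. By Hall's theorem it exists unless some $S\subseteq T$ satisfies $|S|>|\PT(G,S)|$. So suppose such a set exists. Among all of them, pick one that is inclusion-minimal; this is possible since $T$ is finite.

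Next apply \Cref{lem:terminal-leaf-hall-set-not-essential-outside} to this minimal $S$. It says that no terminal of $S$ is essential for any non-terminal outside $\PT(G,S)$. Let $\assgn$ be the witness for the \essentialassignmentcondition. For every $v$ with $\assgn(v)\in S$, the terminal $\assgn(v)$ is essential for $v$, so $v\in\PT(G,S)$. Hence
\[
\bigcup_{t\in S}\assgninv(t)\subseteq \PT(G,S).
\]

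Then count both sides. The sets $\assgninv(t)$ for $t\in S$ are pairwise disjoint because $\assgn$ is a function. Each has size exactly $\cp_t\ge 1$, since every capacity is positive. The left-hand side therefore has size $\sum_{t\in S}\cp_t\ge |S|>|\PT(G,S)|$, which contradicts the inclusion above. So no Hall-deficient set exists, and the matching exists. It can also be found in polynomial time by any bipartite matching algorithm, as \textsc{ShiftAssignment} requires.

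All the real difficulty lies in \Cref{lem:terminal-leaf-hall-set-not-essential-outside}, and specifically in its rerouting argument, which relies on the minimality of $S$ through \Cref{lem:minimal-hall-set}; that lemma is already available. The only point needing care here is the choice of an \emph{inclusion-minimal} deficient set, so that the lemma applies, rather than an arbitrary one. Given that, the remaining step is just the disjoint-counting argument above.
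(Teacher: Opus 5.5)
Your proof is correct and follows essentially the same route as the paper. Both take an inclusion-minimal Hall-deficient set $S$, invoke \Cref{lem:terminal-leaf-hall-set-not-essential-outside} to place every vertex assigned to a terminal of $S$ inside $\PT(G,S)$, and reach a contradiction from positive capacities. The only cosmetic difference is that you count $\sum_{t\in S}\cp_t$ where the paper picks one assigned vertex per terminal.
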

\begin{proof}
Consider the bipartite graph $H$ with parts $T$ and $\PT(G,T)$, where $t\in T$ is adjacent to $p\in\PT(G,T)$ if $(p,t)\in E$. Suppose, for a contradiction, that $H$ has no matching saturating $T$. By Hall's theorem, there is a nonempty set $S\subseteq T$ with $|\PT(G,S)|<|S|$; choose such a set inclusion-minimal. By \Cref{lem:terminal-leaf-hall-set-not-essential-outside}, no terminal in $S$ is essential for any non-terminal vertex outside $\PT(G,S)$.

Since $c_i>0$ for every $i$, each terminal $t\in S$ has a vertex $v_t$ assigned to it. The vertices $v_t$ are distinct, and each $v_t$ belongs to $\PT(G,S)$, because $\assgn(v_t)=t$ is essential for $v_t$. Thus, every vertex assigned to a terminal in $S$ lies in $\PT(G,S)$, which implies $|\PT(G,S)|\ge |S|$, a contradiction. Therefore, $H$ has a matching saturating $T$, which gives the required vertices $p_1,\dots,p_k$.
\end{proof}

Having the matching $\{(p_1,t_1),\dots,(p_k,t_k)\}$, \Cref{alg:shift-assignment} considers a secondary edge for each $p_i$, denoted by $e_i=(p_i,q_i)$. \Cref{fig:potential-leaf-matching} illustrates the matching edges $(p_i,t_i)$ together with the secondary edges $e_i=(p_i,q_i)$. We will show a slightly stronger variant of \Cref{lem:non-critical-edge-exists}: we can update the assignment $\assgn$ so that one of the edges among $e_1,\dots,e_k$ becomes non-critical for $\assgn$. At each iteration of the while loop (Line~\ref{alg:shift-assignment:line:while}) in \Cref{alg:shift-assignment}, the subroutine first checks if any of these edges is non-critical for the current $\assgn$. If so, the edge is already found and $\assgn$ is a witness to the \essentialassignmentcondition even after removing that non-critical edge. Otherwise, each edge $e_i$ must have a corresponding vertex $v_i$ such that $e_i$ is critical for assigning $v_i$ to $\assgn(v_i)$. We will show in \Cref{lem:critical-implies-assignable} that $t_i$ is also essential for $v_i$, so $v_i$ can be reassigned from $\assgn(v_i)$ to $t_i$ while keeping every vertex on an essential terminal. Reassigning a single vertex this way would break the capacities, so instead we find a directed cycle in the reassignment graph and shift the assignments of the vertices along that cycle, which keeps the number of vertices at each terminal unchanged. 

\begin{figure}[htbp]
\centering
\begingroup
\tikzset{
  anonymous terminal/.style={
    circle,
    fill=black,
    draw=black,
    inner sep=1.5pt
  },
  matching edge/.style={
    ->,
    draw=blue!68!black,
    line width=1.55pt,
    line cap=round
  },
  ordinary edge/.style={
    ->,
    draw=red!72!black,
    line width=1.20pt,
    line cap=round
  },
  edge label/.style={
    font=\small,
    text=red!72!black,
    fill=white,
    inner sep=1.2pt
  }
}

\resizebox{0.75\textwidth}{!}{%
\begin{essgraph}[scale=1.0,>=stealth,line cap=round,line join=round]
  \essterminal{t1}{(-4.80,0)}{$t_1$}
  \essterminal{t2}{(-2.50,0)}{$t_2$}
  \essterminal{t3}{(-0.20,0)}{$t_3$}
  \essterminal{t4}{( 2.10,0)}{$t_4$}
  \node[font=\Large] at (3.35,0) {$\cdots$};
  \essterminal{tk}{( 4.65,0)}{$t_k$}

  \essvertex{p1}{(-4.80,3.10)}{$p_1$}
  \essvertex{p2}{(-2.50,3.10)}{$p_2$}
  \essvertex{p3}{(-0.20,3.10)}{$p_3$}
  \essvertex{p4}{( 2.10,3.10)}{$p_4$}
  \node[font=\Large] at (3.35,3.10) {$\cdots$};
  \essvertex{pk}{( 4.65,3.10)}{$p_k$}

  \node[anonymous terminal] (a1) at (-5.70,3.10) {};
  \node[anonymous terminal] (a2) at (-3.65,3.10) {};
  \node[anonymous terminal] (a3) at ( 5.55,3.10) {};

  \essvertex{q1}{(-3.80,1.55)}{$q_1$}
  \essvertex{q4}{( 3.10,1.55)}{$q_4$}
  \essvertex{qk}{( 5.65,1.55)}{$q_k$}

  \essedge[extra={matching edge}]{p1}{t1}
  \essedge[extra={matching edge}]{p2}{t2}
  \essedge[extra={matching edge}]{p3}{t3}
  \essedge[extra={matching edge}]{p4}{t4}
  \essedge[extra={matching edge}]{pk}{tk}

  \essedge[extra={ordinary edge}]{p1}{q1}
  \essedge[extra={ordinary edge}]{p2}{p3}
  \essedge[extra={ordinary edge}]{p3}{t4}
  \essedge[extra={ordinary edge}]{p4}{q4}
  \essedge[extra={ordinary edge}]{pk}{qk}

  \node[edge label] at ($(p1)!0.50!(q1)+(0.28,0.25)$) {$e_1$};
  \node[edge label] at ($(p2)!0.50!(p3)+(0,0.20)$) {$e_2$};
  \node[edge label] at ($(p3)!0.50!(t4)+(0.30,0.22)$) {$e_3$};
  \node[edge label] at ($(p4)!0.50!(q4)+(0.28,0.25)$) {$e_4$};
  \node[edge label] at ($(pk)!0.50!(qk)+(0.28,0.25)$) {$e_k$};

  \node[
    font=\scriptsize,
    anchor=south,
    fill=white,
    inner sep=1pt
  ] at ($(p3)+(0,0.43)$) {$q_2=p_3$};
  \node[
    font=\scriptsize,
    anchor=south west,
    fill=white,
    inner sep=1pt
  ] at ($(t4)+(0.58,0.32)$) {$q_3=t_4$};
\end{essgraph}%
}

\caption{
The blue edges depict the matching whose existence is guaranteed by
\Cref{lem:terminal-leaf-matching}. For each terminal $t_i$, we denote by
$p_i$ the pre-terminal matched to $t_i$. Note that there may be additional
pre-terminals besides $p_1,\ldots,p_k$. The red edges are the secondary
edges of the $p_i$'s, and we denote the other endpoint of the secondary edge
of $p_i$ by $q_i$. The secondary neighbor $q_i$ of $p_i$ may be an arbitrary
vertex, including another pre-terminal $p_j$ or even a terminal $t_j$.
}
\label{fig:potential-leaf-matching}
\endgroup
\end{figure}

We intentionally state the next cut lemmas without the assignment $\assgn$. The same facts are used again for flow-essential split-assignments in \Cref{sec:weighted}, where the terminal being tested may be any terminal receiving positive weight. As described above, the first lemma shows that if the edge $(p_i,q_i)$ is critical for assigning a vertex $v$ to a terminal $t$, then the terminal $t_i$ of the matching edge $(p_i,t_i)$ is essential for $v$.

\begin{lemma}[Criticality implies assignability] \label{lem:critical-implies-assignable}
Let $G=(V,E)$ be a directed graph with terminal set $T$. Let $p_i\in\PT(G,T)$ be a pre-terminal and $t_i\in T$ be one of its corresponding terminals, i.e. $(p_i,t_i)\in E$, and let $e_i=(p_i,q_i)$ be an edge distinct from $(p_i,t_i)$. If $e_i$ is critical for assigning a vertex $v\in V\setminus T$ to a terminal $t\in T$, then $t_i$ is essential for $v$. 
\end{lemma}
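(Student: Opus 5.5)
Suppose for contradiction that $t_i$ is not essential for $v$ in $G$, and write $\kappa=\connvg{v}{G}$ and $G'=G\setminus e_i$. The plan has three steps: show that $v$ loses connectivity in $G'$, use a maximum family avoiding $t_i$ to locate $e_i$ on one of its paths, and then reroute that path through the matching edge $(p_i,t_i)$ to contradict the connectivity drop.

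First we pin down $\connvg{v}{G'}$. Since $e_i$ is critical for $(v,t)$, the terminal $t$ is essential for $v$ in $G$ but not in $G'$. By the contrapositive of \Cref{lem:essential-connectivity}, we get $\connvg{v}{G'}\neq\kappa$, and the same lemma gives $\connvg{v}{G'}\ge\kappa-1$. Hence $\connvg{v}{G'}=\kappa-1$.

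Next, since $t_i$ is not essential for $v$, \Cref{def:essential-terminal} provides a family $\mathcal P$ of $\kappa$ vertex-disjoint paths from $v$ to distinct terminals, none of which ends at (or, since terminals have no outgoing edges, passes through) $t_i$. If no path of $\mathcal P$ used $e_i$, then $\mathcal P$ would survive in $G'$ and give $\connvg{v}{G'}\ge\kappa$, a contradiction. So exactly one path $P\in\mathcal P$ uses $e_i=(p_i,q_i)$, and in particular $p_i$ lies on $P$. The case $p_i=v$ is allowed; it changes nothing below.

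Finally we reroute. Truncate $P$ at $p_i$ and append the edge $(p_i,t_i)$, which is present in $G'$ because it is distinct from $e_i$. Call the new path $P'$. Its vertices are those of the prefix of $P$ up to $p_i$ together with $t_i$. The prefix is disjoint from the other paths of $\mathcal P$ except at $v$, and $t_i$ lies on no path of $\mathcal P$ because the family avoids $t_i$. Thus $P'$ is a simple path, vertex-disjoint from the rest of the family. Its endpoint $t_i$ differs from all the other endpoints, and $P'$ does not use $e_i$. Replacing $P$ by $P'$ therefore gives $\kappa$ vertex-disjoint paths from $v$ to distinct terminals in $G'$, contradicting $\connvg{v}{G'}=\kappa-1$. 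Hence $t_i$ is essential for $v$.

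The only delicate step is this rerouting, specifically checking that $t_i$ is genuinely unused by $\mathcal P$ so that disjointness and distinctness of endpoints are preserved. This is exactly where the choice of a maximum family avoiding $t_i$, supplied by non-essentiality, is needed.
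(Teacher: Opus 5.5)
Your proposal is correct and follows the paper's proof essentially step for step: you use \Cref{lem:essential-connectivity} to get $\connvg{v}{G\setminus e_i}=\kappa-1$, you take a maximum family avoiding $t_i$, which must use $e_i$, and you reroute that path at $p_i$ through $(p_i,t_i)$ to get a contradiction. Your explicit handling of the case $p_i=v$ and of why $t_i$ cannot appear inside any path is a small but welcome addition to the same argument.
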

\begin{proof}

\begin{figure}[htbp]
\centering
\begingroup
\tikzset{
path family/.style={
->,
draw=blue!65!black,
line width=1.4pt,
line cap=round,
shorten <=1.5pt,
shorten >=2pt
},
distinguished path/.style={
->,
draw=orange!85!black,
line width=1.8pt,
line cap=round,
shorten <=1.5pt,
shorten >=2pt
},
figure note/.style={
font=\scriptsize,
fill=white,
inner sep=1.4pt
}
}

\newcommand{\CriticalReroutingNodes}{%
  \essterminal{ta}{(-4.0,0)}{$t_a$}
  \essterminal{ti}{(-2.2,0)}{$t_i$}
  \essterminal{tb}{(-0.4,0)}{$t_b$}
  \essterminal{tt}{( 1.4,0)}{$t$}
  \essterminal{tc}{( 3.2,0)}{$t_c$}

  \essvertex{pi}{(-2.2,2.55)}{$p_i$}
  \essvertex{qi}{(-1.25,1.25)}{$q_i$}
  \essvertex[colors={tt/1}]{v}{(0,4.15)}{$v$}
}

\newcommand{\CommonUnchangedPaths}{%
  \begin{scope}[on background layer]
    \draw[path family]
      (v) .. controls (-1.60,4.15) and (-4.35,2.85) .. (ta.north);

    \draw[path family]
      (v) .. controls (0.30,3.25) and (1.00,1.55) .. (tt.north);

    \draw[path family]
      (v) .. controls (1.10,3.95) and (2.70,2.15) .. (tc.north);
  \end{scope}
}

\begin{subfigure}[b]{0.43\textwidth}
\centering
\begin{essgraph}[scale=0.88, >=stealth, node font={\small}]
  \CriticalReroutingNodes

  \essedge[extra={->,dashed}]{pi}{ti}
  \essedge[extra={->,draw=orange!85!black,line width=1.8pt}]{pi}{qi}

  \CommonUnchangedPaths

\begin{scope}[on background layer]
  \draw[distinguished path]
    (v) .. controls (-0.55,3.55) and (-1.45,3.05) .. (pi);
\draw[distinguished path]
  (qi) to[bend left=12] (tb.north);
\end{scope}

  \node[figure note] at (-1.52,2.12) {$e_i$};
  \node[figure note,text=blue!65!black] at (1.95,4.48)
    {$\mathcal{P}$, $\ |\mathcal{P}|=\kappa$};
\end{essgraph}
\caption{A path family of size $\conn$ in $G$ avoiding $t_i$.}
\label{fig:critical-edge-rerouting-a}
\end{subfigure}
\hfill
\raisebox{2.0cm}{\Large$\Longrightarrow$}
\hfill
\begin{subfigure}[b]{0.43\textwidth}
\centering
\begin{essgraph}[scale=0.88, >=stealth, node font={\small}]
  \CriticalReroutingNodes

  \essedge[extra={->,draw=orange!85!black,line width=1.8pt}]{pi}{ti}

  \CommonUnchangedPaths

  \begin{scope}[on background layer]
    \draw[distinguished path]
      (v) .. controls (-0.55,3.55) and (-1.45,3.05) .. (pi);
  \end{scope}

  \node[figure note,text=blue!65!black] at (1.95,4.48)
    {$\mathcal{P}'$, $\ |\mathcal{P}'|=\kappa$};
\end{essgraph}
\caption{Redirecting the distinguished path to $t_i$.}
\label{fig:critical-edge-rerouting-b}
\end{subfigure}

\caption{
Illustration of the rerouting argument in \Cref{lem:critical-implies-assignable}. Since $e_i$ is critical for assigning $v$ to $t$, \Cref{lem:essential-connectivity} implies that deleting $e_i$ decreases $\connvg{v}{G}$ from $\kappa$ to $\kappa-1$; hence every path family of size $\kappa$ must use $e_i$. Now assume, for contradiction, that $t_i$ is not essential for $v$. Then there exists a $\kappa$-path family $\mathcal{P}$ avoiding $t_i$, as in \subref{fig:critical-edge-rerouting-a}, and one of its paths uses $e_i=(p_i,q_i)$. Because $t_i$ is unused, we may keep that path up to $p_i$ and reroute its suffix through the matching edge $(p_i,t_i)$, obtaining the path family $\mathcal{P}'$ shown in \subref{fig:critical-edge-rerouting-b}. This avoids $e_i$ entirely, so it lies in $G' = G \setminus e_i$, contradicting $\connvg{v}{G'}=\kappa-1$. Therefore $t_i$ is essential for $v$.
}
\label{fig:critical-edge-rerouting}
\endgroup
\end{figure}

Let $G'=G\setminus e_i$ and let $\kappa=\connvg{v}{G}$. Since $e_i$ is critical for assigning $v$ to $t$, the terminal $t$ is essential for $v$ in $G$ but not in $G'$. If the connectivity of $v$ were unchanged after deleting $e_i$, then \Cref{lem:essential-connectivity} would imply that $t$ remained essential for $v$ in $G'$, a contradiction. Thus, deleting $e_i$ decreases the connectivity of $v$. By the first part of \Cref{lem:essential-connectivity}, the connectivity decreases by exactly one, so $\connvg{v}{G'}=\kappa-1$.

Suppose, for a contradiction, that $t_i$ is not essential for $v$ in $G$. Then there is a family $\mathcal{P}$ of $\kappa$ vertex-disjoint paths from $v$ to distinct terminals that avoids $t_i$. Since $G'$ has connectivity only $\kappa-1$, the family $\mathcal{P}$ must contain $e_i=(p_i,q_i)$. We refer to \Cref{fig:critical-edge-rerouting} for a visual representation. Consider the path in $\mathcal{P}$ that uses $e_i$ (\Cref{fig:critical-edge-rerouting-a}). Up to the visit to $p_i$, keep this path unchanged, and then replace its remaining suffix by the edge $(p_i,t_i)$ (\Cref{fig:critical-edge-rerouting-b}). This is valid because $t_i$ is not used by any path in $\mathcal{P}$. The resulting family still consists of $\kappa$ vertex-disjoint paths to distinct terminals, but it does not use $e_i$ and therefore lies entirely in $G'$. Hence $\connvg{v}{G'}\ge\kappa$, contradicting the earlier conclusion that $\connvg{v}{G'}=\kappa-1$. Therefore, $t_i$ must be essential for $v$ in $G$.
\end{proof}

The above lemmas give us the ability to shift the assignments along the cycle in the reassignment graph $\caR$. Now all that remains is to prove that this shift operation strictly decreases a nonnegative potential function. We define this potential function next. 

\begin{definition}[Criticality cost] \label{def:critical}
Consider a graph $G$ with terminal set $T=\{t_1,\dots,t_k\}$, a matching $M=\{(p_i,t_i):i\in[k]\}$ from pre-terminals to terminals, and, for every $i\in[k]$, an edge $e_i=(p_i,q_i)$ distinct from $(p_i,t_i)$. For a vertex $v$ and a terminal $t$, define $\crt_v(t)$ as the number of edges among $e_1,\dots,e_k$ that are critical for assigning $v$ to $t$.
\end{definition}

\begin{definition}[Assignment potential] \label{def:assignment-potential}
Fix a graph $G$, terminal set $T$, and criticality cost $\crt$ as in \Cref{def:critical}. For an assignment $\assgn$, define its potential by
\begin{equation*}
\Phi(\assgn)=\sum_{v\in V\setminus T}\crt_v(\assgn(v)).
\end{equation*}
\end{definition}

We arrive at the key lemma of this section, which shows the shift operation in \Cref{alg:shift-assignment} strictly decreases the potential $\Phi$. In other words, the shift decreases the total number of pairs $(v,e_i)$ such that deleting $e_i$ would make the terminal assigned to $v$ no longer essential for $v$.

\begin{lemma}[Cycle shift decreases the potential] \label{lem:cycle-shift-decreases-potential}
Let $(G, T, c, \assgn)$ satisfy the \essentialassignmentcondition. Let $M=\{(p_1,t_1),\dots,(p_k,t_k)\}$ be a matching from pre-terminals to terminals. For every $i\in[k]$, let $e_i=(p_i,q_i)$ be an edge distinct from $(p_i,t_i)$, and assume that $e_i$ is critical for assigning a vertex $v_i$ to $\assgn(v_i)$. Let $\caR$ be the reassignment graph on $T$ containing, for every $i\in[k]$, the edge $\assgn(v_i)\to t_i$ labeled by $v_i$. Then $\caR$ contains a directed cycle $\caC$. Let $\assgn'$ be the assignment obtained by shifting $\assgn$ along $\caC$. Then $(G,T,c,\assgn')$ satisfies the \essentialassignmentcondition and $\Phi(\assgn')<\Phi(\assgn)$.
\end{lemma}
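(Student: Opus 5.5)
The argument has three parts: show that $\caR$ has a cycle, show that the shift preserves the condition, and show that the potential strictly drops. The last part carries almost all the weight.

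\emph{Existence of a cycle.} Every $t_i$ is the head of exactly one edge of $\caR$, namely the one labeled $v_i$. So every vertex of the finite graph $\caR$ has in-degree one. Starting anywhere and walking backwards along incoming edges, some terminal must repeat, and this gives a directed cycle $\caC=(t_{i_1},\dots,t_{i_r})$. If $\assgn(v_i)=t_i$, the cycle is a self-loop; I will check that the argument below also covers this degenerate case.

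Along the cycle the labels $v_{i_j}$ are pairwise distinct. Indeed, a vertex $v$ labeling two cycle edges would have both edges leave $\assgn(v)$, but in a directed cycle each vertex has out-degree one. Hence $\assgn'$ is well defined.

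\emph{The shift preserves the \essentialassignmentcondition.}
\begin{itemize}
\item Capacities: each terminal on $\caC$ loses exactly the vertex labeling its outgoing cycle edge and gains exactly the vertex labeling its incoming cycle edge. Other terminals are untouched, so $|\assgn'^{-1}(t)|=\cp_t$ for all $t$.
\item Essentiality: a reassigned vertex $v_{i_j}$ moves to $t_{i_j}$. Since $e_{i_j}$ is critical for $(v_{i_j},\assgn(v_{i_j}))$, \Cref{lem:critical-implies-assignable} shows that $t_{i_j}$ is essential for $v_{i_j}$. All other vertices keep their old, essential terminals.
\end{itemize}

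\emph{Strict decrease of $\Phi$.} Only cycle vertices change their contribution, so it suffices to prove $\crt_{v_i}(t_i)<\crt_{v_i}(\assgn(v_i))$ for each label $v_i$ on $\caC$. Summing then gives $\Phi(\assgn')<\Phi(\assgn)$, because $r\ge1$. I would split this into two claims.

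First, $e_i$ is not critical for $(v_i,t_i)$. Suppose it were. Then, as in the proof of \Cref{lem:critical-implies-assignable}, $\connvg{v_i}{G\setminus e_i}=\kappa-1$, where $\kappa=\connvg{v_i}{G}$. Let $C$ be a minimum cut in $G\setminus e_i$ separating $v_i$ from $T$, of size $\kappa-1$. It is no longer a cut in $G$, so $e_i$ goes from $\rcutc{C}$ to $\lcutc{C}$. Thus $p_i\in\rcutc{C}$, and since $(p_i,t_i)$ is still present and $t_i\in\lcutc{C}\cup\scutc{C}$, we get $t_i\in\scutc{C}$. Hence $t_i$ is essential for $v_i$ in $G\setminus e_i$ by \Cref{lem:essential-terminal-tightest-cut} and \Cref{obs:tightest-min-cut}, contradicting criticality. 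This also handles the self-loop case, where $t_i=\assgn(v_i)$ would make $e_i$ both critical and non-critical for the same pair; so self-loops cannot occur.

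Second, the transfer claim: for $j\ne i$, if $e_j$ is critical for $(v_i,t_i)$, then $e_j$ is critical for $(v_i,\assgn(v_i))$. I expect this to be the main obstacle. I would argue by contradiction, following the overview. Write $G_i=G\setminus e_i$, $G_j=G\setminus e_j$, $H=G\setminus\{e_i,e_j\}$, and $t=\assgn(v_i)$.

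\begin{enumerate}
\item Connectivities. Criticality of $e_i$ for $(v_i,t)$ and of $e_j$ for $(v_i,t_i)$ both force a drop to $\kappa-1$ (\Cref{lem:essential-connectivity}). In $H$ the connectivity is at most $\kappa-1$, and at least $\kappa-1$ because every $\kappa$-family in $G$ uses both edges. So all three graphs have connectivity $\kappa-1$.
\item Placing the edges. Let $C_i$ and $C_j$ be the tightest minimum cuts in $G_i$ and $G_j$. Both are minimum cuts of $H$, so $C_i\cap C_j$ and $C_i\cup C_j$ are minimum cuts of $H$ by \Cref{cor:mincut-closure}. As above, $e_i$ crosses $C_i$ from $R$ to $L$, so $p_i\in R_{C_i}$ and $t_i\in S_{C_i}$. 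Likewise $e_j$ crosses $C_j$, with $p_j\in R_{C_j}$ and $t_j\in S_{C_j}$.
\item Using the assumptions. We have $t\notin S_{C_i}$ (since $e_i$ is critical for $(v_i,t)$) and $t_i\notin S_{C_j}$ (since $e_j$ is critical for $(v_i,t_i)$). For contradiction, $e_j$ is not critical for $(v_i,t)$, so $t\in S_{C_j}$.
\item Goal. The aim is to show that neither edge goes from $R$ to $L$ in $C_i\cup C_j$; equivalently, $q_i,q_j\notin L_{C_i}\cup L_{C_j}$ whenever the tail lies in $R_{C_i}\cap R_{C_j}$. Then $C_i\cup C_j$ is a cut in $G$ of size $\kappa-1$, a contradiction.
\end{enumerate}

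The delicate part is step 4: a case analysis via the tables in \Cref{tab:cut-intersection-union}, using tightest-cut containment (\Cref{obs:tightest-min-cut}) and the pre-terminal edges $(p_i,t_i)$, $(p_j,t_j)$. If the union cut fails, I would try the intersection cut in $G$ restricted to where the membership constraints on $t$ and $t_i$ pin the endpoints.
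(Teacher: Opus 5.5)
Your outer structure is the same as the paper's: a cycle from in-degree one, preservation via \Cref{lem:critical-implies-assignable}, and strict decrease via ``own edge not critical'' plus ``transfer of criticality.'' Your proof that $e_i$ is not critical for $(v_i,t_i)$, which also rules out self-loops, is correct. The gap is in the transfer claim, which is the real content of the lemma, and it has two holes.

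\textbf{Step 1 is not justified.} Knowing that every $\kappa$-family in $G$ uses both $e_i$ and $e_j$ does not give $\connvg{v_i}{H}\ge\kappa-1$. The two edges could lie on different paths, and then deleting both would drop the connectivity to $\kappa-2$. You genuinely need $\kappa-1$ here. Otherwise $C_i$ and $C_j$ need not be minimum cuts of $H$, and the union and intersection are not pinned to size $\kappa-1$. The fix is a rerouting argument. Since $t_i$ is not essential for $v_i$ in $G_j$, the graph $G_j$ has a $(\kappa-1)$-family avoiding $t_i$. If some path of it uses $e_i$, truncate that path at $p_i$ and finish it with $(p_i,t_i)$, which survives in $G_j$ because $e_j$ leaves $p_j\neq p_i$. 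The result is a $(\kappa-1)$-family in $H$.

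\textbf{Step 4 is missing, and this is the more serious gap.} You state it only as a goal. The ``case analysis via the tables'' and the fallback to the intersection cut are not an argument. Your own reduction shows what is needed: since $q_i\in L_{C_i}$ and $q_j\in L_{C_j}$, you must prove $p_i\notin R_{C_j}$ and $p_j\notin R_{C_i}$. Nothing in steps 1--3 yields this.

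The missing idea is to use the intersection cut to locate the heads $q_i,q_j$, not as the final contradiction.
\begin{enumerate}
\item From your four terminal placements, $t,t_i\in S_{C_i\cap C_j}$, and $C_i\cap C_j$ is a minimum cut of $H$ of size $\kappa-1$.
\item Suppose $C_i\cap C_j$ remained a cut after adding $e_j$, that is, in $G_i$. It would then be a minimum cut of $G_i$ with $t$ in its separator. By \Cref{obs:tightest-min-cut} and \Cref{lem:essential-terminal-tightest-cut}, $t$ would be essential for $v_i$ in $G_i$, a contradiction.
\item Since $e_j$ is the only edge of $G_i$ missing from $H$, $e_j$ must cross $C_i\cap C_j$ from right to left. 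Hence $q_j\in L_{C_i\cap C_j}=L_{C_i}\cap L_{C_j}$.
\item Symmetrically, using $t_i$ and $G_j$, you get $q_i\in L_{C_i}\cap L_{C_j}$.
\item Now $q_j\in L_{C_i}$, and $e_j$ is present in $G_i$, where $C_i$ is a valid cut. This forces $p_j\notin R_{C_i}$. Likewise $p_i\notin R_{C_j}$.
\item So neither edge leaves $R_{C_i\cup C_j}=R_{C_i}\cap R_{C_j}$. Therefore $C_i\cup C_j$ is a cut of $G$ of size $\kappa-1<\connvg{v_i}{G}$, which is the contradiction you were aiming for.
\end{enumerate}
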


Having the above lemma, we can prove the earlier \Cref{lem:non-critical-edge-exists}.

\begin{proof}[Proof of \Cref{lem:non-critical-edge-exists}]
By \Cref{lem:terminal-leaf-matching}, there exists a matching $M=\{(p_1,t_1),\dots,(p_k,t_k)\}$ from pre-terminals to terminals. Since every pre-terminal has out-degree at least two, for every $i\in[k]$, we can choose an edge $e_i=(p_i,q_i)$ distinct from $(p_i,t_i)$.

Consider an iteration of \textsc{ShiftAssignment}. If some edge $e_i$ is not critical for assigning any vertex $v$ to $\assgn(v)$, then removing $e_i$ preserves the essentiality of $\assgn(v)$ for every vertex $v$. Since removing an edge does not affect the capacity constraints, $(G\setminus e_i,T,c,\assgn)$ satisfies the \essentialassignmentcondition.

Otherwise, for every $i\in[k]$, there exists a vertex $v_i$ such that $e_i$ is critical for assigning $v_i$ to $\assgn(v_i)$. The subroutine constructs the reassignment graph and finds a directed cycle in Lines~\ref{alg:shift-assignment:line:r-construct}--\ref{alg:shift-assignment:line:r-construct-end}. By \Cref{lem:cycle-shift-decreases-potential}, shifting the assignment along this cycle preserves the \essentialassignmentcondition and strictly decreases $\Phi(\assgn)$. Since $0\le\Phi(\assgn)\le k|V\setminus T|$, the subroutine performs at most $k|V\setminus T|$ such shifts. Therefore, \textsc{ShiftAssignment} eventually finds an edge that is not critical for the assignment of any vertex, and this edge can be removed while maintaining the \essentialassignmentcondition.
\end{proof}

\subsection[Proof of Lemma \ref*{lem:cycle-shift-decreases-potential}]{Proof of \Cref{lem:cycle-shift-decreases-potential}}

It remains to prove \Cref{lem:cycle-shift-decreases-potential}. Consider a vertex $v_i$ whose assignment is shifted along the cycle, from $\assgn(v_i)$ to $t_i=\assgn'(v_i)$. We will show that $\crt_{v_i}(t_i)<\crt_{v_i}(\assgn(v_i))$. First, \Cref{lem:critical-implies-assignable} shows that $t_i$ is essential for $v_i$, so the shift preserves the essentiality of the new assignment. Next, \Cref{lem:own-edge-not-critical} shows that $e_i$ is not critical for assigning $v_i$ to $t_i$, although $e_i$ is critical for assigning $v_i$ to $\assgn(v_i)$ by construction. Finally, \Cref{lem:critical-transfer} shows that every other edge $e_j$ that is critical for assigning $v_i$ to $t_i$ is also critical for assigning $v_i$ to $\assgn(v_i)$. Thus, the latter set of critical edges strictly contains the former, which yields the desired decrease in potential. \Cref{table:cycle-shift-decreases-potential-roadmap} summarizes the roadmap of these facts.

\begin{table}[htbp]
    \centering
    \renewcommand{\arraystretch}{1.2}
	\begin{tabular}{@{}
        >{\centering\arraybackslash}p{0.64\linewidth}
        @{\quad}
        >{\centering\arraybackslash}p{0.24\linewidth}
        @{}}        \hline
        \textbf{Fact} & \textbf{Established by} \\
        \hline
        $t_i=\assgn'(v_i)$ is essential for $v_i$. & \Cref{lem:critical-implies-assignable} \\
        $e_i$ is not critical for assigning $v_i$ to $t_i$. & \Cref{lem:own-edge-not-critical} \\
        Every other edge critical for assigning $v_i$ to $t_i$ is also critical for assigning $v_i$ to $\assgn(v_i)$. & \Cref{lem:critical-transfer} \\
        \hline
    \end{tabular}
    \caption{The three facts establishing $\crt_{v_i}(t_i)<\crt_{v_i}(\assgn(v_i))$ for a vertex $v_i$ shifted along the cycle.}
	\label{table:cycle-shift-decreases-potential-roadmap}
\end{table}

\begin{lemma} \label{lem:own-edge-not-critical}
Let $G=(V,E)$ be a directed graph with terminal set $T$. Let $p_i\in\PT(G,T)$ be a pre-terminal and $t_i\in T$ be a terminal corresponding to $p_i$, i.e., $(p_i,t_i)\in E$. Let $e_i=(p_i,q_i)$ be an edge distinct from $(p_i,t_i)$. Then $e_i$ is not critical for assigning any vertex $v\in V\setminus T$ to $t_i$.
\end{lemma}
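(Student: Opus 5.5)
We argue by contradiction. Suppose $e_i$ is critical for assigning some $v\in V\setminus T$ to $t_i$. Write $G'=G\setminus e_i$ and $\kappa=\connvg{v}{G}$. Then $t_i$ is essential for $v$ in $G$ but not in $G'$.

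The first step is to pin down the connectivity of $v$ in $G'$. By \Cref{lem:essential-connectivity}, if $\connvg{v}{G'}=\kappa$, then $t_i$ would remain essential in $G'$. So deleting $e_i$ must drop the connectivity, and by the first part of the same lemma, $\connvg{v}{G'}=\kappa-1$.

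The second step uses non-essentiality in $G'$. Since $t_i$ is not essential for $v$ in $G'$, there is a family $\mathcal{P}$ of $\kappa-1$ vertex-disjoint paths from $v$ to distinct terminals in $G'$ that avoids $t_i$.

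The goal is now to produce $\kappa$ vertex-disjoint paths from $v$ in $G'$, which contradicts $\connvg{v}{G'}=\kappa-1$. The natural extra route is the edge $(p_i,t_i)$, which survives in $G'$ and leads to the unused terminal $t_i$. If $p_i$ were reachable from $v$ disjointly from $\mathcal{P}$, we could append this edge directly. But nothing in $\mathcal{P}$ itself guarantees this, so instead we compare with a maximum family in $G$.

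The third step does this comparison with cuts, which is cleaner. Let $C$ be the tightest minimum cut separating $v$ from $T$ in $G'$; it has size $\kappa-1$. Since $t_i$ is not essential in $G'$, \Cref{lem:essential-terminal-tightest-cut} gives $t_i\in\lcutc{C}$. Now view $C$ as a partition in $G$. Its separator has size $\kappa-1<\kappa$, so $C$ cannot be a valid cut in $G$. The only edge added back is $e_i=(p_i,q_i)$, so this edge must go from $\rcutc{C}$ to $\lcutc{C}$: that is, $p_i\in\rcutc{C}$ and $q_i\in\lcutc{C}$. But then the edge $(p_i,t_i)$, which is present in $G'$, goes from $p_i\in\rcutc{C}$ to $t_i\in\lcutc{C}$. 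This contradicts $C$ being a cut in $G'$. Hence $e_i$ is not critical for assigning $v$ to $t_i$.

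The main point needing care is the use of the tightest cut. We need $t_i\in\lcutc{C}$, not merely $t_i\notin\scutc{C}$. This holds because terminals never lie in $\rcutc{C}$, and \Cref{lem:essential-terminal-tightest-cut} applied in $G'$ excludes $t_i$ from the separator. We should also note that $q_i$ may itself be a terminal; this does no harm, since the argument only uses the sides of $p_i$ and $t_i$. Finally, the case $p_i=v$ is already handled, since then $p_i=v\in\rcutc{C}$ anyway.
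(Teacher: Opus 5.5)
Your proof is correct and is essentially the paper's argument: both show $\connvg{v}{G'}=\kappa-1$, note that a minimum cut of $G'$ must have $e_i$ crossing it from right to left, and then use the surviving edge $(p_i,t_i)$ to pin down where $t_i$ lies. The only difference is the order of the steps. The paper takes an arbitrary minimum cut, deduces $t_i\in\scutc{C}$, and transfers this to the tightest cut via \Cref{obs:tightest-min-cut} to get essentiality; you start from the tightest cut, so $t_i\in\lcutc{C}$ and the contradiction is immediate. Your path-family second step is never used and can simply be dropped.
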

\begin{proof}
\begin{figure}[htbp]
    \centering
    \renewcommand{\EssCutPadding}{1pt}
    \begin{essgraph}[scale=1.08,>=stealth]
        \essvertex{qi}{(1.20,-1.55)}{$q_i$}
        \essterminal{ti}{(0,0)}{$t_i$}
        \essvertex[colors={ti/1}]{v}{(-1.20,1.55)}{$v$}
        \essvertex{pi}{(1.20,1.55)}{$p_i$}

        \essvertex[extra={opacity=0}]{l1}{(-2.75,-1.85)}{}
        \essvertex[extra={opacity=0}]{l2}{(-1.35,-1.85)}{}
        \essvertex[extra={opacity=0}]{l3}{(0,-1.85)}{}
        \essvertex[extra={opacity=0}]{l4}{(2.75,-1.85)}{}
        \essvertex[extra={opacity=0}]{l5}{(-2.75,-1.25)}{}
        \essvertex[extra={opacity=0}]{l6}{(2.75,-1.25)}{}

        \essvertex[extra={opacity=0}]{s1}{(-2.75,-0.24)}{}
        \essvertex[extra={opacity=0}]{s2}{(2.75,-0.24)}{}
        \essvertex[extra={opacity=0}]{s3}{(-2.75,0.24)}{}
        \essvertex[extra={opacity=0}]{s4}{(2.75,0.24)}{}

        \essvertex[extra={opacity=0}]{r1}{(-2.75,1.25)}{}
        \essvertex[extra={opacity=0}]{r2}{(2.75,1.25)}{}
        \essvertex[extra={opacity=0}]{r3}{(-2.75,1.85)}{}
        \essvertex[extra={opacity=0}]{r4}{(0,1.85)}{}
        \essvertex[extra={opacity=0}]{r5}{(2.75,1.85)}{}

        \esscut[
            name=C,
            mode=full,
            left={qi,l1,l2,l3,l4,l5,l6},
            separator={ti,s1,s2,s3,s4},
            right={v,pi,r1,r2,r3,r4,r5}
        ]

        \begin{scope}[on background layer]
            \draw[->,dashed,draw=green!30!black,opacity=0.72,
                  line width=\EssEdgeLineWidth]
                (pi) to[bend left=16]
                node[pos=0.43,right=3pt,inner sep=1pt,
                     font=\small,text=green!25!black]
                    {$e_i\notin G'$}
                (qi);

            \draw[->,draw=red!75!black,line width=\EssEdgeLineWidth]
                (pi) to[bend right=14] (ti);
        \end{scope}

        \node[inner sep=1pt,font=\small,text=red!75!black]
            at (-0.62,0.78) {$(p_i,t_i)\in G'$};

        \node[anchor=east,font=\small\itshape] at (-3.18,-1.55) {$\lcut$};
        \node[anchor=east,font=\small\itshape] at (-3.18,0) {$\scut$};
        \node[anchor=east,font=\small\itshape] at (-3.18,1.55) {$\rcut$};

    \end{essgraph}
    \caption{Illustration for \Cref{lem:own-edge-not-critical}. The assumption
    that $e_i$ is critical for assigning $v$ to $t_i$ forces
    $p_i\in\rcut$ and $q_i\in\lcut$. The edge $(p_i,t_i)$ then forces
    $t_i\in\scut$. By \Cref{obs:tightest-min-cut}, $t_i$ also belongs to the
    separator of the tightest minimum cut in $G'$. Hence
    \Cref{lem:essential-terminal-tightest-cut} implies that $t_i$ is essential
    for $v$ in $G'$, a contradiction.}
    \label{fig:own-edge-not-critical}
\end{figure}

Suppose for contradiction that $e_i$ is critical for assigning some non-terminal vertex $v \in V \setminus T$ to $t_i$. Let $G' = G \setminus e_i$ and let $\conn = \connvg{v}{G}$. Because $e_i$ is critical for assigning $v$ to $t_i$, the terminal $t_i$ is essential for $v$ in $G$ but not in $G'$. Thus, \Cref{lem:essential-connectivity} gives $\connvg{v}{G'} = \conn - 1$.

Now, let $\cut = (\lcut, \scut, \rcut)$ be a minimum cut separating $v$ from $T$ in $G'$, where $v \in \rcut$. The size of this cut is exactly $\connvg{v}{G'} = \conn - 1$. If we were to add the removed edge $e_i = (p_i, q_i)$ back to $G'$, this cut must become invalid, as the connectivity of $v$ in $G$ is $\conn$. For the edge $e_i$ to invalidate the cut, it must cross from $\rcut$ to $\lcut$. Therefore, we must have $p_i \in \rcut$ and $q_i \in \lcut$, as shown in \Cref{fig:own-edge-not-critical}.

Since $(p_i,t_i)$ is an edge of $G'$ and $\cut$ is valid in $G'$, we have $t_i \notin \lcut$. As $t_i$ is a terminal, this gives $t_i \in \scut$. By \Cref{obs:tightest-min-cut}, $t_i$ also belongs to the separator of the tightest minimum cut in $G'$. Hence \Cref{lem:essential-terminal-tightest-cut} implies that $t_i$ is essential for $v$ in $G'$, a contradiction.
\end{proof}

It remains to show that reassigning $v$ to $t_i$ does not introduce any new critical edge among $e_1,\dots,e_k$.

\begin{lemma}[Transfer of criticality] \label{lem:critical-transfer}
	Let $G=(V,E)$ be a directed graph with terminal set $T=\{t_1,\dots,t_k\}$. Let $M=\{(p_i,t_i):i\in[k]\}$ be a matching from pre-terminals to terminals, and for every $i\in[k]$, let $e_i=(p_i,q_i)$ be an edge distinct from $(p_i,t_i)$. Let $v\in V\setminus T$, let $t\in T$, and let $i\in[k]$. Assume that $e_i$ is critical for assigning $v$ to $t$. Then every edge $e_j$ that is critical for assigning $v$ to $t_i$ is also critical for assigning $v$ to $t$.
\end{lemma}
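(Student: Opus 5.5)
The plan is a proof by contradiction that follows the cut sketch from the overview. Let $\kappa=\connvg{v}{G}$, and suppose some $e_j$ is critical for assigning $v$ to $t_i$ but not for assigning $v$ to $t$. By \Cref{lem:own-edge-not-critical}, $e_i$ is never critical for assigning a vertex to $t_i$, so $j\neq i$. Set $G_i=G\setminus e_i$, $G_j=G\setminus e_j$ and $H=G\setminus\{e_i,e_j\}$.

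\textbf{Step 1 (connectivities).} I first show that $v$ has terminal connectivity $\kappa-1$ in $G_i$, $G_j$ and $H$. The drops in $G_i$ and $G_j$ come from criticality combined with \Cref{lem:essential-connectivity}. For $H$ I use rerouting.
\begin{itemize}
\item Because $e_j$ is critical for $(v,t_i)$, the terminal $t_i$ is not essential for $v$ in $G_j$. So $G_j$ contains a family of $\kappa-1$ vertex-disjoint paths avoiding $t_i$.
\item If this family uses $e_i$, I cut that path at $p_i$ and finish it with $(p_i,t_i)$, exactly as in \Cref{lem:critical-implies-assignable}. The result is a family of size $\kappa-1$ inside $H$.
\end{itemize}

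\textbf{Step 2 (locating the endpoints).} Let $C_i$ and $C_j$ be the tightest minimum cuts separating $v$ from $T$ in $G_i$ and $G_j$. Each has size $\kappa-1$ and remains valid in the subgraph $H$, so by Step 1 both are minimum cuts in $H$.

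Since $G$ has connectivity $\kappa$, adding $e_i$ back must invalidate $C_i$. Hence $p_i\in\rcutc{C_i}$ and $q_i\in\lcutc{C_i}$; likewise $p_j\in\rcutc{C_j}$ and $q_j\in\lcutc{C_j}$.

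Next I place the terminals, using \Cref{lem:essential-terminal-tightest-cut}: in a tightest cut, a terminal is essential exactly when it lies in the separator, and otherwise it lies on the left side.
\begin{itemize}
\item $t$ is not essential in $G_i$, since $e_i$ is critical for $(v,t)$. So $t\in\lcutc{C_i}$.
\item $t$ is still essential in $G_j$, since $e_j$ is not critical for $(v,t)$. So $t\in\scutc{C_j}$.
\item $t_i$ is not essential in $G_j$. So $t_i\in\lcutc{C_j}$.
\item By \Cref{lem:critical-implies-assignable} and \Cref{lem:own-edge-not-critical}, $t_i$ is essential in $G_i$. So $t_i\in\scutc{C_i}$.
\end{itemize}
The edge $(p_i,t_i)$ lies in $G_j$ and $t_i\in\lcutc{C_j}$, so $p_i\notin\rcutc{C_j}$. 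By \Cref{cor:mincut-closure} applied in $H$, both $U=C_i\cup C_j$ and $I=C_i\cap C_j$ are minimum cuts of size $\kappa-1$ in $H$.

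\textbf{Step 3 (contradiction).} I want to show that neither $e_i$ nor $e_j$ crosses $U$ from right to left. Then $U$ is a cut in $G$ of size $\kappa-1$, contradicting \Cref{lem:mengers-theorem}.

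For $e_i$ this is already settled. We have $q_i\in\lcutc{C_i}\subseteq\lcutc{U}$, while $p_i\notin\rcutc{C_j}$ gives $p_i\notin\rcutc{U}=\rcutc{C_i}\cap\rcutc{C_j}$.

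For $e_j$ we have $q_j\in\lcutc{U}$, so what is needed is $p_j\notin\rcutc{C_i}$. This is the step I expect to be the main obstacle. My first attempt is a second contradiction assuming $p_j\in\rcutc{C_i}$. Under that assumption $e_j$ crosses $I$ while $e_i$ does not, since $q_i\notin\lcutc{C_j}$ follows from $p_i$'s position. That would make $I$ a minimum cut of $G_i$. Tightness of $C_i$ in $G_i$ then forces $\lcutc{C_i}\subseteq\lcutc{I}\subseteq\lcutc{C_j}$, whereas $t\in\lcutc{C_i}$ and $t\in\scutc{C_j}$. That is the contradiction.

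The delicate point is confirming that $e_i$ really does not cross $I$ in that case. If this sub-case analysis fails, I would instead run the symmetric argument with the tightness of $C_j$ in $G_j$, again using the placements of $t$ and $t_i$ from Step 2.
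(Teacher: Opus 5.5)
Steps~1 and~2 are correct, and so is your treatment of $e_i$ in Step~3. Deriving $p_i\notin\rcutc{C_j}$ directly from the matching edge $(p_i,t_i)\in G_j$ and $t_i\in\lcutc{C_j}$ is a neat shortcut; the paper instead first proves $q_i\in\lcutc{C_j}$. The gap is in the $e_j$ half, where your crossing analysis of $I=C_i\cap C_j$ is backwards. If $p_j\in\rcutc{C_i}$, then validity of $C_i$ in $G_i$, which contains $e_j$, forces $q_j\notin\lcutc{C_i}\supseteq\lcutc{I}$. So $e_j$ does \emph{not} cross $I$, and that alone suffices. Since $G_i=H\cup\{e_j\}$, the cut $I$ is valid in $G_i$ and has size $\kappa-1$, so it is a minimum cut there. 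Tightness of $C_i$ then gives $\lcutc{C_i}\subseteq\lcutc{I}\subseteq\lcutc{C_j}$, contradicting $t\in\lcutc{C_i}\cap\scutc{C_j}$. Whether $e_i$ crosses $I$ is irrelevant, because $e_i\notin G_i$. As you wrote it (``$e_j$ crosses $I$ while $e_i$ does not''), those premises would make $I$ valid in $G_j$, not in $G_i$, so your conclusion does not follow from them.

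Your ``delicate point'' and your fallback are therefore aimed at the wrong edge, and both would fail. The inference $q_i\notin\lcutc{C_j}$ from $p_i\in\lcutc{C_j}\cup\scutc{C_j}$ is invalid, since a cut forbids only edges from the right side to the left side. Worse, $e_i$ always crosses $I$. The terminal $t_i$ lies in $\scutc{I}$, so if $I$ were valid in $G_j$ it would be a minimum cut there. Then $t_i$ would be essential for $v$ in $G_j$ by \Cref{obs:tightest-min-cut} and \Cref{lem:essential-terminal-tightest-cut}, which it is not. So the symmetric argument via tightness of $C_j$ in $G_j$ can never apply.

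With the one-line correction above, your proof is sound and essentially the paper's. The paper shows that $I$ is invalid in both $G_i$ and $G_j$, because $t,t_i\in\scutc{I}$. It deduces $q_i,q_j\in\lcutc{C_i}\cap\lcutc{C_j}$, and from this obtains $p_j\notin\rcutc{C_i}$ and $p_i\notin\rcutc{C_j}$. Hence $C_i\cup C_j$ remains a cut of size $\kappa-1$ in $G$, giving the contradiction.
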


\begin{proof}
\begin{figure}[htbp]
\centering
\begingroup
\newcommand{\panelcaption}[1]{%
  \par\vspace{1pt}\begin{minipage}{5.1cm}\centering\small #1\end{minipage}%
}
\resizebox{\textwidth}{!}{%
\begin{minipage}{16.4cm}
\centering

\begin{minipage}[t]{5.25cm}
\centering
\begin{essgraph}[scale=0.82,>=stealth,node font={\small}]
  \essterminal{ti}{(-2.0,0)}{$t_i$}
  \essterminal{tj}{( 2.0,0)}{$t_j$}
  \essterminal{tt}{( 0.0,0)}{$t$}

  \essvertex{qi}{(-0.95,1.05)}{$q_i$}
  \essvertex{qj}{( 0.95,1.05)}{$q_j$}

  \essvertex{pi}{(-2.0,2.05)}{$p_i$}
  \essvertex{pj}{( 2.0,2.05)}{$p_j$}

  \essvertex[colors={ti/0.5,tt/0.5}]{v}{(0,3.45)}{$v$}

  \essedge[extra={->}]{pi}{ti}
  \essedge[extra={->}]{pj}{tj}
  \essedge[extra={->}]{pi}{qi}
  \essedge[extra={->}]{pj}{qj}

  \node[font=\scriptsize] at (-1.35,1.68) {$e_i$};
  \node[font=\scriptsize] at ( 1.35,1.68) {$e_j$};
\end{essgraph}
\phantomsubcaption\label{fig:transfer-criticality-G}
\panelcaption{\textbf{(a)} The original graph $G$, where both $t_i$ and $t$ are essential for $v$, and edges $e_i, e_j$ are present.}
\end{minipage}
\hfill
\begin{minipage}[t]{5.25cm}
\centering
\begin{essgraph}[scale=0.82,>=stealth,node font={\small}]
  \essterminal{ti}{(-2.0,0)}{$t_i$}
  \essterminal{tj}{( 2.0,0)}{$t_j$}
  \essterminal{tt}{( 0.0,0)}{$t$}

  \essvertex{qi}{(-0.95,1.05)}{$q_i$}
  \essvertex{qj}{( 0.95,1.05)}{$q_j$}

  \essvertex{pi}{(-2.0,2.05)}{$p_i$}
  \essvertex{pj}{( 2.0,2.05)}{$p_j$}

  \essvertex[colors={ti/1}]{v}{(0,3.45)}{$v$}

  \essedge[extra={->}]{pi}{ti}
  \essedge[extra={->}]{pj}{tj}
  \essedge[extra={->}]{pj}{qj}

  \node[font=\scriptsize] at (1.35,1.68) {$e_j$};
\end{essgraph}
\phantomsubcaption\label{fig:transfer-criticality-Gi}
\panelcaption{\textbf{(b)} $G_i = G \setminus e_i$. The terminal $t_i$ remains essential for $v$, whereas $t$ is no longer essential.}
\end{minipage}%
\hfill
\begin{minipage}[t]{5.25cm}
\centering
\begin{essgraph}[scale=0.82,>=stealth,node font={\small}]
  \essterminal{ti}{(-2.0,0)}{$t_i$}
  \essterminal{tj}{( 2.0,0)}{$t_j$}
  \essterminal{tt}{( 0.0,0)}{$t$}

  \essvertex{qi}{(-0.95,1.05)}{$q_i$}
  \essvertex{qj}{( 0.95,1.05)}{$q_j$}

  \essvertex{pi}{(-2.0,2.05)}{$p_i$}
  \essvertex{pj}{( 2.0,2.05)}{$p_j$}

  \essvertex[colors={tt/1}]{v}{(0,3.45)}{$v$}

  \essedge[extra={->}]{pi}{ti}
  \essedge[extra={->}]{pj}{tj}
  \essedge[extra={->}]{pi}{qi}

  \node[font=\scriptsize] at (-1.35,1.68) {$e_i$};
\end{essgraph}
\phantomsubcaption\label{fig:transfer-criticality-Gj}
\panelcaption{\textbf{(c)} $G_j = G \setminus e_j$. The terminal $t$ remains essential for $v$, whereas $t_i$ is no longer essential.}
\end{minipage}

\end{minipage}
}%
\endgroup
\caption{
  Overview of graph configurations $G$, $G_i$, and $G_j$ used in the proof of \Cref{lem:critical-transfer}, illustrating how edge removals affect the essentiality of terminals $t$ and $t_i$ for vertex $v$.
}
\label{fig:transfer-criticality-configuration}
\end{figure}
We prove this lemma by contradiction. Suppose that $e_j$ is not critical for assigning $v$ to $t$, but it is critical for assigning $v$ to $t_i$. By \Cref{lem:critical-implies-assignable}, $t_i$ is essential for $v$. Additionally, by \Cref{lem:own-edge-not-critical}, we know that $e_i$ is not critical for assigning $v$ to $t_i$. Since $e_j$ is critical for that assignment by our initial assumption, it follows that $e_i \neq e_j$.

Let $\kappa=\connvg{v}{G}$, $G_i=G\setminus e_i$, $G_j=G\setminus e_j$, and $H=G\setminus\{e_i,e_j\}$. Based on our definitions and the assumptions made for the sake of contradiction, we can establish the following four facts regarding the connectivity and essentiality of terminals in $G_i$ and $G_j$:

\begin{itemize}
   \item \textit{$\connvg{v}{G_i} = \kappa - 1$ and $t$ is not essential for $v$ in $G_i$:} Due to the criticality of $e_i$ for the assignment of $v$ to $t$, removing $e_i$ implies $t$ is no longer essential for $v$. By \Cref{lem:essential-connectivity}, because the set of essential terminals cannot shrink unless the connectivity drops, the connectivity of $v$ must strictly reduce by one (\Cref{fig:transfer-criticality-Gi}).
    
    \item \textit{$\connvg{v}{G_j} = \kappa - 1$ and $t_i$ is not essential for $v$ in $G_j$:} By our assumption for the sake of contradiction, $e_j$ is critical for assigning $v$ to $t_i$. This means that removing $e_j$ causes $t_i$ to no longer be essential for $v$. Again, by \Cref{lem:essential-connectivity}, losing an essential terminal means the connectivity of $v$ must strictly reduce by one (\Cref{fig:transfer-criticality-Gj}).
    \item \textit{$t_i$ is essential for $v$ in $G_i$:} By \Cref{lem:own-edge-not-critical}, we know that $e_i$ is not critical for the assignment of $v$ to $t_i$ in $G$. Therefore, removing $e_i$ to form $G_i$ ensures $t_i$ remains essential.
    
    \item \textit{$t$ is essential for $v$ in $G_j$:} This follows directly from our initial assumption (for contradiction) that $e_j$ is not critical for assigning $v$ to $t$. Thus, removing $e_j$ to form $G_j$ keeps $t$ essential.
\end{itemize}

\begin{claim} \label{claim:all-conns-equal}
The connectivity of $v$ is $\kappa - 1$ in each of $H$, $G_i$, and $G_j$:
\begin{equation*}
\connvg{v}{H} = \connvg{v}{G_i} = \connvg{v}{G_j} = \kappa - 1
\end{equation*}
\end{claim}
\begin{proof}
\begin{figure}[!htbp]
\centering
\begingroup
\tikzset{
  linkage path/.style={
    ->,
    draw=blue!65!black,
    line width=1.45pt,
    line cap=round,
    shorten <=1.5pt,
    shorten >=2pt
  },
  distinguished path/.style={
    linkage path,
    draw=orange!85!black,
    line width=1.8pt
  },
  distinguished segment/.style={
    draw=orange!85!black,
    line width=1.8pt,
    line cap=round,
    shorten <=1.5pt,
    shorten >=1.5pt
  },
  anonymous terminal/.style={
    circle,
    fill=black,
    draw=black,
    inner sep=1.5pt
  },
  path endpoint/.style={
    circle,
    fill=white,
    draw=black,
    line width=0.8pt,
    inner sep=1.35pt
  }
}
\resizebox{\textwidth}{!}{%
\begin{minipage}{18cm}
\centering

\begingroup

\newcommand{\CommonPanelNodes}{%

  \essterminal{ti}{(-2.0,-0.45)}{$t_i$}
  \essterminal{tj}{( 2.0,-0.45)}{$t_j$}
  \essterminal{tt}{( 0.0,-0.45)}{$t$}

  \node[anonymous terminal] (ta) at (-3.15,-0.45) {};
  \node[anonymous terminal] (tb) at (-1.05,-0.45) {};
  \node[anonymous terminal] (tc) at ( 1.05,-0.45) {};
  \node[anonymous terminal] (td) at ( 3.15,-0.45) {};

  \essvertex{qi}{(-0.95,1.05)}{$q_i$}
  \essvertex{qj}{( 0.95,1.05)}{$q_j$}

  \essvertex{pi}{(-2.0,2.05)}{$p_i$}
  \essvertex{pj}{( 2.0,2.05)}{$p_j$}

  \essvertex[colors={tt/1}]{v}{(0,3.45)}{$v$}

  \begin{scope}[on background layer]
    \node[
      fit=(ta)(ti)(tb)(tt)(tc)(tj)(td),
      fill=blue!7,
      draw=blue!55!black,
      dashed,
      rounded corners=7pt,
      line width=0.9pt,
      inner xsep=8pt,
      inner ysep=7pt,
      label={
        [font=\small\bfseries,text=blue!55!black]
        north west:$T$
      }
    ] (terminalset) {};
  \end{scope}

  \node[
    font=\scriptsize,
    fill=white,
    inner sep=1pt
  ] at (-1.27,1.76) {$e_i$};

  \node[
    font=\scriptsize,
    text=blue!65!black,
    anchor=south
  ] at (1.65,3.82)
    {the path family $\mathcal{P}$ of size $\vert{}\mathcal{P}\vert{} = \kappa - 1$};
}

\newcommand{\CommonGraphEdges}{%
  \essedge[extra={->}]{pj}{tj}
}

\newcommand{\CommonDisjointPaths}{%
  \begin{scope}[on background layer]

    \draw[linkage path]
      (v)
      .. controls (-1.85,3.50) and (-3.35,2.55) ..
      (ta);

    \draw[linkage path]
      (v)
      .. controls (0.30,2.65) and (0.20,1.10) ..
      (tt);

    \draw[linkage path]
      (v)
      .. controls (0.65,3.00) and (1.05,1.75) ..
      (qj)
      -- (tc);

    \draw[linkage path]
      (v)
      .. controls (1.10,3.45) and (1.70,2.70) ..
      (pj)
      -- (tj);

  \end{scope}
}

\begin{subfigure}[b]{0.44\textwidth}
\centering

\begin{essgraph}[
  scale=0.82,
  >=stealth,
  node font={\small}
]

  \CommonPanelNodes
  \CommonGraphEdges
  \essedge[extra={->}]{pi}{ti}
  \essedge[extra={->,draw=orange!85!black,line width=1.8pt}]{pi}{qi}
  \CommonDisjointPaths

  \begin{scope}[on background layer]
    \draw[distinguished segment]
      (v)
      .. controls (-0.55,3.05) and (-1.45,2.45) ..
      (pi);

    \draw[distinguished path]
      (qi)
      .. controls (-0.75,0.88) and (-0.80,0.65) ..
      (-1.02,0.50)
      .. controls (-1.24,0.35) and (-1.05,0.20) ..
      (-1.05,0.06)
      --
      (tb);
  \end{scope}

\end{essgraph}

\caption{The original set of vertex-disjoint paths in $G_j$, where one path (highlighted in orange) may traverse the edge $e_i$.}
\label{fig:path-before}
\end{subfigure}
\hfill
\raisebox{2.15cm}{\Large$\Longrightarrow$}
\hfill
\begin{subfigure}[b]{0.44\textwidth}
\centering

\begin{essgraph}[
  scale=0.82,
  >=stealth,
  node font={\small}
]

  \CommonPanelNodes
  \CommonGraphEdges
  \essedge[extra={->,draw=orange!85!black,line width=1.8pt}]{pi}{ti}
  \essedge[extra={->}]{pi}{qi}
  \CommonDisjointPaths

  \begin{scope}[on background layer]
    \draw[distinguished segment]
      (v)
      .. controls (-0.55,3.05) and (-1.45,2.45) ..
      (pi);
  \end{scope}

\end{essgraph}

\caption{The path is rerouted from $p_i$ directly to $t_i$ via the matching edge $(p_i, t_i)$, avoiding $e_i$.}
\label{fig:path-after}
\end{subfigure}

\endgroup
\end{minipage}
}%
\endgroup
\caption{
Illustration for \Cref{claim:all-conns-equal}: Given a set of $\kappa - 1$ vertex-disjoint paths in $G_j$ avoiding $t_i$ as a terminal (since $t_i$ is not essential in $G_j$), we can reroute any path using $e_i$ to avoid it. Consequently, removing $e_i$ preserves the $\kappa - 1$ disjoint paths in $H$, establishing that $\connvg{v}{H} \geq \kappa - 1$.
}
\label{fig:terminal-path-rerouting}
\end{figure}

Thus far, we have established that $\connvg{v}{G_i} = \connvg{v}{G_j} = \kappa - 1$. We now show that the connectivity remains the same in $H$. Consider the graph $G_j$. We know that $\connvg{v}{G_j} = \kappa - 1$, and that $t_i$ is not essential for $v$ in $G_j$. Therefore, there exists a set $\mathcal{P}$ of $\kappa - 1$ vertex-disjoint paths starting from $v$ and ending at distinct terminals in $T$, completely avoiding the terminal $t_i$. 

Notice that $H = G_j \setminus e_i$. If no path in $\mathcal{P}$ contains $e_i$, the claim holds immediately, as the same path family $\mathcal{P}$ remains valid in $H$, yielding $\kappa - 1$ paths from $v$ to $T$ in $H$.

The remaining case is when exactly one path in $\mathcal{P}$, say $\mathcal{P}_r$, contains $e_i$ (it cannot be more than one due to vertex-disjointness). Since $\mathcal{P}$ avoids $t_i$, we can modify $\mathcal{P}_r$. Instead of traversing $e_i = (p_i, q_i)$ and continuing along the remainder of the path, we can directly redirect the path from $p_i$ to $t_i$ via the matching edge $(p_i, t_i)$ (\Cref{fig:terminal-path-rerouting}). Vertex-disjointness is preserved because no new vertices are introduced other than the previously avoided terminal $t_i$. Consequently, we have created a new path family $\mathcal{P}'$ in $G_j$ that avoids $e_i$, and the argument from the previous case applies.

Therefore, in both cases we maintain a path family of $\kappa - 1$ paths in $H$, which implies $\connvg{v}{H} \geq \kappa - 1$. Since $H$ is a subgraph of $G_j$ and we have already established that $\connvg{v}{G_j} = \kappa - 1$, the connectivity cannot exceed $\kappa - 1$. We thus conclude that $\connvg{v}{H} = \kappa - 1$, completing the proof of the claim.
\end{proof}
\begin{table}[htbp]
  \centering
  \begin{tabular}{l c c c c}
      \hline
      & $G$ & $G_i$ & $G_j$ & $H$ \\
      \hline
      Removed edges from $G$ & -- & $e_i$ & $e_j$ & $\{e_i, e_j\}$ \\ 
      $\connvg{v}{\cdot}$ & $\kappa$ & $\kappa - 1$ & $\kappa - 1$ & $\kappa - 1$ \\
      Essentiality of $t_i$ & Yes & Yes & No & -- \\
      Essentiality of $t$ & Yes & No & Yes & -- \\
      \hline
  \end{tabular}
  \caption{Summary of connectivity and terminal essentiality across modified graphs.}
  \label{tab:essentiality-overview}
\end{table}

The fact that the connectivity of $v$ is identical across $G_i$, $G_j$, and $H$ allows us to derive a contradiction by analyzing the structure of the union and intersection of the tightest minimum cuts, thereby completing the proof of \Cref{lem:critical-transfer}. Let $C_i$ be the tightest minimum cut separating $v$ from $T$ in $G_i$, and let $C_j$ be the tightest minimum cut separating $v$ from $T$ in $G_j$. Notice that $H$ is a subgraph of both $G_i$ and $G_j$. Due to \Cref{claim:all-conns-equal}, both $C_i$ and $C_j$ are also minimum cuts (though not necessarily the tightest ones) in $H$. 

By definition of cuts terminals must reside in either $\lcut$ or $\scut$. Furthermore, by \Cref{lem:essential-terminal-tightest-cut}, a terminal is in the $\scut$ of a tightest minimum cut $\cut$ separating vertex $v$ from $T$ if and only if it is essential for $v$. Therefore, based on the essentiality of $t$ and $t_i$ in $G_i$ and $G_j$ (summarized in \Cref{tab:essentiality-overview}), we immediately obtain:
\begin{align*}
        t&\in\lcutc{C_i}, & t_i&\in\scutc{C_i}, \\
        t&\in\scutc{C_j}, & t_i&\in\lcutc{C_j}.
\end{align*}

We next consider $C_i$ and $C_j$ in $G$. Since both cuts have size $\kappa-1$, whereas $\connvg{v}{G}=\kappa$, neither is a valid cut in $G$. In particular, $C_i$ is valid in $G_i$ but becomes invalid when $e_i$ is added back. Hence $e_i$ crosses $C_i$ from right to left, and thus $p_i\in\rcutc{C_i}$ and $q_i\in\lcutc{C_i}$. Symmetrically, $e_j$ crosses $C_j$ from right to left, so $p_j\in\rcutc{C_j}$ and $q_j\in\lcutc{C_j}$. Therefore,
\begin{align*}
        p_i&\in\rcutc{C_i}, & q_i&\in\lcutc{C_i}, \\
        p_j&\in\rcutc{C_j}, & q_j&\in\lcutc{C_j}.
\end{align*}

Now consider the cuts $C_i$ and $C_j$ on the graph $H$. By \Cref{claim:all-conns-equal} both are minimum cuts in $H$. By \Cref{lem:union-intersection-cut}, $C_i \cap C_j$ is a minimum cut on $H$ where both $t$ and $t_i$ are on the separator $\scutc{C_i \cap C_j}$. In contrast, both terminals belong to the left side $\lcutc{C_i \cup C_j}$ of the union cut $C_i \cup C_j$. Both cuts have the same size, $\kappa - 1$. \Cref{tab:cut-locations-terminals} summarizes these locations.
\begin{table}[h!]
    \centering
    \begin{tabular}{l c c c c}
        \hline
        & $C_i$ & $C_j$ & $C_i \cap C_j$ & $C_i \cup C_j$ \\
        \hline
        Location of $t$   & $L$ & $S$ & $S$ & $L$ \\
        Location of $t_i$ & $S$ & $L$ & $S$ & $L$ \\
        \hline
    \end{tabular}
    \caption{Locations of terminals in cuts $C_i$, $C_j$, $C_i \cap C_j$, and $C_i \cup C_j$.}
    \label{tab:cut-locations-terminals}
\end{table}

Consider the cut $C_i \cap C_j$. Both $t$ and $t_i$ belong to $\scutc{C_i \cap C_j}$. We claim that this cut is not valid in either $G_i$ or $G_j$. Indeed, if it were valid in $G_i$, then $t\in\scutc{C_i \cap C_j}$ would imply, by \Cref{obs:tightest-min-cut} and \Cref{lem:essential-terminal-tightest-cut}, that $t$ is essential for $v$ in $G_i$, contradicting the choice of $G_i$. Symmetrically, validity in $G_j$ would imply that $t_i$ is essential for $v$ in $G_j$, again a contradiction. Since $e_j$ is the only edge of $G_i$ absent from $H$, the edge $e_j$ must cross $C_i \cap C_j$ from $\rcutc{C_i \cap C_j}$ to $\lcutc{C_i \cap C_j}$, so $p_j\in\rcutc{C_i \cap C_j}$ and $q_j\in\lcutc{C_i \cap C_j}$. Similarly, $e_i$ is the only edge of $G_j$ absent from $H$, so $e_i$ crosses $C_i \cap C_j$ from $\rcutc{C_i \cap C_j}$ to $\lcutc{C_i \cap C_j}$, giving $p_i\in\rcutc{C_i \cap C_j}$ and $q_i\in\lcutc{C_i \cap C_j}$.
Consequently,
\begin{align*}
    q_i, q_j \in \lcutc{C_i \cap C_j} = \lcutc{C_i} \cap \lcutc{C_j}.
\end{align*}
In particular, $q_i\in\lcutc{C_j}$ and $q_j\in\lcutc{C_i}$. \Cref{tab:cut-locations-endpoints} summarizes these locations of endpoints.

\begin{table}[h!]
    \centering
    \begin{tabular}{l c c c}
        \hline
        & $C_i$ & $C_j$ & $C_i \cap C_j$ \\
        \hline
        Location of $p_i$ & $R$ & $?$ & $R$ \\
        Location of $q_i$ & $L$ & $L$ & $L$ \\
        Location of $p_j$ & $?$ & $R$ & $R$ \\
        Location of $q_j$ & $L$ & $L$ & $L$ \\
        \hline
    \end{tabular}
    \caption{Established locations of the endpoints in $C_i$, $C_j$, and $C_i\cap C_j$. A question mark denotes a location that is not yet determined.}
    \label{tab:cut-locations-endpoints}
\end{table}

Now consider $C_i$ in $G_i$. The edge $e_j=(p_j,q_j)$ is present in $G_i$, and $q_j\in\lcutc{C_i}$. Since $C_i$ is a valid cut in $G_i$, the edge $e_j$ cannot cross $C_i$ from right to left. Therefore, $p_j\in\lcutc{C_i}\cup\scutc{C_i}$. Symmetrically, considering $C_j$ in $G_j$ gives $p_i\in\lcutc{C_j}\cup\scutc{C_j}$. Together with $p_i\in\rcutc{C_i}$ and $p_j\in\rcutc{C_j}$, \Cref{lem:union-intersection-cut} yields $p_i,p_j\in\lcutc{C_i\cup C_j}\cup\scutc{C_i\cup C_j}$. We summarize all locations of the endpoints in \Cref{tab:endpoint-locations}.
\begin{table}[h!]
    \centering
    \begin{tabular}{l c c c c}
        \hline
        & $C_i$ & $C_j$ & $C_i \cap C_j$ & $C_i \cup C_j$ \\
        \hline
        Location of $p_i$ & $R$ & $L\cup S$ & $R$ & $L\cup S$ \\
        Location of $q_i$ & $L$ & $L$ & $L$ & $L$ \\
        Location of $p_j$ & $L\cup S$ & $R$ & $R$ & $L\cup S$ \\
        Location of $q_j$ & $L$ & $L$ & $L$ & $L$ \\
        \hline
    \end{tabular}
    \caption{Locations of the endpoints in the four cuts.}
    \label{tab:endpoint-locations}
\end{table}

Finally, consider the cut $C_i\cup C_j$ in $H$. It is a valid cut of size $\kappa-1$, but it cannot be valid in $G$, since $\connvg{v}{G}=\kappa$. Thus, when $e_i$ and $e_j$ are added back, at least one of them must cross $C_i\cup C_j$ from right to left. However, we have established that $p_i,p_j\in\lcutc{C_i\cup C_j}\cup\scutc{C_i\cup C_j}$ and $q_i,q_j\in\lcutc{C_i\cup C_j}$. Hence neither $e_i$ nor $e_j$ can cross the cut from right to left, a contradiction. This contradiction proves the lemma.
\end{proof}

Applied with $t=\assgn(v)$, \Cref{lem:critical-transfer} together with \Cref{lem:own-edge-not-critical} shows that if $e_i$ is critical for assigning $v$ to $\assgn(v)$, then $\crt_v(t_i)<\crt_v(\assgn(v))$.

\begin{proof}[Proof of \Cref{lem:cycle-shift-decreases-potential}]
For every $i\in[k]$, the edge $e_i$ is critical for assigning $v_i$ to $\assgn(v_i)$, while \Cref{lem:own-edge-not-critical} shows that it is not critical for assigning $v_i$ to $t_i$. Hence $\assgn(v_i)\neq t_i$, so $\caR$ has no self-loops. By construction, every terminal has in-degree one in $\caR$, so $\caR$ contains a directed cycle of length at least two. Let $\caC$ be such a cycle and let $r$ be its length. Without loss of generality, assume $\caC=(t_1,t_2,\dots,t_r,t_1)$. For every $i\in[r]$, the edge $t_{i-1}\to t_i$ is labeled by $v_i$, where indices are taken modulo $r$. Hence, $\assgn(v_i)=t_{i-1}$, and the shifted assignment assigns $v_i$ to $t_i$, i.e., $\assgn'(v_i) = t_i$. This also shows that the vertices $v_1,\dots,v_r$ are distinct, since each one is assigned to a different terminal in $\assgn$.

Since $e_i$ is critical for assigning $v_i$ to $\assgn(v_i)$, \Cref{lem:critical-implies-assignable} implies that $t_i$ is essential for $v_i$. Therefore, every reassigned vertex remains assigned to an essential terminal. Moreover, each terminal on $\caC$ loses one assigned vertex and gains one assigned vertex. Thus, the capacity constraints remain satisfied, and $(G,T,c,\assgn')$ satisfies the \essentialassignmentcondition.

For $i\in[r]$, among the edges $e_1,\dots,e_k$, let $E_i$ be those that are critical for assigning $v_i$ to $t_i$, and let $E'_i$ be those that are critical for assigning $v_i$ to $\assgn(v_i)=t_{i-1}$. \Cref{lem:critical-transfer} implies that $E_i \subseteq E'_i$. In particular, $e_i$ is not critical for assigning $v_i$ to $t_i$ by \Cref{lem:own-edge-not-critical}, but it is critical for assigning $v_i$ to $t_{i-1}$. In other words, $e_i \notin E_i$ but $e_i \in E'_i$. Hence, $E_i \subset E'_i$. This gives $\crt_{v_i}(t_i)=|E_i|<|E'_i|=\crt_{v_i}(\assgn(v_i))$.

The assignment of every vertex outside $v_1, \dots, v_r$ remains unchanged. Consequently, $\Phi(\assgn')<\Phi(\assgn)$, which completes the proof.
\end{proof}

\section{Generalized Polynomial-Time Algorithm for Weighted Graphs}\label{sec:weighted}

In this section, we extend our polynomial-time algorithm to weighted instances. The weighted algorithm (\Cref{alg:gl-weighted}) also applies to the unweighted instances handled by \Cref{alg:gl-partition}, where it improves the running time because instead of using cycle shifts to optimize a potential function, it finds a minimum-potential flow-essential split-assignment in one minimum-cost-flow computation.

Unlike the unweighted setting, the weighted setting allows an additive error of $w_{\max}-1$ in the capacity constraints. To see why this error is needed, consider a $k$-$T$-connected instance with $k$ terminals of capacity one and one non-terminal of weight $k$. Any partition places this vertex in one part, whose weight exceeds its capacity by $k-1=w_{\max}-1$. The weighted \gyorilovasz theorem guarantees a valid partition with this error~\citep{chen2007almost}. We give a polynomial-time algorithm that finds such a partition. The following theorem, restated from the introduction, states the result.

\thmweightedktconn*

To prove \Cref{thm:weighted-k-t-conn}, we relax $k$-$T$-connectivity to the \fractionalesscond. Under this condition, each vertex may split its weight among its essential terminals, subject to the terminal capacities. We prove the following stronger theorem.

\thmweighted*

Having the above theorem, we can directly show \Cref{thm:weighted-k-t-conn}.

\begin{proof}[Proof of \Cref{thm:weighted-k-t-conn}]
Suppose that $G$ is $k$-$T$-connected. For every non-terminal vertex $v$ and every terminal $t$, there are $k-1$ vertex-disjoint paths from $v$ to the terminals in $T\setminus\{t\}$. Hence deleting $t$ decreases $\connvg{v}{G}$ from $k$ to $k-1$, so every terminal is essential for every non-terminal vertex.

Since the capacities and weights are integral and $\sum_{v\in V\setminus T}\weight_v\le\sum_{t\in T}\cp_t$, we can distribute the weight of every vertex among the terminals without exceeding their capacities. As every terminal is essential for every vertex, this distribution witnesses the \fractionalesscond. The result now follows from \Cref{gyori-weighted-poly-theorem}.
\end{proof}

In the rest of this section, we prove \Cref{gyori-weighted-poly-theorem}. The weighted algorithm (\Cref{alg:gl-weighted}) follows the unweighted algorithm (\Cref{alg:gl-partition}) closely. Because the \fractionalesscond and the \essentialassignmentcondition use the same notion of essentiality, we can reuse several lemmas from \Cref{sec:algorithm}.

\subsection{\textsc{GLWeightedPartition} Algorithm}

\begin{algorithm}[!b]
        \caption{\textsc{GLWeightedPartition}$(G,T,\cp,\weight)$}
        \label{alg:gl-weighted}

        \KwIn{Weighted instance $(G=(V,E),T=\{t_1,\dots,t_k\},\cp,\weight)$ satisfying the \fractionalesscond}
        \KwOut{Partition $\langle V_t\rangle_{t\in T}$}

        \If{$G=\emptyset$}{ \label{alg:gl-weighted:line:empty-start}
                \Return $\emptyset$ \cmt{No vertices remain}
        }
        \label{alg:gl-weighted:line:empty-end}

        \If{$\exists t\in T$ such that $\cp_t = 0$}{ \label{alg:gl-weighted:line:remove-saturated-start}
                $G'\gets G\setminus\{t\}$; $T'\gets T\setminus\{t\}$ \cmt{Remove saturated terminal}
                $\cp'\gets(\cp_{t'})_{t'\in T'}$ \cmt{Restrict capacities}
                $\langle V_{t'}\rangle_{t'\in T'}\gets \textsc{GLWeightedPartition}(G',T',\cp',\weight)$ \cmt{Recursive call}
                $V_t\gets\{t\}$ \cmt{Add removed terminal}
                \Return $\langle V_{t'}\rangle_{t'\in T'}\cup\{V_t\}$
        }
        \label{alg:gl-weighted:line:remove-saturated-end}

        \If{$\exists p\in\PT(G,T)$ and $t\in T$ such that $d^+(p)=1$ and $(p,t)\in E$}{ \label{alg:gl-weighted:line:degree-1-start}
                Contract $(p,t)$ in $G$ to obtain $G'$ \cmt{Remove the pre-terminal}
                Let $\cp'$ be obtained from $\cp$ by setting $\cp'_t=\cp_t-\weight_p$ \cmt{Update residual capacity}
                $\langle V_{t'}\rangle_{t'\in T}\gets \textsc{GLWeightedPartition}(G',T,\cp',\weight)$ \cmt{Recursive call}
                $V_t\gets V_t\cup\{p\}$ \cmt{Add removed vertex}
                \Return $\langle V_{t'}\rangle_{t'\in T}$
        }
        \label{alg:gl-weighted:line:degree-1-end}
        \If{there is a matching $M=\{(p_i,t_i):i\in[k]\}$ from $\PT(G,T)$ to $T$ saturating $T$}{ \label{alg:gl-weighted:line:matching-start}
                For each $i\in[k]$, choose an edge $e_i=(p_i,q_i)$ distinct from $(p_i,t_i)$ \cmt{Possible since no pre-terminal has out-degree one}
                For every $v\in V\setminus T$ and $t\in T$, define $\crt_v(t)$ as the number of edges among $e_1,\dots,e_k$ that are critical for assigning $v$ to $t$ \;
                $\assgnw\gets \textsc{MinCostSplitAssignment}(G,T,\cp,\weight,\crt)$ \cmt{Minimum-potential split-assignment witness}
                Choose $e_{\mathrm{nc}}\in\{e_1,\dots,e_k\}$ that is not critical for any pair $(v,t)$ with $\assgnw(v,t)>0$

                $G'\gets G\setminus e_{\mathrm{nc}}$ \cmt{Delete one safe edge}
                \Return $\textsc{GLWeightedPartition}(G',T,\cp,\weight)$
        }
        \label{alg:gl-weighted:line:matching-end}
        \Else{ \label{alg:gl-weighted-line-roundcall}
                $(G',T',\cp',\langle V_t\rangle_{t\in T\setminus T'})\gets \textsc{RoundAndRemove}(G,T,\cp,\weight)$ \cmt{Round deficient terminals}
                $\langle V_t\rangle_{t\in T'}\gets \textsc{GLWeightedPartition}(G',T',\cp',\weight)$ \cmt{Recursive call}
                \Return $\langle V_t\rangle_{t\in T}$
        }
        \label{alg:gl-weighted:line:round-end}
\end{algorithm}

The weighted algorithm in \Cref{alg:gl-weighted} follows the same structure as the unweighted algorithm. It adds one rounding operation to the three operations of the unweighted algorithm. Operation (i) removes a terminal with zero capacity (Lines~\ref{alg:gl-weighted:line:remove-saturated-start}--\ref{alg:gl-weighted:line:remove-saturated-end}). Operation (ii) contracts a pre-terminal of out-degree one (Lines~\ref{alg:gl-weighted:line:degree-1-start}--\ref{alg:gl-weighted:line:degree-1-end}). Operation (iii) deletes a non-critical secondary edge after finding a matching from pre-terminals to terminals (Lines~\ref{alg:gl-weighted:line:matching-start}--\ref{alg:gl-weighted:line:matching-end}). Operation (iv) rounds and removes a Hall-deficient set of terminals together with its pre-terminal neighborhood when no such matching exists (Lines~\ref{alg:gl-weighted-line-roundcall}--\ref{alg:gl-weighted:line:round-end}).

Suppose that operations (i) and (ii) do not apply and that there is a matching $M=\{(p_i,t_i):i\in[k]\}$ from pre-terminals to terminals. For each $i\in[k]$, we choose an edge $e_i=(p_i,q_i)$ distinct from $(p_i,t_i)$. These edges define the criticality cost $\crt_v(t)$ of \Cref{def:critical}, and hence the split-assignment potential $\Phifrac$ of \Cref{def:fractional-assignment-potential}. As in the unweighted algorithm, the proof uses a cycle shift in the reassignment graph $\caR$ to decrease $\Phifrac$ by at least one. The weighted potential, however, is not polynomially bounded in the input size because the weights may be exponentially large. We instead find a minimum-potential flow-essential split-assignment in one minimum-cost-flow computation. The cycle-shift argument then shows that this split-assignment yields a removable edge among $e_1,\dots,e_k$.

The remaining challenge is the case in which no such matching $M$ exists. The unweighted analysis shows that this case cannot occur there. In the weighted setting, we use \textsc{RoundAndRemove} (\Cref{alg:gl-round-and-remove}). The subroutine chooses an inclusion-minimal Hall-deficient set $S\subseteq T$ and an arbitrary terminal $t_S\in S$. As in the unweighted argument, let $\PT(G,S)$ be the set of neighboring pre-terminals. By \Cref{lem:minimal-hall-set}, there is a matching between $S\setminus\{t_S\}$ and $\PT(G,S)$ that saturates both sides, as in \Cref{fig:round-and-remove-before}. For each $t\in S\setminus\{t_S\}$, we form $V_t$ from $t$ and its matched pre-terminal, and we set $V_{t_S}=\{t_S\}$. We then remove $S\cup \PT(G,S)$ and recurse on the residual instance shown in \Cref{fig:round-and-remove-after}. By \Cref{lem:terminal-leaf-hall-set-not-essential-outside}, no remaining vertex sends weight to a terminal in $S$. To justify the recursive call, it remains to show that every terminal receiving positive weight from a remaining vertex stays essential after the removal. This is part of \Cref{lem:weighted-round-and-remove}.

Operation (iv) is the only source of the additive $w_{\max}-1$ error. Each terminal in $S$ has positive capacity and receives at most one non-terminal of weight at most $w_{\max}$. Thus, the weight of its part exceeds its capacity by at most $w_{\max}-1$.

\begin{figure}[htbp]
  \centering

  \begin{subfigure}[b]{0.54\linewidth}
    \centering
    \begin{essgraph}[
      scale=1.00,
      >=stealth,
      node scale=1.05,
      term scale=0.92,
      line cap=round,
      line join=round
    ]
      \essterminal{t1}{(0.00,0.25)}{$t_1$}
      \essterminal{ts}{(1.60,0.25)}{$t_2$}
      \essterminal{t3}{(3.20,0.25)}{$t_3$}
      \essterminal{t4}{(4.80,0.25)}{$t_4$}
      \essterminal{t5}{(6.40,0.25)}{$t_5$}

      \essvertex[extra={fill=essg@c1!18}]{p1}{(0.80,2.55)}{$p_1$}
      \essvertex[extra={fill=essg@c3!18}]{p2}{(2.40,2.55)}{$p_2$}
      \essvertex{p3}{(4.00,2.55)}{$p_3$}
      \essvertex{p4}{(5.60,2.55)}{$p_4$}
      \essvertex{p5}{(7.20,2.55)}{$p_5$}

      \begin{scope}[on background layer]
        \draw[
          draw=essg@c1,
          draw opacity=0.18,
          preaction={draw=white,draw opacity=1,line width=44pt},
          line width=34pt,
          line cap=round
        ] ($(p1)!-0.01!(t1)$) -- ($(t1)!-0.02!(p1)$);
        \fill[fill=essg@c2,fill opacity=0.18]
          (ts) ellipse (0.70 and 0.52);
        \draw[
          draw=essg@c3,
          draw opacity=0.18,
          preaction={draw=white,draw opacity=1,line width=44pt},
          line width=34pt,
          line cap=round
        ] ($(p2)!-0.01!(t3)$) -- ($(t3)!-0.02!(p2)$);

        \draw[
          draw=violet!70!black,
          fill=none,
          rounded corners=7pt,
          dashed,
          line width=0.9pt
        ] (0.25,2.05) rectangle (2.95,3.05);
        \draw[
          draw=orange!80!black,
          fill=none,
          rounded corners=7pt,
          dashed,
          line width=0.9pt
        ] (-0.72,-0.30) rectangle (3.92,0.80);
      \end{scope}

      \essedge[extra={->,draw=essg@c1,ultra thick}]{p1}{t1}
      \essedge[extra={->}]{p1}{ts}
      \essedge[extra={->}]{p2}{ts}
      \essedge[extra={->,draw=essg@c3,ultra thick}]{p2}{t3}

      \essedge[non-crit,extra={->,densely dashed}]{p2}{t4}
      \essedge[non-crit,extra={->}]{p3}{t4}
      \essedge[non-crit,extra={->}]{p4}{t4}
      \essedge[non-crit,extra={->}]{p4}{t5}
      \essedge[non-crit,extra={->}]{p5}{t5}

      \node[
        font=\scriptsize\bfseries,
        text=violet!70!black,
        anchor=south,
        inner sep=1pt
      ] at (1.60,3.23) {$\PT(G,S)$};
      \node[
        font=\scriptsize\bfseries,
        text=orange!80!black,
        anchor=north,
        inner sep=1pt
      ] at (1.60,-0.48) {$S$};
    \end{essgraph}
    \caption{The minimal Hall-deficient set $S$, with $t_S=t_2$, and a
      matching saturating $S\setminus\{t_2\}$.}
    \label{fig:round-and-remove-before}
  \end{subfigure}
  \hfill
  \raisebox{2.50cm}{\Large$\Longrightarrow$}
  \hfill
  \begin{subfigure}[b]{0.34\linewidth}
    \centering
    \begin{essgraph}[
      scale=1.00,
      >=stealth,
      node scale=1.05,
      term scale=0.92,
      line cap=round,
      line join=round
    ]
      \essterminal[extra={fill=essg@c4}]{t4r}{(2.40,0.25)}{$t_4$}
      \essterminal[extra={fill=essg@c5}]{t5r}{(4.00,0.25)}{$t_5$}

      \essvertex{p3r}{(1.60,2.55)}{$p_3$}
      \essvertex{p4r}{(3.20,2.55)}{$p_4$}
      \essvertex{p5r}{(4.80,2.55)}{$p_5$}

      \essedge[non-crit,extra={->}]{p3r}{t4r}
      \essedge[non-crit,extra={->}]{p4r}{t4r}
      \essedge[non-crit,extra={->}]{p4r}{t5r}
      \essedge[non-crit,extra={->}]{p5r}{t5r}
      \draw[white,line width=0.1pt] (3.20,0.00) -- (3.20,-0.70);
    \end{essgraph}
    \caption{The residual instance $(G',T')$ after removing
      $S\cup \PT(G,S)$.}
    \label{fig:round-and-remove-after}
  \end{subfigure}

  \caption{One step of
    \textsc{RoundAndRemove}. Here $S=\{t_1,t_2,t_3\}$ and $t_S=t_2$.
    The highlighted groups are the completed parts
    $V_{t_1}=\{t_1,p_1\}$, $V_{t_2}=\{t_2\}$, and
    $V_{t_3}=\{t_3,p_2\}$; removing their vertices leaves the residual
    instance $(G',T')$.}
  \label{fig:round-and-remove}
\end{figure}

\begin{algorithm}[!ht]
        \caption{\textsc{RoundAndRemove}$(G,T,\cp,\weight)$}
        \label{alg:gl-round-and-remove}
        \KwIn{Weighted instance $(G=(V,E),T=\{t_1,\dots,t_k\},\cp,\weight)$ satisfying the \fractionalesscond
        }
        \KwRequire{No matching from $\PT(G,T)$ to $T$ saturating $T$ exists, $\cp_t>0$ for every $t\in T$, and every pre-terminal has out-degree at least two.}
        \KwOut{A smaller instance and completed parts for the removed terminals}

        Find an inclusion-minimal nonempty set $S\subseteq T$ such that there is no matching from $S$ to $\PT(G,T)$ saturating $S$ \cmt{Hall-deficient set}
        Pick an arbitrary terminal $t_S\in S$

        Find a matching $M$ from $S\setminus\{t_S\}$ to $\PT(G,S)$ \cmt{Exists by \Cref{lem:minimal-hall-set}}

        $V_{t_S}\gets\{t_S\}$ \cmt{$t_S$ receives no pre-terminal}
        \ForEach{$t\in S\setminus\{t_S\}$}{
                $V_t\gets\{t,M(t)\}$ \cmt{Assign one matched pre-terminal}
        }

        $G'\gets G\setminus (S\cup \PT(G,S))$; $T'\gets T\setminus S$ \cmt{Remove rounded vertices}
        $\cp'\gets(\cp_t)_{t\in T'}$ \cmt{Remaining capacities}
        \Return $(G',T',\cp',\langle V_t\rangle_{t\in S})$
\end{algorithm}

\subsection{Correctness of the Weighted Algorithm}

We first show that the first two operations of \Cref{alg:gl-weighted} preserve the \fractionalesscond. \Cref{lem:frac-remove-saturated-terminal} shows that removing a terminal with zero capacity preserves the \fractionalesscond (used in Lines~\ref{alg:gl-weighted:line:remove-saturated-start}--\ref{alg:gl-weighted:line:remove-saturated-end}) and \Cref{lem:weighted-contract-degree-one-potential-leaf} shows that a pre-terminal of out-degree one can be contracted without violating the \fractionalesscond (used in Lines~\ref{alg:gl-weighted:line:degree-1-start}--\ref{alg:gl-weighted:line:degree-1-end}). These are the weighted analogues of the corresponding lemmas in the unweighted section.

\begin{lemma}[Removing a terminal with zero capacity] \label{lem:frac-remove-saturated-terminal}
Let $(G,T,\cp,\weight)$ satisfy the \fractionalesscond, and let $t_i\in T$ satisfy $\cp_{t_i}=0$. Let $G'=G\setminus\{t_i\}$, let $T'=T\setminus\{t_i\}$, and let $\cp'=(\cp_t)_{t\in T'}$. Then $(G',T',\cp',\weight)$ satisfies the \fractionalesscond. Moreover, any valid weighted partition of this instance extends to $(G,T,\cp,\weight)$ by setting $V_{t_i}=\{t_i\}$.
\end{lemma}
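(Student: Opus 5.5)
The plan mirrors the proof of \Cref{lem:remove-zero-capacity-terminal}, with the unweighted assignment replaced by a split-assignment. Let $\assgnw$ be a witness for the \fractionalesscond in $(G,T,\cp,\weight)$. Because $\sum_{v}\assgnw(v,t_i)\le\cp_{t_i}=0$ and all values are nonnegative integers, $\assgnw(v,t_i)=0$ for every non-terminal $v$. Let $\assgnw'$ be the restriction of $\assgnw$ to $(V(G')\setminus T')\times T'$. Since $t_i$ is a terminal, $V(G')\setminus T'=V\setminus T$, and no weight was sent to $t_i$. Hence $\sum_{t\in T'}\assgnw'(v,t)=\weight_v$ for every $v$, so $\assgnw'$ is a \FractionalEssAssign in the new instance. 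The capacity bounds $\sum_v\assgnw'(v,t)\le\cp'_t=\cp_t$ for $t\in T'$ hold unchanged.

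Next comes essentiality. Take any pair with $\assgnw'(v,t)>0$. Then $t\neq t_i$, and $t$ is essential for $v$ in $G$. By \Cref{lem:essential-survives-terminal-removal}, applied with the removed terminal $t_i$ and the surviving terminal $t$, the terminal $t$ remains essential for $v$ in $G\setminus\{t_i\}=G'$. So $\assgnw'$ is a witness for the \fractionalesscond in $(G',T',\cp',\weight)$.

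For the extension, take a valid weighted partition $\langle V_t\rangle_{t\in T'}$ of the reduced instance and add $V_{t_i}=\{t_i\}$. This is a partition of $V(G)$. Each $V_t$ with $t\in T'$ induces the same subgraph in $G$ as in $G'$, because $t_i\notin V_t$. That subgraph is therefore still connected to $t$, and its weight bound $\cp_t+w_{\max}-1$ is unchanged. For $V_{t_i}$, the induced subgraph is trivially connected to $t_i$, and its non-terminal weight is $0\le\cp_{t_i}+w_{\max}-1$, using $w_{\max}\ge1$ when non-terminals exist; if there are none the bound is vacuous. One subtlety is that $w_{\max}$ in the reduced instance equals the original $w_{\max}$, since the non-terminal set does not change.

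There is no real obstacle in this argument; the only substantive input is \Cref{lem:essential-survives-terminal-removal}, and the only point needing care is that zero capacity forces zero weight on $t_i$.
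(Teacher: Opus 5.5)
Your proof is correct and follows essentially the same argument as the paper. Zero capacity forces $\assgnw(v,t_i)=0$, so you restrict the witness, transfer essentiality via \Cref{lem:essential-survives-terminal-removal}, and extend the partition with the singleton $\{t_i\}$, which has zero non-terminal weight.
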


\begin{proof}
Let $\assgnw$ be a flow-essential split-assignment witnessing the \fractionalesscond in $G$. Since $\cp_{t_i}=0$, the capacity constraints give $\assgnw(v,t_i)=0$ for every non-terminal $v$. Thus, restricting $\assgnw$ to the pairs in $(V(G')\setminus T')\times T'$ preserves the amount assigned by every vertex and the total weight assigned to every terminal in $T'$.

If $\assgnw(v,t)>0$ for $t\in T'$, then $t$ is essential for $v$ in $G$. By \Cref{lem:essential-survives-terminal-removal}, $t$ remains essential for $v$ in $G'$. Hence the restricted split-assignment witnesses the \fractionalesscond in $G'$.

Finally, $V_{t_i}=\{t_i\}$ is connected and has non-terminal weight zero. Therefore, it extends any valid weighted partition of $(G',T',\cp',\weight)$ to one of $(G,T,\cp,\weight)$.
\end{proof}

\begin{lemma}[Contracting a pre-terminal of out-degree one] \label{lem:weighted-contract-degree-one-potential-leaf}
Let $(G,T,\cp,\weight)$ satisfy the \fractionalesscond. Assume that $p\in\PT(G,T)$, $d^+(p)=1$, and $(p,t_i)\in E$. Let $G'$ be obtained by contracting $(p,t_i)$ into $t_i$, and let $\cp'_{t_i}=\cp_{t_i}-\weight_p$, while all other capacities stay unchanged. Then $\cp'_{t_i}\ge0$ and $(G',T,\cp',\weight)$ satisfies the \fractionalesscond. Moreover, any valid weighted partition $\langle V'_t\rangle_{t\in T}$ of the contracted instance extends to the original instance by replacing $V'_{t_i}$ with $V'_{t_i}\cup\{p\}$.
\end{lemma}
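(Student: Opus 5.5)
The plan mirrors the unweighted \Cref{lem:contract-degree-one-potential-leaf}, with the exact count replaced by weight bookkeeping. Let $\assgnw$ be a witness for the \fractionalesscond in $G$.

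\textbf{Step 1: all of $p$'s weight goes to $t_i$.} Since $d^+(p)=1$ and the only out-edge is $(p,t_i)$, every path from $p$ to $T$ is the single edge $(p,t_i)$. Hence $\connvg{p}{G}=1$, and $t_i$ is the unique terminal essential for $p$. Essentiality of the split then forces $\assgnw(p,t_i)=\weight_p$ and $\assgnw(p,t)=0$ for $t\neq t_i$.

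\textbf{Step 2: capacities and essentiality survive.} The capacity constraint at $t_i$ gives $\weight_p\le\sum_{v}\assgnw(v,t_i)\le\cp_{t_i}$, so $\cp'_{t_i}\ge 0$. Define $\assgnw'$ as the restriction of $\assgnw$ to $(V(G')\setminus T)\times T$, i.e., delete $p$.
\begin{itemize}
\item Each remaining vertex still sends exactly its weight.
\item Terminal $t_i$ now receives $\sum_v\assgnw(v,t_i)-\weight_p\le \cp_{t_i}-\weight_p=\cp'_{t_i}$.
\item Every other terminal receives the same amount as before, within unchanged capacity.
\item For essentiality: if $\assgnw'(v,t)>0$, then $t$ was essential for $v$ in $G$, and \Cref{lem:essential-survives-degree-one-leaf-contraction} shows it stays essential in $G'$.
\end{itemize}
Thus $\assgnw'$ witnesses the \fractionalesscond in $(G',T,\cp',\weight)$.

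\textbf{Step 3: lifting the partition.} Take a valid weighted partition $\langle V'_t\rangle$ of the contracted instance. Parts with $t\neq t_i$ are unchanged and remain connected in $G$, since the contraction only redirected edges into $t_i$. For $V'_{t_i}\cup\{p\}$, replace any edge $(x,t_i)$ created by the contraction with the two-edge path $x\to p\to t_i$, and note that $p$ reaches $t_i$ directly. The weight bound follows from
\[
\sum_{v\in V'_{t_i}\setminus T}\weight_v+\weight_p\le \cp'_{t_i}+w'_{\max}-1+\weight_p\le \cp_{t_i}+w_{\max}-1,
\]
where $w'_{\max}\le w_{\max}$ is the maximum weight in $G'$.

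\textbf{Main obstacle.} The only subtle point is that the partition guarantee is stated relative to $w_{\max}$, which can shrink after contraction. It is handled by the monotonicity $w'_{\max}\le w_{\max}$ used above. (If $G'$ has no non-terminals, the bound on $V'_{t_i}$ is trivial.) Everything else is routine bookkeeping.
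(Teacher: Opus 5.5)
Your proposal is correct and follows essentially the same route as the paper: you show that $t_i$ is the unique essential terminal of $p$, which forces $\assgnw(p,t_i)=\weight_p$ and hence $\cp'_{t_i}\ge 0$. You then restrict the witness and invoke \Cref{lem:essential-survives-degree-one-leaf-contraction}, and finally lift the partition by expanding redirected edges through $p$ and adding $\weight_p$ to the bound. Your remark that $w'_{\max}\le w_{\max}$ is a harmless clarification that the paper leaves implicit.
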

\begin{proof}
Let $\assgnw$ be a flow-essential split-assignment witnessing the \fractionalesscond in $G$. Since $p$ has out-degree one, $t_i$ is its only essential terminal. Hence $\assgnw(p,t_i)=\weight_p$, so $\cp_{t_i}\ge\weight_p$ and $\cp'_{t_i}\ge0$.

Let $\assgnw'$ be the restriction of $\assgnw$ to $(V(G')\setminus T)\times T$. Removing $p$ decreases the total weight assigned to $t_i$ by exactly $\weight_p$, so this amount is at most $\cp_{t_i}-\weight_p=\cp'_{t_i}$. The total weight assigned to every other terminal does not increase. If $\assgnw'(v,t)>0$, then $t$ is essential for $v$ in $G$, and \Cref{lem:essential-survives-degree-one-leaf-contraction} implies that $t$ remains essential for $v$ in $G'$. Thus, $\assgnw'$ witnesses the \fractionalesscond in the contracted instance.

Let $\langle V'_t\rangle_{t\in T}$ be a valid weighted partition of the contracted instance. For each vertex in $V'_{t_i}$, take a path to $t_i$ in $G'[V'_{t_i}]$ and replace any edge created by the contraction with its two-edge path through $p$. Together with $(p,t_i)$, these paths show that $G[V'_{t_i}\cup\{p\}]$ is connected to $t_i$. All other parts are unchanged. The non-terminal weight of the new part is at most $\cp'_{t_i}+w_{\max}-1+\weight_p=\cp_{t_i}+w_{\max}-1$, so the extended partition is valid.
\end{proof}

We next prove the correctness of operation (iii) in Lines~\ref{alg:gl-weighted:line:matching-start}--\ref{alg:gl-weighted:line:matching-end} of \Cref{alg:gl-weighted}. Since operations (i) and (ii) do not apply, all terminal capacities are positive and all pre-terminals have out-degree at least two. Given a matching $M=\{(p_i,t_i):i\in[k]\}$, choose a secondary edge $e_i=(p_i,q_i)$ for every $i\in[k]$. We then compute a minimum-potential flow-essential split-assignment. If every $e_i$ were critical for a pair $(v_i,s_i)$ with positive assigned weight, the edges $s_i\to t_i$ would form a reassignment graph containing a directed cycle. Shifting one unit along this cycle would lower the potential, contradicting minimality. Thus, some $e_i$ is removable. This argument uses \Cref{lem:critical-implies-assignable,lem:own-edge-not-critical,lem:critical-transfer}, which do not depend on an unweighted assignment.

We first define the weighted potential, which extends \Cref{def:assignment-potential}.

\begin{definition}[Split-assignment potential] \label{def:fractional-assignment-potential}
Let $\crt$ be a criticality cost as in \Cref{def:critical}. For a \FractionalEssAssign $\assgnw$, define its potential by
\begin{equation*}
        \Phifrac(\assgnw)=\sum_{v\in V\setminus T}\sum_{t\in T}\assgnw(v,t)\crt_v(t).
\end{equation*}
\end{definition}

We are now ready to show that a cycle shift in the reassignment graph $\caR$ decreases the potential $\Phifrac$.

\begin{lemma}[Cycle shift decreases the weighted potential] \label{lem:weighted-cycle-shift-decreases-cost}
Let $(G,T,\cp,\weight)$ satisfy the \fractionalesscond. Write $T=\{t_1,\dots,t_k\}$, let $M=\{(p_i,t_i):i\in[k]\}$ be a matching from $\PT(G,T)$ to $T$ saturating $T$, and for each $i$ let $e_i=(p_i,q_i)$ be an edge distinct from $(p_i,t_i)$. Let $\assgnw$ be a \FractionalEssAssign witnessing the \fractionalesscond. Assume that for every $i\in[k]$ there are a vertex $v_i$ and a terminal $s_i$ such that $\assgnw(v_i,s_i)>0$ and $e_i$ is critical for assigning $v_i$ to $s_i$. Let $\caR$ be the reassignment graph on $T$ containing, for every $i\in[k]$, the edge $s_i\to t_i$ labeled by $v_i$. Then $\caR$ contains a directed cycle, and shifting a single unit of weight along a directed cycle yields another \FractionalEssAssign $\assgnw'$ satisfying the \fractionalesscond such that $\Phifrac(\assgnw')<\Phifrac(\assgnw)$.
\end{lemma}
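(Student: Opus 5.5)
The plan is to mirror the proof of \Cref{lem:cycle-shift-decreases-potential}, replacing the single reassigned vertex by a single reassigned unit of weight.

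\emph{Step 1: $\caR$ has a cycle.} Each terminal $t_i$ has exactly one incoming edge in $\caR$, namely $s_i\to t_i$. We rule out self-loops: $e_i$ is critical for assigning $v_i$ to $s_i$, but by \Cref{lem:own-edge-not-critical} it is not critical for assigning $v_i$ to $t_i$, so $s_i\neq t_i$. Every vertex of $\caR$ has in-degree one and there are no self-loops, so following incoming edges backwards gives a directed cycle $\caC$ of length $r\ge 2$. Relabel so that $\caC=(t_1,\dots,t_r,t_1)$ with $s_i=t_{i-1}$ (indices modulo $r$).

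\emph{Step 2: define the shift.} Set
\[
\assgnw'(v_i,t_{i-1})=\assgnw(v_i,t_{i-1})-1,\qquad \assgnw'(v_i,t_i)=\assgnw(v_i,t_i)+1\qquad\text{for } i\in[r],
\]
and leave every other pair unchanged. The vertices $v_i$ need not be distinct here, so we treat the shift as $r$ unit moves applied additively.

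Nonnegativity holds because a pair $(v,t_{i-1})$ is decremented only by the cycle edge entering $t_i$. Each $t_{i-1}$ is the tail of exactly one cycle edge, so each pair is decremented at most once, and $\assgnw(v_i,t_{i-1})\ge1$ by assumption.

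The row sums $\sum_t\assgnw'(v,t)=\weight_v$ are preserved, since every unit move keeps its vertex's total fixed. Each terminal on $\caC$ loses exactly one unit (as the tail $t_{i-1}$) and gains exactly one unit (as the head $t_i$), so column sums are unchanged and the capacity bounds still hold.

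For essentiality, the only new positive pairs are of the form $(v_i,t_i)$. Since $e_i$ is critical for assigning $v_i$ to $s_i$, \Cref{lem:critical-implies-assignable} shows that $t_i$ is essential for $v_i$. Hence $\assgnw'$ witnesses the \fractionalesscond.

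\emph{Step 3: the potential strictly decreases.} By linearity,
\[
\Phifrac(\assgnw')-\Phifrac(\assgnw)=\sum_{i=1}^r\bigl(\crt_{v_i}(t_i)-\crt_{v_i}(s_i)\bigr).
\]
For each $i$, apply \Cref{lem:critical-transfer} with $v=v_i$ and $t=s_i$. It says every $e_j$ critical for $(v_i,t_i)$ is also critical for $(v_i,s_i)$. Moreover, $e_i$ is critical for $(v_i,s_i)$ but not for $(v_i,t_i)$, by \Cref{lem:own-edge-not-critical}. So the set of critical edges for $(v_i,t_i)$ is a strict subset of that for $(v_i,s_i)$, giving $\crt_{v_i}(t_i)\le\crt_{v_i}(s_i)-1$. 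Every term in the sum is therefore at most $-1$, and $\Phifrac(\assgnw')\le\Phifrac(\assgnw)-r<\Phifrac(\assgnw)$.

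\emph{Main obstacle.} The delicate point is that the $v_i$ on the cycle may coincide, unlike the unweighted case where distinctness came for free. The plan handles this through the additive, per-edge accounting in Steps 2 and 3. The only check needed is nonnegativity, and it is settled by the observation that each tail terminal is used by exactly one cycle edge.
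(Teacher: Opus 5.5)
Your proof is correct and follows essentially the same route as the paper. You obtain the cycle from in-degree one plus the absence of self-loops (via \Cref{lem:own-edge-not-critical}); the unit shift is valid because each decremented pair has a distinct tail terminal on the simple cycle; and \Cref{lem:critical-implies-assignable,lem:own-edge-not-critical,lem:critical-transfer} give $\crt_{v_i}(t_i)<\crt_{v_i}(s_i)$ term by term. Your explicit treatment of possibly coinciding $v_i$ through additive accounting is exactly what the paper's remark, that no value $\assgnw(v_i,t_{i-1})$ is decreased more than once, covers implicitly.
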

\begin{proof}
Every terminal has in-degree one in $\caR$, so $\caR$ contains a directed cycle. This cycle has length at least two: if $s_i=t_i$, then $e_i$ would be critical for assigning $v_i$ to $t_i$, contrary to \Cref{lem:own-edge-not-critical}. Let $\caC$ be a simple directed cycle of length $r$. Without loss of generality, assume that $\caC=(t_1,t_2,\dots,t_r,t_1)$, with indices taken modulo $r$. For every $i\in[r]$, the edge $t_{i-1}\to t_i$ is labeled by $v_i$. Thus, $\assgnw(v_i,t_{i-1})>0$, and $e_i$ is critical for assigning $v_i$ to $t_{i-1}$.

For every $i\in[r]$, \Cref{lem:critical-implies-assignable} implies that $t_i$ is essential for $v_i$. By \Cref{lem:critical-transfer}, every edge that is critical for assigning $v_i$ to $t_i$ is also critical for assigning $v_i$ to $t_{i-1}$. Moreover, $e_i$ is critical for assigning $v_i$ to $t_{i-1}$ but not to $t_i$, by \Cref{lem:own-edge-not-critical}. Hence $\crt_{v_i}(t_i)<\crt_{v_i}(t_{i-1})$.

Define $\assgnw'$ by decreasing $\assgnw(v_i,t_{i-1})$ by one and increasing $\assgnw(v_i,t_i)$ by one for every $i\in[r]$. Each $\assgnw(v_i,t_{i-1})$ is positive. Since $\caC$ is simple, the terminals $t_{i-1}$ are distinct, so no value $\assgnw(v_i,t_{i-1})$ is decreased more than once. Hence $\assgnw'$ is nonnegative. The shift preserves the total weight assigned by every vertex. Each terminal on $\caC$ loses and gains one unit, so its total assigned weight is unchanged. Every increased value assigns weight from $v_i$ to $t_i$, which is essential for $v_i$. Thus, $\assgnw'$ satisfies the \fractionalesscond. Finally, $\Phifrac(\assgnw')-\Phifrac(\assgnw)=\sum_{i=1}^r\bigl(\crt_{v_i}(t_i)-\crt_{v_i}(t_{i-1})\bigr)<0$, since $\crt_{v_i}(t_i)<\crt_{v_i}(t_{i-1})$ for every $i\in[r]$.
\end{proof}

We can now prove that a minimizer of the potential function $\Phifrac$ yields a removable edge, because otherwise one can apply the above lemma to decrease the potential, contradicting minimality.

\begin{lemma}[A matching gives a removable edge] \label{lem:weighted-matching-removable-edge}
Assume that $(G,T,\cp,\weight)$ satisfies the \fractionalesscond, every pre-terminal has out-degree at least two, and there is a matching $M=\{(p_i,t_i):i\in[k]\}$ from $\PT(G,T)$ to $T$ saturating $T$. Then Lines~\ref{alg:gl-weighted:line:matching-start}--\ref{alg:gl-weighted:line:matching-end} of \Cref{alg:gl-weighted} find, in polynomial time, an edge $e_{\mathrm{nc}}$ such that $(G\setminus e_{\mathrm{nc}},T,\cp,\weight)$ satisfies the \fractionalesscond.
\end{lemma}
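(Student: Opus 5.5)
The argument has three parts: show the criticality costs are well defined and computable, obtain a minimum-potential witness from \Cref{prop:best-assign-is-poly}, and use \Cref{lem:weighted-cycle-shift-decreases-cost} to show that this witness leaves some secondary edge non-critical.

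\textbf{Setting up the costs.} Each $p_i$ has out-degree at least two, so the secondary edges $e_1,\dots,e_k$ exist. For every pair $(v,t)$ with $t$ essential for $v$, and every $i$, we decide whether $e_i$ is critical for assigning $v$ to $t$. To do this, compute the essential terminals of $v$ in $G$ and in $G\setminus e_i$ using \Cref{prop:essential-terminals-single-flow}. This takes $\caO(nk)$ maximum-flow computations in total, which is polynomial. Every cost satisfies $\crt_v(t)\le k\le|V|$, so the hypothesis of \Cref{prop:best-assign-is-poly} holds.

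\textbf{Computing the witness.} Since the instance satisfies the \fractionalesscond, \textsc{MinCostSplitAssignment} returns a witness $\assgnw$ rather than \textsc{Infeasible}. This $\assgnw$ minimizes $\Phifrac$ over all witnesses, and it is computed in polynomial time even when the weights are exponentially large.

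\textbf{Existence of a removable edge.} Suppose, for contradiction, that every $e_i$ is critical for some pair $(v_i,s_i)$ with $\assgnw(v_i,s_i)>0$. These are exactly the hypotheses of \Cref{lem:weighted-cycle-shift-decreases-cost}. That lemma yields a witness $\assgnw'$ with $\Phifrac(\assgnw')<\Phifrac(\assgnw)$, contradicting the minimality of $\assgnw$. Hence some $e_{\mathrm{nc}}\in\{e_1,\dots,e_k\}$ is not critical for any pair $(v,t)$ with $\assgnw(v,t)>0$. The algorithm finds it by scanning the precomputed criticality table.

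\textbf{Validity after deletion.} For every pair with $\assgnw(v,t)>0$, the terminal $t$ is essential for $v$ in $G$. Because $e_{\mathrm{nc}}$ is not critical for $(v,t)$, $t$ stays essential in $G\setminus e_{\mathrm{nc}}$. Deleting an edge changes neither the weights nor the capacities, so the per-vertex sums and the capacity bounds are unaffected. Therefore $\assgnw$ witnesses the \fractionalesscond for $(G\setminus e_{\mathrm{nc}},T,\cp,\weight)$.

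The only step needing care is the contradiction with minimality. The cycle shift in \Cref{lem:weighted-cycle-shift-decreases-cost} preserves nonnegativity and the capacity bounds, and that lemma already verifies this. The rest of the proof is bookkeeping of running time.
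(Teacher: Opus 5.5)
Your proof is correct and follows the paper's argument essentially step for step. You take a minimum-potential witness from \textsc{MinCostSplitAssignment}, derive a contradiction via \Cref{lem:weighted-cycle-shift-decreases-cost} if every secondary edge were critical for a positively weighted pair, and reuse the same witness after deleting the non-critical edge. Your extra bookkeeping (the criticality table computed with $\caO(nk)$ flow computations and the check $\crt_v(t)\le k\le|V|$) matches what the paper defers to its running-time analysis.
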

\begin{proof}
For each matched pair $(p_i,t_i)$, choose an edge $e_i=(p_i,q_i)$ distinct from $(p_i,t_i)$; this is possible because every pre-terminal has out-degree at least two. Define $\crt$ as in \Cref{def:critical}, and let $\assgnw=\textsc{MinCostSplitAssignment}(G,T,\cp,\weight,\crt)$ be the resulting minimum-potential flow-essential split-assignment, whose existence follows from the \fractionalesscond and \Cref{prop:best-assign-is-poly}.

Suppose every edge $e_i$ is critical for some pair $(v_i,s_i)$ with $\assgnw(v_i,s_i)>0$. Build the reassignment graph $\caR$ on terminals with edge $s_i\to t_i$ labeled by $v_i$. By \Cref{lem:weighted-cycle-shift-decreases-cost}, there is another flow-essential split-assignment satisfying the \fractionalesscond with smaller $\Phifrac$, contradicting the choice of $\assgnw$.

Thus some edge $e_i$ is not critical for any pair $(v,t)$ with $\assgnw(v,t)>0$. Deleting this edge preserves the essentiality of $t$ for $v$ for every such pair. The same flow-essential split-assignment therefore witnesses the \fractionalesscond in $G\setminus e_i$.
\end{proof}

The first three operations of \Cref{alg:gl-weighted} preserve the \fractionalesscond. It remains to analyze \textsc{RoundAndRemove}, which is used when no matching from $\PT(G,T)$ to $T$ saturates $T$. The subroutine chooses a nonempty set $S\subseteq T$ that no matching saturates and that is inclusion-minimal with this property. Hall's theorem gives a subset of $S$ with fewer neighboring pre-terminals than terminals. Minimality forces this subset to be $S$, since every proper subset of $S$ can be saturated. Thus, $S$ is inclusion-minimal subject to $|S|>|\PT(G,S)|$. The subroutine removes $S\cup \PT(G,S)$.

We first handle the removed vertices. \Cref{alg:gl-round-and-remove} chooses $t_S\in S$ and a matching from $S\setminus\{t_S\}$ to $\PT(G,S)$. Each matched pair forms one part, and $V_{t_S}=\{t_S\}$. The next lemma proves that these parts are valid.

\begin{lemma}[Rounded parts are valid] \label{lem:weighted-rounded-parts-valid}
Assume that all capacities are positive and that there is no matching from $\PT(G,T)$ to $T$ saturating $T$. Then the parts completed by \textsc{RoundAndRemove} are connected and satisfy the bound in \Cref{gyori-weighted-poly-theorem}.
\end{lemma}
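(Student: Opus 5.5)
The plan is to verify the two required properties, connectivity and the weight bound, separately for the two kinds of parts that \textsc{RoundAndRemove} produces. First I would justify that the subroutine is well defined. The set $S$ chosen in \Cref{alg:gl-round-and-remove} is inclusion-minimal subject to $|S|>|\PT(G,S)|$, as argued in the paragraph preceding the lemma. Hence \Cref{lem:minimal-hall-set} applies: for the chosen $t_S\in S$, there is a matching $M$ from $S\setminus\{t_S\}$ to $\PT(G,S)$ saturating both sides. In particular, distinct terminals receive distinct pre-terminals, so the completed parts $\{t,M(t)\}$ for $t\in S\setminus\{t_S\}$, together with $\{t_S\}$, are pairwise disjoint. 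Every pre-terminal used lies in $\PT(G,S)$, which is removed, so these parts are also disjoint from the residual instance.

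Next comes connectivity. The singleton $V_{t_S}=\{t_S\}$ is trivially connected to $t_S$. For $t\in S\setminus\{t_S\}$, the matching $M$ lives in the bipartite graph whose edges are the pairs $(p,t)$ with $(p,t)\in E$. So $(M(t),t)$ is an edge of $G$, and $G[\{t,M(t)\}]$ is connected to $t$.

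Finally, the weight bound. For $t_S$ the non-terminal weight is $0\le \cp_{t_S}+w_{\max}-1$, since $\cp_{t_S}\ge 0$ and $w_{\max}\ge 1$. For $t\in S\setminus\{t_S\}$, the part contains exactly one non-terminal $M(t)$. Its weight is at most $w_{\max}$, and by the hypothesis $\cp_t\ge 1$, so $\weight_{M(t)}\le w_{\max}\le \cp_t+w_{\max}-1$. This is exactly where positivity of capacities is needed; it holds because operation (i) does not apply when \textsc{RoundAndRemove} is called.

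There is no real obstacle here. The only point needing care is confirming that the $S$ found in the algorithm ("no matching saturates $S$, inclusion-minimal") coincides with the Hall-deficient minimal set of \Cref{lem:minimal-hall-set}. That follows from Hall's theorem: a deficient subset exists, and by minimality it must be $S$ itself.
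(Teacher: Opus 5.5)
Your proposal is correct and matches the paper's proof essentially step for step: \Cref{lem:minimal-hall-set} supplies the matching, connectivity of each pair comes from the edge $(M(t),t)$, and the bound follows from $\weight_{M(t)}\le w_{\max}\le \cp_t+w_{\max}-1$ using $\cp_t\ge 1$. The paper also states explicitly that, since the matching saturates $\PT(G,S)$ as well, every removed pre-terminal lands in exactly one returned part. Your phrase ``saturating both sides'' already gives this.
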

\begin{proof}
Let $S$ be the set chosen by \textsc{RoundAndRemove}. By \Cref{lem:minimal-hall-set}, there is a matching from $S\setminus\{t_S\}$ to $\PT(G,S)$ that saturates both sides. Hence every pre-terminal removed with $S$ belongs to exactly one returned part.

The part $V_{t_S}=\{t_S\}$ is connected and has non-terminal weight zero. For $t\in S\setminus\{t_S\}$, the part $V_t=\{t,M(t)\}$ is connected because $M(t)$ is adjacent to $t$. Since all capacities are positive and integral, $\weight_{M(t)}\le w_{\max}\le \cp_t+w_{\max}-1$. Thus all rounded parts satisfy the required weighted bound.
\end{proof}

It remains to show that the \fractionalesscond is preserved on the remaining instance. Let $\assgnw$ be a flow-essential split-assignment witnessing this condition. By \Cref{lem:terminal-leaf-hall-set-not-essential-outside}, $\assgnw(v,t)>0$ with $t\in S$ implies that $v\in \PT(G,S)$. Thus, no remaining non-terminal assigns positive weight to a terminal in $S$, so restricting $\assgnw$ to the remaining instance preserves the total weight assigned by every remaining non-terminal. Moreover, removing vertices cannot increase the total weight assigned to any remaining terminal, and hence all capacity constraints continue to hold. It remains only to show that every terminal $t\notin S$ that is essential for a remaining non-terminal $v$ remains essential after removing $S\cup \PT(G,S)$; this is the content of the following lemma.

\begin{lemma}
\label{lem:essentiality-survives-rounding}
Let $G=(V,E)$ be a directed graph with terminal set $T$, and let
$S\subseteq T$ be inclusion-minimal subject to
$|S|>|\PT(G,S)|$. Suppose that a terminal $t \in T \setminus S$ is
essential for a non-terminal $v \in (V\setminus T)\setminus \PT(G,S)$ in $G$. Then $t$ remains essential for $v$ after removing
$S\cup \PT(G,S)$ from the graph and $S$ from the terminal set.
\end{lemma}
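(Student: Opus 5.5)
The proof goes in two stages, as the introduction sketches. Write $P=\PT(G,S)$ and let $G_1$ be the graph obtained from $G$ by deleting $S$, with terminal set $T_1=(T\setminus S)\cup P$; the vertices of $P$ keep their out-edges to $T\setminus S$, but as terminals these edges are irrelevant. The final instance is $G_1\setminus P$ with terminal set $T\setminus S$, so we reach it by deleting the temporary terminals of $P$ one at a time. Each such deletion removes a terminal other than $t$, so \Cref{lem:essential-survives-terminal-removal} preserves essentiality of $t$ for $v$ at every step. Since $v\notin P$, $v$ is a non-terminal throughout. The whole task therefore reduces to showing that $t$ is essential for $v$ in $(G_1,T_1)$.

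\textbf{Step 1: connectivity is preserved.} I will show $\connvg{v}{(G_1,T_1)}=\connvg{v}{G}=\kappa$.

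For the upper bound, take any cut $C'$ separating $v$ from $T_1$ in $G_1$. Put $S$ into its left side to get a cut in $G$. This is valid: every in-neighbor of a vertex in $S$ lies in $P$, and $P\subseteq \lcutc{C'}\cup\scutc{C'}$, so no edge goes from the right side into $S$. The separator is unchanged, hence $\kappa\le\connvg{v}{(G_1,T_1)}$.

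For the lower bound, take a maximum family of $\kappa$ paths in $G$. Truncate each path ending in $S$ at its last non-terminal, which lies in $P$. Truncate any path meeting $P$ earlier at its first vertex of $P$. Distinct paths stay vertex-disjoint and end at distinct terminals of $T_1$, so $\connvg{v}{(G_1,T_1)}\ge\kappa$.

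\textbf{Step 2: $t$ stays in the separator.} Let $C$ be the tightest minimum cut for $v$ in $G$; by \Cref{lem:essential-terminal-tightest-cut}, $t\in\scutc{C}$. By \Cref{lem:terminal-leaf-hall-set-not-essential-outside}, no terminal of $S$ is essential for $v$, so $S\subseteq\lcutc{C}$.

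I want a minimum cut of $(G_1,T_1)$ whose separator contains $t$. The plan is to take $C$ restricted to $V\setminus S$ and move every vertex of $P\cap\rcutc{C}$ into the separator. The problem is that this increases the size, so the construction needs a correction.

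Instead, I will argue by contradiction. Suppose $t$ is not essential in $(G_1,T_1)$. Then there is a family $\mathcal Q$ of $\kappa$ disjoint paths in $G_1$ ending in $T_1\setminus\{t\}$. Extend each path of $\mathcal Q$ that ends at some $u\in P$ by an edge from $u$ into $S$. Choose these edges using the matching of \Cref{lem:minimal-hall-set} between $P$ and $S\setminus\{t'\}$, for an arbitrary $t'\in S$. Distinct endpoints in $P$ then get distinct terminals of $S$, and the extensions add only terminals of $S$, which $\mathcal Q$ never used. The result is $\kappa$ disjoint paths in $G$ avoiding $t$, contradicting the essentiality of $t$ in $G$.

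This extension argument actually makes the cut-based route unnecessary: it handles the case analysis directly, and I expect Step 2 to go through this way. The main point to double-check is that vertices of $P$ appearing as internal vertices of paths are not an issue. This holds because in $(G_1,T_1)$ every vertex of $P$ is a terminal, so paths end there.
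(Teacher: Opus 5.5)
Your argument is correct. It keeps the paper's two-stage skeleton: make $\PT(G,S)$ the terminals in place of $S$, then delete them one by one using \Cref{lem:essential-survives-terminal-removal}. Where you differ is in how you show essentiality in the intermediate instance.

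The paper works with cuts. Let $C$ be the tightest minimum cut of $v$ in $G$. By \Cref{lem:terminal-leaf-hall-set-not-essential-outside}, $S\subseteq\lcutc{C}$. Every vertex of $\PT(G,S)$ has an edge into $S$, so it cannot lie in $\rcutc{C}$. Hence $(\lcutc{C}\setminus S,\scutc{C},\rcutc{C})$ is already a minimum cut of the intermediate instance with $t$ in its separator, and \Cref{obs:tightest-min-cut} with \Cref{lem:essential-terminal-tightest-cut} finish. So the obstacle you foresaw on the cut route does not arise: $\PT(G,S)\cap\rcutc{C}=\emptyset$, and nothing has to be moved into the separator.

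Your route instead lifts a $t$-avoiding family back to $G$ by appending the matching edges from \Cref{lem:minimal-hall-set}. This is the same rerouting trick as in the proof of \Cref{lem:terminal-leaf-hall-set-not-essential-outside}. It uses only paths and the matching, with no tightest cuts and without \Cref{lem:terminal-leaf-hall-set-not-essential-outside} itself. The paper's route, in exchange, yields an explicit minimum cut of the intermediate instance and the exact equality of the two connectivities.

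There is a slip in Step 1: both of your arguments prove the same inequality. Write $\kappa_1=\connvg{v}{(G_1,T_1)}$. Lifting a cut of $(G_1,T_1)$ to a cut of $G$ with the same separator shows $\kappa\le\kappa_1$, and the truncation argument shows this again. The reverse inequality is never proved, so the claimed equality is not established as written.

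The equality is also not needed. Your Step 2 contradiction only uses $\kappa_1\ge\kappa$, to extract $\kappa$ paths from a maximum $t$-avoiding family. If you want equality anyway, applying the same matching extension to an arbitrary maximum family of $(G_1,T_1)$ gives $\kappa_1\le\kappa$.

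A minor point: you keep the out-edges of $\PT(G,S)$ in $G_1$, so your closing claim that paths must end at vertices of $\PT(G,S)$ is not literally true. Nothing breaks, since a path passing through such a vertex lifts to $G$ unchanged. Deleting those edges, as the paper does when forming $H$, removes the issue.
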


\begin{figure}[htbp]
\centering
\begingroup
\newcommand{\panelcaption}[1]{%
  \par\vspace{-8pt}\begin{minipage}{6.85cm}\centering\small #1\end{minipage}%
}

\tikzset{
  round path avoids/.style={
    ->,
    draw=blue!68!black,
    line width=1.55pt,
    line cap=round,
    shorten <=1.5pt,
    shorten >=2pt
  },
  round path meets/.style={
    ->,
    draw=orange!88!black,
    line width=1.8pt,
    line cap=round,
    shorten <=1.5pt,
    shorten >=2pt
  },
  round graph edge/.style={
    ->,
    draw=gray!55,
    line width=0.72pt,
    shorten <=1.5pt,
    shorten >=2pt
  },
  round omitted/.style={
    circle,
    fill=black,
    draw=black,
    inner sep=1.45pt
  }
}

\resizebox{\textwidth}{!}{%
\begin{minipage}{16.4cm}
\centering

\begin{tikzpicture}[baseline=-0.5ex]
  \draw[round path meets] (0,0) -- (0.72,0)
    node[right,font=\scriptsize,text=black] {meets $\PT(G,S)$};
  \draw[round path avoids] (3.75,0) -- (4.47,0)
    node[right,font=\scriptsize,text=black] {avoids $\PT(G,S)$};
\end{tikzpicture}

\vspace{0.65cm}

\begin{minipage}[t]{7.10cm}
\centering
\begin{essgraph}[
  scale=0.76,
  >=stealth,
  node font={\small}
]
  \essterminal{sone}{(-3.30,-0.45)}{$s_1$}
  \essterminal{stwo}{(-1.95,-0.45)}{$s_2$}
  \essterminal{sthree}{(-0.60,-0.45)}{$s_3$}

  \node[round omitted] (tdot) at (1.50,-0.45) {};
  \essterminal{t}{(2.65,-0.45)}{$t$}
  \essterminal{tprime}{(4.00,-0.45)}{$t'$}

  \essvertex{pone}{(-3.05,2.20)}{$p_1$}
  \essvertex{ptwo}{(-0.35,2.20)}{$p_2$}

  \essvertex{v}{(0,4.35)}{$v$}

  \begin{scope}[on background layer]
    \node[
      fit=(sone)(stwo)(sthree),
      fill=orange!12,
      draw=orange!80!black,
      dashed,
      rounded corners=7pt,
      line width=0.9pt,
      inner xsep=7pt,
      inner ysep=14pt,
      label={
        [font=\small\bfseries,text=orange!80!black,
         anchor=north west,yshift=-2pt]
        south west:$S$
      }
    ] {};

    \node[
      fit=(tdot)(t)(tprime),
      fill=blue!7,
      draw=blue!60!black,
      dashed,
      rounded corners=7pt,
      line width=0.9pt,
      inner xsep=7pt,
      inner ysep=14pt,
      label={
        [font=\small\bfseries,text=blue!60!black,
         anchor=north east,yshift=-2pt]
        south east:$T\setminus S$
      }
    ] {};

    \node[
      fit=(pone)(ptwo),
      fill=violet!9,
      draw=violet!70!black,
      dashed,
      rounded corners=7pt,
      line width=0.9pt,
      inner xsep=8pt,
      inner ysep=13pt,
    ] {};

    \draw[round graph edge,opacity=0.5] (pone) -- (sone);

    \draw[round path meets]
      (v) .. controls (-2.05,4.18) and (-3.00,3.25) .. (pone);
    \draw[round path meets,opacity=0.5] (pone) -- (sone);
    \draw[round path meets]
      (v) .. controls (-0.55,3.75) and (-0.48,2.85) .. (ptwo);
    \draw[round path meets,opacity=0.5]
      (ptwo) .. controls (0.00,1.00) and (0.82,-0.02) .. (tdot);

    \draw[round path avoids]
      (v) .. controls (1.05,3.45) and (2.15,1.35) .. (t);
    \draw[round path avoids]
      (v) .. controls (2.45,3.95) and (3.90,2.10) .. (tprime);
  \end{scope}

  \node[font=\small\bfseries,anchor=south] at (0,4.72)
    {$\mathcal P$ in $(G,T)$};
  \node[
    font=\small\bfseries,
    text=violet!70!black,
    fill=violet!9,
    inner sep=1pt
  ] at (-1.70,2.20) {$\PT(G,S)$};
\end{essgraph}
\phantomsubcaption\label{fig:round-essential-paths-in-g}
\panelcaption{\textbf{(\thesubfigure)} The path family $\mathcal P$ in the original
  instance $(G,T)$.}
\end{minipage}
\hspace{-0.28cm}
\raisebox{2.55cm}{\Large$\Longrightarrow$}
\hspace{-0.28cm}
\begin{minipage}[t]{7.10cm}
\centering
\begin{essgraph}[
  scale=0.76,
  >=stealth,
  node font={\small}
]
  \essterminal[extra={opacity=0}]{soneghost}{(0,4.35)}{}
  \essterminal[extra={opacity=0}]{stwoghost}{(0,4.35)}{}
  \essterminal[extra={opacity=0}]{sthreeghost}{(0,4.35)}{}

  \node[round omitted] (tdot) at (1.30,-0.45) {};
  \essterminal{t}{(2.45,-0.45)}{$t$}
  \essterminal{tprime}{(3.80,-0.45)}{$t'$}

  \essterminal[extra={fill=violet!24}]
    {pone}{(-2.45,-0.45)}{$p_1$}
  \essterminal[extra={fill=violet!24}]
    {ptwo}{(-0.95,-0.45)}{$p_2$}

  \essvertex{v}{(0,4.35)}{$v$}

  \begin{scope}[on background layer]
    \node[
      fit=(tdot)(t)(tprime),
      fill=blue!7,
      draw=blue!60!black,
      dashed,
      rounded corners=7pt,
      line width=0.9pt,
      inner xsep=7pt,
      inner ysep=14pt,
      label={
        [font=\small\bfseries,text=blue!60!black,
         anchor=north east,yshift=-2pt]
        south east:$T\setminus S\subseteq T_H$
      }
    ] {};

    \node[
      fit=(pone)(ptwo),
      fill=violet!12,
      draw=violet!70!black,
      dashed,
      rounded corners=7pt,
      line width=0.9pt,
      inner xsep=8pt,
      inner ysep=14pt,
      label={
        [font=\small\bfseries,text=violet!70!black,yshift=-2pt]
        below:$\PT(G,S)\subseteq T_H$
      }
    ] {};

    \draw[round path meets]
      (v) .. controls (-1.82,3.95) and (-2.42,1.35) .. (pone);
    \draw[round path meets]
      (v) .. controls (-0.52,3.28) and (-0.82,1.10) .. (ptwo);

    \draw[round path avoids]
      (v) .. controls (0.95,3.45) and (1.98,1.35) .. (t);
    \draw[round path avoids]
      (v) .. controls (2.28,3.95) and (3.68,2.10) .. (tprime);
  \end{scope}

  \node[font=\small\bfseries,anchor=south] at (0,4.72)
    {$\mathcal P_H$ in $(H,T_H)$};
\end{essgraph}
\vspace{0.42cm}
\phantomsubcaption\label{fig:round-essential-paths-in-h}
\panelcaption{\textbf{(\thesubfigure)} The path family $\mathcal P_H$ after replacing
  $S$ by $\PT(G,S)$ in the terminal set.}
\end{minipage}

\end{minipage}
}%

\caption{The path-family transformation used in
  \Cref{lem:essentiality-survives-rounding}. In
  \subref{fig:round-essential-paths-in-g}, the orange paths of
  $\mathcal P$ meet $\PT(G,S)$, whereas the blue paths avoid it. In
  \subref{fig:round-essential-paths-in-h},
  the vertices of $S$ have been
  removed and the vertices of $\PT(G,S)$ are terminals with no outgoing
  edges. Truncating each orange path at its first vertex in $\PT(G,S)$
  and leaving every blue path unchanged yields $\mathcal P_H$ in
  $(H,T_H)$, where $T_H=(T\setminus S)\cup \PT(G,S)$.}
\label{fig:round-essential-intermediate}

\endgroup
\end{figure}

\begin{proof}
We prove the statement in two steps. First, we remove $S$, replace it in the
terminal set by its pre-terminal neighborhood, and show that this change
preserves the essentiality of $t$. We then remove the new temporary terminals
one at a time.

For clarity, throughout this proof a pair $(J,U)$ means that $J$ is the graph
under consideration and $U$ is its designated terminal set. Write
$T':=T\setminus S$ and $G':=G\setminus(S\cup \PT(G,S))$. Let $H$ be obtained
from $G\setminus S$ by deleting every outgoing edge of every vertex in
$\PT(G,S)$, and let $T_H:=(T \setminus S)\cup \PT(G,S)$. Thus $(G,T)$ is the original instance,
shown in \Cref{fig:round-essential-paths-in-g}; $(H,T_H)$ is the intermediate
instance in which the vertices of $\PT(G,S)$ replace the terminals of $S$, shown
in \Cref{fig:round-essential-paths-in-h}; and $(G',T')$ is the instance after
rounding.

We first show that $t$ is essential for $v$ in $(H,T_H)$. Write $\kappa_G$
and $\kappa_H$ for the terminal connectivity of $v$ in $(G,T)$ and
$(H,T_H)$, respectively. We begin by proving that
$\kappa_G=\kappa_H$.

Let $\mathcal P$ be a family of $\kappa_G$ vertex-disjoint paths from $v$ to
distinct terminals in $T$. Construct a family $\mathcal P_H$ by truncating
each path that meets $\PT(G,S)$ at its first vertex in $\PT(G,S)$. Since we only take
prefixes, the paths remain vertex-disjoint; this transformation is depicted in
\Cref{fig:round-essential-paths-in-g,fig:round-essential-paths-in-h}. Every path that originally ended
in $S$ meets $\PT(G,S)$ and, after truncation, ends at its first such vertex;
every path that does not meet $\PT(G,S)$ ends in $T'$. Thus the paths in
$\mathcal P_H$ lie in $H$, and end at
distinct terminals in $T_H$. Therefore $\kappa_H\ge\kappa_G$.

For the reverse inequality, let
$C=(\lcutc{C},\scutc{C},\rcutc{C})$ be the tightest minimum cut separating
$v$ from $T$ in $G$. By
\Cref{lem:terminal-leaf-hall-set-not-essential-outside}, no terminal in $S$
is essential for $v$. Hence
\Cref{lem:essential-terminal-tightest-cut} implies that
$S\cap\scutc{C}=\emptyset$. Since no terminal can lie on the right side of
a cut separating $v$ from $T$, it follows that $S\subseteq\lcutc{C}$.

Moreover, $\PT(G,S)\subseteq\lcutc{C}\cup\scutc{C}$. Indeed, every vertex
$p\in \PT(G,S)$ has an edge to some terminal in $S\subseteq\lcutc{C}$, so
$p\in\rcutc{C}$ would create an edge from the right side of $C$ to its left
side. The placement of $S$, $\PT(G,S)$, and $v$ with respect to $C$ is shown in
\Cref{fig:round-essential-tightest-cut}. Remove the vertices of $S$ from the left side of $C$ and define
$C_H:=(\lcutc{C}\setminus S,\scutc{C},\rcutc{C})$. This is a valid cut in
$H$ separating $v$ from $T_H$: the vertex $v$ remains on the right side,
every terminal in $T_H$ lies on the left side or in the separator (this was
already true for $T'$, and was just proved for $\PT(G,S)$), and no edge goes from
right to left. Its separator is unchanged and has size $\kappa_G$, so
$\kappa_H\le\kappa_G$. Together with
$\kappa_H\ge\kappa_G$, this gives $\kappa_H=\kappa_G$, and hence $C_H$ is a
minimum cut in $H$.
The resulting minimum cut $C_H$ is shown in
\Cref{fig:round-essential-induced-cut}.
\begin{figure}[htbp]
\centering
\begingroup
\renewcommand{\EssCutPadding}{1pt}
\newcommand{\panelcaption}[1]{%
  \par\vspace{1pt}\begin{minipage}{6.85cm}\centering\small #1\end{minipage}%
}

\tikzset{
  round cut witness edge/.style={
    ->,
    draw=gray!72,
    line width=1.0pt,
    shorten <=1.5pt,
    shorten >=2pt
  },
  round cut S dot/.style={
    circle,
    fill=orange!80!black,
    draw=orange!80!black,
    inner sep=1.8pt
  },
  round cut remaining dot/.style={
    circle,
    fill=blue!60!black,
    draw=blue!60!black,
    inner sep=1.8pt
  },
  round cut neighborhood dot/.style={
    circle,
    fill=violet!70!black,
    draw=violet!70!black,
    inner sep=1.8pt
  },
  round cut S region/.style={
    fill=orange!12,
    draw=orange!80!black,
    dashed,
    rounded corners=7pt,
    line width=0.9pt,
    inner xsep=9pt,
    inner ysep=12pt
  },
  round cut remaining region/.style={
    fill=blue!7,
    draw=blue!60!black,
    dashed,
    rounded corners=7pt,
    line width=0.9pt,
    inner xsep=9pt,
    inner ysep=12pt
  },
  round cut neighborhood region/.style={
    fill=violet!9,
    draw=violet!70!black,
    dashed,
    rounded corners=7pt,
    line width=0.9pt,
    inner xsep=12pt,
    inner ysep=12pt
  },
  round cut terminal universe/.style={
    fill=black!12,
    fill opacity=0.20,
    draw=black!58,
    dash pattern=on 1.2pt off 1.6pt,
    rounded corners=9pt,
    line width=0.9pt,
    inner sep=7pt
  }
}

\resizebox{\textwidth}{!}{%
\begin{minipage}{16.4cm}
\centering

\begin{minipage}[t]{7.10cm}
\centering
\begin{essgraph}[scale=0.74,>=stealth,node font={\small}]
  \essterminal[extra={opacity=0}]{soneghost}{(0,3.10)}{}
  \essterminal[extra={opacity=0}]{stwoghost}{(0,3.10)}{}
  \essterminal[extra={opacity=0}]{sthreeghost}{(0,3.10)}{}

  \essterminal{t}{(-2.70,0)}{$t$}

  \node[round cut remaining dot] (tleftone) at (-3.15,-2.25) {};
  \node[round cut remaining dot] (tlefttwo) at (-2.25,-2.25) {};
  \node[round cut S dot] (sone) at (-0.35,-2.25) {};
  \node[round cut S dot] (stwo) at (0.55,-2.25) {};
  \node[round cut S dot] (sthree) at (1.45,-2.25) {};

  \node[round cut neighborhood dot] (pone) at (3.40,-2.25) {};
  \node[round cut neighborhood dot] (ptwo) at (3.40,0) {};

  \begin{scope}[shift={(0,3.10)},scale=1.28,transform shape]
    \essvertex[colors={t/1}]{v}{(0,0)}{$v$}
  \end{scope}

  \essvertex[extra={opacity=0}]{la}{(-4.40,-3.00)}{}
  \essvertex[extra={opacity=0}]{lb}{(4.40,-3.00)}{}
  \essvertex[extra={opacity=0}]{lc}{(-4.40,-2.00)}{}
  \essvertex[extra={opacity=0}]{ld}{(4.40,-2.00)}{}
  \essvertex[extra={opacity=0}]{ca}{(-4.40,-0.65)}{}
  \essvertex[extra={opacity=0}]{cb}{(4.40,-0.65)}{}
  \essvertex[extra={opacity=0}]{cc}{(-4.40,1.05)}{}
  \essvertex[extra={opacity=0}]{cd}{(4.40,1.05)}{}
  \essvertex[extra={opacity=0}]{ra}{(-4.40,2.40)}{}
  \essvertex[extra={opacity=0}]{rb}{(4.40,2.40)}{}
  \essvertex[extra={opacity=0}]{rc}{(-4.40,3.80)}{}
  \essvertex[extra={opacity=0}]{rd}{(4.40,3.80)}{}

  \esscut[
    name=C,
    mode=full,
    left={la,lb,lc,ld},
    separator={t,ca,cb,cc,cd},
    right={v,ra,rb,rc,rd}
  ]

  \begin{scope}[on background layer]
    \node[round cut S region,fit=(sone)(stwo)(sthree)] (sregion) {};
    \node[round cut remaining region,fit=(t)(tleftone)(tlefttwo)]
      (remainingregion) {};
    \node[round cut neighborhood region,fit=(pone)(ptwo)]
      (nregion) {};
    \node[round cut terminal universe,fit=(sregion)(remainingregion)]
      (tregion) {};

  \end{scope}

  \draw[round cut witness edge] (pone) -- (sthree);
  \draw[round cut witness edge]
    (ptwo) .. controls (2.90,-0.88) and (1.30,-1.43) .. (stwo);

  \node[font=\small\bfseries,text=orange!80!black,
        fill=orange!12,inner sep=1pt]
    at (0.55,-1.60) {$S$};
  \node[font=\small\bfseries,text=blue!60!black,
        fill=blue!7,inner sep=1pt]
    at (-2.70,-1.60) {$T\setminus S$};
  \node[font=\small\bfseries,text=violet!70!black,
        fill=violet!9,inner sep=1pt]
    at (3.40,-1.60) {$\PT(G,S)$};
  \node[font=\small\bfseries,text=black!65,fill=black!8,inner sep=1pt]
    at (-0.90,0.88) {$T$};

  \node[anchor=east,font=\small\itshape] at (-4.70,-2.30)
    {$\lcutc{C}$};
  \node[anchor=east,font=\small\itshape] at (-4.70,0)
    {$\scutc{C}$};
  \node[anchor=east,font=\small\itshape] at (-4.70,3.10)
    {$\rcutc{C}$};
\end{essgraph}
\phantomsubcaption\label{fig:round-essential-tightest-cut}
\panelcaption{\textbf{(\thesubfigure)} The tightest minimum cut $C$ in
  $(G,T)$.}
\end{minipage}
\hspace{-0.22cm}
\raisebox{2.05cm}{\Large$\Longrightarrow$}
\hspace{-0.22cm}
\begin{minipage}[t]{7.10cm}
\centering
\begin{essgraph}[scale=0.74,>=stealth,node font={\small}]
  \essterminal[extra={opacity=0}]{soneghost}{(0,3.10)}{}
  \essterminal[extra={opacity=0}]{stwoghost}{(0,3.10)}{}
  \essterminal[extra={opacity=0}]{sthreeghost}{(0,3.10)}{}

  \essterminal{t}{(1.00,0)}{$t$}
  \node[round cut remaining dot] (tleftone) at (0.55,-2.25) {};
  \node[round cut remaining dot] (tlefttwo) at (1.45,-2.25) {};

  \node[round cut neighborhood dot] (pone) at (-1.25,-2.25) {};
  \node[round cut neighborhood dot] (ptwo) at (-1.25,0) {};

  \begin{scope}[shift={(0,3.10)},scale=1.28,transform shape]
    \essvertex[colors={t/1}]{v}{(0,0)}{$v$}
  \end{scope}

  \essvertex[extra={opacity=0}]{hla}{(-4.40,-3.00)}{}
  \essvertex[extra={opacity=0}]{hlb}{(4.40,-3.00)}{}
  \essvertex[extra={opacity=0}]{hlc}{(-4.40,-2.00)}{}
  \essvertex[extra={opacity=0}]{hld}{(4.40,-2.00)}{}
  \essvertex[extra={opacity=0}]{hca}{(-4.40,-0.65)}{}
  \essvertex[extra={opacity=0}]{hcb}{(4.40,-0.65)}{}
  \essvertex[extra={opacity=0}]{hcc}{(-4.40,1.05)}{}
  \essvertex[extra={opacity=0}]{hcd}{(4.40,1.05)}{}
  \essvertex[extra={opacity=0}]{hra}{(-4.40,2.40)}{}
  \essvertex[extra={opacity=0}]{hrb}{(4.40,2.40)}{}
  \essvertex[extra={opacity=0}]{hrc}{(-4.40,3.80)}{}
  \essvertex[extra={opacity=0}]{hrd}{(4.40,3.80)}{}

  \esscut[
    name=CH,
    mode=full,
    left={hla,hlb,hlc,hld},
    separator={t,hca,hcb,hcc,hcd},
    right={v,hra,hrb,hrc,hrd}
  ]

  \begin{scope}[on background layer]
    \node[round cut remaining region,fit=(t)(tleftone)(tlefttwo)]
      (hremainingregion) {};
    \node[round cut neighborhood region,fit=(pone)(ptwo)]
      (hnregion) {};
    \node[round cut terminal universe,fit=(hremainingregion)(hnregion)]
      (thregion) {};
  \end{scope}

  \node[font=\small\bfseries,text=blue!60!black,
        fill=blue!7,inner sep=1pt]
    at (1.00,-1.60) {$T\setminus S$};
  \node[font=\small\bfseries,text=violet!70!black,
        fill=violet!9,inner sep=1pt]
    at (-1.25,-1.60) {$\PT(G,S)$};
  \node[font=\small\bfseries,text=black!65,fill=black!8,inner sep=1pt]
    at (-0.12,0.88) {$T_H$};

  \node[anchor=east,font=\small\itshape] at (-4.70,-2.30)
    {$\lcutc{C_H}$};
  \node[anchor=east,font=\small\itshape] at (-4.70,0)
    {$\scutc{C_H}$};
  \node[anchor=east,font=\small\itshape] at (-4.70,3.10)
    {$\rcutc{C_H}$};
\end{essgraph}
\phantomsubcaption\label{fig:round-essential-induced-cut}
\panelcaption{\textbf{(\thesubfigure)} The induced minimum cut $C_H$ in
  $(H,T_H)$.}
\end{minipage}

\end{minipage}
}%

\caption{The cut transformation used in
  \Cref{lem:essentiality-survives-rounding}. In
  \subref{fig:round-essential-tightest-cut}, $S\subseteq\lcutc{C}$,
  $v\in\rcutc{C}$, and the edges from $\PT(G,S)$ to $S$ force $\PT(G,S)$ to lie
  in $\scutc{C}\cup\lcutc{C}$. The essential terminal
  $t\in T\setminus S$ lies in $\scutc{C}$. In
  \subref{fig:round-essential-induced-cut}, deleting $S$ and making $\PT(G,S)$
  terminal yields $T_H=(T\setminus S)\cup \PT(G,S)$ and the cut
  $C_H=(\lcutc{C}\setminus S,\scutc{C},\rcutc{C})$, with the same
  separator and with $t$ still in it.}
\label{fig:round-essential-cut-transfer}

\endgroup
\end{figure}

Since $t$ is essential for $v$ in $(G,T)$,
\Cref{lem:essential-terminal-tightest-cut} gives $t\in\scutc{C}$. Thus $t$
also belongs to the separator of the minimum cut $C_H$. Because
$t\in T_H$, \Cref{obs:tightest-min-cut} implies that $t$ belongs to the
separator of the tightest minimum cut in $(H,T_H)$. By
\Cref{lem:essential-terminal-tightest-cut}, $t$ is therefore essential for
$v$ in $(H,T_H)$.

It remains to remove the vertices of $\PT(G,S)$, which are now terminals in
$(H,T_H)$. Remove them one at a time. Since each of them is distinct from
$t$, \Cref{lem:essential-survives-terminal-removal} guarantees that $t$
remains essential after every removal. Once all vertices of $\PT(G,S)$ have been
removed, the resulting instance is $(G',T')$. Therefore $t$ remains essential
for $v$ after the rounding step.
\end{proof}

We can now use the preceding structural lemma to show that the split-assignment
witness remains valid on the smaller instance returned by rounding.
\begin{lemma}[Preservation after rounding] \label{lem:weighted-round-and-remove}
Assume that $(G,T,\cp,\weight)$ satisfies the \fractionalesscond, all capacities are positive, there is no pre-terminal of out-degree one, and there is no matching from $\PT(G,T)$ to $T$ saturating $T$. Then the smaller instance returned by \textsc{RoundAndRemove} satisfies the \fractionalesscond.
\end{lemma}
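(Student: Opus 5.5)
The plan is to take a witness $\assgnw$ for the \fractionalesscond in $(G,T,\cp,\weight)$ and show that its restriction $\assgnw'$ to $(V(G')\setminus T')\times T'$ witnesses the condition in the returned instance $(G',T',\cp',\weight)$. Here $G'=G\setminus(S\cup\PT(G,S))$ and $T'=T\setminus S$. The set $S$ chosen by \textsc{RoundAndRemove} is inclusion-minimal with no saturating matching. By Hall's theorem and the minimality argument given before \Cref{lem:weighted-rounded-parts-valid}, $S$ is also inclusion-minimal subject to $|S|>|\PT(G,S)|$. This means \Cref{lem:terminal-leaf-hall-set-not-essential-outside} and \Cref{lem:essentiality-survives-rounding} both apply to $S$.

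Three properties of $\assgnw'$ need to be checked, in this order.

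\textbf{Row sums.} Take a remaining non-terminal $v\in V(G')\setminus T'$, so $v\notin\PT(G,S)$. By \Cref{lem:terminal-leaf-hall-set-not-essential-outside}, no $t\in S$ is essential for $v$ in $G$. Since $\assgnw$ sends positive weight only to essential terminals, $\assgnw(v,t)=0$ for all $t\in S$. Hence $\sum_{t\in T'}\assgnw'(v,t)=\sum_{t\in T}\assgnw(v,t)=\weight_v$, so $\assgnw'$ is a genuine \FractionalEssAssign on the new instance.

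\textbf{Capacities.} For $t\in T'$ we have $\sum_{v}\assgnw'(v,t)\le\sum_v\assgnw(v,t)\le\cp_t=\cp'_t$, because we only dropped nonnegative terms belonging to removed vertices.

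\textbf{Essentiality.} Suppose $\assgnw'(v,t)>0$ for some $v\in V(G')\setminus T'$ and $t\in T'$. Then $t$ is essential for $v$ in $G$, with $t\in T\setminus S$ and $v\notin\PT(G,S)$. By \Cref{lem:essentiality-survives-rounding}, $t$ remains essential for $v$ in $(G',T')$.

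Since all of the heavy lifting is already in \Cref{lem:essentiality-survives-rounding}, no step should be a real obstacle. The one point needing care is the bookkeeping that identifies the set $S$ of \textsc{RoundAndRemove} (minimal with no saturating matching) with the hypothesis of the cited lemmas (minimal with $|S|>|\PT(G,S)|$). This uses the remark that every Hall-violating subset of $S$ must be $S$ itself. It also requires that every remaining non-terminal lies outside $\PT(G,S)$, which holds by the definition of $G'$. The assumptions that all capacities are positive and that no pre-terminal has out-degree one are not needed for this preservation argument beyond guaranteeing that the algorithm reaches this branch.
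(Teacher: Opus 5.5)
Your proposal is correct and matches the paper's proof essentially step for step. You restrict the witness, use \Cref{lem:terminal-leaf-hall-set-not-essential-outside} to show no remaining vertex sends weight into $S$, note that capacities can only decrease, and invoke \Cref{lem:essentiality-survives-rounding} for essentiality, with the same identification of the \textsc{RoundAndRemove} set as inclusion-minimal subject to $|S|>|\PT(G,S)|$ that the paper makes just before \Cref{lem:weighted-rounded-parts-valid}.
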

\begin{proof}
Let $S$ be the inclusion-minimal Hall-deficient set chosen by
\textsc{RoundAndRemove}, and let $G':=G\setminus(S\cup \PT(G,S))$ and
$T':=T\setminus S$. Let $\assgnw$ be a
flow-essential split-assignment witnessing the \fractionalesscond in $(G,T)$, and
restrict it to the remaining vertices and terminals.

First, every remaining vertex still sends all of its weight to terminals in
$T'$. Indeed, by
\Cref{lem:terminal-leaf-hall-set-not-essential-outside}, no terminal in $S$
is essential for a vertex outside $\PT(G,S)$. Since $\assgnw$ sends positive
weight only to essential terminals, a remaining vertex sends no weight to
$S$.

Next, suppose the restricted split-assignment sends positive weight from a
remaining vertex $v$ to a terminal $t\in T'$. Then $t$ was essential for $v$
in $G$, and \Cref{lem:essentiality-survives-rounding} guarantees that it
remains essential for $v$ after $S\cup \PT(G,S)$ is removed. Finally, the total weight assigned to every remaining terminal
can only decrease when the split-assignment is
restricted, so it remains within the corresponding capacity. Thus the
restricted split-assignment is a valid witness to the \fractionalesscond on the
smaller instance.
\end{proof}

We are now ready to prove the main theorem of this section.

\begin{proof}[Proof of \Cref{gyori-weighted-poly-theorem}]
We show that \textsc{GLWeightedPartition} in \Cref{alg:gl-weighted} is the promised polynomial-time algorithm. We argue by induction on $|V(G)|+|E(G)|$. If $G=\emptyset$, Lines~\ref{alg:gl-weighted:line:empty-start}--\ref{alg:gl-weighted:line:empty-end} return the empty partition, and the claim is immediate.

Suppose a terminal with zero capacity exists. Then the algorithm deletes it in Lines~\ref{alg:gl-weighted:line:remove-saturated-start}--\ref{alg:gl-weighted:line:remove-saturated-end} and recurses on the smaller instance. By \Cref{lem:frac-remove-saturated-terminal}, the output of the recursive call extends to a valid partition of the original instance. Suppose a pre-terminal of out-degree one exists. Then the algorithm contracts it in Lines~\ref{alg:gl-weighted:line:degree-1-start}--\ref{alg:gl-weighted:line:degree-1-end} and recurses on the smaller instance. By \Cref{lem:weighted-contract-degree-one-potential-leaf}, the output of the recursive call extends to a valid partition of the original instance.

If neither of the above cases applies, then all remaining terminals have positive capacity and every pre-terminal has out-degree at least two. If a matching from $\PT(G,T)$ to $T$ saturating $T$ exists, then the algorithm finds a removable edge in Lines~\ref{alg:gl-weighted:line:matching-start}--\ref{alg:gl-weighted:line:matching-end} and recurses on the smaller instance. By \Cref{lem:weighted-matching-removable-edge}, the removable edge exists.

If no matching from $\PT(G,T)$ to $T$ saturating $T$ exists, the algorithm chooses a nonempty set $S\subseteq T$ that no matching saturates and that is inclusion-minimal with this property, together with a terminal $t_S\in S$. Hall's theorem gives a subset of $S$ with fewer neighboring pre-terminals than terminals. Minimality forces this subset to be $S$, since every proper subset of $S$ can be saturated. Thus, $S$ is inclusion-minimal subject to $|S|>|\PT(G,S)|$. By \Cref{lem:minimal-hall-set}, $|\PT(G,S)|=|S|-1$ and there is a matching from $S\setminus\{t_S\}$ to $\PT(G,S)$ that saturates both sides. The subroutine uses this matching to form the parts for the terminals in $S$, and \Cref{lem:weighted-rounded-parts-valid} shows that these parts are connected and satisfy the weighted bound. The remaining instance $(G',T',\cp',\weight)$ satisfies the \fractionalesscond by \Cref{lem:weighted-round-and-remove}.

It remains to prove that the algorithm takes polynomial time. Every recursive call decreases $|V(G)|+|E(G)|$. The first operation removes a terminal, the second removes a pre-terminal by contraction, the third deletes an edge, and the fourth removes the nonempty set $S$ and its pre-terminal neighborhood. Thus, along any recursion path, there are at most $|V(G)|+|E(G)|$ recursive calls.

The set $S$ in \textsc{RoundAndRemove} can also be found in polynomial time. Start with $S=T$, which no matching saturates by assumption. Repeatedly test each terminal $t\in S$ and remove $t$ whenever no matching saturates $S\setminus\{t\}$, restarting the scan after each removal. The empty set is saturated by the empty matching, so this process leaves a nonempty set, and it performs at most $|T|^2$ matching tests. When it stops, every set $S\setminus\{t\}$ can be saturated. Every proper subset of $S$ lies inside such a set and can therefore be saturated, so $S$ has the required minimality.

All other work in the operations are also polynomial, including finding the removable edge (by \Cref{lem:weighted-matching-removable-edge}). Hence the algorithm runs in polynomial time.
\end{proof}

\subsection{Running Time Analysis}

We briefly analyze the running time of \Cref{alg:gl-weighted}. Recall that $n=|V(G)|$, $m=|E(G)|$, and $k=|T|$.

\begin{proposition}
\label{prop:weighted-algorithm-time}
\Cref{alg:gl-weighted} runs in
$\caO\bigl(nkm^{2+o(1)}+m(nk)^{1+o(1)}\log(nw_{\max})\log n\bigr)$ time.
\end{proposition}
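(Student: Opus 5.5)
The plan is to bound separately the number of recursive calls of \Cref{alg:gl-weighted} and the work done in a single call, and then multiply. As shown in the proof of \Cref{gyori-weighted-poly-theorem}, every call strictly decreases $|V(G)|+|E(G)|$, so there are at most $n+m$ calls. I would first argue that $n=\caO(m)$ may be assumed. Under the \fractionalesscond, every non-terminal with positive weight has an essential terminal, hence at least one outgoing edge. Terminals with no incident edges are removed by operation (i) or form trivial parts. Thus, after at most $\caO(k)$ cheap removals, $n\le m+k$ and $n+m=\caO(m)$. This gives $\caO(m)$ calls.

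Within one call, operations (i) and (ii) and the final assembly step take linear time. The bipartite matching test between $\PT(G,T)$ and $T$ takes $\caO(m^{1+o(1)})$ time with the flow algorithm of \citet{brand2023deterministic}. In the rounding case, finding the minimal Hall-deficient set $S$ uses at most $k^2$ matching tests, but this case removes at least one terminal, so it occurs at most $k$ times overall. Its total cost is $\caO(k^3 m^{1+o(1)})$, which is dominated by the main term.

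The dominant cost is operation (iii): computing the criticality costs $\crt_v(t)$ for \Cref{alg:min-cost-assignment}. I would first compute the essential terminals of every vertex in $G$ using $n$ calls to \Cref{prop:essential-terminals-single-flow}. Then, for each secondary edge $e_i$, I would recompute the essential terminals of every vertex in $G\setminus e_i$. By definition, $e_i$ is critical for $(v,t)$ exactly when $t$ is essential for $v$ in $G$ but not in $G\setminus e_i$. This requires $(k+1)n$ maximum-flow computations, each taking $(n+m)^{1+o(1)}=m^{1+o(1)}$ time, so $\caO(nk\,m^{1+o(1)})$ per call. The subsequent minimum-cost flow on the bipartite network with $\caO(nk)$ arcs costs $\caO((nk)^{1+o(1)}\log(nw_{\max})\log n)$ by \Cref{prop:best-assign-is-poly}; its precondition $\crt_v(t)\le k\le|V|$ holds. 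Selecting $e_{\mathrm{nc}}$ only requires scanning the support of $\assgnw$ against the already computed criticality table, which takes $\caO(nk^2)$ time.

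Multiplying the per-call cost by the $\caO(m)$ calls gives $\caO\bigl(nk\,m^{2+o(1)}+m(nk)^{1+o(1)}\log(nw_{\max})\log n\bigr)$, as claimed. The main obstacle is the bookkeeping that justifies $n+m=\caO(m)$, so that the $(n+m)^{1+o(1)}$ factors collapse to $m^{1+o(1)}$. The remaining steps only collect bounds already proved in \Cref{prop:essential-terminals-single-flow} and \Cref{prop:best-assign-is-poly}. If that reduction turns out to be delicate, I would instead state the intermediate bound with $n+m$ throughout and note that it matches the claim whenever $m\ge n$.
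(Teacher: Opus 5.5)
Your cost decomposition matches the paper's for the expensive work: at most $m$ edge-deletion steps (operation (iii)), each costing $(k+1)n$ maximum flows of size $m^{1+o(1)}$, one minimum-cost flow, and an $\caO(nk^2)$ scan. The gap is the step you flag yourself. The global reduction $n+m=\caO(m)$, i.e.\ $k=\caO(m)$, is not justified, and it fails for the algorithm as written.

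Operation (i) only deletes terminals of capacity zero. A terminal of positive capacity with no in-neighbor is removed only by a later \textsc{RoundAndRemove} call, interleaved with the other operations. Such terminals can also be created during the run: by contracting a pre-terminal with no in-neighbors whose edge was the only edge into its terminal, or by deleting $\PT(G,S)$ in a rounding step. So when $k\gg m$ there can be $\Theta(k)$ calls rather than $\caO(m)$, and a Hall test costs $\caO(n+m^{1+o(1)})$ rather than $\caO(m^{1+o(1)})$. Your rounding estimate is then not dominated. Take unit weights, $m=\caO(1)$ and $n\approx k$: your bound $\caO(k^3m^{1+o(1)})$ becomes $\caO(k^3)$, while the claimed bound is only $k^{2+o(1)}$ up to logarithmic factors. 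Your fallback of keeping $n+m$ throughout yields the statement only when $m\ge n$, so it does not prove the proposition.

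The missing idea is to use $k\le m$ only where it holds automatically, and to bound everything else without it.

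(1) Operation (iii) runs only when a matching saturating $T$ exists. That matching uses $k$ distinct edges, so at that moment $k\le m$. Combined with $n-k\le m$ (every non-terminal has an outgoing edge under the \fractionalesscond), this gives $n\le 2m$. Each such step deletes an edge and no operation creates edges, so your per-step bound times $m$ gives the two main terms.

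(2) Every call does linear work plus one Hall test. The test costs $\caO(n+m^{1+o(1)})$ if you first scan for a terminal without a neighboring pre-terminal; if there is none, the bipartite graph has $\caO(m)$ vertices. Over at most $n+m$ calls, using only $n\le k+m$, this totals $\caO((n+m)^2+(n+m)m^{1+o(1)})\subseteq\caO(nkm^{2+o(1)})$.

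(3) In \textsc{RoundAndRemove}, a terminal with no neighboring pre-terminal is by itself an inclusion-minimal Hall-deficient set and is found in linear time. Otherwise the current number of terminals is at most $m$, so $\caO(\min\{k,m\}^2)$ Hall tests suffice. Over at most $k$ rounding calls the cost is $\caO\bigl(k(n+m)+k\min\{k,m\}^2(n+m^{1+o(1)})\bigr)$, which fits the bound because $\min\{k,m\}^2\le km\le nm$.
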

\begin{proof}
We first record a bound used throughout the proof. Every non-terminal has positive weight, so a witness to the \fractionalesscond sends positive weight from it to an essential terminal. Therefore, every non-terminal has an outgoing edge. It follows that $n-k\leq m$, and hence $n\leq k+m$.

We will use the following matching bound. Given a set of terminals $S\subseteq T$, we can either find a matching from $\PT(G,S)$ to $S$ that saturates $S$ or report that none exists. This takes $\caO(n+m^{1+o(1)})$ time. We first scan the graph in $\caO(n+m)$ time to test whether some terminal in $S$ has no neighboring pre-terminal. If such a terminal exists, then no matching saturates $S$. Otherwise, every vertex on either side of the bipartite graph is incident to an edge, so this graph has $\caO(m)$ vertices and at most $m$ edges. The standard reduction to unit-capacity maximum flow produces a network with $\caO(m)$ edges. Applying the exact maximum-flow algorithm of \citet{brand2023deterministic} yields the desired matching in $\caO(m^{1+o(1)})$ time.

We split the running-time analysis into two parts: the cost of checking the conditions in Lines~\ref{alg:gl-weighted:line:remove-saturated-start}, \ref{alg:gl-weighted:line:degree-1-start}, and \ref{alg:gl-weighted:line:matching-start}, and the cost of carrying out the four operations.

The conditions in Lines~\ref{alg:gl-weighted:line:remove-saturated-start} and \ref{alg:gl-weighted:line:degree-1-start} test whether a terminal has zero capacity and whether a pre-terminal has out-degree one, respectively. Both can be checked in $\caO(n+m)$ time. The condition in Line~\ref{alg:gl-weighted:line:matching-start} tests whether there is a matching from $\PT(G,T)$ to $T$ that saturates $T$. By the matching bound above, this test takes $\caO(n+m^{1+o(1)})$ time. Since there are at most $n+m$ recursive calls, the total cost of checking these conditions is
$\caO\bigl((n+m)^2+(n+m)m^{1+o(1)}\bigr)$, which is within $\caO(nkm^{2+o(1)})$.

We next bound the total time spent carrying out the four operations when their conditions hold.

Removing a zero-capacity terminal in Lines~\ref{alg:gl-weighted:line:remove-saturated-start}--\ref{alg:gl-weighted:line:remove-saturated-end} takes $\caO(n+m)$ time.
Contracting a pre-terminal of out-degree one in Lines~\ref{alg:gl-weighted:line:degree-1-start}--\ref{alg:gl-weighted:line:degree-1-end} also takes $\caO(n+m)$ time. Each call to either operation removes a vertex, so there are at most $n$ such calls. Their total cost is $\caO(n(n+m))$. Since $k,m\geq 1$ and $n\leq k+m$, this cost is within $\caO(nkm^2)$.

We now consider the third operation, which finds a removable edge in Lines~\ref{alg:gl-weighted:line:matching-start}--\ref{alg:gl-weighted:line:matching-end}. Let $e_1,\dots,e_k$ be the chosen secondary edges. This operation runs only when a matching from $\PT(G,T)$ to $T$ saturating $T$ exists. That matching uses $k$ distinct edges of $G$, so $k\leq m$ and hence $n\leq k+m\leq 2m$. Therefore every bound of the form $\caO((n+m)^{1+o(1)})$ here is $\caO(m^{1+o(1)})$. By \Cref{prop:essential-terminals-single-flow}, the essential terminals of all vertices can be computed in $\caO(nm^{1+o(1)})$ time. We perform this computation in $G$ and in each graph $G\setminus e_i$. Comparing the resulting sets determines, for every $i$, $v$, and $t$, whether $e_i$ is critical for assigning $v$ to $t$. The total time is $\caO(nkm^{1+o(1)})$.
The criticality information gives all values of $\crt$. By \Cref{prop:best-assign-is-poly}, a minimum-potential flow-essential split-assignment can then be found in
$\caO\bigl(nm^{1+o(1)}+(nk)^{1+o(1)}\log(nw_{\max})\log n\bigr)$ time. Finally, checking the $k$ secondary edges against all pairs with positive assigned weight takes $\caO(nk^2)$ time. Since $k\leq m$, this last cost is within $\caO(nkm)$. Therefore, one such call takes
$\caO\bigl(nkm^{1+o(1)}+(nk)^{1+o(1)}\log(nw_{\max})\log n\bigr)$ time. Since each call removes an edge, there are at most $m$ such calls, so their total cost is $\caO\bigl(nkm^{2+o(1)}+m(nk)^{1+o(1)}\log(nw_{\max})\log n\bigr)$.

In the fourth operation, Lines~\ref{alg:gl-weighted-line-roundcall}--\ref{alg:gl-weighted:line:round-end} call \textsc{RoundAndRemove} (\Cref{alg:gl-round-and-remove}). If a terminal has no neighboring pre-terminal, its singleton is an inclusion-minimal Hall-deficient set and can be found in $\caO(n+m)$ time.

Otherwise, every terminal has a neighboring pre-terminal, so the current number of terminals is at most $\min\{k,m\}$. To find an inclusion-minimal Hall-deficient set, start with the current terminal set and remove a terminal whenever the remaining set still cannot be saturated. Restart the scan after each removal. This procedure uses $\caO(\min\{k,m\}^2)$ matching tests. By the matching bound above, these tests take $\caO\bigl(\min\{k,m\}^2(n+m^{1+o(1)})\bigr)$ time in total. Since \textsc{RoundAndRemove} removes at least one terminal in each call, there are at most $k$ such calls. Therefore, all calls to \textsc{RoundAndRemove} take
$\caO\bigl(k(n+m)+k\min\{k,m\}^2(n+m^{1+o(1)})\bigr)$ total time. Since $\min\{k,m\}^2\leq m^2$ and $\min\{k,m\}^2\leq nm$, this is within $\caO(nkm^{2+o(1)})$.

The bounds above cover both checking the conditions and carrying out the four operations. Each cost is within the claimed bound. This concludes the proof.
\end{proof}

\FloatBarrier

\section{Improved Algorithm for DAGs}
\label{sec:dag}

In this section, we consider the weighted \gyorilovasz partition problem on a directed acyclic graph. We show that in this case the weighted problem admits a near-linear-time algorithm.

\thmdag*

Since the unweighted setting is a special case of the weighted setting, we also obtain the following corollary.

\begin{corollary}
    \label{cor:gl-dag-unweighted}
    Let $D=(V,E)$ be a directed acyclic graph and let $T=\{t_1,\dots,t_k\}$ be a set of distinct terminals. Assume that $D$ is $k$-$T$-connected, and let $\cp_1,\dots,\cp_k$ be nonnegative integers such that $\sum_{i=1}^k \cp_i=|V\setminus T|$. Then there is an $\caO(m\log n)$-time algorithm that partitions $V$ into $V_1,\dots,V_k$ such that $t_i\in V_i$, $|V_i|=\cp_i+1$, and $D[V_i]$ is connected to $t_i$ for every $i\in [k]$.
\end{corollary}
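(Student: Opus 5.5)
The corollary is a specialization of \Cref{dag-algo-theorem}, so I would derive it directly rather than re-prove anything. Given the unweighted instance, I set $\weight_v=1$ for every non-terminal $v$ and use the capacities $\cp_{t_i}=\cp_i$. The hypothesis $\sum_i \cp_i=|V\setminus T|$ gives $\sum_{v\in V\setminus T}\weight_v=|V\setminus T|=\sum_{t\in T}\cp_t$, so the weight condition of \Cref{dag-algo-theorem} holds. $D$ is acyclic and $k$-$T$-connected by assumption. Hence the $\caO(m\log n)$-time algorithm of \Cref{dag-algo-theorem} returns a partition $V_1,\dots,V_k$ with $t_i\in V_i$ and $D[V_i]$ connected to $t_i$.

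It remains to upgrade the weight bound to exact sizes. Since $w_{\max}=1$, the theorem gives $|V_i\setminus T|=\sum_{v\in V_i\setminus T}\weight_v\le \cp_i+w_{\max}-1=\cp_i$ for every $i$. The parts partition $V$, and each $V_i$ contains exactly one terminal, namely $t_i$, because the $t_i$ are distinct and each lies in its own part. Therefore $\sum_i |V_i\setminus T|=|V\setminus T|=\sum_i \cp_i$. A sum of nonnegative slacks $\cp_i-|V_i\setminus T|\ge 0$ equal to zero forces every slack to vanish, so $|V_i\setminus T|=\cp_i$ and $|V_i|=\cp_i+1$.

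One degenerate point needs checking. If $V\setminus T=\emptyset$, then $w_{\max}$ is an empty maximum. In that case all $\cp_i=0$, and the trivial partition $V_i=\{t_i\}$ works in $\caO(k)$ time. I expect no real obstacle here; the only care needed is this edge case and the observation that each part contains exactly one terminal.
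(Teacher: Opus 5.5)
Your proof is correct and follows the paper's own argument: set every weight to $1$ so that $w_{\max}=1$, apply \Cref{dag-algo-theorem} to get $|V_i\setminus T|\le \cp_i$, and then use $\sum_i \cp_i=|V\setminus T|$ to force equality in every part. Your separate treatment of the degenerate case $V\setminus T=\emptyset$, where $w_{\max}$ is an empty maximum, is a small piece of extra care that the paper omits.
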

\begin{proof}
Set $w_v=1$ for all non-terminal vertices $v\in V\setminus T$, so $w_{\max}=1$. Then $\sum_{v\in V\setminus T} w_v=|V\setminus T|=\sum_{i=1}^k \cp_i$. Applying \Cref{dag-algo-theorem} gives a partition with the guarantee that $|V_i| \leq \cp_i + 1$ for every $i\in[k]$. Since the total number of non-terminals is $\sum_{i=1}^k \cp_i$, we must have $|V_i| = \cp_i + 1$ for every $i\in[k]$.
\end{proof}

We begin with the ordering used by the algorithm and in the proofs below.

\begin{definition}[Canonical topological order] \label{def:canonical-topological-order}
Let $T = \{t_1, t_2, \dots, t_k\}$ be an indexed set of terminals. A topological order $\prec$ of a DAG $D=(V,E)$ is an order such that $(v,u) \in E$ implies $v \prec u$. It is called a \emph{canonical topological order} if every non-terminal vertex precedes every terminal, and the terminals appear at the end in the order. That is, $(V \setminus T) \prec t_1 \prec t_2 \prec \dots \prec t_k$. Such an order always exists, since terminals have no outgoing edges.
\end{definition}

Unlike \Cref{sec:algorithm,sec:weighted}, we do not relax the connectivity requirement and work with $k$-$T$-connected DAGs directly. The key structural fact is that a DAG is $k$-$T$-connected if and only if every non-terminal has out-degree at least $k$ (\Cref{lem:k-conn-dag}). The algorithm therefore repeatedly contracts pre-terminals into terminals while preserving this out-degree condition. For an unsaturated terminal $t$, it chooses the earliest pre-terminal in the canonical topological order among those that currently have an edge to the part of $t$, and contracts this pre-terminal into $t$. Intuitively, this choice is safe since any remaining vertex whose out-degree could drop would have to be an even earlier predecessor of the same part (\Cref{lem:dag-contract-theorem,lem:contract-exists-dag}). We defer the full argument to the proof outline below. The design of this algorithm is inspired by the algorithm of \citet{casel2023efficient} for chordal graphs.

\subsection{\textsc{GLDAGPartition} Algorithm} \label{subsec:dag-algo}

The algorithm maintains the following objects:
\begin{itemize}
    \item $\hat{\cp}_t$ for each terminal $t$, its residual capacity, initialized to the input capacity $\cp_t$.
    \item $T_{\text{active}}$, the set of terminals $t$ with $\hat{\cp}_t>0$. The algorithm only grows parts of terminals in this set.
    \item $V_1,\dots,V_k$, where $V_i$ is the current part of $t_i$. Initially $V_i=\{t_i\}$, and every contraction into $t_i$ adds one more vertex to $V_i$.
    \item $\chi_v$ for each non-terminal vertex $v$, where $\chi_v=\textsc{true}$ means that $v$ has already been contracted and $\chi_v=\textsc{false}$ otherwise. This lets us discard stale heap entries.
    \item $H_i$ for each $i\in[k]$, a heap ordered by $\prec$ that stores pre-terminals with an outgoing edge to the current part $V_i$. Its role is to return the earliest available pre-terminal of $V_i$.
\end{itemize}

\begin{algorithm}[!ht]
    \caption{\textsc{GLDAGPartition}: Near-Linear Time \gyori--\lovasz Partition for $k$-$T$-Connected DAGs}
    \label{alg:dag-partition}
    \LinesNumbered
    \KwIn{DAG $D=(V,E)$, terminals $T=\{t_1,\dots,t_k\}$, weights $\weight$, terminal capacities $\cp$}
    \KwOut{Partition $\langle V_1,\dots,V_k\rangle$}

    Compute canonical topological order $\prec$ \cmt{Non-terminals precede $t_1,\dots,t_k$}
    
    \ForEach{$i \in [k]$}{
        $\hat{\cp}_{t_i} \gets \cp_{t_i}$ \cmt{Initialize the residual capacity of $t_i$}
        $V_i \gets \{t_i\}$ \cmt{Initialize the output part of $t_i$}
        $H_i \gets \textsc{MinHeap}(\text{in-neighbors of } t_i, \prec)$ \cmt{Pre-terminals with an outgoing edge to a vertex in $V_i$} \label{alg:dag-partition:line:init-heap}
    }

    $T_{\text{active}} \gets \{t\in T:\hat{\cp}_t>0\}$ \cmt{Terminals with positive residual capacity}
    $r \gets |V \setminus T|$ \cmt{Number of non-terminals not yet contracted}

    \ForEach{$v \in V \setminus T$}{
        $\chi_v \gets \textsc{false}$ \cmt{$v$ has not been used yet}
    }

    \While{$r > 0$}{ \label{alg:dag-partition:line:while-start}
        Pick $t_i \in T_{\text{active}}$ \cmt{Choose any unsaturated terminal} \label{alg:dag-partition:line:pick-terminal}
        $p \gets \textsc{ExtractMin}(H_i)$ \cmt{Earliest pre-terminal with an edge to $V_i$} \label{alg:dag-partition:line:extract}
        
        \If{$\chi_p=\textsc{true}$}{ \label{alg:dag-partition:line:used-check-start}
            continue \cmt{Ignore stale heap entries}
        }
        \label{alg:dag-partition:line:used-check-end}

        $V_i \gets V_i \cup \{p\}$, $\hat{\cp}_{t_i} \gets \hat{\cp}_{t_i}-\weight_p$ \cmt{Contract $p$ to $t_i$, update residual capacity of $t_i$} \label{alg:dag-partition:line:contract}
        $\chi_p \gets \textsc{true}$, $r \gets r-1$ \cmt{Mark $p$ as used} \label{alg:dag-partition:line:mark}
        Insert every in-neighbor of $p$ into $H_i$ \cmt{New pre-terminals of the enlarged part} \label{alg:dag-partition:line:push-neighbors}

        \If{$\hat{\cp}_{t_i} \leq 0$}{ \label{alg:dag-partition:line:remove-terminal-start}
            $T_{\text{active}} \gets T_{\text{active}} \setminus \{t_i\}$ \cmt{$t_i$ is now saturated}
        }
        \label{alg:dag-partition:line:remove-terminal-end}
    }
    \label{alg:dag-partition:line:while-end}
    
    \Return $\langle V_1,\dots,V_k\rangle$
\end{algorithm}

The algorithm starts by computing a canonical topological order and initializing the five maintained objects. For each $i\in[k]$, the heap $H_i$ stores the pre-terminals that currently have an edge to the part $V_i$. In each iteration, the algorithm picks an active terminal $t_i$, extracts the earliest pre-terminal from $H_i$, and discards the entry if that pre-terminal was already used earlier (Lines~\ref{alg:dag-partition:line:pick-terminal}--\ref{alg:dag-partition:line:used-check-end}). Otherwise, it contracts this pre-terminal into $t_i$, adds it to $V_i$, decreases the residual capacity of $t_i$, and inserts all in-neighbors of the contracted pre-terminal into $H_i$ so that the heap continues to store pre-terminals with an edge to the enlarged part (Lines~\ref{alg:dag-partition:line:contract}--\ref{alg:dag-partition:line:push-neighbors}). Once $t_i$ becomes saturated, the algorithm removes it from $T_{\text{active}}$ and never grows its part again (Lines~\ref{alg:dag-partition:line:remove-terminal-start}--\ref{alg:dag-partition:line:remove-terminal-end}). \Cref{fig:dag-contract} illustrates one iteration of the algorithm.

\begin{figure}[htbp]
\centering

\tikzset{
    dag curved edge/.style={
        draw=black,
        line width=1.1pt,
        preaction={draw=white,line width=2.3pt,-}
    },
    dag highlighted edge/.style={
        draw=black,
        line width=2.4pt,
        preaction={draw=white,line width=3.6pt,-}
    }
}

\begin{subfigure}{\textwidth}
\centering

\begin{essgraph}[
    scale=1.0,
    every node/.style={transform shape},
    >=stealth,
    line cap=round,
    line join=round
]

\path[use as bounding box] (-5.60,-2.00) rectangle (5.60,2.00);

\essvertex{v8}{(-4.90,0)}{$v_8$}
\essvertex[extra={line width=2.6pt}]{v7}{(-3.50,0)}{$\boldsymbol{v_7}$}
\essvertex{v6}{(-2.10,0)}{$v_6$}
\essvertex{v5}{(-0.70,0)}{$v_5$}
\essvertex{v4}{( 0.70,0)}{$v_4$}

\essterminal{t1}{( 2.10,0)}{$t_1$}
\essterminal{t2}{( 3.50,0)}{$t_2$}
\essterminal{t3}{( 4.90,0)}{$t_3$}

\begin{scope}[
    on background layer,
    every path/.append style={
        ->,
        shorten >=3.5pt,
        shorten <=2pt
    }
]

\essedge{v8}{v7}
\essedge{v7}{v6}
\essedge{v6}{v5}
\essedge{v5}{v4}
\essedge{v4}{t1}

\draw[dag curved edge,bend left=30]
    (v8.north east) to (v6.north west);
\draw[dag curved edge,bend left=42]
    (v8.north) to (v5.north west);
\draw[dag curved edge,bend left=30]
    (v5.north east) to (t1.north west);
\draw[dag curved edge,bend left=42]
    (v5.north) to (t2.north west);

\draw[dag curved edge,bend left=30]
    (v6.north) to (t3.north);

\draw[dag curved edge,bend right=30]
    (v6.south east) to (v4.south west);
\draw[dag curved edge,bend right=28]
    (v7.south east) to (v4.south);
\draw[dag curved edge,bend right=30]
    (v4.south east) to (t2.south west);
\draw[dag curved edge,bend right=42]
    (v4.south) to (t3.south);

\draw[dag highlighted edge,bend right=30]
    (v7.south) to (t2.south east);

\end{scope}

\end{essgraph}

\caption{The original DAG.}
\label{fig:dag-contract-a}

\end{subfigure}

\vspace{0.45cm}

\begin{subfigure}{\textwidth}
\centering

\begin{essgraph}[
    scale=1.0,
    every node/.style={transform shape},
    >=stealth,
    line cap=round,
    line join=round
]

\path[use as bounding box] (-5.60,-2.00) rectangle (5.60,2.00);

\essvertex{v8c}{(-4.50,0)}{$v_8$}
\essvertex{v6c}{(-3.00,0)}{$v_6$}
\essvertex{v5c}{(-1.50,0)}{$v_5$}
\essvertex{v4c}{( 0.00,0)}{$v_4$}

\essterminal{t1c}{( 1.50,0)}{$t_1$}
\essterminal{t2c}{( 3.00,0)}{$t_2$}
\essterminal{t3c}{( 4.50,0)}{$t_3$}

\begin{scope}[
    on background layer,
    every path/.append style={
        ->,
        shorten >=1.5pt,
        shorten <=1.5pt
    }
]

\essedge{v6c}{v5c}
\essedge{v5c}{v4c}
\essedge{v4c}{t1c}

\draw[dag curved edge]
    (v8c) to[out=28,in=152] (v6c);
\draw[dag curved edge]
    (v8c) to[out=35,in=145] (v5c);
\draw[dag curved edge]
    (v5c) to[out=35,in=145] (t1c);
\draw[dag curved edge]
    (v5c) to[out=50,in=135] (t2c);

\draw[dag curved edge]
    (v6c) to[out=43,in=128,looseness=0.96] (t3c);

\draw[
    dag curved edge,
    preaction={draw=white,line width=4.0pt,-}
]
    (v8c) to[out=40,in=128,looseness=1.00] (t2c);

\draw[dag curved edge]
    (v6c) to[out=-35,in=-145] (v4c);
\draw[dag curved edge]
    (v4c) to[out=-35,in=-145] (t2c);
\draw[dag curved edge]
    (v4c) to[out=-50,in=-135] (t3c);

\end{scope}

\end{essgraph}

\caption{The DAG obtained by contracting $v_7$ into terminal $t_2$.}
\label{fig:dag-contract-b}

\end{subfigure}

\caption{One step of \textsc{GLDAGPartition}, assuming all terminal capacities are positive. The horizontal vertex order is canonical, and the DAG is $3$-$T$-connected by \Cref{lem:k-conn-dag}, since every non-terminal has out-degree $3$. Initially, $H_2=\langle v_7,v_5,v_4\rangle$ in increasing $\prec$ order; when $t_2$ is active, the algorithm extracts $v_7$, which can be contracted into $t_2$ by \Cref{lem:dag-contract-theorem} because they have no common predecessor. The contracted DAG remains $3$-$T$-connected, $\hat{\cp}_{t_2}$ decreases by $w_{v_7}$, and inserting the in-neighbor $v_8$ yields $H_2=\langle v_8,v_5,v_4\rangle$.}
\label{fig:dag-contract}

\end{figure}

The correctness hinges on the contraction step. Let $p$ be the pre-terminal contracted into $t_i$. We claim that no remaining non-terminal loses out-degree. Indeed, suppose some remaining vertex $u$ loses an outgoing neighbor when $p$ is contracted. Then $u$ must have edges both to $p$ and to the current part of $t_i$; otherwise replacing $p$ by $t_i$ would not reduce the number of distinct out-neighbors of $u$. Since $(u, p) \in E$, we have $u \prec p$ in the canonical topological order. Moreover, because $u$ already has an edge to the current part of $t_i$, the vertex $u$ is already present in $H_i$. Therefore $u$ should have been extracted before $p$, contradicting the choice of $p$ as the minimum unused element of $H_i$. Hence the out-degree condition is preserved, so each contraction preserves $k$-$T$-connectivity on the remaining active terminals.

\begin{lemma}[Time complexity] \label{algo-complexity}
\Cref{alg:dag-partition} executes in $\caO(m \log n)$ time.
\end{lemma}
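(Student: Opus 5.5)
The plan is a standard amortized accounting of \Cref{alg:dag-partition}: bound the preprocessing, then the total number of heap operations over the whole run, and charge each operation $\caO(\log n)$.

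\textbf{Preprocessing.} The canonical topological order is computed in $\caO(n+m)$ time: run a standard topological sort on the non-terminals (terminals have no outgoing edges) and append $t_1,\dots,t_k$. Each vertex is then labelled with its rank, so that comparisons under $\prec$ take $\caO(1)$. Building each $H_i$ on the in-neighbors of $t_i$ costs $\caO(d^-(t_i)\log n)$, or even linear time with heapify; summed over $i$ this is $\caO(m\log n)$. Initializing $T_{\text{active}}$, $\chi$ and $r$ takes $\caO(n)$.

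\textbf{Counting insertions.} Every vertex $p$ is contracted, that is, marked $\chi_p=\textsc{true}$, at most once, because after that it is discarded as stale. Line~\ref{alg:dag-partition:line:push-neighbors} is executed only at the contraction of $p$, so it inserts exactly $d^-(p)$ entries. Each insertion is charged to a distinct edge $(u,p)$, and so are the initial entries (the edges $(u,t_i)$). The total number of heap entries ever created is therefore at most $m$.

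\textbf{Bounding iterations.} Each iteration of the while loop performs exactly one \textsc{ExtractMin}, which removes one entry. Hence there are at most $m$ iterations in total. Stale iterations are paid for by their entries, and non-stale iterations are at most $n$ since each decrements $r$. Picking an active terminal and deleting it from $T_{\text{active}}$ take $\caO(1)$ if $T_{\text{active}}$ is kept as a doubly linked list or stack. Every iteration thus costs $\caO(\log n)$ plus $\caO(1)$ per insertion, giving $\caO(m\log n)$ overall.

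\textbf{Main obstacle.} The main obstacle is not the counting but ensuring that \textsc{ExtractMin} never runs on an empty heap while $r>0$, which would leave the loop stuck or ill-defined. This is a correctness issue, and I would defer it to the contraction lemmas (\Cref{lem:contract-exists-dag}). Those lemmas guarantee that an active terminal always has an unused pre-terminal in its heap. Consequently, every iteration either makes progress or consumes a stale entry, and the $m$-entry bound above caps the total work at $\caO(m\log n)$.
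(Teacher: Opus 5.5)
Your proposal is correct and follows the paper's proof. Each heap entry is charged to a distinct edge: initial entries to edges $(u,t_i)$, and later ones to edges $(u,p)$ at the unique contraction of $p$. This gives $\caO(m)$ insertions and extractions at $\caO(\log n)$ each, which dominates the linear-time preprocessing.

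You also note explicitly that the heap of an active terminal is never empty while $r>0$. That is a correctness matter, which the paper settles in \Cref{lem:dag-algo-correctness} via \Cref{lem:contract-exists-dag}. The same lemma's capacity-counting argument also guarantees that an active terminal exists. Flagging this is a fair addition that the paper's timing argument leaves implicit.
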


\begin{proof}
Computing the canonical topological order and building the initial heaps takes $\caO(n+m)$ time. The running time is then dominated by heap operations.

Whenever a pre-terminal $p$ is contracted into $t_i$, the algorithm inserts every in-neighbor of $p$ into $H_i$. Over the entire execution, each edge contributes to such an insertion at most once, so there are $\caO(m)$ pushes in total. Likewise, every heap entry is extracted at most once, so there are $\caO(m)$ pops in total. Using ordinary binary heaps, each push and pop costs $\caO(\log n)$.

Therefore the total time spent on heap operations is $\caO(m \log n)$, and this dominates the $\caO(n+m)$ initialization cost. Hence the overall running time of \Cref{alg:dag-partition} is $\caO(m \log n)$.
\end{proof}

\subsection{Correctness of the Algorithm}

To prove the correctness of this algorithm, we first establish several foundational structural properties of $k$-$T$-connected DAGs.

\begin{lemma} [$k$-$T$-connected DAG] \label{lem:k-conn-dag}
Let $D = (V, E)$ be a simple DAG with terminals $T$. Then $D$ is $k$-$T$-connected if and only if every non-terminal vertex has an out-degree of at least $k$. That is, for all $v \in V \setminus T$, we have $d^+(v) \geq k$.
\end{lemma}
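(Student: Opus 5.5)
The two directions are of very different difficulty, so I would handle them separately. The forward direction is immediate. If $D$ is $k$-$T$-connected, then every $v\in V\setminus T$ has $k$ paths to distinct terminals that are vertex-disjoint except at $v$. The first edges of these paths are therefore distinct, so their second vertices are $k$ distinct out-neighbors of $v$, and $d^+(v)\ge k$. The hypothesis that $D$ is simple (no parallel edges) is what makes this count out-neighbors rather than edges.

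For the converse, assume every non-terminal has out-degree at least $k$. I will prove $\connvg{v}{D}\ge k$ for every non-terminal $v$ by strong induction along the reverse of a canonical topological order (\Cref{def:canonical-topological-order}). That is, I process non-terminals from last to first, so that when $v$ is handled, all its out-neighbors are either terminals or already known to satisfy the claim. Since $\connvg{v}{D}\le |T|=k$, this gives equality. I will argue via Menger (\Cref{lem:mengers-theorem}). Suppose, for contradiction, that $C=(\lcutc{C},\scutc{C},\rcutc{C})$ is a cut separating $v$ from $T$ with $|\scutc{C}|\le k-1$.

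Acyclicity lets me pick a sink of the induced subgraph $D[\rcutc{C}]$. Call it $x$; it exists because $\rcutc{C}$ is nonempty (it contains $v$) and finite. Since $\rcutc{C}$ contains no terminals, $x$ is a non-terminal, so $d^+(x)\ge k$. Each out-neighbor of $x$ lies outside $\rcutc{C}$, by the choice of $x$, and cannot lie in $\lcutc{C}$, by the cut condition. Hence all $k$ out-neighbors lie in $\scutc{C}$, giving $|\scutc{C}|\ge k$, a contradiction. This argument needs no induction at all: the sink trick handles everything directly.

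The only step requiring care is this cut argument. In particular, I must check that $x\neq$ a terminal, which holds because $T\cap\rcutc{C}=\emptyset$, and that simplicity guarantees $k$ distinct out-neighbors rather than merely $k$ edges. With both checks in place, every cut has size at least $k$, so Menger gives $\connvg{v}{D}=k$ for every non-terminal $v$, i.e., $D$ is $k$-$T$-connected.
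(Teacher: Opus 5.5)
Your proof is correct and follows the paper's: the forward direction counts distinct first out-neighbors, and the converse takes a Menger cut of size less than $k$ and picks a sink of $D[\rcutc{C}]$ (the paper's $\prec$-maximum vertex of $\rcutc{C}$), whose at least $k$ out-neighbors must all lie in $\scutc{C}$. The reverse-topological induction you set up at the start is never used, as you note yourself, and can be dropped.
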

\begin{proof}
The necessity is immediate: $k$ vertex-disjoint paths from a non-terminal vertex require $k$ distinct outgoing edges.

For sufficiency, suppose for a contradiction that $D$ is not $k$-$T$-connected. By \Cref{lem:mengers-theorem}, there is a cut $C=(\lcutc{C},\scutc{C},\rcutc{C})$ of size less than $k$ that separates a vertex $u$ from $T$. Let $v$ be the maximum vertex of $\rcutc{C}$ under $\prec$. By maximality, $v$ has no outgoing edge to $\rcutc{C}$; by the cut property, it has no outgoing edge to $\lcutc{C}$. Thus every out-neighbor of $v$ lies in $\scutc{C}$. As $d^+(v)\geq k$ and $D$ is simple, $|\scutc{C}|\geq k$, a contradiction.
\end{proof}

\begin{definition}[Common predecessor] \label{def:common-predecessor}
        For a terminal $t \in T$ and a pre-terminal $p$, a vertex $u \in V$ is called a \emph{common predecessor} of $p$ and $t$ if $(u, p)$ $\in E$ and $(u, t) \in E$.
\end{definition}

\begin{lemma}[Contractibility of a pre-terminal and a terminal without a common predecessor] \label{lem:dag-contract-theorem}
        Let $D = (V, E)$ with terminals $T$ be a $k$-$T$-connected DAG. If there exist a pre-terminal $p$ $\in V \setminus T$ and a terminal $t \in T$ such that $(p, t)$ $\in E$ and $p$ and $t$ have no common predecessor, then contracting $p$ into $t$ results in a $k$-$T$-connected DAG.
\end{lemma}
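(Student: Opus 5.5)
Following \Cref{lem:k-conn-dag}, $k$-$T$-connectivity of a DAG is equivalent to every non-terminal having out-degree at least $k$. The plan is therefore to check two things after contracting $p$ into $t$: the contracted graph $D'$ is again a simple DAG with the same terminal set $T$, and every remaining non-terminal still has out-degree at least $k$. \Cref{lem:k-conn-dag} then gives the conclusion directly.

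\textbf{Step 1: $D'$ is a DAG.} Fix a canonical topological order $\prec$ of $D$. By the contraction rule, $p$ is removed, every edge $(u,p)$ is redirected to $(u,t)$, the outgoing edges of $p$ are discarded, and duplicate edges are removed.
\begin{itemize}
\item Every redirected edge $(u,t)$ goes into a terminal.
\item In a canonical order every non-terminal precedes every terminal, so $u\prec t$.
\end{itemize}
Hence the restriction of $\prec$ to $V\setminus\{p\}$ is still a topological order of $D'$, and $D'$ is acyclic. Terminals still have no outgoing edges, so $D'$ is a simple DAG with terminal set $T$.

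\textbf{Step 2: out-degrees are preserved.} Let $u\in V\setminus(T\cup\{p\})$.
\begin{itemize}
\item If $(u,p)\notin E$, the out-neighborhood of $u$ is unchanged.
\item If $(u,p)\in E$, then $N^+_{D'}(u)=(N^+_D(u)\setminus\{p\})\cup\{t\}$. The out-degree can drop only when $t$ was already an out-neighbor of $u$, that is, when $(u,t)\in E$. In that case $u$ is a common predecessor of $p$ and $t$, which the hypothesis excludes. So $t\notin N^+_D(u)$ and $d^+_{D'}(u)=d^+_D(u)\ge k$.
\end{itemize}
Removing $p$ affects no other out-neighborhoods, and the duplicate-edge removal only merges edges into $t$, which we have just ruled out.

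The main obstacle is not any step here; the only subtlety is Step 2's duplicate-edge accounting. It must be confirmed that a drop in distinct out-neighbors occurs exactly when $u$ already had an edge to $t$, and the no-common-predecessor hypothesis is precisely tailored to exclude this. With both steps done, \Cref{lem:k-conn-dag} applied to $D'$ finishes the proof.
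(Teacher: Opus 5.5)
Your proof is correct and follows the paper's argument. The no-common-predecessor hypothesis means each in-neighbor $u$ of $p$ trades $p$ for a genuinely new out-neighbor $t$, so out-degrees and simplicity are preserved; the inherited topological order shows $D'$ is acyclic; and \Cref{lem:k-conn-dag} finishes. The only cosmetic difference is that you get $u \prec t$ from the canonical order placing non-terminals before terminals, whereas the paper uses $u \prec p \prec t$ in an arbitrary topological order; both work.
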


\begin{proof}
        Let $D'$ be obtained by removing $p$ and redirecting every edge $(u,p)$ to $(u,t)$. We first show that this operation preserves the out-degree of every remaining vertex. Fix $u\neq p$. If $(u,p)\notin E$, then the outgoing edges of $u$ are unchanged. If $(u,p)\in E$, then $(u,t)\notin E$, because otherwise $u$ would be a common predecessor of $p$ and $t$. Thus the edge to $p$ is replaced by a new edge to $t$, and $u$ loses and gains exactly one distinct out-neighbor. Consequently,
        $d^+_{D'}(u)=d^+_D(u)$ for every remaining vertex $u$. In particular, every non-terminal vertex of $D'$ has out-degree at least $k$ by \Cref{lem:k-conn-dag}. The same observation shows that no parallel edges are created, so $D'$ remains simple.
        
        Finally, the previous topological order with $p$ deleted is still a valid topological order for $D'$. Indeed, every redirected edge $(u,t)$ comes from edges $(u,p)$ and $(p,t)$ in $D$, and therefore $u\prec p\prec t$. Hence all redirected edges point forward in the inherited order, and $D'$ is a DAG. Applying \Cref{lem:k-conn-dag} to $D'$ proves that it is $k$-$T$-connected.
\end{proof}

Although \Cref{fig:k-conn-counter-example} shows that a general directed graph may be $k$-$T$-connected without having a contractible pre-terminal, we show that $k$-$T$-connectivity in a DAG does guarantee such an edge.

\begin{lemma}[Existence of a contractible edge for each terminal] \label{lem:contract-exists-dag}
Let $D=(V,E)$ be a $k$-$T$-connected DAG with $V\setminus T\neq\emptyset$, and let $\prec$ be a canonical topological order. For every terminal $t\in T$, let $p$ be the earliest pre-terminal satisfying $(p,t)\in E$. Then, contracting $p$ into $t$ preserves $k$-$T$-connectivity.
\end{lemma}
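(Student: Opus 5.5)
The plan is to reduce the statement to \Cref{lem:dag-contract-theorem}: it suffices to show that the earliest pre-terminal $p$ of $t$ (with respect to the canonical topological order $\prec$) has no common predecessor with $t$. Once this is established, \Cref{lem:dag-contract-theorem} immediately gives that contracting $p$ into $t$ yields a $k$-$T$-connected DAG.

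First I will check that $p$ exists. Since $D$ is $k$-$T$-connected and $V\setminus T\neq\emptyset$, fix any non-terminal $v$. It has $k$ vertex-disjoint paths to the $k$ distinct terminals, so one of these paths ends at $t$. The last non-terminal vertex on that path is a pre-terminal with an edge to $t$. Hence the set of pre-terminals adjacent to $t$ is nonempty, and its $\prec$-minimum $p$ is well defined. Note that this step uses $k=|T|$, so that every terminal, in particular $t$, is reached.

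Next I will show that $p$ and $t$ have no common predecessor, arguing by contradiction. Suppose $u$ satisfies $(u,p)\in E$ and $(u,t)\in E$. Terminals have no outgoing edges, so $u$ is a non-terminal. Because $(u,t)\in E$, $u$ is a pre-terminal with an edge to $t$. Because $(u,p)\in E$ and $\prec$ is a topological order, $u\prec p$. This contradicts the choice of $p$ as the earliest pre-terminal adjacent to $t$, so no common predecessor exists, and \Cref{lem:dag-contract-theorem} finishes the proof.

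No step looks like a serious obstacle. The only point needing care is the existence of $p$, which relies on full $k$-$T$-connectivity rather than on an out-degree bound. One might also worry about the simplicity assumption, but \Cref{lem:dag-contract-theorem} already handles it.
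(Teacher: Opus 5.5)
Your proposal is correct and follows the paper's proof essentially step for step: existence of $p$ from the $k$ vertex-disjoint paths reaching every terminal, then ruling out a common predecessor $u$ because $u\prec p$ and $(u,t)\in E$ would contradict the minimality of $p$, and finally invoking \Cref{lem:dag-contract-theorem}.
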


\begin{proof}
Fix $t\in T$. Since $V\setminus T\neq\emptyset$, choose a non-terminal vertex $x$. The $k$-$T$-connectivity of $D$ gives $k$ vertex-disjoint paths from $x$ to distinct terminals, so every terminal, including $t$, has an in-neighbor. Thus $p$ is well-defined.

If $p$ and $t$ had a common predecessor $u$, then $(u,p)\in E$ and $(u,t)\in E$. The first edge implies $u\prec p$, and the second makes $u$ another pre-terminal of $t$. This contradicts the choice of $p$ as the earliest such pre-terminal. Hence $p$ and $t$ have no common predecessor, and the conclusion follows from \Cref{lem:dag-contract-theorem}.
\end{proof}

With these structural properties established, we can now prove the correctness of our proposed algorithm.

\begin{lemma}[Algorithm correctness] \label{lem:dag-algo-correctness}
Let $D=(V,E)$ be a $k$-$T$-connected DAG with terminals $T=\{t_1,\dots,t_k\}$. Suppose that every non-terminal vertex $v$ has a positive integer weight $\weight_v$, every terminal $t$ has a nonnegative integer capacity $\cp_t$, and $\sum_{v\in V\setminus T}\weight_v\leq\sum_{t\in T}\cp_t$. Then \Cref{alg:dag-partition} returns a partition $V_1,\dots,V_k$ of $V$ such that, for every $i\in[k]$, $t_i\in V_i$, $D[V_i]$ is connected to $t_i$, and
\begin{equation*}
    \sum_{v\in V_i\setminus T}\weight_v\leq \cp_{t_i}+w_{\max}-1.
\end{equation*}
\end{lemma}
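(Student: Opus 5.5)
The plan is to prove correctness by an invariant-based induction over the iterations of the while loop. At each iteration we consider the \emph{residual DAG} $D_r$. It is obtained from $D$ by contracting every already-contracted vertex into the terminal of its part, and then deleting every saturated terminal (one not in $T_{\text{active}}$) together with the edges into it. The invariants are the following.
\begin{enumerate}
\item $D_r$ is $|T_{\text{active}}|$-$T_{\text{active}}$-connected, equivalently (by \Cref{lem:k-conn-dag}) every uncontracted non-terminal has out-degree at least $|T_{\text{active}}|$ in $D_r$.
\item For each $i$, the unused entries of $H_i$ are exactly the uncontracted in-neighbors in $D$ of $V_i$, i.e., the in-neighbors of $t_i$ in the contracted graph.
\item Each $V_i$ induces a subgraph connected to $t_i$.
\item $\hat{\cp}_{t_i}>0$ for $t_i\in T_{\text{active}}$.
\end{enumerate}
Invariant (2) is immediate from the initialization at \Cref{alg:dag-partition:line:init-heap} and the pushes at \Cref{alg:dag-partition:line:push-neighbors}. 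Invariant (3) holds because each $p$ enters $V_i$ via an edge into $V_i$.

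The key step is preserving invariant (1). When $p$ is extracted from $H_i$ and is unused, invariant (2) says $p$ is the $\prec$-earliest pre-terminal of $t_i$ in the contracted graph. Since the canonical order restricted to $D_r$ is still a canonical topological order (as in the proof of \Cref{lem:dag-contract-theorem}), \Cref{lem:contract-exists-dag} applies directly, so the contraction preserves out-degrees of all remaining vertices. If after the contraction $t_i$ becomes saturated, we delete $t_i$. Every remaining non-terminal loses at most one out-neighbor, namely $t_i$ (the graph is simple), while the number of active terminals drops by one. So the out-degree bound $\ge|T_{\text{active}}|$ persists, and \Cref{lem:k-conn-dag} restores $|T_{\text{active}}|$-$T_{\text{active}}$-connectivity. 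Terminals initially of capacity zero are handled the same way before the loop.

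Next I need that the loop makes progress and terminates with everything assigned. While $r>0$, I claim $T_{\text{active}}\neq\emptyset$. The total residual capacity equals $\sum_t\cp_t$ minus the contracted weight, which is at least the remaining uncontracted weight, which is positive. Saturated terminals only contribute nonpositive residual, so some active terminal has positive residual. Given an active $t_i$, invariant (1) with $r>0$ implies every active terminal has an in-neighbor in $D_r$ (as in the first paragraph of the proof of \Cref{lem:contract-exists-dag}), so $H_i$ contains an unused entry and the loop body eventually contracts a vertex. Stale pops are bounded by $\caO(m)$, so termination follows. When $r=0$, all non-terminals lie in some part, giving a partition with $t_i\in V_i$.

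For the weight bound, a vertex $p$ is contracted into $t_i$ only when $\hat{\cp}_{t_i}>0$, i.e., $\hat{\cp}_{t_i}\ge1$ by integrality. Afterwards $\hat{\cp}_{t_i}\ge1-\weight_p\ge 1-w_{\max}$, and no further contraction occurs into $t_i$ once it drops to $\le0$. Hence the final residual is at least $1-w_{\max}$, i.e., $\sum_{v\in V_i\setminus T}\weight_v=\cp_{t_i}-\hat{\cp}_{t_i}\le \cp_{t_i}+w_{\max}-1$.

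The main obstacle is the bookkeeping that identifies the heap minimum with the earliest pre-terminal in the current contracted and pruned graph. This requires that edges into removed saturated terminals do not create spurious heap entries, which holds because heaps of inactive terminals are never consulted. It also requires that contraction redirection matches ``in-neighbor of $V_i$'', which holds by induction.
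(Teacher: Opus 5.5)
Your proof is correct and follows essentially the same route as the paper. The heap invariant identifies the extracted unused vertex as the earliest pre-terminal of $t_i$ in the current contracted graph, \Cref{lem:contract-exists-dag} makes that contraction safe, weight counting guarantees an active terminal while $r>0$, and the residual-capacity argument gives the $w_{\max}-1$ slack.

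The one difference is that you maintain $|T_{\text{active}}|$-$T_{\text{active}}$-connectivity of the graph with saturated terminals deleted, which costs you the extra out-degree step at each saturation. The paper keeps all $k$ terminals in the contracted graph; that graph stays $k$-$T$-connected, which already forces every active terminal to have an unused in-neighbor.
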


\begin{proof}
Consider the graph obtained after any number of successful contractions. Its non-terminals are precisely the vertices $v$ with $\chi_v=\textsc{false}$, and the restriction of $\prec$ to these vertices, followed by $t_1,\dots,t_k$, is a canonical topological order. The algorithm maintains the residual capacity $\hat{\cp}_t$, initially equal to the input capacity $\cp_t$, for every terminal $t$. Moreover, the unused pre-terminals represented in $H_i$ are exactly the remaining pre-terminals with an edge to the contracted copy of $t_i$. Indeed, this is true initially by the heap initialization in Line~\ref{alg:dag-partition:line:init-heap}, and whenever a pre-terminal is contracted into $t_i$, the algorithm inserts all of its in-neighbors into $H_i$ (Line~\ref{alg:dag-partition:line:push-neighbors}). Thus, after any stale entries preceding it are discarded (Lines~\ref{alg:dag-partition:line:used-check-start}--\ref{alg:dag-partition:line:used-check-end}), the minimum unused pre-terminal in $H_i$ is the earliest pre-terminal of $t_i$ in the current contracted graph.

We prove inductively that the current contracted graph is $k$-$T$-connected. This holds initially by assumption. Suppose that non-terminals remain and the algorithm selects an active terminal $t_i$ (Line~\ref{alg:dag-partition:line:pick-terminal}). By \Cref{lem:contract-exists-dag}, the earliest pre-terminal of $t_i$ exists and can be contracted into $t_i$ while preserving $k$-$T$-connectivity. By the heap property above, this pre-terminal is exactly the first unused pre-terminal extracted from $H_i$ (Line~\ref{alg:dag-partition:line:extract}, after stale entries are handled in Lines~\ref{alg:dag-partition:line:used-check-start}--\ref{alg:dag-partition:line:used-check-end}). Hence every successful iteration preserves $k$-$T$-connectivity. This conclusion ultimately follows from \Cref{lem:dag-contract-theorem}: the earliest pre-terminal has no common predecessor with $t_i$, so its contraction is safe.

It remains to show that an active terminal is always available while a non-terminal remains (Line~\ref{alg:dag-partition:line:while-start}). Suppose otherwise. Every terminal $t_i$ is then inactive, so $\hat{\cp}_{t_i}\leq 0$. Since $\hat{\cp}_{t_i}$ was initialized to $\cp_{t_i}$ and decreases by the total weight assigned to $t_i$, the vertices assigned to $t_i$ have total weight at least $\cp_{t_i}$. Hence the total weight already assigned is at least $\sum_{t\in T}\cp_t$, and therefore at least $\sum_{v\in V\setminus T}\weight_v$. This is impossible while an unassigned vertex of positive weight remains. Furthermore, whenever a non-terminal remains, the preceding invariant and \Cref{lem:contract-exists-dag} ensure that the heap of each active terminal contains an unused entry. Since each stale entry is removed when extracted (Lines~\ref{alg:dag-partition:line:used-check-start}--\ref{alg:dag-partition:line:used-check-end}) and every successful iteration decreases $r$ (Line~\ref{alg:dag-partition:line:mark}), the algorithm terminates with $r=0$.

The returned sets therefore form a partition of $V$: each non-terminal is added to exactly one set, and $V_i$ contains $t_i$. Whenever a pre-terminal $p$ is added to $V_i$ (Line~\ref{alg:dag-partition:line:contract}), it has an edge to a vertex already in $V_i$. Fix one such edge for each added vertex. Starting from any $v\in V_i$ and repeatedly following these edges leads to vertices added strictly earlier, and therefore eventually reaches $t_i$. All these edges belong to the original graph $D$, so they form a directed path from $v$ to $t_i$ in $D[V_i]$. Thus undoing the contractions preserves the required connectivity.

Finally, fix $i\in[k]$. If the input capacity $\cp_{t_i}=0$, then the initial residual capacity $\hat{\cp}_{t_i}$ is zero, so $t_i$ is never active and $V_i=\{t_i\}$. Otherwise, the algorithm adds vertices to $V_i$ only while its residual capacity $\hat{\cp}_{t_i}$ is positive (Lines~\ref{alg:dag-partition:line:pick-terminal} and~\ref{alg:dag-partition:line:contract}). If $p$ is the last pre-terminal added before $t_i$ becomes inactive (Lines~\ref{alg:dag-partition:line:remove-terminal-start}--\ref{alg:dag-partition:line:remove-terminal-end}), then immediately before adding $p$, $\hat{\cp}_{t_i}$ is a positive integer and hence at least one. Therefore, the total weight assigned to $t_i$ before adding $p$ is at most $\cp_{t_i}-1$. Since $\weight_p \leq w_{\max}$, the total weight assigned to $V_i$ is at most $\cp_{t_i}+w_{\max}-1$. If $t_i$ is still active when the algorithm terminates, its assigned weight is strictly less than the input capacity $\cp_{t_i}$, so the same bound holds. This proves the claim.
\end{proof}

\begin{proof}[Proof of \Cref{dag-algo-theorem}]
\Cref{lem:dag-algo-correctness} shows that \Cref{alg:dag-partition} returns a valid weighted partition and that every part satisfies the required capacity bound. The running time is $\caO(m\log n)$ by \Cref{algo-complexity}.
\end{proof}

\subsection*{Acknowledgments}
The first author thanks Christos Papadimitriou for introducing this problem during a 2017 program at the Simons Institute. He also thanks Alireza Farhadi, Christos Papadimitriou, and Saeed Seddighin for early discussions on approximation algorithms for the problem and Robert Kleinberg for recent  discussions on its current state. The authors further thank Shayan Chashmjahan, Samira Goudarzi, and Suho Shin for helpful early discussions related to this project.

\clearpage
\printbibliography
\newcommand{\flowcond}{local connectivity\xspace}
\newcommand{\focuscond}{compact connectivity\xspace}
\newcommand{\compactset}[1]{\mathcal{C}(#1)}
\newcommand{\localset}[1]{\mathcal{L}(#1)}
\appendix

\section{Relaxed Conditions for the \gyorilovasz Theorem}
\label{app:relaxed-conditions}

The classical \gyorilovasz theorem is stated under $k$-$T$-connectivity. In our setting, however, this hypothesis is stronger than necessary: a valid partition exists whenever the \essentialassignmentcondition holds, even if the graph is not $k$-$T$-connected. This observation is central to our polynomial-time algorithm, since it permits contractions without destroying the connectivity structure needed for the inductive argument.

In this appendix, we restrict attention to the unweighted setting, so we count non-terminal vertices rather than total weight. We introduce two further relaxations of $k$-$T$-connectivity: \flowcond\ (\Cref{def:local-connectivity-condition}) and \focuscond\ (\Cref{def:compact-connectivity-condition}). These conditions form the hierarchy shown in \Cref{fig:connectivity-condition-hierarchy}: $k$-$T$-connectivity implies \focuscond, which in turn implies \flowcond\ (\Cref{lem:local-compact-kcon}). We show that the \gyorilovasz conclusion holds under either condition by \Cref{thm:glvs-generalized}, whose proof appears in \Cref{sec:gyori-generalized}.
Compact connectivity cannot always be preserved under contraction: some instances satisfying it have no edge whose contraction preserves the condition. Nevertheless, every such instance satisfies the \essentialassignmentcondition (\Cref{def:essential-assignment-condition} and \Cref{lem:compact-connectivity-implies-essential-assignment}) and therefore admits a polynomial-time partitioning algorithm by \Cref{thm:ess-assign-cond}. This contrast demonstrates the greater flexibility of the \essentialassignmentcondition.
\begin{definition}[Internally vertex-disjoint paths to terminals]
    \label{def:internally-vertex-disjoint-to-terminals}
    Let $G = (V, E)$ be a directed graph with terminal set $T \subseteq V$, and let $v \in V$. A family of directed paths from $v$ to $T$ is \emph{internally vertex-disjoint} if no two paths share a vertex other than the common starting vertex $v$ and their terminal endpoints. In particular, unlike the standard notion of vertex-disjoint paths, several paths are allowed to end at the same terminal.
\end{definition}

\subsection{Local Connectivity Condition}

\begin{definition}[Local connectivity and local connected set]
\label{def:local-connectivity}
Let $G = (V, E)$ be a directed graph with terminal set $T \subseteq V$, and let $T' \subseteq T$. A vertex $v \in V \setminus T$ is \emph{locally connected} to $T'$ if there exist $|T'|$ internally vertex-disjoint paths from $v$ to vertices of $T'$.

By convention, a pre-terminal of some $t \in T'$ is considered to have arbitrarily many paths to $t$, and hence is locally connected to $T'$.

The \emph{local connected set} of $T'$, denoted by $\localset{T'}$, is the set of all non-terminal vertices $v \in V \setminus T$ that are locally connected to $T'$.
\end{definition}
\begin{definition}[Local connectivity condition]

    \label{def:local-connectivity-condition}

    A directed graph $G = (V, E)$ with $|V| = n$ and terminal set
    $T \subseteq V$ with $|T| = k$ satisfies the \emph{local connectivity condition}
    with respect to capacities $\cp$ satisfying
    $\sum_{t \in T} \cp_t = n - k$ if, for every subset $T' \subseteq T$,
    the local connected set of $T'$ has size at least the total capacity
    of the terminals in $T'$. Equivalently, for every subset $T' \subseteq T$:
    \begin{equation*}
        |\localset{T'}| \geq \sum_{t \in T'} \cp_t.
    \end{equation*}
\end{definition}

\begin{lemma}[$k$-$T$-connectivity implies local connectivity]
    \label{lem:kcon-implies-localcon}
    Let $G = (V, E)$ be a directed graph with terminal set $T \subseteq V$ of size $k$. If $G$ is $k$-$T$-connected, then $G$ satisfies the local connectivity condition for any nonnegative capacities $\cp$ satisfying $\sum_{\term \in T} \cp_\term = |V \setminus T|$.
\end{lemma}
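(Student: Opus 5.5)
The plan is to show that, under $k$-$T$-connectivity, every non-terminal vertex lies in $\localset{T'}$ for every subset $T'\subseteq T$. The condition then holds trivially. Fix $T'\subseteq T$. If $T'=\emptyset$, the right-hand side $\sum_{t\in T'}\cp_t$ is $0$, so there is nothing to prove. Otherwise, fix an arbitrary $v\in V\setminus T$.

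By $k$-$T$-connectivity (\Cref{def:connectivity}), there is a family $\mathcal P$ of $k=|T|$ paths from $v$ that are pairwise vertex-disjoint except at $v$ and end at distinct terminals. Since $|T|=k$, these endpoints are exactly all of $T$. Let $\mathcal P'\subseteq\mathcal P$ be the subfamily of paths whose endpoint lies in $T'$. Then $\mathcal P'$ has exactly $|T'|$ paths, and each ends at a vertex of $T'$.

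These paths are vertex-disjoint apart from $v$. Hence they are in particular internally vertex-disjoint in the sense of \Cref{def:internally-vertex-disjoint-to-terminals}, which is a weaker requirement because it even allows shared terminal endpoints. Thus $v$ is locally connected to $T'$ by \Cref{def:local-connectivity}, and so $v\in\localset{T'}$. The pre-terminal convention in that definition is not needed, since it only adds vertices to $\localset{T'}$.

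It follows that $\localset{T'}=V\setminus T$, and therefore
\[
|\localset{T'}| = |V\setminus T| = \sum_{t\in T}\cp_t \ge \sum_{t\in T'}\cp_t,
\]
where the inequality uses that all capacities are nonnegative. This holds for every $T'\subseteq T$, which proves the lemma.

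There is no real obstacle here. The only point needing care is that $k$ disjoint paths to distinct terminals among exactly $k$ terminals must hit every terminal, which is what guarantees $|T'|$ paths into $T'$.
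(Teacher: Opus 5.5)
Your proof is correct and matches the paper's argument: the $k$ vertex-disjoint paths must end at every terminal because $|T|=k$, restricting them to $T'$ shows $\mathcal{L}(T')=V\setminus T$, and nonnegativity of the capacities gives the inequality. Treating $T'=\emptyset$ separately is harmless but not needed, since the general argument already covers that case.
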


\begin{proof}
    Fix a non-terminal vertex $v \in V \setminus T$. Since $G$ is $k$-$T$-connected and $|T| = k$, there exists a family of $k$ vertex-disjoint paths from $v$ to $T$, with one path ending at each terminal.

    Now fix any subset $T' \subseteq T$. Restricting this family to the terminals in $T'$ yields $|T'|$ internally vertex-disjoint paths from $v$ to $T'$. Hence every non-terminal vertex is locally connected to $T'$, and therefore $\localset{T'} = V \setminus T$.

    It follows that $|\localset{T'}| = |V \setminus T|$. Since $\sum_{\term \in T} \cp_\term = |V \setminus T|$, we obtain
    $$|\localset{T'}| = |V \setminus T| = \sum_{\term \in T} \cp_\term \geq \sum_{\term \in T'} \cp_\term.$$
Thus the local connectivity condition holds for every $T' \subseteq T$.
\end{proof}

Consequently, local connectivity provides a relaxation of $k$-$T$-connectivity.
\begin{theorem}[The \gyorilovasz theorem under local connectivity]
    \label{thm:glvs-generalized}
    Let $G = (V, E)$ be a directed graph with terminal set $T = \{t_1, \dots, t_k\} \subseteq V$, and let $\cp_t \in \mathbb{N}$ for all $t \in T$. Assume that $\sum_{t \in T} \cp_t = |V \setminus T|$. If $G$ satisfies the local connectivity condition with respect to $T$ and $\cp$, then there exists a partition $\langle V_i \rangle_{i \in [k]}$ of $V$ such that, for every $i \in [k]$, we have $t_i \in V_i$, $|V_i \setminus \{t_i\}| = \cp_{t_i}$, and $G[V_i]$ is connected to $t_i$.
\end{theorem}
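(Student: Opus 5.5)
I would prove \Cref{thm:glvs-generalized} by induction on $|V|+|E|$, using Hall-type tight sets to split the instance and edge deletions or contractions to shrink it, in the spirit of \Cref{alg:gl-partition}. I would not try to show directly that some single contraction preserves local connectivity, since that is exactly \Cref{conj:local-connectivity-contraction}. The base cases are easy. If some $\cp_t=0$, delete $t$. Local connectivity only improves for the remaining sets: paths ending at $t$ were never counted for $T'\not\ni t$, and deleting a terminal cannot create new internal intersections. If no non-terminals remain, every part is a singleton.

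The first reduction handles \emph{tight} sets. Call $\emptyset\neq T'\subsetneq T$ tight if $|\localset{T'}|=\sum_{t\in T'}\cp_t$. Given a tight set, I would split the instance. The first subinstance is $G[\localset{T'}\cup T']$ with terminals $T'$. The second is $G\setminus(\localset{T'}\cup T')$ with terminals $T\setminus T'$. For the first subinstance, I would check that for $T''\subseteq T'$, the internally disjoint paths from $v\in\localset{T''}$ to $T''$ can be kept inside $\localset{T'}\cup T'$. For this I would use the fact that every internal vertex $x$ on such a path inherits $|T''|$ internally disjoint paths to $T''$, obtained by rerouting with a Menger/fan argument at $x$. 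For the second subinstance, I would use submodularity of $T''\mapsto|\localset{T''}|$ together with tightness of $T'$. The aim is
\[
|\localset{T''\cup T'}\setminus\localset{T'}|\ \ge\ \sum_{t\in T''}\cp_t .
\]
I would then argue that vertices outside $\localset{T'}$ realizing this reach $T''$ while avoiding $\localset{T'}$, so the condition holds in the residual graph. The two partitions combine directly, since each part is built in its own subinstance.

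If no proper tight set exists, every inequality for $T'\subsetneq T$ has slack at least one. In this case I would pick a terminal $t$ with $\cp_t>0$ and a pre-terminal $p$ of $t$, and either delete an edge or contract $p$ into $t$ while lowering $\cp_t$ by one. Sets $T'\ni t$ lose $p$ but also lose one unit of capacity. Sets $T'\not\ni t$ have slack to absorb losing $p$. The danger is that other vertices $u$ lose local connectivity to some $T'$ because their paths used $p$ as an internal vertex. To control this, I would choose $p$, and the edge being removed, extremally, following \gyori's cascade idea: take $p$ to be "furthest" from the remaining structure, for instance a last vertex in a maximal chain of vertices whose fans pass through one another. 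Then any $u$ that loses connectivity would yield a larger chain, a contradiction. If this fails, I would first delete an outgoing edge of some pre-terminal that is not needed by any fan, as in \Cref{lem:non-critical-edge-exists}.

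I expect the main obstacle to be this last step: showing that, in the slack case, some reduction loses at most one vertex from each $\localset{T'}$. The tight-set splitting should be routine once submodularity of the local connected set is verified, although that verification also needs care because paths may share terminal endpoints.
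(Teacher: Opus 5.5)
The step you flag as the main obstacle carries the entire content of the theorem, and the proposal gives no argument for it. In the slack case you need a pre-terminal $p$ of some $t$ with $c_t>0$ (or a deletable edge) such that $|\mathcal{L}(T')|\ge\sum_{s\in T'}c_s$ still holds for every $T'$ afterwards. That is a variant of \Cref{conj:local-connectivity-contraction}, which the paper explicitly leaves open. Your ``maximal chain of fans'' is never defined, so there is no extremal argument to check.

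The difficulty can be located precisely. For $T'\ni t$ the contraction is harmless: a fan path through $p$ is cut off at the redirected edge into $t$, and several paths may end at $t$. So $\mathcal{L}(T')$ loses only $p$, and the decrease of $c_t$ pays for it. For $T'\not\ni t$, every vertex all of whose $|T'|$-fans use $p$ drops out at once, since those paths now end at $t\notin T'$. The slack is only guaranteed to be one, and nothing in your plan bounds the number of such vertices. The edge-deletion fallback has the same problem: a deletion can only shrink each $\mathcal{L}(T')$, and nothing guarantees that a deletable edge exists. For compact connectivity the paper even exhibits an instance in which every edge is critical (\Cref{lem:all-edges-compact-critical}).

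The tight-set split, which you treat as routine, is false as stated, because the inheritance claim fails in digraphs. Take $T=\{t_1,t_2,t_3\}$, non-terminals $v,x_1,x_2,y_1,y_2$, and edges $v\to x_1$, $v\to x_2$, $v\to t_3$, $x_i\to y_i$ and $x_i\to t_3$ for $i=1,2$, $y_1\to t_1$, $y_2\to t_2$, with $c=(2,1,2)$. One checks that local connectivity holds and that $T'=\{t_1,t_2\}$ is tight with $\mathcal{L}(T')=\{v,y_1,y_2\}$. Indeed, $v$ has the fan $v\to x_1\to y_1\to t_1$, $v\to x_2\to y_2\to t_2$, whereas each $x_i$ has only one path into $T'$. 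Yet $v$ has no out-edge inside $G[\mathcal{L}(T')\cup T']$, so this piece has no valid partition. Nevertheless $G$ does have one: $V_1=\{t_1,x_1,y_1\}$, $V_2=\{t_2,y_2\}$, $V_3=\{t_3,v,x_2\}$. Since $\{t_1,t_2\}$ is both a minimal and a maximal proper tight set, choosing an extremal tight set does not rescue the step. Your residual side is fine: by Menger, a vertex outside $\mathcal{L}(T')$ has at most $|T'|-1$ paths of any internally disjoint family meeting $\mathcal{L}(T')\cup T'$. Your displayed inequality is just $|A\setminus B|\ge|A|-|B|$ and needs no submodularity.

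The missing idea is to avoid reductions entirely. The paper keeps $G$ fixed and proves the augmentation statement \Cref{thm:glvs-incremental} using Gy\H{o}ri's configurations: cascades, ranks, bridges, and a lexicographically increasing vector $\rho$. It invokes local connectivity only once, at a configuration where neither update applies, for $T^\star=\{t_1\}\cup\{t_i:i\in I\}$. There the $|T^\star|-1$ last cascade vertices $z_i$ separate every vertex outside $Y=V_1\cup\bigcup_{i\in I}(R(z_i)\cup\{z_i\})$ from $T^\star$, where $R(\cdot)$ denotes reservoirs. Hence $|\mathcal{L}(T^\star)|\le n_1+\sum_{i\in I}n_i<\sum_{t\in T^\star}c_t$, a contradiction. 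Your chain-of-fans intuition is essentially the cascade, but it has to drive the augmentation of a partial partition rather than certify that a reduction preserves the condition.
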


We prove \Cref{thm:glvs-generalized} in \Cref{sec:gyori-generalized} by adapting the framework of \citet{gyori1976division, hoyer2016gyori}. This proof is constructive, but it does not lead to a polynomial-time algorithm; taken at face value, it yields only an exponential-time procedure, even for small values of $k$.

\begin{conjecture}
    \label{conj:local-connectivity-contraction}
    Let $G = (V, E)$ be a directed graph with terminal set $T \subseteq V$ and capacities $\cp$ that satisfies the local connectivity condition. Then there always exists either:
    \begin{enumerate}
        \item a terminal $\term \in T$ where $\cp_\term = 0$ that can be removed from the graph while preserving the condition, or
        \item a pre-terminal $p \in V \setminus T$ and a terminal $\term \in T$ with $(p,\term) \in E$ such that contracting $p$ into $\term$ and decrementing $\cp_\term$ by $1$ preserves the local connectivity condition.
    \end{enumerate}
\end{conjecture}

We do not know a counterexample to this conjecture. A proof would immediately imply fixed-parameter tractability with parameter $k$: when $k$ is fixed, the local connectivity condition can be checked in polynomial time, and one could simply test whether contracting a given pre-terminal into a terminal joined to it by an outgoing edge preserves the condition. On the other hand, we conjecture that checking the local connectivity condition is hard to do in polynomial time when $k$ is not fixed. If so, any algorithm that repeatedly verifies the condition during the execution would necessarily incur super-polynomial overhead.

\subsection{Compact Connectivity Condition}
\begin{definition}[Compact connectivity and compact-connected set]
\label{def:compact-connectivity}
Let $G = (V, E)$ be a directed graph with terminal set $T \subseteq V$ and $|T| = k$. A non-terminal vertex $v \in V \setminus T$ is \emph{compact-connected} to a terminal $t \in T$ if there exists a family $\mathcal{P}$ of $k$ internally vertex-disjoint paths from $v$ to the terminals in $T$ such that, for every terminal $t' \neq t$, at most one path in $\mathcal{P}$ ends at $t'$. Equivalently, several paths of $\mathcal{P}$ may end at $t$, but every other terminal receives at most one path.

By convention, a pre-terminal of $t$ is considered to have arbitrarily many paths to $t$, and hence is compact-connected to $t$.

The \emph{compact-connected set} of a terminal $t$, denoted by $\compactset{t}$, is defined as the set of all non-terminal vertices $v \in V \setminus T$ that are compact-connected to $t$.
\end{definition}

This notion gives a second relaxation of $k$-$T$-connectivity that is still strong enough to imply the \gyorilovasz conclusion.

\begin{definition}[Compact connectivity condition]
\label{def:compact-connectivity-condition}
A directed graph $G = (V, E)$ with $|V| = n$ and terminal set $T \subseteq V$ with $|T| = k$ satisfies the \emph{compact connectivity condition} with respect to capacities $\cp$ satisfying $\sum_{t \in T} \cp_t = n - k$ if, for every subset $T' \subseteq T$, the number of vertices that are compact-connected to at least one terminal of $T'$ is at least the total capacity of the terminals in $T'$. Equivalently,
\begin{equation*}
\left| \bigcup_{t \in T'} \compactset{t} \right| \geq \sum_{t \in T'} \cp_t.
\end{equation*}
\end{definition}

By Hall's Marriage Theorem, the compact connectivity condition is equivalent to the existence of a capacity-respecting assignment in which each non-terminal vertex is assigned to a terminal to which it is compact-connected.

\begin{lemma}[Compact connectivity and witness assignment]
\label{lem:compact-connectivity-implies-assignment}

Let $G=(V,E)$ with $|V|=n$ and $|T|=k$, and let $\cp$ be capacities satisfying $\sum_{t\in T}\cp_t=n-k$. Then $G$ satisfies the compact connectivity condition with respect to $T$ and $\cp$ if and only if there exists an assignment
$\assgncompact:V\setminus T\to T$
such that every $v\in V\setminus T$ is compact-connected to $\assgncompact(v)$ and
$|\assgncompact^{-1}(t)|=\cp_t$
for every $t\in T$. We call such an assignment $\assgncompact$ a \emph{witness} for the compact connectivity condition.
\end{lemma}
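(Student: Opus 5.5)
This is Hall's theorem with capacities, so I would prove it by reducing both directions to a bipartite $b$-matching between $V\setminus T$ and $T$. Let $B$ be the bipartite graph in which $v\in V\setminus T$ is adjacent to $t\in T$ exactly when $v\in\compactset{t}$. For any $T'\subseteq T$, the neighborhood of $T'$ in $B$ is precisely $\bigcup_{t\in T'}\compactset{t}$.

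\emph{From a witness to the condition.} Suppose $\assgncompact$ is a witness. Fix $T'\subseteq T$. The sets $\assgncompact^{-1}(t)$ for $t\in T'$ are pairwise disjoint, and each is contained in $\compactset{t}$. Hence
\[
\Bigl|\bigcup_{t\in T'}\compactset{t}\Bigr|\ \ge\ \sum_{t\in T'}|\assgncompact^{-1}(t)|\ =\ \sum_{t\in T'}\cp_t .
\]

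\emph{From the condition to a witness.} Replicate each terminal $t$ into $\cp_t$ copies, each adjacent to all of $\compactset{t}$. Call the resulting bipartite graph $B'$; its copy side has $\sum_t\cp_t=n-k$ vertices.

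A set $X$ of copies projects to a set $T'$ of terminals. Then $|X|\le\sum_{t\in T'}\cp_t$, while the neighborhood of $X$ in $B'$ equals $\bigcup_{t\in T'}\compactset{t}$. The condition therefore gives Hall's inequality for $X$.

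By Hall's theorem there is a matching saturating all copies. It covers $n-k=|V\setminus T|$ distinct non-terminals, so it covers every non-terminal exactly once. Define $\assgncompact(v)$ as the terminal whose copy is matched to $v$. Then each $v$ is compact-connected to $\assgncompact(v)$, and $|\assgncompact^{-1}(t)|=\cp_t$ for every $t$.

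The only delicate point is this last step. Saturating the copies yields equality $|\assgncompact^{-1}(t)|=\cp_t$ and a total map on $V\setminus T$ only because the capacities sum to exactly $|V\setminus T|$, so this hypothesis must be invoked explicitly there. Terminals with $\cp_t=0$ contribute no copies and cause no issue.
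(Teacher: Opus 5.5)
Your proof is correct and follows essentially the same route as the paper. Like the paper, you replicate each terminal $t$ into $c_t$ copies, derive Hall's condition from the compact connectivity inequality, and use $\sum_t c_t = n-k$ to make the copy-saturating matching perfect. You also handle the converse the same way, bounding the disjoint preimages by $\bigl|\bigcup_{t\in T'}\mathcal{C}(t)\bigr|$.
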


\begin{proof}

First, suppose that $G$ satisfies the compact connectivity condition. Replace each terminal $t\in T$ by $\cp_t$ copies, and form a bipartite graph whose left side is $V\setminus T$ and whose right side consists of these copies. A vertex $v\in V\setminus T$ is adjacent to every copy of $t$ exactly when $v$ is compact-connected to $t$.

Consider any set $S$ of terminal copies, and let $T'\subseteq T$ be the set of terminals having at least one copy in $S$. The copies in $S$ are adjacent to exactly the vertices that are compact-connected to some terminal of $T'$, so the compact connectivity condition gives
\begin{equation*}
    \left|\bigcup_{t\in T'}\compactset{t}\right|
    \geq \sum_{t\in T'}\cp_t
    \geq |S|.
\end{equation*}
Thus Hall's condition holds. Since both sides of the bipartite graph have size $n-k$, there is a perfect matching. Assigning each $v\in V\setminus T$ to the terminal whose copy it is matched to gives the desired witness $\assgncompact$.

Conversely, suppose that such a witness $\assgncompact$ exists. For any $T'\subseteq T$, every vertex assigned to a terminal in $T'$ is compact-connected to some terminal in $T'$. Hence
\begin{equation*}
    \bigcup_{t\in T'}\assgncompact^{-1}(t)
    \subseteq
    \bigcup_{t\in T'}\compactset{t}.
\end{equation*}
Therefore,
\begin{equation*}
    \left|\bigcup_{t\in T'}\compactset{t}\right|
    \geq
    \sum_{t\in T'}|\assgncompact^{-1}(t)|
    =
    \sum_{t\in T'}\cp_t.
\end{equation*}
Thus $G$ satisfies the compact connectivity condition.
\end{proof}

The compact connectivity condition sits between local connectivity and standard $k$-$T$-connectivity in the sense made precise below.

\begin{lemma}[Compact connectivity implies local connectivity to subsets] \label{lem:compact-implies-local}
    Let $G = (V, E)$ be a directed graph with terminal set $T \subseteq V$ and $|T| = k$. If a non-terminal vertex $v \in V \setminus T$ is compact-connected to a terminal $t \in T$, then $v$ is locally connected to any subset of terminals $T' \subseteq T$ that contains $t$.
\end{lemma}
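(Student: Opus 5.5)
The proof will be direct: take the compact-connectivity family and discard redundant paths into $t$. Let $T'\subseteq T$ with $t\in T'$, and let $v$ be compact-connected to $t$.

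We first dispose of the convention case. If $v$ is a pre-terminal of $t$, then $v$ is also a pre-terminal of a terminal in $T'$. By the convention in \Cref{def:local-connectivity}, $v$ is then locally connected to $T'$.

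Otherwise, fix a family $\mathcal{P}$ of $k$ internally vertex-disjoint paths from $v$ to $T$ as in \Cref{def:compact-connectivity}. Every terminal $t'\neq t$ receives at most one path of $\mathcal{P}$. Let $a$ be the number of paths in $\mathcal{P}$ ending at $t$. The paths ending in $T\setminus\{t\}$ end at distinct terminals, so there are at most $k-1$ of them, and hence $a\ge1$.

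Now count the paths ending in $T'$. The paths ending at $T\setminus T'$ number at most $|T\setminus T'|=k-|T'|$, since $t\notin T\setminus T'$ and each such terminal receives at most one path. Therefore at least $k-(k-|T'|)=|T'|$ paths of $\mathcal{P}$ end in $T'$. Any subfamily of an internally vertex-disjoint family is still internally vertex-disjoint. So choosing $|T'|$ of these paths gives $|T'|$ internally vertex-disjoint paths from $v$ to vertices of $T'$, which is exactly local connectivity to $T'$.

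There is no real obstacle here. The only care needed is to treat the pre-terminal convention separately, and to note that \Cref{def:local-connectivity} allows several paths to end at the same terminal $t$. That second point is what lets us keep the multiple paths into $t$ without any rerouting.
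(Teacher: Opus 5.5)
Your proposal is correct and matches the paper's proof: the paper makes the same count, phrased as a contradiction via pigeonhole (if fewer than $|T'|$ paths ended in $T'$, some terminal of $T\setminus T'$, necessarily different from $t$, would receive two paths). Your separate treatment of the pre-terminal convention is a small piece of extra care that the paper leaves implicit.
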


\begin{proof}
    Since $v$ is compact-connected to $t$, there exists a family $\mathcal{P}$ of $k$ internally vertex-disjoint paths from $v$ to $T$ such that every terminal $t' \neq t$ is the endpoint of at most one path of $\mathcal{P}$.

    Fix a subset $T' \subseteq T$ containing $t$, and let $f$ be the number of paths in $\mathcal{P}$ whose endpoints lie in $T'$. Suppose, for contradiction, that $f < |T'|$. Then more than $k-|T'|$ paths of $\mathcal{P}$ end in $T \setminus T'$. Since $|T \setminus T'| = k-|T'|$, the pigeonhole principle implies that some terminal in $T \setminus T'$ is the endpoint of at least two paths of $\mathcal{P}$. This terminal cannot be $t$, because $t \in T'$, and so we contradict the definition of compact connectivity.

    Therefore $f \geq |T'|$. Choose any $|T'|$ of the paths in $\mathcal{P}$ whose endpoints lie in $T'$. These paths are internally vertex-disjoint paths from $v$ to $T'$, so $v$ is locally connected to $T'$.
\end{proof}

\begin{lemma}[Comparing local, compact, and $k$-$T$-connectivity]
\label{lem:local-compact-kcon}

    Let $G = (V, E)$ be a directed graph with terminal set $T \subseteq V$, 
    where $|V| = n$ and $|T| = k$, and capacities $\cp$ satisfying
    $\sum_{t \in T} \cp_t = n - k$. Then:

    \begin{enumerate}

        \item If $G$ is $k$-$T$-connected, then $G$ satisfies the compact connectivity condition with respect to terminal set $T$ and capacities $\cp$.

        \item If $G$ satisfies the compact connectivity condition with respect to $T$ and $\cp$, then it also satisfies the local connectivity condition with respect to $T$ and $\cp$.

    \end{enumerate}

    Thus, $k$-$T$-connectivity implies compact connectivity, and compact connectivity implies local connectivity.
\end{lemma}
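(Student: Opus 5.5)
Both items should follow quickly from results already stated. For item (1), assume $G$ is $k$-$T$-connected and fix $v\in V\setminus T$. There is a family of $k$ vertex-disjoint paths from $v$ to $T$, and since $|T|=k$ these paths end at all $k$ terminals, one each. Such a family is internally vertex-disjoint, and every terminal receives exactly one path. So for every $t\in T$ it already witnesses that $v$ is compact-connected to $t$; the condition ``at most one path at $t'\neq t$'' holds trivially, and several paths at $t$ are allowed but not required. Hence $\compactset{t}=V\setminus T$ for every $t$. For any nonempty $T'\subseteq T$ this gives
\[
\Bigl|\bigcup_{t\in T'}\compactset{t}\Bigr|=n-k=\sum_{t\in T}\cp_t\ge\sum_{t\in T'}\cp_t .
\]
For $T'=\emptyset$ both sides vanish. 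This is exactly the argument of \Cref{lem:kcon-implies-localcon}, transcribed to compact connectivity.

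For item (2), I would work directly with the set inequality rather than with a witness. Fix $T'\subseteq T$, and let $v\in\bigcup_{t\in T'}\compactset{t}$, so that $v$ is compact-connected to some $t\in T'$. By \Cref{lem:compact-implies-local}, applied with this $t$ and the subset $T'\ni t$, the vertex $v$ is locally connected to $T'$. The pre-terminal convention is consistent: a pre-terminal of $t\in T'$ is declared compact-connected to $t$ and also locally connected to $T'$. So $\bigcup_{t\in T'}\compactset{t}\subseteq\localset{T'}$, and the compact connectivity condition gives
\[
|\localset{T'}|\ge\Bigl|\bigcup_{t\in T'}\compactset{t}\Bigr|\ge\sum_{t\in T'}\cp_t .
\]
This is the local connectivity condition.

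The only delicate point is the convention in \Cref{lem:compact-implies-local} when $v$ is compact-connected to $t$ merely by being a pre-terminal of $t$, with no genuine family of $k$ paths. In that case I would invoke the local-connectivity convention directly instead of the pigeonhole argument: a pre-terminal of $t\in T'$ counts as locally connected to $T'$. Otherwise the proof is routine bookkeeping.
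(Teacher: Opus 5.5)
Your proof is correct and matches the paper's argument essentially step for step. In item (1) you show that every non-terminal lies in every $\mathcal{C}(t)$ and treat $T'=\emptyset$ separately; in item (2) you use the lemma that compact-connection to $t$ implies local connection to any $T'\ni t$, which gives $\bigcup_{t\in T'}\mathcal{C}(t)\subseteq\mathcal{L}(T')$. Your remark about the pre-terminal convention is a harmless, slightly more careful treatment of a case the paper's proof of that lemma passes over silently.
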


\begin{proof}
    For the first implication, assume that $G$ is $k$-$T$-connected. Fix a non-terminal vertex $v \in V \setminus T$ and a terminal $t \in T$. Since $|T|=k$, there is a family $\mathcal{P}$ of $k$ vertex-disjoint paths from $v$ to $T$ with distinct endpoints, and hence exactly one path ends at each terminal. In particular, every terminal $t' \neq t$ is the endpoint of at most one path of $\mathcal{P}$. Therefore $v$ is compact-connected to $t$.
    More precisely, every non-terminal vertex belongs to every set $\compactset{t}$. For $T'=\emptyset$, both sides of the required inequality are zero and the condition holds. For a nonempty subset $T' \subseteq T$, the union below is over at least one terminal and therefore equals all of $V \setminus T$, so
    $$\left|\bigcup_{t \in T'} \compactset{t}\right| = |V \setminus T| = \sum_{t \in T} \cp_t \geq \sum_{t \in T'} \cp_t,$$
    and the compact connectivity condition holds.

    For the second implication, assume that $G$ satisfies the compact connectivity condition. Fix any subset $T' \subseteq T$. By \Cref{lem:compact-implies-local}, every vertex that is compact-connected to some terminal $t \in T'$ is locally connected to $T'$. Hence
    $$\bigcup_{t \in T'} \compactset{t} \subseteq \localset{T'}.$$
    Taking cardinalities and applying the compact connectivity condition gives
    $$|\localset{T'}| \geq \left|\bigcup_{t \in T'} \compactset{t}\right| \geq \sum_{t \in T'} \cp_t.$$
    Thus $G$ satisfies the local connectivity condition as well.
\end{proof}

\begin{corollary} \label{col:glvs-exists-compact}
    Let $G = (V, E)$ be a directed graph with terminal set $T = \{t_1, \dots, t_k\} \subseteq V$, and let $\cp_t \in \mathbb{N}$ for all $t \in T$. Assume that $\sum_{t \in T} \cp_t = |V \setminus T|$. If $G$ satisfies the compact connectivity condition with respect to $T$ and $\cp$, then there exists a partition $\langle V_i \rangle_{i \in [k]}$ of $V$ such that, for every $i \in [k]$, we have $t_i \in V_i$, $|V_i \setminus \{t_i\}| = \cp_{t_i}$, and $G[V_i]$ is connected to $t_i$.
\end{corollary}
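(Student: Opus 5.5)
The plan is to combine \Cref{lem:local-compact-kcon} with \Cref{thm:glvs-generalized}. The corollary is exactly the conclusion of \Cref{thm:glvs-generalized} under a stronger hypothesis, so no new combinatorics should be needed.

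First, I would check that the hypotheses line up. The corollary assumes $\sum_{t\in T}\cp_t=|V\setminus T|$, which is the same normalization $\sum_{t}\cp_t=n-k$ used in \Cref{def:compact-connectivity-condition} and \Cref{def:local-connectivity-condition}, since $|V\setminus T|=n-k$. Both conditions are therefore stated with respect to the same $T$ and $\cp$.

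Next, I would apply item~2 of \Cref{lem:local-compact-kcon} to conclude that $G$ satisfies the local connectivity condition with respect to $T$ and $\cp$. This step relies on \Cref{lem:compact-implies-local}, which gives the inclusion $\bigcup_{t\in T'}\compactset{t}\subseteq\localset{T'}$ for every $T'\subseteq T$. The one point to watch is the pre-terminal convention: a pre-terminal of $t$ counts as compact-connected to $t$ without a genuine path family. It must also count as locally connected to every $T'\ni t$, and the corresponding convention in \Cref{def:local-connectivity} ensures exactly that, so the inclusion holds for these vertices too.

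Finally, I would invoke \Cref{thm:glvs-generalized}. Since its hypotheses now hold, it gives the partition $\langle V_i\rangle$ with $t_i\in V_i$, $|V_i\setminus\{t_i\}|=\cp_{t_i}$, and $G[V_i]$ connected to $t_i$.

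I do not expect a real obstacle here; all the substance lies in the deferred proof of \Cref{thm:glvs-generalized}. The only place needing care is the pre-terminal convention in the comparison lemma. As an independent algorithmic route, one could instead show that compact connectivity implies the \essentialassignmentcondition and then apply \Cref{thm:ess-assign-cond}, but that is not needed for the existence statement.
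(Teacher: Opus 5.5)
Your proposal is correct and matches the paper's own proof: item~2 of \Cref{lem:local-compact-kcon} turns compact connectivity into local connectivity, and \Cref{thm:glvs-generalized} then gives the partition. Your check of the pre-terminal convention is a reasonable detail that the paper leaves implicit, and your alternative route through the \essentialassignmentcondition (as in \Cref{cor:compact-connectivity-partition}) is also valid and even yields a polynomial-time construction.
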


\begin{proof}
    This is immediate from \Cref{lem:local-compact-kcon} and \Cref{thm:glvs-generalized}: compact connectivity implies local connectivity, and \Cref{thm:glvs-generalized} then yields the required partition.
\end{proof}

\begin{lemma}[Recognizing compact connectivity]
    \label{lem:compact-connectivity-recognition}
    Let $G=(V,E)$ be a directed graph with terminal set $T$ and capacities $\cp$, where $|V|=n$, $|E|=m$, and $|T|=k$. One can determine whether $G$ satisfies the compact connectivity condition in polynomial time using one maximum-flow computation for each pair $(v,t)\in(V\setminus T)\times T$, followed by one maximum-flow computation in a bipartite network with $\caO(nk)$ edges.
\end{lemma}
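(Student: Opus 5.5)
The plan has two stages. First, for each pair $(v,t)\in(V\setminus T)\times T$, we decide whether $v$ is compact-connected to $t$ with one maximum-flow computation. Second, we decide the condition itself by a single capacitated bipartite flow, via \Cref{lem:compact-connectivity-implies-assignment}.

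\textbf{Stage one: testing one pair.} For fixed $v$ and $t$, we build the standard vertex-splitting network, as in the proof of \Cref{prop:essential-terminals-single-flow}. Every vertex $x\neq v$ that is not a terminal gets a split arc $(x_{\mathrm{in}},x_{\mathrm{out}})$ of capacity $1$. The vertex $v$ and every terminal get split arcs of capacity $k$, so that internal disjointness is enforced only at non-terminals. Every edge of $G$ becomes an arc $(x_{\mathrm{out}},y_{\mathrm{in}})$. If $(v,t)\in E$, then $v$ is a pre-terminal of $t$, is compact-connected to $t$ by convention, and needs no flow. Otherwise we add sink arcs: from $t_{\mathrm{out}}$ with capacity $k$, and from $t'_{\mathrm{out}}$ with capacity $1$ for each $t'\neq t$. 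A source feeds $v_{\mathrm{in}}$ with capacity $k$.

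We then claim that $v$ is compact-connected to $t$ if and only if the maximum flow has value $k$. One direction is immediate: a family $\mathcal P$ as in \Cref{def:compact-connectivity} gives such a flow, since each terminal $t'\neq t$ receives at most one unit. For the converse, we decompose an integral flow of value $k$ into paths and cycles and discard the cycles. The unit split capacities make the paths internally vertex-disjoint. The unit sink arcs ensure at most one path ends at each $t'\neq t$.

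One subtlety needs care: a path might pass through a terminal as an intermediate vertex. This cannot happen, because terminals have no outgoing edges, so every path stops at its first terminal. Paths that share only their terminal endpoint $t$ are allowed by the definition, which is why $t$ gets large capacity.

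\textbf{Stage two: deciding the condition.} Once all sets $\compactset{t}$ are known, we apply \Cref{lem:compact-connectivity-implies-assignment}. The condition holds if and only if a witness assignment $\assgncompact$ exists. We build a bipartite network with source arcs of capacity $1$ into each $v\in V\setminus T$. We add arcs $(v,t)$ whenever $v\in\compactset{t}$; there are at most $(n-k)k=\caO(nk)$ of them. Finally we add arcs $(t,s_{\mathrm{out}})$ of capacity $\cp_t$. Because $\sum_t\cp_t=n-k$, the condition holds exactly when the maximum flow has value $n-k$: integrality gives an integral flow, which is precisely a witness.

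All flows are polynomial-size, so the total work is $k(n-k)$ flow computations plus one bipartite flow, as stated.

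\textbf{Main obstacle.} The main point requiring care is the flow encoding in stage one: getting the terminal capacities right, so that multiple paths may end at $t$ but not at any other terminal, and verifying that the path decomposition respects internal disjointness. The rest is routine.
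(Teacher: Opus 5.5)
Your proposal is correct and follows the paper's proof essentially step for step. You use one vertex-splitting maximum flow per pair $(v,t)$ to decide compact-connectedness, and then a single capacitated bipartite flow justified by \Cref{lem:compact-connectivity-implies-assignment}. The differences are cosmetic, with one exception: the paper explicitly checks $\sum_{t\in T}\cp_t=n-k$ first, and you should too, since your stage two relies on it. The cosmetic differences are that the paper limits each $t'\neq t$ to one path with a unit split arc instead of a unit sink arc, and that its capacity-$k$ edge arcs absorb the pre-terminal case without special-casing it.
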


\begin{proof}
First verify that $\sum_{t\in T}\cp_t=n-k$, as required by \Cref{def:compact-connectivity-condition}. We first compute the sets $\compactset{t}$ for all $t\in T$.

Fix a non-terminal $v\in V\setminus T$ and a terminal $t\in T$. We use a vertex-splitting network to test whether $v\in\compactset{t}$. For every vertex $x\in V$, create two copies $x_{\mathrm{in}}$ and $x_{\mathrm{out}}$. Add the split arc $(x_{\mathrm{in}},x_{\mathrm{out}})$ with capacity $k$ when $x\in\{v,t\}$ and with capacity $1$ otherwise. Add a source $s_{\mathrm{in}}$, a sink $s_{\mathrm{out}}$, and the arc $(s_{\mathrm{in}},v_{\mathrm{in}})$ of capacity $k$. For every edge $(x,y)\in E$, add the arc $(x_{\mathrm{out}},y_{\mathrm{in}})$ of capacity $k$. Finally, for every terminal $t'\in T$, add the arc $(t'_{\mathrm{out}},s_{\mathrm{out}})$ of capacity $k$. Call the resulting network $H_{v,t}$.

For a family $\mathcal{P}=\{\mathcal{P}_1,\ldots,\mathcal{P}_k\}$ witnessing that $v$ is compact-connected to $t$, send one unit of flow along the copy of each path $\mathcal{P}_i$ in $H_{v,t}$. This flow route starts with the arc $(s_{\mathrm{in}},v_{\mathrm{in}})$, traverses the split arc $(v_{\mathrm{in}},v_{\mathrm{out}})$, and then alternates between arcs corresponding to edges of $G$ and the split arcs of the subsequent vertices of $\mathcal{P}_i$. After traversing the split arc of the terminal at which $\mathcal{P}_i$ ends, the route follows the arc to $s_{\mathrm{out}}$. These $k$ units respect all capacities. Indeed, every vertex other than $v$ and $t$ occurs in at most one path of $\mathcal{P}$, while the split arcs of $v$ and $t$ have capacity $k$ and may therefore be used by multiple flow routes. Moreover, every arc corresponding to an edge of $G$ has capacity $k$. Thus they form an integral $s_{\mathrm{in}}$--$s_{\mathrm{out}}$ flow of value $k$.

Conversely, let $f$ be an integral $s_{\mathrm{in}}$--$s_{\mathrm{out}}$ flow of value $k$. After discarding flow cycles, decompose $f$ into $k$ unit $s_{\mathrm{in}}$--$s_{\mathrm{out}}$ flow paths. Each such flow path begins with $(s_{\mathrm{in}},v_{\mathrm{in}})$ and, after projecting its split arcs and the arcs corresponding to edges of $G$ back to $G$, yields a directed path from $v$ to a terminal. In this way we obtain $k$ paths from $v$ to $T$. The unit capacities of the split arcs ensure that these paths are internally vertex-disjoint and that at most one ends at each terminal $t'\neq t$; several may end at $t$, whose split arc has capacity $k$. Hence these are precisely the $k$ paths required to show that $v\in\compactset{t}$. Therefore $v\in\compactset{t}$ if and only if the maximum-flow value in $H_{v,t}$ is $k$. Note that this test also includes every pre-terminal of $t$ in $\compactset{t}$, in accordance with the convention in \Cref{def:compact-connectivity}. Repeating the test for every pair $(v,t)\in(V\setminus T)\times T$ computes all compact-connected pairs. Each network has $\caO(n)$ vertices and $\caO(n+m)$ arcs.

We now build a bipartite flow network that tests for a witness to the compact connectivity condition. Create a source $s_{\mathrm{in}}$ and a sink $s_{\mathrm{out}}$. For every non-terminal $v\in V\setminus T$, add an arc $(s_{\mathrm{in}},v)$ of capacity $1$. For every terminal $t\in T$, add an arc $(t,s_{\mathrm{out}})$ of capacity $\cp_t$. Finally, add an arc $(v,t)$ of capacity $1$ exactly when $v\in\compactset{t}$.

An integral $s_{\mathrm{in}}$--$s_{\mathrm{out}}$ flow of value $n-k$ is exactly a witness $\assgncompact$ from \Cref{lem:compact-connectivity-implies-assignment}: the arc $(s_{\mathrm{in}},v)$ forces every non-terminal $v$ to be assigned to one terminal to which it is compact-connected, and the arcs $(t,s_{\mathrm{out}})$ force exactly $\cp_t$ vertices to be assigned to every terminal $t$. Conversely, every witness $\assgncompact$ defines such a flow. By \Cref{lem:compact-connectivity-implies-assignment}, $G$ satisfies the compact connectivity condition if and only if this network has maximum-flow value $n-k$.

The bipartite network has $\caO(n)$ vertices and $\caO(nk)$ arcs. Thus, after the polynomially many preprocessing flows, one additional maximum-flow computation recognizes the compact connectivity condition in polynomial time.
\end{proof}

The next two lemmas show that every instance satisfying the compact connectivity condition also satisfies the \essentialassignmentcondition.

\begin{lemma}[Compact connectivity implies essentiality] \label{lem:compact-connectivity-implies-essentiality}
    Let $G=(V,E)$ be a graph with a terminal set $T$ of size $k$, and let $v \in V \setminus T$. If $v$ is compact-connected to a terminal $t \in T$, then $t$ is essential for $v$.
\end{lemma}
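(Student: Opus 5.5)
We plan to argue via the cut characterization of essentiality (\Cref{lem:essential-terminal-tightest-cut}). Let $\conn=\connvg{v}{G}$ and let $C=(\lcutc{C},\scutc{C},\rcutc{C})$ be the tightest minimum cut separating $v$ from $T$. It suffices to show $t\in\scutc{C}$. Since $t$ is a terminal, it lies in $\lcutc{C}\cup\scutc{C}$. So we assume for contradiction that $t\in\lcutc{C}$, and we aim to show that then $|\scutc{C}|$ is too small to be a minimum cut.

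We first handle the pre-terminal convention. If $v$ has an edge to $t$, then $t\in\lcutc{C}$ is impossible, because $v\in\rcutc{C}$ and there are no edges from $\rcutc{C}$ to $\lcutc{C}$. So we may assume a genuine family $\mathcal P$ of $k$ internally vertex-disjoint paths from $v$ to $T$ exists, with every terminal $t'\neq t$ the endpoint of at most one path.

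The key counting step goes as follows. Each path of $\mathcal P$ starts at $v\in\rcutc{C}$ and ends in $T\subseteq\lcutc{C}\cup\scutc{C}$, so it meets $\scutc{C}$. Let $a$ be the number of paths ending at $t$. Since $t\in\lcutc{C}$, each of these $a$ paths must meet $\scutc{C}$ at a vertex that is not its endpoint, i.e.\ an internal vertex. Internal vertices are distinct across paths, so these $a$ paths contribute $a$ distinct separator vertices that are non-terminals, or at least vertices not serving as endpoints of other paths. The remaining $k-a$ paths end at distinct terminals other than $t$, and each meets $\scutc{C}$ at some vertex: either an internal vertex, which is private to that path, or its endpoint terminal, which is distinct for each path. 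Some care is needed to rule out an internal vertex of one path coinciding with the terminal endpoint of another. We plan to take, for each path, its \emph{first} vertex in $\scutc{C}$. For paths ending at $t$, this first vertex is internal. An internal vertex cannot also be a terminal endpoint of another path, since internal disjointness forbids shared vertices other than $v$ and endpoints, and we may shorten paths so that they contain no terminal internally. This yields $k$ distinct vertices in $\scutc{C}$, so $\conn=|\scutc{C}|\ge k$.

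On the other hand, $\conn\le|T|=k$. Hence $\conn=k$, which means $G$ is $k$-connected from $v$, and every terminal, including $t$, is then essential for $v$: removing any terminal leaves at most $k-1$ terminals, so the connectivity drops. This contradicts $t\in\lcutc{C}$ via \Cref{lem:essential-terminal-tightest-cut}. More simply, in this case we directly conclude that $t$ is essential. The main obstacle we expect is the distinctness bookkeeping in the counting step. We will handle it by first truncating each path at its first terminal, which preserves the compact property up to possibly redirecting a path to end at an earlier terminal. We need to check that this truncation does not create two paths ending at the same $t'\neq t$. If it does, the argument fails as stated, and we would instead choose $\mathcal P$ minimal in total length, so that no path passes through a terminal internally, since terminals have no outgoing edges.
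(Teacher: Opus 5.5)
Your proposal is correct and follows the paper's proof: assume $t\in\lcutc{C}$ for the tightest minimum cut, take the first separator vertex on each of the $k$ paths, show these are distinct so that $\conn=k$, and conclude that $t$ is essential. Two small remarks. The closing discussion of truncating paths or choosing a minimal-length family is unnecessary, because the paper's standing convention that terminals have no outgoing edges already ensures no path contains a terminal internally. Your explicit treatment of the pre-terminal convention is a point the paper leaves implicit.
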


\begin{proof}
    Since $v$ is compact-connected to $t$, there exists a collection $\mathcal{P}=\{\mathcal{P}_1,\ldots,\mathcal{P}_k\}$ of $k$ paths from $v$ to $T$ such that the paths are internally vertex-disjoint, the only terminal at which more than one path may end is $t$, and all other terminal endpoints are distinct. In particular, apart from the common starting vertex $v$, the only vertex that may belong to more than one path in $\mathcal{P}$ is $t$.

    Suppose, for contradiction, that $t$ is not essential for $v$. Let $\kappa=\connvg{v}{G}$, and let $\cut=(\lcut,\scut,\rcut)$ be the tightest minimum cut separating $v$ from $T$. Thus $|\scut|=\kappa$. Since $T$ itself is a separator of size $k$, we have $\kappa\leq k$.

    By the assumption that $t$ is not essential and \Cref{lem:essential-terminal-tightest-cut}, we have $t\in\lcut$, and in particular $t\notin\scut$.

    Every path $\mathcal{P}_j$ must intersect $\scut$. Otherwise, $\mathcal{P}_j$ would give a path from $v$ to a terminal that avoids $\scut$, contradicting the fact that $\cut$ separates $v$ from $T$. For each $j\in[k]$, let $x_j$ be the first vertex of $\mathcal{P}_j$ that is in $\scut$.

    We claim that the vertices $x_1,\ldots,x_k$ are all distinct. Indeed, suppose that $x_j=x_{j'}$ for two distinct indices $j,j'\in[k]$. Then $\mathcal{P}_j$ and $\mathcal{P}_{j'}$ share a vertex other than $v$. By compact connectivity, the only such shared vertex can be $t$. However, $x_j\in\scut$ while $t\in\lcut$, so $x_j\neq t$, a contradiction. Therefore the $k$ paths in $\mathcal{P}$ intersect $\scut$ in $k$ distinct vertices, and hence $|\scut|\geq k$. Since $|\scut|=\kappa\leq k$, it follows that $\kappa=k$.

    Thus the terminal connectivity of $v$ is $k$. After removing $t$, only $k-1$ terminals remain, so the terminal connectivity of $v$ in $G\setminus\{t\}$ is exactly $k-1$. Hence removing $t$ decreases the terminal connectivity of $v$, which means that $t$ is essential for $v$, contradicting our assumption. Therefore $t$ is essential for $v$.
\end{proof}

\begin{lemma}[Compact connectivity condition implies the \essentialassignmentcondition]
\label{lem:compact-connectivity-implies-essential-assignment}
If a graph $G = (V, E)$ satisfies the compact connectivity condition with respect to a terminal set $T$ and capacities $\cp$, then it also satisfies the \essentialassignmentcondition with respect to $T$ and $\cp$.
\end{lemma}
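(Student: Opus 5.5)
The plan is to compose \Cref{lem:compact-connectivity-implies-assignment} with \Cref{lem:compact-connectivity-implies-essentiality}.

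First, since $G$ satisfies the compact connectivity condition with respect to $T$ and $\cp$ (and, as that definition requires, $\sum_{t\in T}\cp_t = n-k$), \Cref{lem:compact-connectivity-implies-assignment} gives a witness assignment $\assgncompact: V\setminus T\to T$. In this assignment every non-terminal $v$ is compact-connected to $\assgncompact(v)$, and $|\assgncompact^{-1}(t)| = \cp_t$ for every terminal $t$.

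Next, set $\assgn := \assgncompact$ and check the two requirements of \Cref{def:essential-assignment-condition}.

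The capacity requirement $|\assgninv(t)| = \cp_t$ holds directly, because the witness already has it.

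For essentiality, fix $v\in V\setminus T$. Since $v$ is compact-connected to $\assgncompact(v)$, \Cref{lem:compact-connectivity-implies-essentiality} shows that $\assgncompact(v)$ is essential for $v$. This holds for every $v$, so $\assgn$ witnesses the \essentialassignmentcondition.

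There is no real obstacle, because the earlier results do the work. The only point to confirm is that the pre-terminal convention in \Cref{def:compact-connectivity} does not break the essentiality step. A pre-terminal of $t$ counts as compact-connected to $t$ even when it does not have $k$ genuine internally disjoint paths.

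This case is still covered. If $(v,t)\in E$ with $v$ a pre-terminal of $t$, then $t$ lies in no left side of any cut separating $v$ from $T$. Otherwise the edge $(v,t)$ would go from $\rcut$ to $\lcut$. So $t$ lies in the separator of the tightest minimum cut, and \Cref{lem:essential-terminal-tightest-cut} shows that $t$ is essential for $v$. If the proof of \Cref{lem:compact-connectivity-implies-essentiality} does not already handle this case, I would add this one-line argument to close the gap.
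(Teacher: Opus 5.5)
Your proof is correct and is exactly the paper's argument: take the compact-connectivity witness from \Cref{lem:compact-connectivity-implies-assignment}, reuse it as the flow-essential assignment, and apply \Cref{lem:compact-connectivity-implies-essentiality} to turn compact-connectedness into essentiality. Your extra check of the pre-terminal convention is valid but not needed. The paper's proof of \Cref{lem:compact-connectivity-implies-essentiality} already covers that case: if $t\in\lcut$, a path made of the single edge $(v,t)$ would go from $\rcut$ to $\lcut$, which no cut allows.
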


\begin{proof}
By \Cref{lem:compact-connectivity-implies-assignment}, the compact connectivity condition yields an assignment $\assgncompact: V \setminus T \to T$ mapping each non-terminal $v$ to a compact-connected terminal $\assgncompact(v)$ such that $|\assgncompact^{-1}(t)| = \cp_t$ for all $t \in T$. By \Cref{lem:compact-connectivity-implies-essentiality}, any terminal compact-connected to $v$ is also essential for $v$. Thus, $\assgn = \assgncompact$ is a valid flow-essential assignment satisfying the \essentialassignmentcondition.
\end{proof}

Hence, \Cref{alg:gl-partition} partitions any graph satisfying the compact connectivity condition in polynomial time.

\begin{corollary}
\label{cor:compact-connectivity-partition}
Let $G = (V, E)$ be a directed graph satisfying the compact connectivity condition with respect to terminal set $T = \{t_1, \dots, t_k\}$ and capacities $\cp$. Then $G$ satisfies the \essentialassignmentcondition, and $V$ can be partitioned in polynomial time into $k$ disjoint subsets $V_1, \dots, V_k$ such that for each $i \in \{1, \dots, k\}$, $t_i \in V_i$, $|V_i \setminus \{t_i\}| = \cp_{t_i}$, and $G[V_i]$ is connected to $t_i$.
\end{corollary}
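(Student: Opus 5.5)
This corollary follows by chaining two earlier results. First, I invoke \Cref{lem:compact-connectivity-implies-essential-assignment}. It states that every instance satisfying the compact connectivity condition with respect to $T$ and $\cp$ also satisfies the \essentialassignmentcondition with respect to the same $T$ and $\cp$. This gives the first claim of the corollary.

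Second, I apply \Cref{thm:ess-assign-cond} to $(G,T,\cp)$. Its hypotheses require the \essentialassignmentcondition, which we now have, and the counting identity $\sum_{i}\cp_{t_i}=|V\setminus T|$. The identity is built into \Cref{def:compact-connectivity-condition}, which is stated only for capacities with $\sum_t\cp_t=n-k$. The theorem then returns, in polynomial time, parts $V_1,\dots,V_k$ with $t_i\in V_i$, $|V_i|=\cp_{t_i}+1$, and $G[V_i]$ connected to $t_i$. Because $|V_i|=\cp_{t_i}+1$ and $t_i\in V_i$, this is exactly $|V_i\setminus\{t_i\}|=\cp_{t_i}$.

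For the algorithmic part, I would also observe that the algorithm needs a starting witness. One option is to compute a compact-connectivity witness $\assgncompact$ via \Cref{lem:compact-connectivity-recognition}, which by \Cref{lem:compact-connectivity-implies-essentiality} is already a flow-essential assignment. The other is to compute a flow-essential witness directly as in the proof of \Cref{thm:ess-assign-cond}. Either way, \textsc{GLPartition} then runs in polynomial time by \Cref{thm:gl-partition-correctness}.

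There is no real obstacle, since the argument is a direct composition of earlier results. The only point needing care is confirming that the capacity-sum hypothesis of \Cref{thm:ess-assign-cond} is supplied by the definition of the compact connectivity condition.
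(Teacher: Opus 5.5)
Your proposal is correct and matches the paper's argument: the corollary is the composition of \Cref{lem:compact-connectivity-implies-essential-assignment} with \Cref{thm:ess-assign-cond}, i.e.\ running \textsc{GLPartition} on a flow-essential witness, and the capacity-sum hypothesis is supplied by \Cref{def:compact-connectivity-condition}. Your remark that the witness from \Cref{lem:compact-connectivity-recognition} can itself serve as the starting flow-essential assignment, via \Cref{lem:compact-connectivity-implies-essentiality}, is a valid alternative to computing that assignment directly.
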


\providecommand{\essentialassignmentcond}{Flow-Essential Assignment Condition}

\begin{figure}[htbp]
  \centering
  \begin{tikzpicture}[
      x=1cm,
      y=1cm,
      font=\small,
      hierarchy boundary/.style={line width=0.95pt, draw=black!78},
      condition label/.style={
        font=\small\bfseries,
        align=center,
        text=black!82,
        inner sep=0pt,
        execute at begin node={\hyphenpenalty=10000\exhyphenpenalty=10000}
      }
    ]
    \filldraw[
      fill=green!18,
      draw=green!32!black,
      line width=1.25pt
    ] (0,0) circle[radius=4.05];

    \filldraw[
      fill=cyan!18,
      fill opacity=0.58,
      draw=cyan!48!black,
      draw opacity=1,
      line width=1.05pt
    ] (1.40,-0.35) ellipse [x radius=4.85, y radius=2.95];

    \draw[hierarchy boundary] (0,-0.35) circle[radius=2.40];
    \draw[hierarchy boundary] (0,-0.35) circle[radius=1.10];

    \node[condition label, text=green!24!black, text width=4.60cm] at (0,3.1)
      {\essentialassignmentcond};
    \node[condition label, text width=2.60cm] at (0,1.35)
      {Compact Connectivity Condition};
    \node[condition label, font=\footnotesize\bfseries] at (0,-0.35)
      {$k$-connectivity};

    \node[
      condition label,
      font=\footnotesize\bfseries,
      text=cyan!42!black,
      text width=2.10cm
    ] at (5.15,-0.35) {Local Connectivity Condition};
  \end{tikzpicture}
  \par\vspace{0.4em}
  \caption{Hierarchy of the connectivity conditions considered in this paper. We give a polynomial-time algorithm, and consequently a constructive proof of existence, for instances satisfying the \essentialassignmentcond.}
  \label{fig:connectivity-condition-hierarchy}
\end{figure}

Although we showed that every instance satisfying compact connectivity also satisfies the \essentialassignmentcondition and can therefore be partitioned by our algorithm, preserving compact connectivity within the contract/remove framework is not always possible. In particular, some instances contain neither a non-critical edge that can be removed nor a pre-terminal that can be contracted while preserving compact connectivity. In contrast, the \essentialassignmentcondition is flexible enough to support contractions and edge removals while still capturing the structure needed to construct the desired partition.

We omit the counterexample and the proof that it does not contain any contractible pre-terminal or removable edge, as the proof is long and does not contribute to the main result of the paper. For the interested reader, we provide a code\footnote{\url{https://github.com/mahdi-jfri/Gyori-Lovasz-Codes}} that constructs and checks the validity of this stronger counterexample.

Instead, we present an example where all pre-terminals have out-degree two and there are no non-critical edges. Because no pre-terminal has out-degree one, there is no obvious contraction to perform; nevertheless, valid contractions are still possible despite this restriction. The stronger counterexample we omitted ensures that even these non-trivial pre-terminals of out-degree two cannot be contracted either. To construct that main, stronger counterexample, we utilize more copies of the same gadget to increase the load on the critical edges, ensuring that even contracting them fails to preserve the compact connectivity condition.

\begin{lemma}
    \label{lem:all-edges-compact-critical}

    There exist a directed graph $G=(V,E)$, a terminal set $T\subseteq V$
    with $|T|=9$, and capacities $\cp$ such that $G$ satisfies the compact
    connectivity condition with respect to $\cp$, every pre-terminal has
    out-degree exactly two, but every edge of $G$ is critical. That is, for every
    $e\in E$, the graph $G\setminus e$ fails to satisfy the compact connectivity
    condition with respect to $\cp$.
\end{lemma}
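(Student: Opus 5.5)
This is an existence statement, so the plan is to give an explicit gadget-based construction with $|T|=9$ and then verify three properties in turn. First, the compact connectivity condition holds. Second, every pre-terminal has out-degree exactly two. Third, deleting any edge breaks the condition.

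I would build the instance from three disjoint copies of a small gadget. Each copy uses three of the nine terminals, so the graph has a symmetry permuting the three copies and, inside each copy, rotating its three terminals. The symmetry lets me check only a few edge orbits instead of every edge. Inside a gadget the lower layer consists of pre-terminals $p_1,p_2,p_3$, each with exactly two out-edges: a matching edge $(p_i,t_i)$ and a secondary edge into the next pre-terminal or into $t_{i+1}$. This mirrors the structure in \Cref{fig:potential-leaf-matching}. Upper-layer vertices then point into several gadgets, so that they can only be compact-connected to terminals whose paths pass through the pre-terminal layer.

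Capacities would be chosen tight. The total capacity should equal the number of non-terminals, and for carefully chosen subsets $T'$ the union $\bigcup_{t\in T'}\compactset{t}$ should have size exactly $\sum_{t\in T'}\cp_t$. Tightness is what makes every edge matter.

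For the condition itself, I would use \Cref{lem:compact-connectivity-implies-assignment}. It suffices to exhibit an explicit witness assignment $\assgncompact$ with $|\assgncompact^{-1}(t)|=\cp_t$ such that each $v$ is compact-connected to $\assgncompact(v)$. For each assigned pair I would write down the $k=9$ internally disjoint paths explicitly, using the convention that a pre-terminal of $t$ is automatically compact-connected to $t$.

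For criticality, I would fix an edge orbit representative $e$ and find a tight Hall set $T'$ in $G\setminus e$ whose compact-connected union drops below $\sum_{t\in T'}\cp_t$. Concretely, I would show that some vertex $v$ in that union loses compact connectivity to every terminal of $T'$ once $e$ is removed. To certify that loss, I would use a Menger-type cut in the vertex-split network from \Cref{lem:compact-connectivity-recognition}: a cut of capacity below $k$ in $H_{v,t}$ for $G\setminus e$. For the secondary edges $(p_i,q_i)$ the intuition follows \Cref{lem:critical-implies-assignable}. Any upper vertex that routes through $p_i$ to reach the copy's other terminal now has only one outlet, so its 9-path family collapses.

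The main obstacle is the bookkeeping. Every edge orbit needs both a tight deficient set $T'$ and a small cut certificate, and the cuts must account for paths that may end at $t$ several times. If verifying this by hand becomes unwieldy, I would fall back on the computational check referenced in the repository and present only representative cases.
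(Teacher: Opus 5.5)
Your proposal does not yet prove the statement. For an existence claim the explicit construction is the whole content, and you never give one.

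The upper-layer vertices are left unspecified: how many there are and which out-neighbors they have. So are the capacities and the tight sets $T'$, so there is nothing to verify. Falling back on the code in the repository does not fill the gap, since the paper says that code checks the stronger, omitted counterexample, not a graph you have written down.

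More seriously, the mechanism you rely on for the pre-terminal out-edges does not work as described. Under compact connectivity one terminal may absorb arbitrarily many paths. So when a path through $p_i$ loses its secondary edge, it can simply end at $t_i$ instead. That is blocked only if $t_i$ is not the multi-path terminal \emph{and} is already occupied by another path in every witness family. The ``only one outlet'' intuition from \Cref{lem:critical-implies-assignable} concerns flow-essentiality and does not transfer.

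Your layout also leaves no evident room for such blocking. There are only three pre-terminals per copy (nine overall), and every upper vertex needs nine distinct out-neighbors. For instance, an upper vertex adjacent to all nine $p_i$ routes each path $p_i\to t_i$ directly and is unaffected by deleting any secondary edge.

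The idea you are missing is a dedicated forcing vertex for each edge that is not trivially critical. The paper takes, for every pair $\{i,j\}\subseteq[9]$, three parallel pre-terminals $P_{i,j}$ with out-edges only to $t_i$ and $t_j$, so there are no edges between pre-terminals at all. For each edge $e=(p^{r}_{x,y},t_y)$ it adds a vertex $v_e$ with $N^+(v_e)=P_{a,b}\cup P_{a,c}\cup\{p^{1}_{d,x},p^{2}_{d,x},p^{r}_{x,y}\}$, where $a,b,c,d\notin\{x,y\}$ are distinct. The forcing argument runs as follows:
\begin{itemize}
\item Because $v_e$ has exactly nine out-neighbors, every witness family uses all of them.
\item Pigeonhole on the triples $P_{a,b}$ and $P_{a,c}$ forces the multi-path terminal to be $t_a$.
\item The two paths through $p^{1}_{d,x}$ and $p^{2}_{d,x}$ must then occupy $t_d$ and $t_x$.
\item The last path therefore has to use $e$.
\end{itemize}

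Capacities are read off from a witness assignment, so they sum to $|V\setminus T|$. Criticality then needs no tight subsets or cut certificates. Deleting an out-edge of some $v_e$ leaves it with only eight out-neighbors. Deleting $e$ destroys every witness for $v_e$, which was compact-connected only to $t_a$. Either way some non-terminal becomes compact-connected to no terminal, so Hall's condition already fails at $T'=T$.
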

\begin{proof}

    Let $T=\{t_1,\ldots,t_9\}$. For every unordered pair
    $\{i,j\}\subseteq[9]$, create three pre-terminals
    \begin{equation*}
        P_{i,j}=\{p_{i,j}^1,p_{i,j}^2,p_{i,j}^3\},
    \end{equation*}
    and give each of them exactly two outgoing edges, one to $t_i$ and one
    to $t_j$.

    We add one forcing vertex for each of these pre-terminal-to-terminal
    edges. Fix such an edge $e=(p_{x,y}^r,t_y)$, where $t_x$ is the other
    out-neighbor of $p_{x,y}^r$. Choose four distinct indices
    $a,b,c,d\in[9]\setminus\{x,y\}$, and add a vertex $v_e$ with
    \begin{equation*}
        N^+(v_e)
        =P_{a,b}\cup P_{a,c}
        \cup\{p_{d,x}^1,p_{d,x}^2,p_{x,y}^r\}.
    \end{equation*}
    Such a choice is always possible because seven terminal indices remain.
    The graph has no other vertices or edges. In particular, every
    pre-terminal has out-degree exactly two, while every forcing vertex has
    out-degree nine. \Cref{fig:compact-connectivity-forcing-gadget} shows the forcing vertex $v=v_e$ targeting
    $(p_{5,6}^1,t_6)$ for the choice
    $(a,b,c,d,x,y)=(1,2,3,4,5,6)$.

    \begin{figure}[htbp]
  \centering
  \begin{essgraph}[scale=1, >=stealth, node scale=1.2]

    \essterminal{t1}{(-7.2,0)}{$t_1$}
    \essterminal{t2}{(-4.8,0)}{$t_2$}
    \essterminal{t3}{(-2.4,0)}{$t_3$}
    \essterminal{t4}{( 0.0,0)}{$t_4$}
    \essterminal{t5}{( 2.4,0)}{$t_5$}
    \essterminal{t6}{( 4.8,0)}{$t_6$}

    \essterminal{t9}{(7.4,0)}{$t_9$}
    \essterminal[extra={opacity=0}]{t7}{(6.1,0)}{}
    \essterminal[extra={opacity=0}]{t8}{(6.1,0)}{}

    \essvertex[colors={t1/0.5,t2/0.5},
      extra={font=\scriptsize}
    ]{p121}{(-7.2,2.25)}{$p_{1,2}^{1}$}

    \essvertex[colors={t1/0.5,t2/0.5},
      extra={font=\scriptsize}
    ]{p122}{(-5.8,2.25)}{$p_{1,2}^{2}$}

    \essvertex[colors={t1/0.5,t2/0.5},
      extra={font=\scriptsize}
    ]{p123}{(-4.4,2.25)}{$p_{1,2}^{3}$}

    \essvertex[colors={t1/0.5,t3/0.5},
      extra={font=\scriptsize}
    ]{p131}{(-3.0,2.25)}{$p_{1,3}^{1}$}

    \essvertex[colors={t1/0.5,t3/0.5},
      extra={font=\scriptsize}
    ]{p132}{(-1.6,2.25)}{$p_{1,3}^{2}$}

    \essvertex[colors={t1/0.5,t3/0.5},
      extra={font=\scriptsize}
    ]{p133}{(-0.2,2.25)}{$p_{1,3}^{3}$}

    \essvertex[colors={t4/0.5,t5/0.5},
      extra={font=\scriptsize}
    ]{p451}{(1.2,2.25)}{$p_{4,5}^{1}$}

    \essvertex[colors={t4/0.5,t5/0.5},
      extra={font=\scriptsize}
    ]{p452}{(2.6,2.25)}{$p_{4,5}^{2}$}

    \essvertex[colors={t5/0.5,t6/0.5},
      extra={font=\scriptsize}
    ]{p561}{(4.0,2.25)}{$p_{5,6}^{1}$}

    \essvertex[colors={t1/1}]{v}{(-1.6,4.65)}{$v$}

    \begin{scope}[on background layer]

      \essedge[extra={->}]{v}{p121}
      \essedge[extra={->}]{v}{p122}
      \essedge[extra={->}]{v}{p123}

      \essedge[extra={->}]{v}{p131}
      \essedge[extra={->}]{v}{p132}
      \essedge[extra={->}]{v}{p133}

      \essedge[extra={->}]{v}{p451}
      \essedge[extra={->}]{v}{p452}
      \essedge[extra={->}]{v}{p561}

      \draw[->,line width=\EssEdgeLineWidth]
  (p121) -- ([xshift=-9pt]t1.north);
      \essedge[extra={->}]{p121}{t2}

\draw[->,line width=\EssEdgeLineWidth]
  (p122) to[bend left=4] ([xshift=-5pt]t1.north);
      \essedge[extra={->}]{p122}{t2}

      \draw[->,line width=\EssEdgeLineWidth]
  (p123) to[bend left=0] (t1.north);
      \essedge[extra={->}]{p123}{t2}

      \draw[->,line width=\EssEdgeLineWidth]
  (p131) -- ([xshift=8pt]t1.north);
      \essedge[extra={->}]{p131}{t3}

\draw[->,line width=\EssEdgeLineWidth]
  (p132) to[bend left=0]
  ([yshift=5pt]t1.east);
      \essedge[extra={->}]{p132}{t3}

      \draw[->,line width=\EssEdgeLineWidth]
  (p133) to[bend left=0]
  ([yshift=1pt]t1.east);
      \essedge[extra={->}]{p133}{t3}

      \essedge[extra={->}]{p451}{t4}
      \essedge[extra={->}]{p451}{t5}

      \essedge[extra={->}]{p452}{t4}
      \essedge[extra={->}]{p452}{t5}

      \essedge[extra={->}]{p561}{t5}
      \essedge[extra={->, ultra thick, red}]{p561}{t6}

    \end{scope}

    \node[font=\Large] at (6.1,0)     {$\cdots$};
    \node[font=\Large] at (5.45,2.25) {$\cdots$};
    \node[font=\Large] at (0.25,4.65) {$\cdots$};

  \end{essgraph}

\caption{The forcing gadget for the
edge $e=(p_{5,6}^1,t_6)$. In this figure, the inner colors indicate the terminals to which a vertex is compact-connected. Any compact-connection witness for $v$ consists of
nine internally vertex-disjoint paths and must therefore use all nine
out-neighbors of $v$. The triples $P_{1,2}$ and $P_{1,3}$ force $t_1$ to be
the only terminal permitted to receive multiple paths. The two paths through
$p_{4,5}^1$ and $p_{4,5}^2$ must then split between $t_4$ and $t_5$, forcing
the path through $p_{5,6}^1$ to use the highlighted edge $e$. Consequently,
$v$ is compact-connected only to $t_1$, and every witness for this connection
uses $e$. Relabeling the six displayed terminals gives the forcing vertex
$v_e$ for any pre-terminal-to-terminal edge $e$ in the construction.}
\label{fig:compact-connectivity-forcing-gadget}
\end{figure}

    We first establish the key forcing
    property by analyzing the displayed vertex $v$. Suppose that a family of nine
    internally vertex-disjoint paths witnesses that $v$ is compact-connected
    to some terminal $t$. Since $v$ has exactly nine out-neighbors, the nine
    paths must leave $v$ through all of them.

    Three of the paths pass through $P_{1,2}$ and can end only at $t_1$ or
    $t_2$. Thus, the terminal $t$ that is allowed to receive multiple paths
    must belong to $\{t_1,t_2\}$; otherwise these three paths would have only
    two possible endpoints, each permitted to receive at most one path. The
    three paths through $P_{1,3}$ similarly force
    $t\in\{t_1,t_3\}$. Hence $t=t_1$.

    It now remains to determine the last three paths. Because $t_1$ is the
    only terminal allowed to receive multiple paths, the two paths through
    $p_{4,5}^1$ and $p_{4,5}^2$ must split between $t_4$ and $t_5$. The path
    through $p_{5,6}^1$ therefore cannot also end at $t_5$ and must use
    $(p_{5,6}^1,t_6)$. Conversely, routing the six paths through
    $P_{1,2}\cup P_{1,3}$ to $t_1$, the two paths through $P_{4,5}$ to
    $t_4$ and $t_5$, and the final path to $t_6$ gives the required compact
    connection. We have proved both that $v$ is compact-connected only to
    $t_1$ and that every witness for this connection uses
    $(p_{5,6}^1,t_6)$.

    The same argument applies to every forcing vertex $v_e$: relabeling
    $(1,2,3,4,5,6)$ as $(a,b,c,d,x,y)$ shows that $v_e$ is
    compact-connected only to $t_a$, and every witness for this connection
    uses the edge $e$ that $v_e$ was created to target.

    We now
    define the capacities together with a witness. Let
    $\assgncompact:V\setminus T\to T$ assign every pre-terminal to one of its
    two terminal out-neighbors, choosing these assignments so that every
    terminal receives at least one pre-terminal, and assign each forcing
    vertex $v_e$ to its unique compact-connected terminal $t_a$. Set
    \begin{equation*}
        \cp_t=|\assgncompact^{-1}(t)|
        \qquad\text{for every }t\in T.
    \end{equation*}
    By the pre-terminal convention, every pre-terminal is compact-connected
    to its assigned terminal, and the forcing property proves the same for
    every $v_e$. Thus $\assgncompact$ is a witness for the compact
    connectivity condition. Moreover, it assigns every non-terminal exactly
    once, so $\sum_{t\in T}\cp_t=|V\setminus T|$. By
    \Cref{lem:compact-connectivity-implies-assignment}, $G$ satisfies the
    compact connectivity condition with respect to these capacities.
    
    It remains to prove that every
    edge is critical. Let $f$ be any edge of $G$. If $f$ leaves a forcing
    vertex $v_e$, then $v_e$ has only eight out-neighbors in $G\setminus f$. Since
    internally vertex-disjoint paths from $v_e$ must use distinct first
    out-neighbors, $v_e$ cannot have the nine paths required for a compact
    connection and is therefore compact-connected to no terminal.

    Now suppose that $f$ joins a pre-terminal to a terminal, and consider the
    forcing vertex $v_f$ created for this particular edge. The forcing
    construction gives $v_f$ exactly one compact-connected terminal, and
    every witness for that compact connection uses $f$. After $f$ is deleted,
    $v_f$ can no longer be compact-connected to that terminal; nor can it
    become compact-connected to another terminal, since deleting an edge
    cannot create a new path family. Thus $v_f$ is compact-connected to no
    terminal in $G\setminus f$.

    In either case, some non-terminal of $G\setminus f$ has no terminal to which it can
    be assigned in a witness for the compact connectivity condition. Hence no
    such witness assignment exists, and
    \Cref{lem:compact-connectivity-implies-assignment} implies that $G\setminus f$
    fails the compact connectivity condition. Since $f$ was arbitrary, every
    edge of $G$ is critical.
\end{proof}

\newcommand{\unaloc}{U}

\section{The \gyorilovasz theorem under local connectivity}
\label{sec:gyori-generalized}

In this section, we present a self-contained proof of \Cref{thm:glvs-generalized} by adapting and generalizing the configuration and cascade framework introduced by \citet{gyori1976division, hoyer2016gyori} to directed graphs under the local connectivity condition. 

To establish the theorem inductively, it suffices to prove the following incremental version, which demonstrates that any valid partial allocation of vertices can be extended to expand the component of a chosen terminal.

\begin{theorem}
	\label{thm:glvs-incremental}
	Let $G = (V, E)$ be a directed graph with terminal set $T = \{t_1, \dots, t_k\} \subseteq V$ that satisfies the local connectivity condition with respect to capacities $\cp$. Let $n_1, \dots, n_k$ be integers such that $n_i \le \cp_{t_i}$ for all $i \in [k]$ and $n_1 < \cp_{t_1}$. Let $V_1, \dots, V_k$ be pairwise disjoint vertex sets such that $t_i \in V_i$, $G[V_i]$ is connected to $t_i$ for all $i \in [k]$, and $|V_i \setminus \{t_i\}| = n_i$ for all $i \in [k]$.
	
	Then, there exist pairwise disjoint vertex sets $V_1', \dots, V_k'$ such that $t_i \in V_i'$, $G[V_i']$ is connected to $t_i$ for all $i \in [k]$, $|V_1' \setminus \{t_1\}| = n_1 + 1$, and $|V_i' \setminus \{t_i\}| = n_i$ for all $i \in [k] \setminus \{1\}$.
\end{theorem}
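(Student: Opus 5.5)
The plan is to follow the Győri–Hoyer cascade strategy, adapted to directed graphs, with the local connectivity condition standing in for $k$-connectivity. We argue by contradiction: fix a counterexample and, among all valid configurations $V_1,\dots,V_k$ with the prescribed sizes $n_i$, choose one that is extremal for a lexicographic potential. Call a vertex \emph{unallocated} if it lies in none of the $V_i$, and let $U$ be the set of unallocated vertices. Since $\sum_i n_i<\sum_i \cp_{t_i}=|V\setminus T|$, the set $U$ is nonempty. If some $u\in U$ has an edge into $V_1$, then $V_1\cup\{u\}$ already works. So we may assume that no unallocated vertex has an out-neighbour in $V_1$.

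Next we build a \emph{cascade}. Let $T_0=\{t_1\}$. Iteratively, let $T_{j+1}$ consist of $T_j$ together with every terminal $t_i$ such that some vertex $x\in V_i$ can be removed from $V_i$ while keeping $G[V_i\setminus\{x\}]$ connected to $t_i$ (call such an $x$ \emph{movable}), and such that $x$ has an out-neighbour in the current part of a terminal of $T_j$. To each $t_i\in T_{j+1}\setminus T_j$ we attach the subset $B_i\subseteq V_i$ of vertices whose removal would disconnect some other vertex of $V_i$ from $t_i$. In the directed setting, the vertices that lose their path to $t_i$ when $x$ is removed are exactly those whose every path to $t_i$ in $G[V_i]$ passes through $x$. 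These are the dominator-tree descendants of $x$ in the reversed graph rooted at $t_i$.

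The key step is to show that if the cascade stabilises at a set $T^*$ with no augmenting sequence, then the local connectivity condition fails for $T'=T^*$. That is, we need $|\localset{T^*}|<\sum_{t\in T^*}\cp_t$. Every vertex locally connected to $T^*$ has $|T^*|$ internally disjoint paths into $T^*$. Consider such a vertex that is either unallocated or lies in a part outside $T^*$. We would show that its paths must meet the parts of $T^*$ in a way that exhibits a movable vertex, thereby extending the cascade. Hence every vertex of $\localset{T^*}$ already lies in $\bigcup_{t_i\in T^*}V_i$. This union has size $\sum_{t_i\in T^*} n_i<\sum_{t\in T^*}\cp_t$, because $n_1<\cp_{t_1}$, which is the desired contradiction. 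To run this, we count the first entry points of the $|T^*|$ disjoint paths into $\bigcup_{t_i\in T^*}V_i$ against the cut vertices (non-movable ones) of those parts. A pigeonhole argument should force one path to enter at a movable vertex. The pre-terminal convention in \Cref{def:local-connectivity} covers the degenerate case where a vertex reaches a terminal directly.

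When a movable vertex adjacent to a lower cascade level is found, we shift vertices along the cascade chain. For each link, a movable $x\in V_i$ moves into the part at the previous level, ending with $V_1$ gaining a vertex, and the displaced $V_i$ refills by absorbing either an unallocated vertex or a vertex from the next level. If some shift would disconnect a part, the extremal choice of configuration (lexicographically maximising $|V_1|$, then the sizes of the successive cascade levels) lets us pass to a configuration with larger potential, contradicting extremality. The main obstacle is the directed counting step: the undirected argument uses that each part minus its cut set stays connected, while here connectivity is only \emph{towards} $t_i$. I would handle this using the dominator-tree structure to define the blocking sets $B_i$, and check that internally disjoint paths still yield $|T^*|$ distinct entry points that cannot all be non-movable.
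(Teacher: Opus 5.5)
There is a genuine gap, and it sits at the step you yourself flag as the main obstacle: your cascade is too coarse. You let $t_i$ join $T^*$ only through a \emph{movable} vertex of $V_i$. You then hope a pigeonhole argument forces one of the $|T^*|$ internally disjoint paths to enter through a movable vertex. But a part can contain arbitrarily many non-movable vertices (e.g.\ a long directed path into $t_i$), so there is nothing to count against. Your inference ``no move applies $\Rightarrow$ $\mathcal{L}(T^*)\subseteq\bigcup_{t_i\in T^*}V_i$'' does not use the local connectivity hypothesis, and it is false.

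Take $V_1=\{t_1,b\}$, $V_2=\{t_2,a,y\}$, and one unallocated vertex $x$. Let the edges be $b\to t_1$, $a\to t_2$, $a\to b$, $y\to a$, $x\to y$, with $c_{t_1}=c_{t_2}=2$, $n_1=1$, $n_2=2$. The only vertex of $V_2$ with an edge into $V_1$ is $a$, which is a cut vertex. So your closure stops at $T^*=\{t_1\}$ and none of your moves applies. Yet $a,y,x\in\mathcal{L}(\{t_1\})\setminus V_1$ and $|\mathcal{L}(\{t_1\})|=4\ge c_{t_1}$. The only violated inequality in this instance is the one for $\{t_1,t_2\}$, which your cascade never reaches.

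The same coarseness breaks your shifting step. Suppose the vertex you push into $V_{i'}$, or an unallocated vertex you absorb into $V_i$, attaches only through the very vertex that part is releasing; then the part disconnects. Saying you ``pass to a configuration with larger potential'' does not repair this. Your potential is not even well defined, since $|V_1|$ is the same for every configuration with the prescribed sizes.

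The missing idea is Gy\H{o}ri's reservoir structure, which the paper's proof uses. Cascade vertices are typically cut vertices. For $w\in V_i\setminus\{t_i\}$ one records its reservoir $R(w)$, the set of vertices of $V_i$ that still reach $t_i$ in $G[V_i\setminus\{w\}]$. A cascade in $V_i$ is a chain with $w_{j+1}\notin R(w_j)$. The rank of $w$ is defined through edges into the \emph{reservoirs} of lower-rank cascade vertices, not merely into lower-level parts.

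The monotonicity $y\notin R(x)\Rightarrow R(x)\cup\{x\}\subseteq R(y)$ makes both moves safe. One move extends a cascade by a vertex lying in no reservoir. The other follows a bridge from an unallocated vertex into some $R(w)$ and releases $w$ or a leaf hanging below $w$. Monotonicity also makes the vector of reservoir sizes by rank increase lexicographically.

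At a stuck configuration, let $z_i$ be the last cascade vertex of each nonempty cascade ($i\in I$). Set $T^\star=\{t_1\}\cup\{t_i: i\in I\}$ and $Y=V_1\cup\bigcup_{i\in I}(R(z_i)\cup\{z_i\})$. Every first entry into $Y$ other than at some $z_i$ would trigger a move, so $\{z_i: i\in I\}$ is a separator of size $|T^\star|-1$. This replaces your pigeonhole and gives $|\mathcal{L}(T^\star)|\le|Y\setminus T|\le n_1+\sum_{i\in I}n_i<\sum_{t\in T^\star}c_t$. In the example above, $a$ is a rank-$1$ cascade vertex with $R(a)=\{t_2\}$, and $\{a\}$ is exactly this separator for $T^\star=\{t_1,t_2\}$.
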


To prove \Cref{thm:glvs-incremental}, we define the unallocated set of non-terminal vertices as $\unaloc := V \setminus \bigcup_{i=1}^k V_i$. By hypothesis, $\unaloc$ is non-empty. We adapt the terminology from \citet{hoyer2016gyori} to track structural dependencies within the induced subgraphs $G[V_i]$.

\begin{definition} [Reservoir and cascade]
	For any $i \in [k] \setminus \{1\}$ and any vertex $v \in V_i \setminus \{t_i\}$, the reservoir of $v$, denoted by $R(v)$, is the set of all vertices in $V_i$ that can reach $t_i$ via a directed path in $G[V_i \setminus \{v\}]$. By definition, $v \notin R(v)$ and $t_i \in R(v)$.
	
	A cascade in $G[V_i]$ is a (possibly empty) sequence $w_1, w_2, \dots, w_m$ of distinct vertices in $V_i \setminus \{t_i\}$ such that $w_{j+1} \notin R(w_j)$ for all $1 \le j < m$.
\end{definition}

\begin{definition} [Configuration, rank, and bridges]
	A configuration consists of a valid choice of vertex sets $V_1, \dots, V_k$ such that $|V_i \setminus \{t_i\}| = n_i$ for all $i \in [k]$, along with exactly one cascade in each $G[V_i]$ for $i \in [k] \setminus \{1\}$. A vertex belonging to any of these cascades is a cascade vertex.
	
	The rank of a cascade vertex $w \in V_i$ is defined recursively:
	\begin{itemize}
		\item $\text{rank}(w) = 1$ if $w$ has a directed edge to a vertex in $V_1$.
		\item $\text{rank}(w) = r$ (for $r \ge 2$) if the minimum rank of any cascade vertex $w' \in V_j$ ($j \neq i$) such that $w$ has a directed edge to $R(w')$ is $r-1$.
	\end{itemize}
	If no such outgoing directed edge exists, the rank of $w$ is undefined.
	
	A configuration is valid if every cascade vertex has a well-defined rank that strictly increases along its cascade. For an integer $r \ge 1$, let $\rho_r$ denote the total number of vertices belonging to $R(w)$ for some cascade vertex $w$ of rank $r$.
	
	Finally, a bridge is a directed edge originating from a vertex in $\unaloc$ and terminating at a vertex within the reservoir of some cascade vertex. In a valid configuration, the rank of a bridge is the minimum rank among all cascade vertices whose reservoirs contain the target of the bridge.
\end{definition}

\begin{algorithm}[!ht]
	\caption{\textsc{GLIncremental}$(G,T,\cp,\langle V_1,\dots,V_k\rangle)$}
	\label{alg:gl-incremental}
	
	\KwIn{Directed graph $G=(V,E)$, terminals $T=\{t_1,\dots,t_k\}$, capacities $\cp$, and pairwise disjoint sets $\langle V_1,\dots,V_k\rangle$ satisfying the assumptions of \Cref{thm:glvs-incremental}}
	\KwOut{Sets $\langle V_1',\dots,V_k'\rangle$ satisfying the conclusion of \Cref{thm:glvs-incremental}}
	
	Initialize $\conf$ with parts $V_1,\dots,V_k$ and empty cascades \cmt{The initial valid configuration}
	
	\While{\Cref{lem:gl-cascade-extension} or \Cref{lem:gl-bridge-shift} is applicable}{
		\If{$\exists$ edge $(v, u)$ satisfying the premise of \Cref{lem:gl-cascade-extension}}{
			$\conf \gets \textsc{CascadeExtension}(\conf,v,u)$
		}
		\Else{
			$Z \gets \textsc{BridgeShift}(\conf)$ \cmt{$Z$ is a configuration or a final partition}
			\If{$Z$ is a final partition}{
				\Return $Z$
			}
			$\conf \gets Z$
		}
	}
	
	$\unaloc \gets V \setminus \bigcup_{i=1}^k V_i$ \cmt{Unallocated vertices in the terminal state of $\conf$}

	Choose an edge $(v, u)$ with $v \in \unaloc$ and $u \in V_1$ \cmt{Such an edge exists by the proof}
	$V_1 \gets V_1 \cup \{v\}$

	\Return $\langle V_1,\dots,V_k\rangle$
\end{algorithm}

We start from the given sets $V_1,\dots,V_k$ with all cascades empty. This is a valid configuration. For a valid configuration, let $\rho = (\rho_1,\dots,\rho_n)$. We repeatedly apply local modifications that preserve validity and strictly increase $\rho$ lexicographically. Since every coordinate $\rho_r$ is at most $|V|$, this process must terminate.

The constructive content of the proof is summarized in the following three procedures. The first repeatedly applies the two local update rules from \Cref{lem:gl-cascade-extension,lem:gl-bridge-shift}. The other two procedures implement the corresponding local changes to the current configuration.

The following monotonicity property of reservoirs will be used throughout.

\begin{lemma}
        \label{lem:gl-reservoir-monotonicity}
        Let $i \in [k] \setminus \{1\}$ and let $x,y \in V_i \setminus \{t_i\}$ be distinct. If $y \notin R(x)$, then
        $R(x) \cup \{x\} \subseteq R(y)$.

        In other words, if deleting $x$ prevents $y$ from reaching $t_i$, then deleting $y$ still leaves every vertex in $R(x)$, as well as $x$ itself, connected to $t_i$.
\end{lemma}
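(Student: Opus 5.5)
The plan is a direct path argument inside $G[V_i]$. Recall that $R(x)$ is the set of vertices of $V_i$ that reach $t_i$ in $G[V_i\setminus\{x\}]$. Since $G[V_i]$ is connected to $t_i$, every vertex of $V_i$ reaches $t_i$ in $G[V_i]$. The hypothesis $y\notin R(x)$ says that every path from $y$ to $t_i$ inside $G[V_i]$ passes through $x$. We must show two things: first that $x\in R(y)$, and second that every $z\in R(x)$ lies in $R(y)$.

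For the inclusion $R(x)\subseteq R(y)$, fix $z\in R(x)$ and a path $P$ from $z$ to $t_i$ in $G[V_i\setminus\{x\}]$. It suffices to show that $P$ avoids $y$, since then $P$ lies in $G[V_i\setminus\{y\}]$. Suppose instead that $P$ passes through $y$. Then the suffix of $P$ starting at $y$ is a path from $y$ to $t_i$ in $G[V_i\setminus\{x\}]$. This gives $y\in R(x)$, contradicting the hypothesis. Note that $z\neq y$ for the same reason: if $z=y$ we would have $y\in R(x)$ directly.

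For $x\in R(y)$, take a path $Q$ from $x$ to $t_i$ in $G[V_i]$ and choose it to be a simple path, so it visits $x$ only at its start. Suppose $Q$ passes through $y$. Then the suffix of $Q$ from $y$ is a path from $y$ to $t_i$ that never visits $x$, because a simple path visits $x$ only once, at the beginning. This again gives $y\in R(x)$, a contradiction. Hence $Q$ avoids $y$, and $x\in R(y)$. The case $x=t_i$ does not arise, since $x\in V_i\setminus\{t_i\}$.

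I expect no real obstacle here. The only care needed is in the second step: one must choose $Q$ simple so that the suffix starting at $y$ genuinely avoids $x$.
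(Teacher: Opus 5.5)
Your proposal is correct and follows essentially the same argument as the paper: for $z\in R(x)$ you show the witnessing path avoids $y$ (else $y\in R(x)$), and for $x\in R(y)$ you show a path from $x$ to $t_i$ in $G[V_i]$ avoids $y$ by the same suffix argument. Your explicit choice of a simple path $Q$, which guarantees the suffix from $y$ avoids $x$, makes precise a point the paper leaves implicit in the word ``path.''
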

\begin{proof}
        Fix $z \in R(x)$. By definition, there is a directed path $P$ from $z$ to $t_i$ in $G[V_i \setminus \{x\}]$. We claim that $P$ does not contain $y$. Indeed, if it did, then the suffix of $P$ from $y$ to $t_i$ would avoid $x$, implying $y \in R(x)$, a contradiction. Hence $P$ avoids $y$ as well, so $z \in R(y)$. Therefore $R(x) \subseteq R(y)$.

        It remains to show that $x \in R(y)$. Since $G[V_i]$ is connected to $t_i$, there is a directed path $Q$ from $x$ to $t_i$ in $G[V_i]$. This path cannot contain $y$, since otherwise its suffix from $y$ to $t_i$ would avoid $x$, again implying $y \in R(x)$. Thus $Q$ avoids $y$, and hence $x \in R(y)$. Therefore $R(x) \cup \{x\} \subseteq R(y)$.
\end{proof}

\begin{algorithm}[!ht]
	\caption{\textsc{CascadeExtension}$(\conf,v,u)$}
	\label{alg:gl-cascade-extension}
	
	\KwIn{A valid configuration $\conf$ and an edge $(v, u)$ satisfying the premise of \Cref{lem:gl-cascade-extension}}
	\KwOut{A valid configuration $\conf'$ with lexicographically larger $\rho$}
	
	Let $i$ be the unique index with $v \in V_i \setminus \{t_i\}$

	Let $r \geq 0$ be minimum such that either $r=0$ and $u \in V_1$, or $r \geq 1$ and $u \in R(w)$ for some cascade vertex $w$ of rank $r$
	
	In the cascade of $V_i$, keep precisely the vertices of rank at most $r$

	Append $v$ to the cascade of $V_i$ \cmt{$v$ has rank $r+1$ by the minimality of $r$}

	Delete every cascade vertex of rank larger than $r+1$ from every cascade
	
	\Return the resulting configuration
\end{algorithm}

\begin{lemma}
	\label{lem:gl-cascade-extension}
	Let $\conf$ be a valid configuration. Suppose there exists an edge $(v, u)$ such that $v \in V_i \setminus \{t_i\}$ for some $i\in[k]\setminus\{1\}$, the vertex $v$ is neither a cascade vertex nor contained in any reservoir, and either $u \in V_1$ or $u \in R(w)$ for some cascade vertex $w$. Let $r \ge 0$ be minimum such that either $r=0$ and $v$ has a directed edge to a vertex of $V_1$, or $r \ge 1$ and $v$ has a directed edge to a vertex in $R(w)$ for some cascade vertex $w$ of rank $r$.
	
	Then there exists a valid configuration $\conf'$ with lexicographically larger $\rho$.
\end{lemma}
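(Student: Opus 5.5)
We take $\conf'$ to be the output of \textsc{CascadeExtension}$(\conf,v,u)$ and check two things in turn: that $\conf'$ is valid, and that its vector $\rho$ strictly increases lexicographically. We first make sure $r$ is well defined. The premise supplies an edge from $v$ to $V_1$ or into some reservoir $R(w)$. Since $\conf$ is valid, every cascade vertex $w$ has a finite rank, so $r$ exists.

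\textbf{Validity.} First we check that the truncated cascade of $V_i$ extended by $v$ is still a cascade. Let $w$ be the last retained vertex of that cascade. We need $v\notin R(w)$, and this holds because $v$ lies in no reservoir. Distinctness holds because $v$ was not a cascade vertex. Next we determine the rank of $v$. By the minimality of $r$, the rank of $v$ is exactly $r+1$ whenever the ranks of the retained vertices of rank at most $r$ are unchanged. Those ranks are indeed unchanged, by induction on rank. The rank of a vertex of rank $s\le r$ depends only on cascade vertices of rank less than $s$ and on their reservoirs. Reservoirs depend only on the sets $V_j$, which are unchanged, and we never delete any vertex of rank at most $r$. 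Ranks along the cascade of $V_i$ strictly increase, since the retained prefix has ranks at most $r$ and $v$ has rank $r+1$. The other cascades are truncated by deleting their vertices of rank greater than $r+1$. A prefix of a cascade is still a cascade, and since ranks increased strictly along each cascade, deleting all vertices above rank $r+1$ leaves a prefix. Ranks in the remaining prefixes are unchanged by the same induction, and every rank stays at most $r+1$. One subtlety: a vertex of rank $r+1$ that we retain could in principle get a smaller rank now that $v$ has been added. This cannot happen, because $v$ has rank $r+1$ and therefore contributes only to ranks of at least $r+2$.

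\textbf{Monotonicity of $\rho$.} We argue that $\rho_s$ is unchanged for $s\le r$ and that $\rho_{r+1}$ strictly increases. The set of cascade vertices of rank $s\le r$ is identical in $\conf$ and $\conf'$: nothing of rank at most $r$ is deleted, and the only added vertex has rank $r+1$. Their reservoirs are also identical, so $\rho_s$ is unchanged. For rank $r+1$, each retained rank-$(r+1)$ vertex in cascades other than that of $V_i$ keeps its reservoir. In the cascade of $V_i$, the old vertices of rank at least $r+1$ are deleted and replaced by $v$. Since ranks increased along the old cascade, there was at most one old vertex of rank $r+1$ in $V_i$; call it $x$. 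By construction $v$ comes after the retained prefix. If $x$ existed, then $v\notin R(x)$, because $v$ lies in no reservoir. By \Cref{lem:gl-reservoir-monotonicity}, this gives $R(x)\cup\{x\}\subseteq R(v)$. Moreover, $x$ lies in no other reservoir of rank $r+1$ counted in $\rho_{r+1}$. This needs some care: reservoirs of vertices in other parts $V_j$ lie inside $V_j$, so they cannot contain $x$, and $V_i$ has no other rank-$(r+1)$ cascade vertex. Hence the union counted by $\rho_{r+1}$ gains at least the vertex $x$. If no such $x$ existed, the union gains $R(v)$, which contains $t_i$, and $t_i$ is not in any other rank-$(r+1)$ reservoir (again because reservoirs lie inside their own part). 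In both cases $\rho_{r+1}$ strictly increases, and $\rho$ increases lexicographically.

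\textbf{Expected obstacle.} The main obstacle is the bookkeeping needed to show that no lower-rank quantity changes. In particular, we must confirm that deleting higher-rank cascade vertices never alters the rank of a retained vertex. This relies on the fact that the rank recursion refers only to strictly smaller ranks.
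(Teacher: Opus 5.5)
Your proof is correct and takes essentially the same approach as the paper: you run \textsc{CascadeExtension}, use the minimality of $r$ to get $\text{rank}(v)=r+1$, observe that all cascade vertices and reservoirs of rank at most $r$ are untouched, and apply \Cref{lem:gl-reservoir-monotonicity} to obtain $R(x)\cup\{x\}\subseteq R(v)$ for the displaced rank-$(r+1)$ vertex $x$ of $V_i$. Your extra bookkeeping only makes explicit what the paper leaves implicit, namely the induction on rank and the disjointness of reservoirs in different parts, which makes $x$ (or $t_i$) a genuinely new element of the rank-$(r+1)$ union; the one point that neither version states is that the minimizing $w$ automatically lies outside $V_i$ (otherwise the edge $(v,u)$ would put $v$ in $R(w)$), which the $j\neq i$ clause of the rank definition requires.
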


\begin{proof}
	In the cascade of $V_i$, keep precisely the vertices of rank at most $r$, then append $v$. If $V_i$ already contains a cascade vertex $x$ of rank $r+1$, delete $x$ as well. In every $V_j$, delete every cascade vertex of rank larger than $r+1$. The resulting sequences are still cascades: in $V_i$ this holds because $v$ was contained in no reservoir, so in particular it lies outside the reservoir of the last retained cascade vertex, and in every other $V_j$ we only truncate a suffix. By the minimality of $r$, the new vertex $v$ has rank exactly $r+1$.

	All reservoirs of ranks at most $r$ remain unchanged, so $\rho_1,\dots,\rho_r$ are unchanged. If there was no rank-$(r+1)$ cascade vertex in $V_i$ before the modification, then the new reservoir $R(v)$ is nonempty, so $\rho_{r+1}$ strictly increases. Otherwise, let $x$ be the former rank-$(r+1)$ cascade vertex of $V_i$. Since $v$ was contained in no reservoir, we have $v \notin R(x)$, and \Cref{lem:gl-reservoir-monotonicity} gives $R(x) \cup \{x\} \subseteq R(v)$. Hence the rank-$(r+1)$ reservoir in $V_i$ strictly grows. The global deletion of cascade vertices of rank larger than $r+1$ implies that $\rho_{r+2}=\dots=\rho_n=0$. Therefore $\rho_{r+1}$ strictly increases, while $\rho_1,\dots,\rho_r$ stay fixed. The new configuration is valid and has lexicographically larger $\rho$.
\end{proof}

\begin{algorithm}[!ht]
	\caption{\textsc{BridgeShift}$(\conf)$}
	\label{alg:gl-bridge-shift}
	
	\KwIn{A valid configuration $\conf$ containing a bridge}
	\KwOut{Either a valid configuration $\conf'$ with lexicographically larger $\rho$, or a final partition}
	
	\While{\textbf{true}}{
		Choose a bridge $(v, u)$ of minimum rank
		
		Choose a cascade vertex $w \in V_i$ of rank $r$ such that $u \in R(w)$ and the bridge rank is $r$
		
		Construct an in-arborescence of $G[V_i]$ rooted at $t_i$ whose subtree rooted at $w$ is exactly $V_i \setminus R(w)$
		
		\If{$V_i \setminus (R(w) \cup \{w\}) \neq \emptyset$}{
			Choose a leaf $x \neq w$ in the subtree rooted at $w$

			$V_i \gets (V_i \setminus \{x\}) \cup \{v\}$ \cmt{Replace $x$ by $v$}

			Delete every cascade vertex of rank larger than $r$ from every cascade
			
			\Return the resulting configuration
		}
		
		$V_i \gets (V_i \setminus \{w\}) \cup \{v\}$ \cmt{Replace $w$ by $v$}
		Delete every cascade vertex of rank at least $r$ from every cascade
		
		\If{$r=1$}{
			$V_1 \gets V_1 \cup \{w\}$ \cmt{$w$ has an edge to $V_1$}
			\Return $\langle V_1,\dots,V_k\rangle$
		}
		
		\cmt{$w$ is now the tail of a bridge of smaller rank, so the loop continues}
	}
\end{algorithm}

\begin{lemma}
	\label{lem:gl-bridge-shift}
	Let $\conf$ be a valid configuration. If $\conf$ contains a bridge, then either the conclusion of \Cref{thm:glvs-incremental} already holds, or there exists a valid configuration $\conf'$ with lexicographically larger $\rho$.
\end{lemma}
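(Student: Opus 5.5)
The plan follows the \textsc{BridgeShift} procedure and proceeds by induction on the bridge rank $r$. Take a bridge $(v,u)$ of minimum rank $r$, with $v\in\unaloc$ and $u\in R(w)$ for a cascade vertex $w\in V_i$ of rank $r$.

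\textbf{Step 1: an in-arborescence adapted to $w$.} First build an in-arborescence $A$ of $G[V_i]$ rooted at $t_i$. Start from a BFS in-tree of $G[V_i\setminus\{w\}]$ restricted to $R(w)$, so every reservoir vertex reaches $t_i$ while avoiding $w$. Then attach $w$ by an edge into its successor on some path to $t_i$; since that path's suffix from the first reservoir vertex onward can be used, $w$ has an out-edge into $V_i\setminus\{w\}$ leading to $R(w)\cup\ldots$. Finally attach every vertex of $V_i\setminus(R(w)\cup\{w\})$. By definition of $R(w)$, each such vertex reaches $t_i$ only through $w$, so it can be hung inside the subtree of $w$. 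The resulting subtree of $w$ is exactly $V_i\setminus R(w)$.

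\textbf{Step 2, case (a): the subtree of $w$ is larger than $\{w\}$.} Pick a leaf $x\neq w$ of this subtree and set $V_i\gets (V_i\setminus\{x\})\cup\{v\}$. We check three things.
\begin{itemize}
\item \emph{Connectivity.} $V_i\setminus\{x\}$ stays connected to $t_i$, because removing a leaf of the arborescence preserves it. The vertex $v$ reaches $t_i$ via $u\in R(w)$.
\item \emph{Sizes and reservoirs.} Sizes are unchanged. Reservoirs of cascade vertices of rank $<r$ lie in other parts or are unaffected, since $x\notin R(w')$ for lower-rank $w'$ in $V_i$ by cascade monotonicity (\Cref{lem:gl-reservoir-monotonicity}). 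Hence $\rho_1,\dots,\rho_{r-1}$ are unchanged.
\item \emph{Strict increase.} The rank-$r$ reservoir $R(w)$ gains $v$, because $v\to u\in R(w)$ avoids $w$. So $\rho_r$ strictly increases after deleting all cascade vertices of rank $>r$.
\end{itemize}
The main delicate point in this case is confirming that no rank-$\le r$ reservoir in $V_i$ loses a vertex. Since $x\notin R(w)$ and earlier cascade reservoirs are contained in $R(w)$, the reservoirs of rank $\le r$ are not disturbed.

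\textbf{Step 3, case (b): the subtree is just $\{w\}$.} Replace $w$ by $v$; connectivity holds as before. If $r=1$, $w$ has an edge into $V_1$, so adding $w$ to $V_1$ gives the conclusion of \Cref{thm:glvs-incremental}. If $r\ge2$, delete the cascade vertices of rank $\ge r$. Now $w\in\unaloc$, and its rank-defining edge into $R(w')$ for some $w'$ of rank $r-1$ is a bridge of rank at most $r-1$. The lower cascades are untouched, since $w$ lies in no lower-rank reservoir of another part except via the edge. So the configuration is still valid, and the loop recurses on a bridge of strictly smaller rank.

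\textbf{Termination.} Because the rank decreases at each case-(b) iteration, the loop ends either in case (a) or at $r=1$.

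\textbf{Expected obstacle.} The hard part will be verifying validity after case (b): that the remaining cascades keep well-defined, increasing ranks and that reservoirs of rank $<r$ are unchanged. This holds because the only modified part is $V_i$, where every retained cascade vertex has rank $<r$ and its reservoir excludes $w$.
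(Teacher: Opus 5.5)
Your route is the paper's: the same in-arborescence whose subtree at $w$ is $V_i\setminus R(w)$, the same split into replacing a leaf $x$ of that subtree versus replacing $w$ itself, and the same descent in bridge rank ending at $r=1$. One verification is missing, though, and it is exactly where the minimality of the bridge rank is needed. You assume that minimality but never use it.

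When you say the reservoirs of rank $<r$ are unaffected, you only show that they lose nothing when $x$ (or $w$) leaves $V_i$. You must also show that they gain nothing when $v$ enters $V_i$. Suppose $v$ landed in $R(y)$ for a cascade vertex $y\in V_i$ of rank $<r$. Then cascade vertices in other parts with edges into $v$ could drop in rank. That can break the strict increase of ranks along their cascades, and it also breaks your claim that $\rho_1,\dots,\rho_{r-1}$ are preserved.

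The fix is the paper's argument. Take a simple path from $v$ to $t_i$ avoiding $y$ in the new part, and let $z$ be the successor of $v$ on it. Then $z$ reaches $t_i$ while avoiding $y$, $v$, and the removed vertex, so $z$ was already in the old $R(y)$. Hence $(v,z)$ was a bridge of rank at most $\text{rank}(y)<r$, contradicting the choice of $(v,u)$. Any other vertex entering $R(y)$ would have to do so through $v$, so this suffices. The same argument covers case (b).

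You should also make the bookkeeping in case (b) explicit. That step sets $\rho_r,\rho_{r+1},\dots$ to zero, so the recursive call only yields a configuration lexicographically larger than the \emph{intermediate} one. To beat the original configuration, carry a stronger invariant: the output agrees with the input on $\rho_1,\dots,\rho_{s-1}$ and strictly exceeds it at $\rho_s$ for some $s\le r$, where $r$ is the minimum bridge rank. The recursion runs at rank at most $r-1$, and the intermediate configuration agrees with the original below $r$, so the conclusion transfers. The paper is equally brief on this point, so this is a matter of completeness rather than a flaw in the approach.
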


\begin{proof}
	Choose a bridge $(v, u)$ of minimum rank. Thus $v \in \unaloc$ and $u \in R(w)$ for some cascade vertex $w \in V_i$ of minimum possible rank $r$.

	We first describe a convenient spanning arborescence of $G[V_i]$. Since every vertex of $R(w)$ reaches $t_i$ in $G[V_i \setminus \{w\}]$, we may begin with an in-arborescence of $G[R(w)]$ rooted at $t_i$. Next, every vertex of $V_i \setminus R(w)$ reaches $w$ in $G[V_i \setminus R(w)]$, because every directed path from such a vertex to $t_i$ must pass through $w$ before visiting a vertex in $R(w)$. Thus we can add $w$ as a child of a suitable vertex of $R(w)$ on a directed path from $w$ to $t_i$, and then attach every vertex separated from $t_i$ by $w$ below $w$ along directed paths to $w$. In this way we obtain an in-arborescence of $G[V_i]$ rooted at $t_i$ whose subtree rooted at $w$ is exactly $V_i \setminus R(w)$.

	For every cascade vertex $y\in V_i$ of rank less than $r$, repeated applications of \Cref{lem:gl-reservoir-monotonicity} along the cascade give $R(y)\subseteq R(w)$ and $w\notin R(y)$, so neither possible removed vertex lies in $R(y)$. Adding $v$ cannot enlarge $R(y)$, since a new path through $v$ would give a bridge of rank less than $r$, contrary to the choice of $(v,u)$. Thus, both replacements preserve all lower-rank reservoirs.

	If $V_i \setminus (R(w) \cup \{w\})$ is nonempty, let $x$ be a leaf of the subtree of $w$ distinct from $w$. Replace $x$ by $v$, that is, remove $x$ from $V_i$ and add $v$ to $V_i$. Since $x$ is a leaf of the chosen arborescence, deleting $x$ preserves connectivity of all remaining vertices in $V_i$ to $t_i$. Since $u \in R(w)$ and $(v, u)$ is an edge, the new vertex $v$ also reaches $t_i$. Moreover, $x \notin R(w)$, while after the replacement the new vertex $v$ belongs to $R(w)$. Thus $\rho_r$ strictly increases. Finally, delete every cascade vertex of rank larger than $r$ from every cascade. The resulting configuration is valid and has lexicographically larger $\rho$.

	We are left with the case $V_i \setminus \{w\} = R(w)$. Then deleting $w$ from $V_i$ preserves connectivity of all remaining vertices in $V_i$ to $t_i$. Replace $w$ by $v$: remove $w$ from $V_i$, add $v$ to $V_i$, and delete every cascade vertex of rank at least $r$ from every cascade. The new vertex $v$ reaches $t_i$ through $(v, u)$ and a path from $u$ to $t_i$ inside $R(w)$. Now $w$ becomes a vertex of $\unaloc$.

	Because $w$ has rank $r$, either $r=1$ and $w$ has an edge to a vertex of $V_1$, or $r>1$ and $w$ has an edge to the reservoir of a cascade vertex of rank $r-1$. In the first case, adding $w$ to $V_1$ preserves connectivity to $t_1$, so the conclusion of \Cref{thm:glvs-incremental} holds. In the second case, $w$ is now the tail of a bridge of smaller rank. Repeating the same replacement step strictly decreases the rank each time, so after at most $r$ repetitions we either reach rank $1$ and finish, or encounter the first case above and obtain a lexicographically larger valid configuration.
\end{proof}

\begin{proof}[Proof of \Cref{thm:glvs-incremental}]
	Start from the initial valid configuration with empty cascades and repeatedly apply \Cref{lem:gl-cascade-extension,lem:gl-bridge-shift} as long as possible, unless the conclusion of the theorem has already been obtained. By the lexicographic monotonicity of $\rho$, this process cannot continue indefinitely.

	If there is an edge from some vertex of $\unaloc$ to a vertex of $V_1$, then adding that vertex to $V_1$ immediately proves the theorem. Hence, if the theorem has not yet been proved, we may assume that no such edge exists and that we have reached a valid configuration in which neither lemma is applicable.

	Let $I \subseteq [k] \setminus \{1\}$ be the set of indices whose cascades are nonempty. For each $i \in I$, let $z_i$ be the last vertex of the cascade in $V_i$, and define
	$$T^\star := \{t_1\} \cup \{t_i : i \in I\}, \qquad
	Y := V_1 \cup \bigcup_{i \in I} \bigl(R(z_i) \cup \{z_i\}\bigr).$$
	By repeated application of \Cref{lem:gl-reservoir-monotonicity} along each cascade, every cascade vertex of $V_i$ belongs to $R(z_i) \cup \{z_i\}$. Hence every cascade vertex lies in $Y$.

	We claim that no vertex of $V \setminus Y$ is locally connected to $T^\star$. Fix $x \in V \setminus Y$, and consider any directed path $P$ from $x$ to a terminal in $T^\star$. Since the target lies in $Y$, the path must enter $Y$ at some point. Let $(a, b)$ be the first edge of $P$ with $b \in Y$.

	If $b \in V_1$, then the iterative process would already have terminated if $a \in \unaloc$. Thus $a \notin \unaloc$. Since every cascade vertex lies in $Y$ and $a \notin Y$, the vertex $a$ is neither a cascade vertex nor contained in a reservoir. Therefore \Cref{lem:gl-cascade-extension} would apply with rank $0$, contradicting the choice of the configuration.

	If $b \in R(z_i)$ for some $i \in I$, then the edge $(a, b)$ cannot start in $\unaloc$, since otherwise it would be a bridge and \Cref{lem:gl-bridge-shift} would apply. Again, every cascade vertex lies in $Y$, so $a \notin Y$ implies that $a$ is neither a cascade vertex nor contained in a reservoir. Thus \Cref{lem:gl-cascade-extension} applies, another contradiction.

	Therefore the only possible first entry into $Y$ is through some vertex $z_i$. In particular, every directed path from $x$ to a terminal in $T^\star$ contains one of the vertices $z_i$ as an internal vertex. Thus, $x$ is not a pre-terminal of any terminal in $T^\star$.

	The set $\{z_i : i \in I\}$ has size $|I| = |T^\star|-1$. Therefore every family of internally vertex-disjoint paths from $x$ to terminals in $T^\star$ has size at most $|I|$, so $x \notin \localset{T^\star}$. We conclude that
	$$\localset{T^\star} \subseteq Y \setminus T.$$

	Now we count. Since $R(z_i) \cup \{z_i\} \subseteq V_i$, we have $\bigl|(R(z_i) \cup \{z_i\}) \setminus T\bigr| \le n_i$. Hence
	$$|\localset{T^\star}| \le |Y \setminus T|
	= |V_1 \setminus \{t_1\}| + \sum_{i \in I} \bigl|(R(z_i) \cup \{z_i\}) \setminus T\bigr|
	\le n_1 + \sum_{i \in I} n_i.$$
	Using $n_1 < \cp_{t_1}$ and $n_i \le \cp_{t_i}$ for every $i$, we obtain
	$$|\localset{T^\star}| < \cp_{t_1} + \sum_{i \in I} \cp_{t_i}
	= \sum_{t \in T^\star} \cp_t,$$
	which contradicts the local connectivity condition for the terminal set $T^\star$.

	This contradiction shows that the iterative process cannot terminate in a configuration in which neither lemma is applicable and no edge from $\unaloc$ enters $V_1$. Therefore, when the iterative process terminates before either lemma can be applied again, there is an edge from $\unaloc$ to $V_1$, and then we are simply done.
\end{proof}

\begin{proof}[Proof of \Cref{thm:glvs-generalized}]
	Start with the trivial partition $V_i = \{t_i\}$ for every $i \in [k]$. Repeatedly apply \Cref{thm:glvs-incremental}: at each step, choose a terminal whose current part still has fewer than its prescribed number of non-terminal vertices, relabel that terminal as $t_1$, and enlarge its part by one while keeping all other part sizes unchanged. Each application increases the total number of assigned non-terminals by exactly one and never exceeds the prescribed capacities. After exactly $\sum_{t \in T} \cp_t = |V \setminus T|$ applications, every terminal $t$ receives precisely $\cp_t$ non-terminals, so the resulting sets form a partition of $V$ with the sizes required by \Cref{thm:glvs-generalized}.
\end{proof}

\end{document}